\numberwithin{equation}{section}
\numberwithin{figure}{section}
\theoremstyle{plain}
\newtheorem{thm}{\protect\theoremname}[section]
\theoremstyle{definition}
\newtheorem{defn}[thm]{\protect\definitionname}
\theoremstyle{remark}
\newtheorem{rem}[thm]{\protect\remarkname}
\theoremstyle{plain}
\newtheorem{cor}[thm]{\protect\corollaryname}
\theoremstyle{plain}
\newtheorem{conjecture}[thm]{\protect\conjecturename}
\theoremstyle{plain}
\newtheorem{prop}[thm]{\protect\propositionname}
\theoremstyle{plain}
\newtheorem{lem}[thm]{\protect\lemmaname}
\theoremstyle{remark}
\newtheorem*{rem*}{\protect\remarkname}
\definecolor{green}{RGB}{0, 140, 0}
\renewcommand\[{\begin{equation}}
\renewcommand\]{\end{equation}}
\providecommand{\conjecturename}{Conjecture}
\providecommand{\corollaryname}{Corollary}
\providecommand{\definitionname}{Definition}
\providecommand{\lemmaname}{Lemma}
\providecommand{\propositionname}{Proposition}
\providecommand{\remarkname}{Remark}
\providecommand{\theoremname}{Theorem}
\providecommand{\conjecturename}{Conjecture}
\providecommand{\corollaryname}{Corollary}
\providecommand{\definitionname}{Definition}
\providecommand{\lemmaname}{Lemma}
\providecommand{\propositionname}{Proposition}
\providecommand{\remarkname}{Remark}
\providecommand{\theoremname}{Theorem}
\begin{document}
\global\long\def\yv{\vec{y}}%
\global\long\def\zv{\vec{z}}%
\global\long\def\at{\left.\right|_{\Omega}}%

\global\long\def\Id{\mathbb{I}}%

\global\long\def\rme{\mathrm{e}}%

\global\long\def\rmi{\mathrm{i}}%

\global\long\def\rmd{\mathrm{d}}%

\global\long\def\A{\mathcal{A}}%

\global\long\def\J{\mathcal{J}}%

\global\long\def\F{\mathcal{F}}%

\global\long\def\G{\mathcal{G}}%

\global\long\def\Gsing{\mathcal{G}_{\mathrm{singletons}}}%

\global\long\def\M{\mathcal{M}}%

\global\long\def\L{\mathcal{L}}%

\global\long\def\D{\mathcal{D}}%

\global\long\def\U{\mathcal{U}}%

\global\long\def\I{\mathcal{I}}%

\global\long\def\P{\mathrm{P}}%

\global\long\def\Rv{\rho^{(v)}}%

\global\long\def\Nv{N^{(v)}}%

\global\long\def\Rvb{\boldsymbol{\rho^{(v)}}}%

\global\long\def\Nvb{\boldsymbol{N^{(v)}}}%

\global\long\def\gve{\boldsymbol{g_{e}^{(v)}}}%

\global\long\def\d{\partial}%

\global\long\def\dg{\partial\Gamma}%

\global\long\def\do{\partial\Omega}%

\global\long\def\ndv{\Omega_{n}^{(v)}}%

\global\long\def\rve{\mathcal{R}_{v,e}}%

\global\long\def\E{\mathcal{E}}%

\global\long\def\V{\mathcal{V}}%

\global\long\def\Vint{\mathcal{V}\setminus\partial\Gamma}%

\global\long\def\Vin{V_{\textrm{in}}}%

\global\long\def\la{\lambda}%

\global\long\def\hess{\mathrm{Hess}}%

\global\long\def\H{H}%

\global\long\def\Z{\mathbb{Z}}%

\global\long\def\R{\mathbb{R}}%

\global\long\def\C{\mathbb{C}}%

\global\long\def\N{\mathbb{N}}%

\global\long\def\Q{\mathbb{Q}}%

\global\long\def\P{\mathbb{P}}%

\global\long\def\lap{\Delta}%

\global\long\def\na{\nabla}%

\global\long\def\opcl#1{\left(#1\right]}%

\global\long\def\clop#1{\left[#1\right)}%

\global\long\def\bs#1{\boldsymbol{#1}}%

\global\long\def\deg#1{\mathrm{deg}(#1)}%

\global\long\def\torus{\mathbb{T}^{E}}%

\global\long\def\BGm{\mu_{\vec{l}}}%

\global\long\def\msing{\Sigma^{\mathrm{sing}}}%

\global\long\def\mreg{\Sigma^{\mathrm{reg}}}%

\global\long\def\mgen{\Sigma^{\mathrm{gen}}}%

\global\long\def\mloop{\Sigma^{\mathrm{loops}}}%

\global\long\def\mgensing{\Sigma_{\mathrm{singletons}}^{\mathrm{gen}}}%

\global\long\def\mgencont{\Sigma_{\mathrm{cont}}^{\mathrm{gen}}}%

\global\long\def\mgendisc{\Sigma_{\mathrm{disc}}^{\mathrm{gen}}}%

\global\long\def\ts{t_{S}}%

\global\long\def\lv{\vec{l}}%

\global\long\def\av{\vec{a}}%

\global\long\def\Lv{\vec{L}}%

\global\long\def\tv{\vec{\theta}_{v}}%

\global\long\def\kv{\vec{\kappa}}%

\global\long\def\xv{\vec{x}}%

\global\long\def\dL{d_{\vec{L}}}%

\global\long\def\gl{\Gamma_{\lv}}%

\global\long\def\gk{\Gamma_{\kv}}%

\global\long\def\sgn{\mathrm{sgn}}%

\global\long\def\modp#1{ \left[#1\right] }%

\global\long\def\norm#1{ \left\Vert #1\right\Vert _{1} }%

\global\long\def\imag{\mathrm{Image}}%

\global\long\def\flow{\varphi_{\lv}}%

\global\long\def\bs#1{\boldsymbol{#1}}%

\global\long\def\set#1#2{\left\{  #1\,\,:\,\,#2\right\}  }%

\global\long\def\undercom#1#2{\underset{_{#2}}{\underbrace{#1}}}%

\title{Neumann Domains on Quantum Graphs }
\author{Lior Alon and Ram Band}
\address{$^{1}${\small{}Lior Alon, School of Mathematics, Institute for Advanced
Study, Princeton, New Jersey 08540, USA.}}
\address{$^{2}${\small{}Ram Band, Department of Mathematics, Technion--Israel
Institute of Technology, Haifa 32000, Israel.}}
\begin{abstract}
The Neumann points of an eigenfunction $f$ on a quantum (metric)
graph are the interior zeros of $f'$. The Neumann domains of $f$
are the sub-graphs bounded by the Neumann points. Neumann points and
Neumann domains are the counterparts of the well-studied nodal points
and nodal domains.

We prove some foundational results in this field: bounds on the number
of Neumann points and properties of the probability distribution of
this number. Two basic properties of Neumann domains are presented:
the wavelength capacity and the spectral position. We state and prove
bounds on those as well as key features of their probability distributions.

To rigorously investigate those probabilities, we establish the notion
of random variables for quantum graphs. In particular, we provide
conditions for considering spectral functions of quantum graphs as
random variables with respect to the natural density on $\N$.
\end{abstract}

\maketitle
\subjclass[2000]{35Pxx, 57M20}

\section{Introduction \label{sec: Introduction}}

Nodal domains of Laplacian eigenfunctions form a central research
area within spectral geometry. Historically, the first rigorous results
in the field are by Sturm \cite{Stu_jmpa36}, Courant \cite{Cou_ngwg23}
and Pleijel \cite{Pleijel56}. Many works appeared since then, treating
nodal domains on manifolds, metric graphs and discrete graphs. The
nodal domain study on quantum (metric) graphs is a relatively modern
topic, starting\footnote{Noting that the the work of Sturm \cite{Stu_jmpa36} on the interval
may also be considered as a result on the simplest metric graph.} with \cite{GnuSmiWeb_wrm04} which provides an analogue of Courant's
bound for graphs and initial results on the statistics of the nodal
count. Further results came afterwards, including proofs of bounds
on the nodal count \cite{Schapotschnikow06,AlO_viniti92,PokPryObe_mz96,Ber_cmp08,BanBerWey_jmp15},
study of nodal statistics \cite{AloBanBer_cmp18,AloBanBer_conj},
solutions of nodal inverse problems \cite{Ban_ptrsa14,BanShaSmi_jpa06,JuuJoy_jphys18}
and variational characterizations of the nodal count \cite{BerWey_ptrsa14,BanBerRazSmi_cmp12}.

The current paper is devoted to a closely related notion, called Neumann
domains. On a metric graph, nodal domains are sub-graphs bounded by
the zeros of the eigenfunction. Similarly, Neumann domains are the
sub-graphs bounded by the zeros of the eigenfunction's derivative.
To the best of our knowledge, this is the first work on Neumann domains
on graphs\footnote{It is worthwhile to mention the interesting recent work on the related
topic of Neumann partitions on graphs \cite{KenKurLenMug_arxiv20,HofKenMugPlu20arxiv}. }. Even on manifolds, Neumann domains are a very recent topic of research
within spectral theory and is currently mentioned only in \cite{Zel_sdg13,McDFul_ptrs13,BanFaj_ahp16,BanEggTay_jga20,AloBanBerEgg_lms20,BanCoxEgg_arxiv20}.
Partial results of the current paper were already announced in \cite{AloBanBerEgg_lms20}
which reviews the Neumann domain research on manifolds and on graphs.

The paper is structured as follows. The rest of this section provides
the required preliminary definitions. Our main results are stated
in Section \ref{sec: main_results}. The proofs are then split between
a few sections: Section \ref{sec: proofs_of_bounds} provides the
proofs for the bounds and Sections \ref{sec: proofs_Neumann_count_and_spectral_position}
and \ref{sec: proofs_wavelength_capacity} contain the proofs of the
probabilistic statements. In between, there are two sections which
present and develop the tools needed for proving the probabilistic
statements. Section \ref{sec: tools_and_methods} presents existing
methods from the literature, whereas in Section \ref{sec: functions_on_secular_manifold}
we state and prove the additional required lemmas. In particular,
section \ref{sec: functions_on_secular_manifold} provides tools which
ought to be useful for anyone considering random variables in the
context of quantum graphs. The paper is concluded with a summary section
(Section \ref{sec:Discussion}). Appendix \ref{sec: Appendix-examples}
describes in detail calculations of Neumann and nodal counts of some
particular graph families. Appendices \ref{sec: appendix=00005C-proof-of-two-lemmas}
and \ref{sec:Appendix nodal domains} contain proofs to some lemmas.

\subsection{Basic graph definitions and notations\label{subsec: metric_graphs_introduction}}

Throughout this paper, the graphs we consider are connected and have
finite number of edges and vertices. We denote a graph by $\Gamma$
and denote by $\V$ and $\E$ its sets of vertices and edges, correspondingly.
We will always assume that these sets are non-empty and denote their
cardinalities by $V:=\left|\V\right|>0$ and $E:=\left|\E\right|>0$.
The graph is not necessarily simple. Namely, two vertices may be connected
by more than one edge and it is also possible for an edge to connect
a vertex to itself. An edge connecting a vertex to itself is called
a \emph{loop}.

Given a vertex $v\in\V$ we denote the multi-set of edges connected
to $v$ by $\E_{v}$. We note that every loop connected to $v$ will
appear twice in $\E_{v}$. The \emph{degree} of a vertex is denoted
by $\deg v:=\left|\E_{v}\right|$. The \emph{boundary of a graph}
is defined to be $\dg:=\left\{ v\in\V\thinspace|\,\deg v=1\right\} .$
The rest of the vertices, $\Vint$, are called \emph{interior vertices}.
We denote the first Betti number of a graph by
\begin{equation}
\beta:=E-V+1.\label{eq:Betti_number}
\end{equation}
Formally, $\beta$ is the rank of the graph's first homology. Intuitively,
$\beta$ is the number of ``independent'' simple closed paths in
the graph. A graph with $\beta=0$ is called a \emph{tree graph.}

We may further identify each edge $e_{j}\in\E$ with a real interval
of finite length $l_{j}>0$. Such a graph whose edges are supplied
with lengths is called a \emph{metric graph}. We commonly put all
of the graph edge lengths into a vector, $\lv=\left(l_{1},l_{2}...l_{E}\right)$
and denote the sum of all its entries by $\left|\Gamma\right|:=\sum_{j=1}^{E}l_{j}$.
This is also called the total length of the graph. A common assumption
in this paper is that the set of edge lengths form a linear independent
set over $\Q$. We abbreviate and call such a set \emph{rationally
independent.}

\subsection{Standard quantum graphs \label{subsec: Spectral-theory-of-Quantum-Graphs}}

It is convenient to describe a function $f$ on a metric graph $\Gamma$
in terms of its restrictions to edges, $f|_{e}:\left[0,l_{e}\right]\rightarrow\C$,
for $e\in\E$. The following function spaces are also defined in this
manner:
\begin{equation}
L^{2}\left(\Gamma\right):=\oplus_{e\in\E}L^{2}\left(\left[0,l_{e}\right]\right),\quad H^{2}\left(\Gamma\right):=\oplus_{e\in\E}H^{2}\left(\left[0,l_{e}\right]\right),
\end{equation}
where $H^{2}$ denotes a Sobolev space of order two. The Laplace operator
$\Delta:H^{2}\left(\Gamma\right)\rightarrow L^{2}\left(\Gamma\right)$
is defined edgewise by
\[
\Delta\ :\ f|_{e}\mapsto-\frac{d^{2}}{dx_{e}^{2}}f|_{e},
\]
where $x_{e}\in\left[0,l_{e}\right]$ is a coordinate chosen along
the edge $e$. In order for the Laplacian to be self-adjoint, its
domain is restricted to functions in $H^{2}\left(\Gamma\right)$ that
satisfy certain vertex conditions. A description of all vertex conditions
for which the Laplacian is self-adjoint can be found for example in
\cite{BerKuc_graphs}. Throughout this paper we only consider the\emph{
Neumann vertex conditions} (for which the Laplacian is indeed self-adjoint)
. A function $f\in H^{2}\left(\Gamma\right)$ is said to satisfy Neumann
vertex conditions (also known as Kirchhoff or standard conditions)
at a vertex $v\in\V$ if
\begin{enumerate}
\item The function $f$ is continuous at $v\in\V$, i.e.,
\begin{equation}
\forall e_{1},e_{2}\in\E_{v\,\,\,\,\,}f|_{e_{1}}\left(v\right)=f|_{e_{2}}\left(v\right).\label{eq:Neumann_continuity}
\end{equation}
\item The outgoing derivatives of $f$ at $v$, denoted by $\partial_{e}f\left(v\right)$
for every $e\in\E_{v}$, satisfy
\begin{equation}
\sum_{e\in\E_{v}}\partial_{e}f\left(v\right)=0.\label{eq:Neumann_deriv_conditions}
\end{equation}
\end{enumerate}
A degree two vertex with Neumann conditions may be eliminated without
changing the graph's spectral properties (see \cite[ex. 2.2]{Berkolaiko_qg-intro17},\cite[q. 2,(e)]{BanGnu_qg-exerices18}).
This allows to assume that the graph has no vertices of degree two,
which we indeed assume throughout this paper.
\begin{defn}
A \emph{standard quantum graph }$\Gamma$ is a connected metric graph,
with finitely many vertices and edges, and no vertices of degree two.
This metric graph is equipped with the Laplace operator and Neumann
vertex conditions at all vertices.
\end{defn}

\begin{rem}
This definition excludes the case of a single loop graph. In such
case all eigenvalues are degenerate and so non of the results in this
paper applies to the loop graph.
\end{rem}

If $\Gamma$ is a standard quantum graph then the corresponding Laplacian
is self-adjoint with real discrete spectrum, which we order increasingly,
as follows
\begin{equation}
0=\lambda_{0}<\lambda_{1}\le\lambda_{2}\nearrow\infty,
\end{equation}
 noting that each eigenvalue in this sequence appears as many times
as its multiplicity. There exists a choice of a real orthonormal $L^{2}\left(\Gamma\right)$
basis of eigenfunctions $\left\{ f_{n}\right\} _{n=0}^{\infty}$ corresponding
to the eigenvalues sequence \cite{BerKuc_graphs}. The choice of this
basis may not be unique (if there are non-simple eigenvalues) but
the results in this paper hold for any choice of basis. Note that
the first index is zero, so that $f_{0}$ is the constant eigenfunction
which corresponds to the eigenvalue $\lambda_{0}=0$ (which is a simple
eigenvalue, as we assume $\Gamma$ is connected). For convenience,
instead of the eigenvalues themselves, we consider their square roots,
$k_{n}:=\sqrt{\lambda_{n}}.$ Further information on the fundamental
theory of quantum graphs may be found in \cite{BerKuc_graphs,GnuSmi_ap06}.

\subsection{Loop-eigenfunctions and generic eigenfunctions\label{subsec: loop_eigenfunctions_and_generic_eigenfunctions}}

Let $\Gamma$ be a standard graph. An eigenfunction which is supported
on a single loop and vanishes elsewhere on the graph is called a\emph{
loop-eigenfunction.} Explicitly, a function $f$ is a loop-eigenfunction
supported on the loop $e$ if and only if
\begin{align}
k & \in\frac{2\pi}{l_{e}}\N\quad\quad\textrm{and }\quad f|_{e}\left(x\right)=A\sin\left(kx\right),\quad f|_{\Gamma\setminus e}\equiv0,
\end{align}
for some $A\in\C$ and arc-length parametrization $x\in\left[0,l_{e}\right]$.
In particular if a graph has loops then each of the loops has infinitely
many loop-eigenfunctions supported on it.
\begin{defn}
\label{def: generic_eigenfunction} Let $\Gamma$ be a standard graph.
Let $f$ be an eigenfunction of $\Gamma$. We call $f$ a \emph{generic
}eigenfunction\emph{ }if it satisfies all of the following.
\begin{enumerate}
\item \label{enu: def-generic-simple-evalue}It corresponds to a simple
eigenvalue.
\item \label{enu: def-generic-non-zero-value-at-vertices} It does not vanish
at vertices, $\forall v\in\V$~~ $f\left(v\right)\ne0$.
\item \label{enu:None-of-theenu: def-generic-non-zero-derivative-at-vertices}
None of the outgoing derivatives vanish at interior vertices,\\
 $\forall v\in\V\setminus\partial\Gamma\,,\,\forall e\in\E_{v}\,\,\,\,\partial_{e}f\left(v\right)\ne0$.
\end{enumerate}
\end{defn}

\begin{rem}
\label{rem: generic for trees}It is shown in \cite[Corollary 3.1.9]{BerKuc_graphs}
that if $\Gamma$ is a tree, then any eigenfunction $f$ satisfying
condition (\ref{enu: def-generic-non-zero-value-at-vertices}) must
correspond to a simple eigenvalue. Hence for trees its enough for
an eigenfunction to satisfy both conditions (\ref{enu: def-generic-non-zero-value-at-vertices})
and (\ref{enu:None-of-theenu: def-generic-non-zero-derivative-at-vertices})
to be generic.
\end{rem}

Once a certain basis of eigenfunctions, $\left\{ f_{n}\right\} _{n=0}^{\infty}$
is chosen we may define the following subsets of $\N$:
\begin{align}
\mathcal{G} & :=\left\{ n\in\N~:~\textrm{\ensuremath{f_{n}\,}is\,generic}\right\} \\
\L & :=\left\{ n\in\N~:~\textrm{\ensuremath{f_{n}\,}is\,a\,loop-eigenfunction}\right\} .
\end{align}

Observe that a loop-eigenfunction is not generic, so that $\G\cap\L=\emptyset$.
In order to quantify how many of the eigenfunctions belong to those
sets, we introduce.
\begin{defn}
\label{def: density}~
\begin{enumerate}
\item Let $A\subset\N$ and denote $A\left(N\right):=A\cap\left\{ 1,2,...N\right\} $
for some $N\in\N$. We say that $A$ has \emph{natural density} $d\left(A\right)$
if the following limit exists
\[
d\left(A\right):=\lim_{N\rightarrow\infty}\frac{\left|A\left(N\right)\right|}{N}.
\]
\item If $\G$ has a positive density, $d(\G)>0$ and $A\subset\G$ such
that $A$ has density, we define the relative density of $A$ in $\G$
by
\[
d_{\G}\left(A\right):=\lim_{N\rightarrow\infty}\frac{\left|A(N)\right|}{\left|\G\left(N\right)\right|}=\frac{d\left(A\right)}{d\left(\G\right)}.
\]
\end{enumerate}
\end{defn}

The densities of $\L$ and $\G$ are given in the following theorem\footnote{This Theorem generalizes \cite[Proposition A.1]{AloBanBer_cmp18}
and \cite[Theorem 3.6]{BerLiu_jmaa17}, as is proven and discussed
in \cite{Alon}.}.
\begin{thm}
\cite{Alon,AloBanBer_cmp18}\label{thm: density-of-generic-and-loop-eigenfunctions}
Let $\Gamma$ be a standard graph with rationally independent edge
lengths. Then both $\L$ and $\G$ have natural densities, given by
\[
d\left(\L\right)=\frac{L_{\mathrm{loops}}}{2\left|\Gamma\right|},\quad\quad d\left(\G\right)=1-d\left(\L\right),
\]
where $L_{\mathrm{loops}}$ is the total length of the loops of the
graph.

In particular $d(\G)\geq\frac{1}{2}$ and almost all non-loop-eigenfunctions
are generic.
\end{thm}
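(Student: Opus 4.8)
The plan is to establish the two densities by different routes. The value $d(\L)$ comes from Weyl's law together with the explicit list of loop-eigenvalues, and needs only the fact that the non-simple eigenvalues have density zero. The value $d(\G)$ comes from the equidistribution of the spectrum on the secular manifold --- the Barra--Gaspard framework developed in Section~\ref{sec: tools_and_methods} --- by matching the non-generic non-loop eigenfunctions with a measure-zero subset of that manifold.

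For $d(\L)$: by Weyl's law $\#\{n:k_{n}\le K\}=\frac{|\Gamma|}{\pi}K+O(1)$, so $k_{N}=\frac{\pi N}{|\Gamma|}+O(1)$. For a loop $e$ the loop-eigenfunctions supported on it sit at $k\in\frac{2\pi}{l_{e}}\N$, of which $\frac{l_{e}}{2|\Gamma|}N+O(1)$ satisfy $k\le k_{N}$. Rational independence rules out $\frac{2\pi m}{l_{e}}=\frac{2\pi m'}{l_{e'}}$ for $e\ne e'$ (it would force $m l_{e'}=m' l_{e}$), so distinct loops give pairwise-distinct eigenvalues, and the number of eigenvalues below $\lambda_{N}$ admitting a loop-eigenfunction is $\frac{L_{\mathrm{loops}}}{2|\Gamma|}N+O(1)$. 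Any eigenspace contains at most a one-dimensional space of loop-eigenfunctions (two on the same loop are proportional, two on different loops would again yield a $\Q$-linear relation among lengths), so $f_{n}$ is a loop-eigenfunction for at most one $n$ per eigenvalue, and for exactly one when that eigenvalue is simple. Hence $|\L(N)|$ differs from $\frac{L_{\mathrm{loops}}}{2|\Gamma|}N$ by at most $O(1)$ plus the number of non-simple eigenvalues below $\lambda_{N}$; the latter is $o(N)$ since the non-simple eigenvalues have density zero for rationally independent lengths (Section~\ref{sec: tools_and_methods}, \cite{Alon}), giving $d(\L)=\frac{L_{\mathrm{loops}}}{2|\Gamma|}$.

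For $d(\G)$: write $\kv_{n}:=k_{n}\lv\bmod 2\pi\in\torus$. By the Barra--Gaspard equidistribution theorem, the points $\{\kv_{n}\}_{n\ge1}$ equidistribute on the secular manifold $\Sigma\subset\torus$ with respect to the probability measure $\BGm$, so $d(\{n:\kv_{n}\in A\})=\BGm(A)$ for $\BGm$-continuity sets $A\subseteq\Sigma$, and each point of the regular locus $\mreg$ determines its eigenfunction up to a scalar. It suffices to show that the set $B\subseteq\Sigma$ of phases whose eigenfunction is neither generic nor a loop-eigenfunction is $\BGm$-null; then the generic locus $\mgen$ differs from $\Sigma\setminus\mloop$ by a $\BGm$-null set, and $d(\G)=\BGm(\mgen)=1-\BGm(\mloop)=1-d(\L)$, the last equality because $\mloop$ carries exactly the (essentially simple) loop-eigenvalues. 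Now $B\subseteq\msing\cup\bigcup_{v\in\V}Z_{v}\cup\bigcup_{v\in\Vint}\bigcup_{e\in\E_{v}}Z_{v,e}$, where $\msing$ is the set of phases of non-simple eigenvalues, $Z_{v}=\{\kv\in\mreg:f_{\kv}(v)=0\}$, and $Z_{v,e}=\{\kv\in\mreg:\d_{e}f_{\kv}(v)=0\}$. The set $\msing$ is $\BGm$-null (non-simple eigenvalues have density zero; it has codimension at least two in $\torus$). The functionals $\kv\mapsto f_{\kv}(v)$ and $\kv\mapsto\d_{e}f_{\kv}(v)$ are real-analytic on $\mreg$; they vanish identically on the loop part $\mloop$ (the loop case, already accounted for in $\L$), but not identically on any other connected component of $\mreg$, so on the complement of $\mloop$ their zero sets are proper analytic subvarieties and hence $\BGm$-null. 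Therefore $\BGm(B)=0$, which yields $d(\G)=1-d(\L)$. Finally $L_{\mathrm{loops}}\le|\Gamma|$ gives $d(\L)\le\frac{1}{2}$ and so $d(\G)\ge\frac{1}{2}$; and since $d(\N\setminus\L)=1-d(\L)=d(\G)>0$ while the non-generic non-loop indices have density zero, the relative density of $\G$ in $\N\setminus\L$ equals $d(\G)/d(\N\setminus\L)=1$, i.e. almost all non-loop-eigenfunctions are generic.

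The main obstacle is the claim that $\kv\mapsto f_{\kv}(v)$ and $\kv\mapsto\d_{e}f_{\kv}(v)$ vanish identically only on the loop components of $\mreg$: this requires a structural description of the secular manifold and of how an eigenfunction localizes along each of its components, which is the substantial content supplied by \cite{Alon}. By contrast the other ingredients --- Weyl's law, the Barra--Gaspard equidistribution on $\Sigma$, and the density-zero statement for the non-simple eigenvalues (equivalently, $\BGm(\msing)=0$) --- are standard.
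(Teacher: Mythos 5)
The paper does not actually prove Theorem \ref{thm: density-of-generic-and-loop-eigenfunctions}: it is imported from \cite{Alon,AloBanBer_cmp18} (see the footnote attached to its statement), so there is no in-paper argument to compare yours against step by step. Your sketch is nonetheless a sound reconstruction of the standard route, and it isolates the genuinely hard ingredient correctly. The Weyl-law count of loop-eigenvalues, the use of rational independence to keep distinct loops from resonating at a common $k$, and the reduction modulo the density-zero set of non-simple eigenvalues are all fine. For $d(\G)$ you rightly invoke the Barra--Gaspard equidistribution of the \emph{full} sequence $\left\{ \modp{k_{n}\lv}\right\} _{n\in\N}$ on $\Sigma$ (the version in the references, cf.\ the remark after Theorem \ref{thm: equidistribution_by_BG_measure}) rather than the paper's Theorem \ref{thm: equidistribution_by_BG_measure} itself, which is stated only for $n\in\G$ and whose normalization constant already encodes $d(\G)$ --- using it here would be circular. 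Note, for the same reason, that your $\BGm$ must be the measure normalized over all of $\Sigma$, not the paper's version renormalized on $\overline{\mgen}$ (for which $\BGm(\mgen)=1$ trivially and your chain of equalities would collapse).

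Two points deserve more care. First, equidistribution converts measure into density only for Jordan sets, so one must check that $\msing$, the sets $\left\{ p_{v,v}=0\right\} \cap\Sigma$ and $\left\{ q_{v,v,e}=0\right\} \cap\Sigma$ away from $\mloop$, and $\mloop$ itself all have boundaries of measure zero; you gesture at this (``continuity sets'') but do not verify it, although it does follow since each is contained in a closed set of measure zero, respectively has boundary of positive codimension in $\Sigma$. Second, and more substantially, the claim that $f_{\kv}(v)$ and $\partial_{e}f_{\kv}(v)$ vanish identically only on the loop components of $\mreg$ is precisely the structural description of the secular manifold supplied by \cite{Alon}; you flag this honestly, but as written your argument is a reduction to that reference rather than a self-contained proof --- which puts it on essentially the same footing as the paper's own citation.
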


The results of this paper are stated for generic eigenfunctions. Accordingly,
the probabilistic statements in the paper are stated using $d_{\G}$,
rather than the natural density, $d$ (see Section \ref{sec: main_results}).
The last theorem shows that $d_{\G}$ and $d$ differ only if the
graph has loops.

\subsection{Neumann domains and Neumann count}
\begin{defn}
\label{def: Neumann_points_Nodal_points_etc} Let $f$ be a generic
eigenfunction of a standard graph $\Gamma$. An interior point $x\in\Gamma\setminus\V$
is called a \emph{nodal point} if $f\left(x\right)=0$ and is called
a \emph{Neumann point }if $f'\left(x\right)=0$.

Removing the nodal points of $f$ from $\Gamma$ disconnects the graph.
The connected components of this new graph are called the \emph{nodal
domains} of $f$. Similarly, the connected components of $\Gamma$
without $f$'s Neumann points, are called the \emph{Neumann domains}
of $\Gamma$ (see Figure \ref{fig:Neumann domain example}) . If a
Neumann domain is a single interval we call it a \emph{trivial Neumann
domain}. Hence, a Neumann domain is \emph{non-trivial} if it contains
some vertex of degree at least three.
\end{defn}

\begin{figure}[h]
\centering{}\includegraphics[width=1\textwidth]{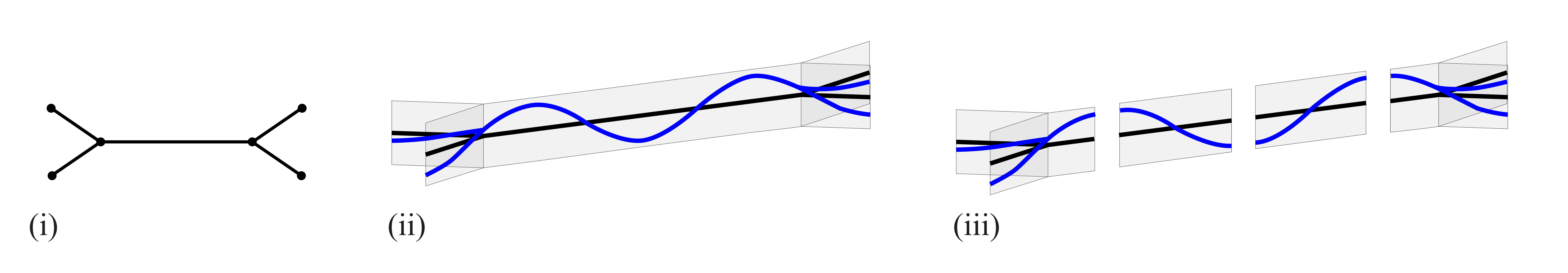}
\caption{(i) A graph $\Gamma$ (ii) An eigenfunction $f$ of $\Gamma$ (iii)
A decomposition of $\Gamma$ into four Neumann domains according to
$f$. The middle two Neumann domain are trivial Neumann domains. }
\label{fig:Neumann domain example}
\end{figure}

The names nodal domains and Neumann domains are adopted from similar
definitions for manifolds. For graphs these names might be deceiving,
as a nodal domain is actually a graph (which may be considered as
a sub-graph of $\Gamma$), and similarly for a Neumann domain.

If $f$ is a generic eigenfunction then it has a finite number of
nodal points and a finite number of Neumann points and we denote those
numbers by
\begin{align*}
\phi(f) & :=\left|\left\{ x\in\Gamma\setminus\V\,|\,f\left(x\right)=0\right\} \right|,\\
\xi(f) & :=\left|\left\{ x\in\Gamma\setminus\V\,|\,f'\left(x\right)=0\right\} \right|,
\end{align*}
and correspondingly name them the \emph{nodal count} and the \emph{Neumann
count }of $f$ ($\phi$ stands for sign \textbf{\emph{f}}lips of $f$,
and $\xi$ stands for e\textbf{\emph{x}}tremal points of $f$)\emph{.}
The nodal count is well-studied both for graphs and manifolds. A fundamental
result (see\footnote{To avoid confusion when comparing to those works, recall that we start
indexing the eigenvalues from $n=0$.} \cite{GnuSmiWeb_wrm04,Ber_cmp08,BanBerSmi_ahp12}) is the following
bounds for the nodal count of the $n^{th}$ eigenfunction, $f_{n}$,
\begin{equation}
0\le\phi\left(f_{n}\right)-n\le\beta,\label{eq: nodal surplus bound}
\end{equation}
where $\beta$ is the graph's first Betti number, (\ref{eq:Betti_number}).
It is common to denote \emph{$\sigma(n):=\phi(f_{n})-n$} and name
this by \emph{nodal surplus. }Analogously, denote \emph{
\[
\omega\left(n\right):=\xi\left(f_{n}\right)-n,
\]
}and call $\omega(n)$ the \emph{Neumann surplus}. Note that the name
surplus might be misleading in this case, as possibly $\xi(f_{n})<n$
(see Theorem \ref{thm:Neumann_surplus_main}).

We note that the functions $\sigma$ and $\omega$ are defined not
for every natural number, but only for every $n\in\G$.

\subsection{Spectral position and wavelength capacity\label{subsec:Spectral-position-and-wavelength-capacity}}

\subsubsection*{\uline{Spectral position}}

The restriction of an eigenfunction to one of its nodal domains is
an eigenfunction of that domain with Dirichlet boundary conditions.
Furthermore, a well known (and commonly used) observation is that
the restricted eigenfunction is the first Dirichlet eigenfunction
(ground state) of that domain. This statement holds for manifolds,
as well as for graphs.

It is natural to inquire whether an analogous result holds for Neumann
domains. Namely, whether the restriction of an eigenfunction to a
Neumann domain is a Neumann eigenfunction of that domain; and whether
it is the first non-trivial (i.e., non constant) Neumann eigenfunction.
The quick (and somewhat superficial) answer is that for manifolds,
generically, the restricted eigenfunction is indeed a Neumann eigenfunction,
but it is not necessarily the first non-trivial Neumann eigenfunction
\cite{BanFaj_ahp16,BanEggTay_jga20,BanCoxEgg_arxiv20,BanCoxEgg_prep21}.
In the current paper we treat this problem for quantum graphs. Given
a generic eigenfunction $f$ with eigenvalue $k^{2}$ and a Neumann
domain $\Omega$ of $f$, it is easy to show that $\left.f\right|_{\Omega}$
is an eigenfunction with eigenvalue $k^{2}$ of $\Omega$, considered
as a standard graph (just observe that the vertex conditions are satisfied).
The interesting question would be what is the position of $k^{2}$
in the spectrum of $\Omega$. To fix terminology, we introduce
\begin{defn}
\label{def: spectral position} ~

Let $f$ be a generic eigenfunction with eigenvalue $k^{2}$ and let
$\Omega$ be a Neumann domain of $f$. We define the \emph{spectral
position} of $\Omega$ as
\[
N(\Omega):=\left|\left\{ 0\leq\lambda<k^{2}\,|\,\lambda\textrm{ is an eigenvalue of }\Omega\right\} \right|,
\]
where $\Omega$ is considered as a standard graph. This notation does
not explicitly include $k$, which is assumed to be understood from
the context.
\end{defn}

\subsubsection*{\uline{Wavelength capacity}}
\begin{defn}
\label{def: rho} \,

Let $f$ be a generic eigenfunction with eigenvalue $k^{2}$ and let
$\Omega$ be a Neumann domain of $f$. We define the \emph{wavelength
capacity} of $\Omega$ as
\[
\rho(\Omega):=\frac{\left|\Omega\right|k}{\pi},
\]
where $\left|\Omega\right|$ is the sum of edge lengths of $\Omega$.
This notation of the wavelength capacity does not explicitly include
$k$. It is assumed that the value of $k$ is understood from the
context.
\end{defn}

The meaning of the definition is easily demonstrated when $\Omega$
is an interval. Then, $\rho(\Omega)$ counts the number of oscillations
of an eigenfunction $f$ of $\Omega$ which corresponds to the eigenvalue
$k^{2}$. In other words, $\rho(\Omega)$ counts the number of 'half-wavelengths'
of $f$ within $\Omega$; hence, its name.

We further note that the wavelength capacity is the one-dimensional
analogue of a similar parameter for Neumann domains on manifolds.
The normalized area to perimeter ratio of a Neumann domain $\Omega$
on a two-dimensional manifold is defined to be $\frac{\left|\Omega\right|\sqrt{\lambda}}{\left|\partial\Omega\right|}$,
where $\left|\Omega\right|$ is the area of the Neumann domain and
$\left|\partial\Omega\right|$ is its perimeter length, \cite{AloBanBerEgg_lms20,EGJS07}.

\section{Main results \label{sec: main_results}}

Our main results concern the properties mentioned in the previous
section: Neumann count, nodal count, spectral position and wavelength
capacity. The Neumann count and the nodal count are properties of
an eigenfunction on the whole graph, so we call those global observables.
The spectral position and wavelength capacity are attributes of individual
Neumann domains and they are called local observables. For each of
those observables we prove bounds and basic features of their probability
distributions. Denote $\G\left(N\right):=\G\cap\left\{ 0,1,...N\right\} $.
The probability distribution we assign to an observable $h:\G\rightarrow\R$
is characterized by
\[
d_{\G}\left(h^{-1}\left(B\right)\right):=\lim_{N\rightarrow\infty}\frac{\left|\set{n\in\G\left(N\right)}{h\left(n\right)\in B}\right|}{\left|\G\left(N\right)\right|}\,;\,B\subset\R,
\]
if such a limit exists. Indeed, it is natural to define the probability
distributions of those observables in terms of the density, $d_{\G}$.
However, $d_{\G}$ is not necessarily a probability measure on $\G$.
Hence, we are required to determine whether each of the observables
is a random variable with respect to $d_{\G}$ (the answer is not
always obvious, as can be seen in the rest of this section).
\begin{defn}
\label{def: random variable}A function $h:\G\rightarrow\R$ is called
a random variable with respect to $d_{\G}$ if for every $B\subset\R$
Borel set, $d_{\G}(h^{-1}(B))$ exists and $d_{\G}$ is a probability
measure on the $\sigma$-algebra generated by $h$.
\end{defn}

\subsection{Global observables}
\begin{thm}
\label{thm:Neumann_surplus_main} Let $\Gamma$ be a standard graph
whose first Betti number is $\beta$ and boundary size is $\left|\d\Gamma\right|$.
\begin{enumerate}
\item \label{enu:thm-Neumann_surplus_bounds} Let $f_{n}$ be the $n$'s
eigenfunction and assume it is generic. The Neumann surplus $\omega(n):=\mu\left(f\right)-n$
is bounded by
\begin{equation}
1-\beta-\left|\d\Gamma\right|\leq\omega(n)\leq2\beta-1.\label{eq: Neumann_Surplus_bounds}
\end{equation}
\item \label{enu:thm-density_of_Neumann_surplus} Further assume that $\Gamma$
has rationally independent edge lengths. Then the following holds
\begin{enumerate}
\item \label{enu:thm-density_of_Neumann_surplus-a} The Neumann surplus,
$\omega$, is a finite random variable with respect to $d_{\G}$.
In particular, the probability distribution of $\omega$ is
\[
\P(\omega=j):=d_{\G}\left(\omega^{-1}(j)\right).
\]

\item \label{enu:thm-density_of_Neumann_surplus-b} If $\omega^{-1}(j)\neq\emptyset$
then $\P(\omega=j)>0$.\\
In particular, any value which $\omega$ attains is obtained infinitely
often.
\item \label{enu:thm-density_of_Neumann_surplus-c} The probability distribution
of $\omega$ is symmetric around $\frac{1}{2}(\beta-\partial\Gamma)$.
Namely,
\[
\P(\omega=j)=\P(\omega=\beta-\left|\d\Gamma\right|-j).
\]
\end{enumerate}
\end{enumerate}
\end{thm}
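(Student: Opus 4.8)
The plan is to treat the four assertions in order, reducing everything to a single combinatorial quantity attached to a generic eigenfunction. First I would fix a generic $f = f_n$ with eigenvalue $k^2$ and examine the structure of its Neumann and nodal points on each edge. On an edge of length $l_e$, the restriction $f|_e$ is of the form $A\cos(kx) + B\sin(kx)$, so between any two consecutive zeros of $f$ there is exactly one zero of $f'$, and vice versa; thus the Neumann points and nodal points interlace \emph{inside} every edge. The discrepancy between $\xi(f)$ and $\phi(f)$ therefore comes entirely from the vertices: at each vertex $v$ one must compare how many nodal points and how many Neumann points the edge-intervals adjacent to $v$ "lose" to the boundary. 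Genericity guarantees $f(v)\neq 0$ and $\partial_e f(v)\neq 0$ at interior vertices, so the bookkeeping is clean. The first step is to establish an \emph{exact} local formula: for each vertex $v$, the contribution to $\xi(f)-\phi(f)$ is a function only of the signs of $f$ and of the outgoing derivatives at $v$, together with $\deg v$. Summing over all vertices and combining with the known nodal bound \eqref{eq: nodal surplus bound}, namely $0 \le \phi(f_n)-n \le \beta$, gives $\omega(n) = \sigma(n) + \big(\xi(f)-\phi(f)\big)$, and the task reduces to bounding the vertex sum $\xi(f)-\phi(f)$ from both sides. I expect the bound to come out as $-\beta - |\partial\Gamma| + 1 \le \xi(f)-\phi(f) - (\text{something}) $, and after adding the range $[0,\beta]$ for $\sigma$ one recovers exactly $1-\beta-|\partial\Gamma| \le \omega(n) \le 2\beta-1$; part \eqref{enu:thm-Neumann_surplus_bounds} follows.

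For the probabilistic part I would invoke the secular-manifold machinery referenced in the excerpt (Sections \ref{sec: tools_and_methods}, \ref{sec: functions_on_secular_manifold}). Under rational independence of the edge lengths, the sequence of generic eigenfunctions equidistributes with respect to a natural measure on the (regular part of the) secular torus $\torus$, and any quantity that is a measurable function of the "vertex data" of the eigenfunction — here the signs of $f$ and of its outgoing derivatives at each vertex, equivalently a point in $\torus$ together with a discrete secular-branch label — descends to a function on $\torus$ whose level sets are semialgebraic, hence measurable. The key point is that $\omega$ depends on the eigenfunction only through this finite combinatorial data (by the local formula from the first paragraph, together with the Berkolaiko–Weyl type variational expression for $\sigma$ as a Morse index / count of negative eigenvalues of the vertex-scattering-derivative matrix, which is again a function of the torus point and branch). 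Therefore $\omega^{-1}(j)$ pulls back to a positive-measure-or-null semialgebraic subset of $\torus$, $d_\G(\omega^{-1}(j))$ exists, and the total mass is $1$ since $\omega$ takes only finitely many values by part \eqref{enu:thm-Neumann_surplus_bounds}; this gives \eqref{enu:thm-density_of_Neumann_surplus-a}. For \eqref{enu:thm-density_of_Neumann_surplus-b}, a nonempty semialgebraic set on which the relevant secular conditions hold contains an open set (the conditions defining "$\omega=j$" are, away from a measure-zero set, open because they only constrain finitely many signs), so its measure is strictly positive; equidistribution then forces the value to recur with positive density, in particular infinitely often.

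The symmetry statement \eqref{enu:thm-density_of_Neumann_surplus-c} is where I expect the real work to lie, and the strategy is to exhibit a measure-preserving involution on the relevant part of $\torus$ that sends $\omega$ to $\beta - |\partial\Gamma| - \omega$. The natural candidate is the analogue of the involution used for the nodal surplus in \cite{AloBanBer_cmp18, AloBanBer_conj}: at each vertex, reversing all the phases (equivalently, replacing the eigenfunction's local data $f|_e \mapsto$ a "$k \to$ complementary" reflection, or flipping every $\theta_e \mapsto \pi - \theta_e$) interchanges the roles of "$f$ small, $f'$ large" and "$f$ large, $f'$ small" at every vertex, which in the local formula swaps the count of lost nodal points with the count of lost Neumann points. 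One must check three things: that this map is an involution on $\torus^{\mathrm{reg}}$, that it preserves the equidistribution measure (it is built from coordinate reflections, so this should be immediate), and — the delicate step — that it exactly negates the centered quantity $\omega - \tfrac12(\beta - |\partial\Gamma|)$. The last point requires combining the vertex-wise swap with the known symmetry $\sigma \leftrightarrow \beta - \sigma$ of the nodal surplus under the \emph{same} involution, and verifying that the constant shift is precisely $\beta - |\partial\Gamma|$ rather than $\beta$; tracking this shift carefully through the boundary vertices (where $\deg v = 1$ and the interlacing is one-sided) is the main obstacle and the place where the term $|\partial\Gamma|$ enters. Once the involution is in place, $\P(\omega=j) = \P(\omega = \beta - |\partial\Gamma| - j)$ is immediate, completing the proof.
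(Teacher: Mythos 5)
Your overall route coincides with the paper's: part (\ref{enu:thm-Neumann_surplus_bounds}) via interlacing of nodal and Neumann points on each edge plus sign bookkeeping at the vertices, combined with $0\le\sigma(n)\le\beta$; parts (\ref{enu:thm-density_of_Neumann_surplus-a})--(\ref{enu:thm-density_of_Neumann_surplus-c}) via the secular manifold, equidistribution, and the measure-preserving inversion $\I$. However, there is one genuine gap in the deterministic part. You correctly reduce everything to a vertex-local formula in the signs $\sgn\bigl(f(v)\partial_{e}f(v)\bigr)$, but you never supply the mechanism that actually produces the stated bounds: at an interior vertex the Neumann condition $\sum_{e\in\E_{v}}\partial_{e}f(v)=0$, together with genericity ($f(v)\ne0$ and $\partial_{e}f(v)\ne0$), forces the products $f(v)\partial_{e}f(v)$ to include at least one positive and one negative term, whence
\begin{equation*}
\Bigl|\sum_{e\in\E_{v}}\sgn\bigl(f(v)\partial_{e}f(v)\bigr)\Bigr|\le\deg v-2 .
\end{equation*}
Without the $-2$ per interior vertex, summing $\deg v$ over $\V\setminus\partial\Gamma$ only yields $|\phi(f)-\xi(f)|\lesssim E$, not the required $1-\beta\le\phi(f)-\xi(f)\le\beta-1+\left|\partial\Gamma\right|$; it is precisely the telescoping $\frac{1}{2}\sum_{v\in\V\setminus\partial\Gamma}(\deg v-2)=E-V+\frac{\left|\partial\Gamma\right|}{2}=\beta-1+\frac{\left|\partial\Gamma\right|}{2}$ that makes the Betti number appear. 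Writing ``I expect the bound to come out as'' leaves the essential inequality unproven.

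A secondary precision issue concerns part (\ref{enu:thm-density_of_Neumann_surplus-a}): you argue that the level sets of $\bs{\omega}$ are ``semialgebraic, hence measurable'' and conclude that $d_{\G}(\omega^{-1}(j))$ exists. Measurability is not sufficient here --- equidistribution only transfers measure to density for sets whose boundary has measure zero (Jordan sets); the paper even records the counterexample $\A=\cup_{n\in\G}\flow(n)$, which is measurable of measure zero yet has density one. The correct argument, which your parenthetical remark about openness gestures at but does not pin down, is that $\bs{\omega}$ is built from $\sgn(q_{v,v,e})$ and $\bs{\sigma}$, all constant on the finitely many connected components of $\mgen$; each level set is therefore a finite union of open-and-closed components, has empty boundary, and Corollary \ref{cor: density_equals_BG_for_Jordan_set} applies. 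This same openness is what gives strict positivity in part (\ref{enu:thm-density_of_Neumann_surplus-b}). Your treatment of the symmetry (\ref{enu:thm-density_of_Neumann_surplus-c}) via the inversion involution, including the need to track the shift $\beta-\left|\partial\Gamma\right|$ through the anti-symmetry of the vertex sign data and of $\bs{\sigma}$, matches the paper's argument.
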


An interesting (and simple) observation is that the symmetry of the
random variables $\omega$ and $\sigma$ implies their expected values.
Namely, from the last part of Theorem \ref{thm:Neumann_surplus_main}
together with \cite[Theorem 2.1]{AloBanBer_cmp18} we deduce
\begin{cor}
\label{cor:Expected_values_of_nodal_and_Neumann_supluses} Denoting
$\G(N):=\G\cap\{1,\ldots,N\}$, we have
\begin{align*}
\mathbb{E}\left(\sigma\right)=\lim_{N\rightarrow\infty}\frac{1}{\left|\G(N)\right|}\sum_{n\in\G(N)}\sigma(n) & =\frac{\beta}{2}\\
\mathbb{E}\left(\omega\right)=\lim_{N\rightarrow\infty}\frac{1}{\left|\G(N)\right|}\sum_{n\in\G(N)}\omega(n) & =\frac{\beta-\left|\partial\Gamma\right|}{2}.
\end{align*}
\end{cor}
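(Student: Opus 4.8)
The plan is to deduce both expected values purely from the symmetry of the distributions together with the existing boundedness and random-variable properties. For $\omega$, Theorem \ref{thm:Neumann_surplus_main} parts (\ref{enu:thm-density_of_Neumann_surplus-a}) and (\ref{enu:thm-density_of_Neumann_surplus-c}) tell us that $\omega$ is a finite random variable with respect to $d_{\G}$ whose distribution is symmetric about $\tfrac12(\beta-|\dg|)$. A finite (i.e.\ finitely-supported) random variable automatically has a finite first moment, and the expectation of any integrable random variable whose law is symmetric about a point $m$ equals $m$. Hence $\mathbb{E}(\omega)=\tfrac12(\beta-|\dg|)$. The analogous statement for $\sigma$ is quoted from \cite[Theorem 2.1]{AloBanBer_cmp18}, which asserts that $\sigma$ is a finite random variable with respect to $d_{\G}$ with distribution symmetric about $\tfrac{\beta}{2}$; the same one-line argument gives $\mathbb{E}(\sigma)=\tfrac{\beta}{2}$.

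The one genuine point requiring care is that the quantity written in the statement, $\lim_{N\to\infty}\frac{1}{|\G(N)|}\sum_{n\in\G(N)}\omega(n)$, is a Cesàro-type average along $\G$ and must be identified with the probabilistic expectation $\sum_j j\,\P(\omega=j)$. This is where I would spend a sentence or two. Because $\omega$ takes only finitely many values $j_1,\dots,j_r$ (by the bound \eqref{eq: Neumann_Surplus_bounds}), we may write, for each $N$,
\[
\frac{1}{|\G(N)|}\sum_{n\in\G(N)}\omega(n)=\sum_{i=1}^{r} j_i\cdot\frac{\left|\set{n\in\G(N)}{\omega(n)=j_i}\right|}{|\G(N)|}.
\]
Each fraction on the right converges, as $N\to\infty$, to $d_{\G}(\omega^{-1}(j_i))=\P(\omega=j_i)$ by part (\ref{enu:thm-density_of_Neumann_surplus-a}); since the sum is finite, the left-hand side converges to $\sum_i j_i\,\P(\omega=j_i)=\mathbb{E}(\omega)$. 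Thus the limit in the corollary exists and equals the expectation, and the symmetry argument finishes it. The identical manipulation applies to $\sigma$, using the finiteness of its range and the convergence of the relative densities of its level sets supplied by \cite{AloBanBer_cmp18}.

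I do not expect any real obstacle here; the corollary is, as the text says, ``interesting (and simple).'' The only thing to be vigilant about is not to invoke symmetry of a distribution to compute a mean without first knowing the mean exists — but finiteness of the support makes that automatic, so the argument is clean. If one wanted to be fully self-contained one could also note that the symmetry $\P(\omega=j)=\P(\omega=\beta-|\dg|-j)$ pairs each value $j$ with $\beta-|\dg|-j$, and averaging the two contributions $j\,\P(\omega=j)+(\beta-|\dg|-j)\,\P(\omega=\beta-|\dg|-j)=(\beta-|\dg|)\,\P(\omega=j)$ and summing over a set of representatives gives $\mathbb{E}(\omega)=(\beta-|\dg|)\cdot\tfrac12\sum_j\P(\omega=j)=\tfrac12(\beta-|\dg|)$, since the total mass is $1$ by part (\ref{enu:thm-density_of_Neumann_surplus-a}).
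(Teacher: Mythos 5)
Your proposal is correct and follows essentially the same route as the paper, which derives the corollary directly from the symmetry of the distributions (Theorem \ref{thm:Neumann_surplus_main}, part (\ref{enu:thm-density_of_Neumann_surplus-c}), and the corresponding result for $\sigma$) together with the fact that both are finitely supported random variables with respect to $d_{\G}$. The paper states this as an immediate consequence without spelling out the identification of the Ces\`aro average with $\sum_j j\,\P(\omega=j)$; your explicit justification of that step via the convergence of the relative densities of the finitely many level sets is exactly the argument left implicit there.
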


Thus, the expected values of the nodal surplus and of the Neumann
surplus store two topological properties of the graph. We discuss
the importance of this result in the context of inverse problems in
Section \ref{sec:Discussion}.

The first part of Theorem \ref{thm:Neumann_surplus_main} provides
the bounds (\ref{eq: Neumann_Surplus_bounds}) on the Neumann surplus.
We conjecture that for large $\beta$ values, better bounds hold.
\begin{conjecture}
\label{conj:strict_bounds_Neumann_surplus} Let $f$ be a generic
eigenfunction whose spectral position is $n$. The Neumann surplus
$\omega(n):=\mu\left(f\right)-n$ is bounded by
\begin{equation}
-1-\left|\d\Gamma\right|\le\omega(n)\le\beta+1.\label{eq:strict_bounds_Neumann_surplus}
\end{equation}
\end{conjecture}

We note that for $\beta>2$ the bounds in (\ref{eq:strict_bounds_Neumann_surplus})
are stricter than those in (\ref{eq: Neumann_Surplus_bounds}). This
conjecture is supported by a numerical investigation as well as analytic
study of particular graph families; See Appendix \ref{sec: Appendix-examples}.

Theorem \ref{thm:Neumann_surplus_main} sums up our knowledge on the
probability distribution of the Neumann surplus for general graphs.
No explicit expression for this distribution is known in general.
Nevertheless, for particular graph families we do have a concrete
expression.
\begin{defn}
Let $d_{1},d_{2}\in\N$. A graph, each of whose vertices is either
of degree $d_{1}$ or of degree $d_{2}$ is called a $(d_{1},d_{2})$-regular
graph.
\end{defn}

\begin{thm}
\label{Thm: (3,1)-regular-trees} Let $\Gamma$ be a standard graph
which is a $(3,1)$-regular tree. If $\Gamma$ has rationally independent
edge lengths, then the probability distribution of the random variable
$-\omega-1$ is binomial, $\textrm{Bin}(\left|\partial\Gamma\right|-2,\frac{1}{2})$.
Explicitly, for every integer $-\left|\dg\right|+1\le j\le-1$,
\begin{equation}
\P(\omega=j)=\binom{\left|\partial\Gamma\right|-2}{-j-1}2^{2-\left|\partial\Gamma\right|}.\label{eq: Binomial_distribution_of_Neumann_surplus}
\end{equation}
\end{thm}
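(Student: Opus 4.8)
**The plan is to reduce the problem to a local computation on a tree, combine it with the probabilistic machinery of the earlier results, and then compute the resulting distribution explicitly using the structure of $(3,1)$-regular trees.**

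First I would set up the combinatorics of a $(3,1)$-regular tree. If $\Gamma$ is a tree with $|\dg|$ leaves and all interior vertices of degree $3$, then an Euler-characteristic count (using $\beta=0$, i.e.\ $E=V-1$, together with $\sum_v \deg v = 2E$) gives the number of interior vertices as $|\dg|-2$ and the number of edges as $E = 2|\dg|-3$. This is exactly why $|\dg|-2$ appears in the binomial: each interior (degree-three) vertex will contribute an independent $\pm$ sign. The next step is a \emph{local analysis at each interior vertex}: for a generic eigenfunction $f$ with eigenvalue $k^2$, I want to know how the Neumann count on the edges incident to a degree-three vertex $v$ relates to $n$. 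Since $\Gamma$ is a tree, Remark~\ref{rem: generic for trees} tells us genericity is equivalent to conditions (2) and (3), so $f(v)\ne 0$ and all three outgoing derivatives are nonzero. The key observation is that the signs of $f(v)$ and of the three derivatives $\partial_e f(v)$, constrained by the Neumann condition $\sum_{e\in\E_v}\partial_e f(v)=0$, determine whether an extra Neumann point is forced near $v$ on some incident edge; this should give a contribution $\pm\tfrac12$ (or integer contribution after pairing across the vertex) to $\omega$ depending on a single $\pm1$ sign at $v$. I would make this precise by writing $f|_e(x) = \alpha_e\cos(kx) + \beta_e \sin(kx)/k$ on each edge and counting zeros of $f$ and of $f'$ explicitly in terms of the boundary data, much as in the nodal-count arguments of \cite{Ber_cmp08,BanBerSmi_ahp12}.

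Having localized the Neumann surplus as a sum $\omega(n) = -1 + \sum_{v} \varepsilon_v(n)$ over the $|\dg|-2$ interior vertices, with each $\varepsilon_v(n)\in\{0,1\}$ (the constant $-1$ coming from the two ``extra'' leaves in the Euler count, matching the bound $\omega\ge 1-\beta-|\dg| = 1-|\dg|$ and the stated range $-|\dg|+1\le j\le -1$, equivalently $0\le -j-1\le |\dg|-2$), the remaining task is to show that as $n$ ranges over $\G$ the signs $\big(\varepsilon_v(n)\big)_v$ become equidistributed and asymptotically independent, each with probability $\tfrac12$. This is where I would invoke the secular-manifold / ergodic-flow technology referenced in Theorem~\ref{thm: density-of-generic-and-loop-eigenfunctions} and developed in the later sections (\`a la \cite{AloBanBer_cmp18, AloBanBer_conj, Alon}): under rational independence of the edge lengths, the relevant vertex data $\big(f(v), \partial_e f(v)\big)$ at distinct vertices is governed by the equidistribution of the point $(k l_1, \dots, k l_E) \bmod 2\pi$ on the torus $\torus$, and the sign pattern $\varepsilon_v$ depends on coordinates attached to disjoint edge sets (since the tree's edges partition among the vertices appropriately), so the $\varepsilon_v$ are independent fair coin flips with respect to $d_\G$. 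Then $-\omega-1 = \sum_v \varepsilon_v$ is a sum of $|\dg|-2$ i.i.d.\ $\mathrm{Bernoulli}(\tfrac12)$ variables, hence $\mathrm{Bin}(|\dg|-2,\tfrac12)$, giving \eqref{eq: Binomial_distribution_of_Neumann_surplus}.

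\textbf{The main obstacle} I anticipate is the independence claim: showing not merely that each $\varepsilon_v$ is asymptotically balanced but that the joint distribution over all interior vertices factorizes. The sign $\varepsilon_v$ genuinely depends on $f$ near $v$, which in principle is determined by \emph{global} eigenfunction data; the reduction to disjoint blocks of torus coordinates requires care, presumably using that on a tree the secular equation and the eigenfunction reconstruction localize along the unique paths between vertices, so that after a change of variables the indicator of $\{\varepsilon_v = 1\}$ becomes a function of an independent group of angular coordinates. Making this rigorous will lean on the lemmas about functions on the secular manifold from Section~\ref{sec: functions_on_secular_manifold}, and checking that the relevant level sets are ``generic'' (measure-theoretically transverse) so that the density $d_\G$ genuinely computes a product measure. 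A secondary, more routine obstacle is verifying the exact form of the local contribution $\varepsilon_v$ and the additive constant $-1$ by a careful zero-count of $f$ and $f'$ on a star of three edges with prescribed Neumann data.
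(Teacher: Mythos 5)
Your overall skeleton matches the paper's: you reduce $-\omega-1$ to a sum over the $\left|\partial\Gamma\right|-2$ interior vertices of $\{0,1\}$-valued local quantities and then try to show these are independent fair coins. The localization step is essentially the paper's route too: via the local--global identity (\ref{eq:sum_of_spec_pos}) with $\sigma\equiv0$ and $E=V-1$ one gets $-\omega-1=\sum_{v\in\V\setminus\partial\Gamma}\left(N^{(v)}-1\right)$, and for a degree-three vertex $N^{(v)}\in\{1,2\}$, so $\varepsilon_v:=N^{(v)}-1$ is exactly your sign bit. (Minor point: your intermediate formula $\omega=-1+\sum_v\varepsilon_v$ has the wrong sign --- it should be $\omega=-1-\sum_v\varepsilon_v$ --- though your final line $-\omega-1=\sum_v\varepsilon_v$ is the correct one.) Each $\varepsilon_v$ being Bernoulli$(\tfrac12)$ follows from the symmetry of $N^{(v)}$ around $\tfrac12\deg v$ (Proposition \ref{prop: statistics_of_spectral_position}), which you have available.

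The genuine gap is the joint independence, which you correctly flag as the main obstacle but whose proposed resolution would fail. Your mechanism --- that $\varepsilon_v$ ``depends on coordinates attached to disjoint edge sets'' so that $d_\G$ computes a product measure --- is false on two counts: in a $(3,1)$-regular tree adjacent interior vertices share an edge, so the edge sets $\E_v$ are not disjoint; and, more fundamentally, the sign data at $v$ is $\sgn\left(q_{v,v,e}(\kv)\right)$ for the canonical eigenfunction, and these are global trigonometric polynomials on the secular manifold $\mgen$, not functions of the torus coordinates of the edges at $v$ alone; nor does the Barra-Gaspard measure on the codimension-one manifold $\mgen$ factorize over coordinate blocks. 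The paper's actual mechanism is different: for each bridge $e$ there is a measure-preserving involution $\rve$ of $\mgen$ (Lemma \ref{lem: bridge-inv-properties}) that flips the sign of $q_{u,u,e'}$ for $u$ on one side of the bridge and fixes it on the other. Composing the global inversion $\I$ with the bridge-inversions for all bridges separating a chosen vertex $\tilde v$ from the rest yields a measure-preserving involution that sends $\bs{N^{(\tilde v)}}\mapsto\deg\tilde v-\bs{N^{(\tilde v)}}$ while fixing every other $\bs{N^{(u)}}$, giving \emph{conditional} symmetry (Proposition \ref{prop: independent_symmetry_of_spectral_positions}). For general trees this yields only uncorrelatedness; it is precisely the fact that $N^{(v)}$ is two-valued at a degree-three vertex that upgrades conditional symmetry to full independence. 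Without this involution argument (or a genuine substitute), your proof is incomplete at its decisive step.
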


This last theorem may be perceived as the Neumann domain analogue
of \cite[Theorem 2.3]{AloBanBer_cmp18}, though each of those theorems
applies to completely different families of graphs. A further discussion
appears in Section \ref{sec:Discussion}.

\subsection{Local observables}
\begin{prop}
\label{prop:local_observables_bounds} Let $\Gamma$ be a standard
graph with minimal edge length $L_{min}$. Let $f_{n}$ be a generic
eigenfunction which corresponds to $k_{n}>\frac{\pi}{L_{min}}$ and
let $\Omega$ be a Neumann domain of $f_{n}$. The bounds on the spectral
position of $\Omega$ and its wavelength capacity are
\begin{align}
1\le & N(\Omega)\le\left|\partial\Omega\right|-1\label{eq: Spectral_Position_bounds}\\
1\leq\frac{1}{2}(N(\Omega)+1)\le & \rho(\Omega)\le\frac{1}{2}(N(\Omega)+\left|\partial\Omega\right|-1)\leq\left|\partial\Omega\right|-1\label{eq: Rho_bounds}
\end{align}
\end{prop}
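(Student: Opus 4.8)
The plan is to work entirely on the Neumann domain $\Omega$, regarded as a standard graph in its own right, and to identify $\left.f\right|_\Omega$ as a Neumann eigenfunction of $\Omega$ with eigenvalue $k^2$ and spectral position $N(\Omega)$. First I would establish the structure of $\Omega$: its boundary $\partial\Omega$ consists precisely of the Neumann points of $f$ that bound $\Omega$ (points where $f'$ vanishes but, by genericity, $f$ does not) together with any degree-one vertices of $\Gamma$ that lie in $\Omega$. At each boundary point $f'=0$, so $\left.f\right|_\Omega$ satisfies the Neumann (degree-one) condition there; at interior vertices of $\Omega$ it inherits continuity and the Kirchhoff condition from $\Gamma$. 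Hence $\left.f\right|_\Omega$ is genuinely an eigenfunction of the standard graph $\Omega$ with eigenvalue $k^2$, and $N(\Omega)$ is by definition the index of $k^2$ in the spectrum of $\Omega$ (counting from $0$). The condition $k_n>\pi/L_{min}$ guarantees that $\left.f\right|_\Omega$ is non-constant on every edge, so $k^2>0$ and $N(\Omega)\ge 1$; this is the left bound in \eqref{eq: Spectral_Position_bounds}. Separately, I would note that $\left.f\right|_\Omega$, being an extremal-boundary restriction of a generic eigenfunction, is itself generic on $\Omega$ (no vanishing at vertices, simple eigenvalue on a tree-like piece, etc.), which lets me apply the nodal count bound \eqref{eq: nodal surplus bound} to it.

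For the upper bound $N(\Omega)\le|\partial\Omega|-1$, the key observation is that $\Omega$ is a Neumann domain, hence ``minimal'' in a precise sense: it contains no Neumann point of $f$ in its interior, and I expect one can show its first Betti number is $\beta(\Omega)=0$, i.e. $\Omega$ is a tree. The reason is that a nontrivial cycle inside $\Omega$ would force an extremal point of $f$ in the interior (by considering the maximum of $\left.f\right|_\Omega$, or $-\left.f\right|$, along the cycle, using that $f$ solves $f''=-k^2 f$ edgewise and genericity rules out the value being attained at a vertex with all outgoing derivatives nonzero). Once $\beta(\Omega)=0$, I can apply the nodal bound \eqref{eq: nodal surplus bound} to $\left.f\right|_\Omega$ on the tree $\Omega$: with $\beta=0$ it gives $\phi(\left.f\right|_\Omega)=N(\Omega)$ exactly, i.e. the nodal count of the restricted eigenfunction equals its spectral position. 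It then remains to bound the number of nodal points of $\left.f\right|_\Omega$ on the tree $\Omega$ by $|\partial\Omega|-1$. This is a combinatorial/Sturm-type fact: each nodal domain of $\left.f\right|_\Omega$ is itself a subtree carrying a first Dirichlet eigenfunction, and a sign-change count on a tree with $|\partial\Omega|$ leaves is at most $|\partial\Omega|-1$; equivalently, the nodal domains number at most $|\partial\Omega|$, so nodal points number at most $|\partial\Omega|-1$. (I would lift this from the general nodal-domain theory on trees, citing the relevant lemma; a short induction on the number of leaves also works.)

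The wavelength-capacity bounds \eqref{eq: Rho_bounds} I would derive from the spectral-position bounds together with an ``interval comparison.'' On each edge $\left.f\right|_\Omega$ is a pure sine/cosine of wavenumber $k$; between two consecutive extrema of a single sinusoid the distance is exactly $\pi/k$, and between two consecutive zeros likewise. Summing over $\Omega$: the number of half-wavelengths is $\rho(\Omega)=|\Omega|k/\pi$, and one relates $\rho(\Omega)$ to the nodal count $\phi(\left.f\right|_\Omega)$ and to the ``anti-nodal'' (Neumann-point) count of $\left.f\right|_\Omega$ on $\Omega$ by noting that along any path, zeros and extrema of a sinusoid interlace at spacing $\pi/(2k)$. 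Concretely: $\rho(\Omega)$ equals $\frac12$ times the total number of critical points of $\left.f\right|_\Omega$ (interior extrema plus boundary points, the latter being exactly $|\partial\Omega|$) plus $\frac12$ times the number of zeros — a careful bookkeeping along the tree gives $2\rho(\Omega)=\phi(\left.f\right|_\Omega)+\xi(\left.f\right|_\Omega)+|\partial\Omega|$ up to corrections controlled by tree topology. Using $\phi(\left.f\right|_\Omega)=N(\Omega)$ and $0\le\xi(\left.f\right|_\Omega)$ gives the lower bound $\rho(\Omega)\ge\frac12(N(\Omega)+1)$ (the $+1$ coming from $|\partial\Omega|\ge 2$, or more carefully from the nodal structure forcing at least one extremum), while $\xi(\left.f\right|_\Omega)\le|\partial\Omega|-2$ on $\Omega$ (it has no interior Neumann points of $f$, but the restriction can still have a few from the tree's own structure — here one must be careful, as $\left.f\right|_\Omega$ is an eigenfunction of $\Omega$ and may have its own extrema that were vertices of $\Gamma$) yields the upper bound $\rho(\Omega)\le\frac12(N(\Omega)+|\partial\Omega|-1)$. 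The final inequality $\frac12(N(\Omega)+|\partial\Omega|-1)\le|\partial\Omega|-1$ is immediate from $N(\Omega)\le|\partial\Omega|-1$.

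The main obstacle I anticipate is the claim $\beta(\Omega)=0$ and, relatedly, the precise bookkeeping of critical points and zeros along the tree $\Omega$ — in particular being careful about critical points of $\left.f\right|_\Omega$ that sit at interior vertices of $\Omega$ (these are vertices of $\Gamma$ where $f$ has a vanishing outgoing derivative only in the restricted sense, which cannot happen for generic $f$ — so actually $\left.f\right|_\Omega$ has no critical points at interior vertices of $\Omega$ either, which streamlines the count). Pinning down this ``genericity is inherited by $\left.f\right|_\Omega$'' statement cleanly, and then turning the interlacing of zeros and extrema of edgewise sinusoids into the exact identities above, is where the real work lies; the bounds then drop out by arithmetic.
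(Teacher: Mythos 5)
Your treatment of \eqref{eq: Spectral_Position_bounds} is essentially sound and close to the paper's: the paper also reduces to the nodal count of $\left.f\right|_{\Omega}$ via genericity of the restriction and the identity $N(\Omega)=\phi(\left.f\right|_{\Omega})$ on a tree, and then applies the bound \eqref{eq:bounds_on_nodal_Neumann_diff} with $\beta=0$ and $\xi(\left.f\right|_{\Omega})=0$. One simplification you miss: under $k_n>\pi/L_{min}$ every edge of $\Gamma$ carries at least one Neumann point, so no Neumann domain contains an entire edge and $\Omega$ is automatically a star graph or an interval (Lemma \ref{lem:spectral-position-equals-nodal-count}); your maximum-principle argument for $\beta(\Omega)=0$ is unnecessary and delivers less.

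The genuine gap is in \eqref{eq: Rho_bounds}. Your proposed identity $2\rho(\Omega)=\phi(\left.f\right|_{\Omega})+\xi(\left.f\right|_{\Omega})+\left|\partial\Omega\right|$ ``up to corrections'' cannot be right: the left side is a generically irrational real number while the right side is an integer, and since $\xi(\left.f\right|_{\Omega})=0$ it would give $\rho(\Omega)=\frac{1}{2}(N(\Omega)+\left|\partial\Omega\right|)$, which already violates the upper bound you are trying to prove. More importantly, edgewise interlacing of zeros and extrema only yields the strict inequality $\rho(\Omega)>\frac{1}{2}N(\Omega)$: writing $\left.f\right|_{e_j}(x)=A_j\cos(k(l_j-x))$ one has $kl_j\in(0,\pi)$, the edge contributes a nodal point iff $kl_j>\frac{\pi}{2}$, and summing $\frac{kl_j}{\pi}$ gives only ``more than $\frac{1}{2}$ per nodal point.'' The extra $\frac{1}{2}$ in $\rho(\Omega)\geq\frac{1}{2}(N(\Omega)+1)$ is a genuinely nontrivial input — it is exactly Friedlander's eigenvalue bound $k\geq\frac{\pi}{2\left|\Omega\right|}(N+1)$ from \cite{Fri_aif05}, which ultimately uses the Kirchhoff condition $\sum_j\tan(kl_j)=0$ at the central vertex and is not recoverable from per-edge bookkeeping. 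The upper bound is then obtained in the paper not by bounding $\xi$ but by a duality (Lemma \ref{lem:dual-star}): replacing each edge length $l_j$ by $\frac{\pi}{k}-l_j$ produces a dual star $\widetilde{\Omega}$ with a generic eigenfunction at the same $k$ and no Neumann points, satisfying $N(\Omega)+N(\widetilde{\Omega})=\rho(\Omega)+\rho(\widetilde{\Omega})=\left|\partial\Omega\right|$, so the lower bound applied to $\widetilde{\Omega}$ transfers to the upper bound for $\Omega$. Without some substitute for these two ingredients your argument does not close.
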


\begin{rem}
\label{rem:why_conditioning_on_large_eigenvalue} The condition $k_{n}>\frac{\pi}{L_{\mathrm{min}}}$
is satisfied for almost all eigenvalues. Indeed, there are at most
$2\frac{\left|\Gamma\right|}{L_{\mathrm{min}}}$ eigenvalues which
do not satisfy this condition \cite[Theorem 1]{Fri_aif05}. This condition
is needed in the proposition above in order to guarantee that $\Omega$
is a star graph (for more details, see Lemma \ref{lem:spectral-position-equals-nodal-count}).
We also note that the lower bounds in (\ref{eq: Spectral_Position_bounds})
and (\ref{eq: Rho_bounds}) hold even without conditioning on the
value of $k$ (see also the proof of the proposition).
\end{rem}

\begin{rem}
In Appendix \ref{sec:Appendix nodal domains} we prove an analogous
proposition for nodal domains.
\end{rem}

Next, we discuss the probability distributions of the local observables
$N$ and $\rho$. As implied by Proposition \ref{prop:local_observables_bounds}
those observables have non-trivial values only if the corresponding
Neumann domain is non-trivial, i.e., has $\left|\partial\Omega\right|>2$,
or equivalently if the corresponding Neumann domain contains an interior
vertex $v\in\V\setminus\partial\Gamma$.
\begin{defn}
\label{def: vertex-Neumann-domains} Let $f_{n}$ be a generic eigenfunction
of a standard graph $\Gamma$ and let $v\in\V\setminus\partial\Gamma$
be an interior vertex. We denote by $\Omega_{n}^{(v)}$ the unique
Neumann domain of $f_{n}$ which contains $v$. We denote the spectral
position of this Neumann domain by $N^{(v)}(n):=N(\Omega_{n}^{(v)})$
and its wavelength capacity by $\rho^{(v)}(n):=\rho(\Omega_{n}^{(v)})$.
\end{defn}

\begin{prop}
\label{prop: statistics_of_spectral_position} Let $\Gamma$ be a
standard graph with rationally independent edge lengths. Let $v\in\V\setminus\partial\Gamma$.
Then the following hold.
\begin{enumerate}
\item \label{enu:prop-statistics_of_local_spectral_position-1} The spectral
position, $N^{(v)}$, is a finite random variable with respect to
$d_{\G}$. In particular, the probability distribution of $N^{(v)}$
is
\[
\P(N^{(v)}=j):=d_{\G}\left(\left(N^{(v)}\right)^{-1}(j)\right).
\]

\item \label{enu:prop-statistics_of_local_spectral_position-2} The probability
distribution of $N^{(v)}$ is symmetric around $\frac{1}{2}\deg v$.
Namely,
\[
\P(N^{(v)}=j)=\P(N^{(v)}=\deg v-j).
\]
\end{enumerate}
\end{prop}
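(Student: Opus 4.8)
The plan is to reduce both claims to properties of the ``secular manifold'' / Barra--Gaspard measure machinery that the paper develops in Sections \ref{sec: tools_and_methods}--\ref{sec: functions_on_secular_manifold}, together with the structural reduction in Remark \ref{rem:why_conditioning_on_large_eigenvalue} and Lemma \ref{lem:spectral-position-equals-nodal-count}. First I would observe that, since we are only interested in densities $d_{\G}$, we may discard the finitely many ``small'' eigenvalues and assume $k_{n}>\pi/L_{\min}$ (there are at most $2|\Gamma|/L_{\min}$ exceptions by \cite[Theorem 1]{Fri_aif05}, so they contribute $0$ to any density). Under this assumption the Neumann domain $\ndv$ containing the fixed interior vertex $v$ is a star graph with $\deg v$ edges, and by Lemma \ref{lem:spectral-position-equals-nodal-count} the spectral position $\Nv(n)$ equals a quantity that can be read off locally at $v$ from $f_n$: namely the number of sign changes, as one moves outward from $v$ along each of the $\deg v$ edges, between $f_n(v)$ and the value of $f_n$ at the first Neumann point on that edge (equivalently, the Dirichlet spectral position / nodal count of the restricted eigenfunction on the star). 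The key point is that this local quantity depends only on the signs of $f_n(v)$ and of the $\deg v$ outgoing derivatives $\partial_e f_n(v)$, and on the arc-length phases $k_n \ell_e \bmod \pi$ — i.e.\ it is a measurable function of the point on the torus $\torus$ associated to $f_n$ via the secular manifold coordinates.

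For part (\ref{enu:prop-statistics_of_local_spectral_position-1}), I would invoke the ergodic/equidistribution statement already quoted in the paper (Theorem \ref{thm: density-of-generic-and-loop-eigenfunctions} and the tools behind it): because the edge lengths are rationally independent, the sequence of eigenvalue phases equidistributes on the secular manifold with respect to the Barra--Gaspard measure, so that for any function on the secular manifold that is continuous off a measure-zero set, the $d_{\G}$-density of its level sets exists and equals the Barra--Gaspard measure of the corresponding set. Since $\Nv$ factors through such a function (it is locally constant away from the zero-measure loci where $f_n(v)=0$ or some $\partial_e f_n(v)=0$, and the generic set $\G$ precisely excludes those loci), it follows that $\P(\Nv = j):=d_{\G}((\Nv)^{-1}(j))$ is well defined for each $j$; finiteness is immediate since $0\le \Nv \le \deg v$ is a uniformly bounded integer by the bounds of Proposition \ref{prop:local_observables_bounds} (note $|\partial\ndv|=\deg v$ for a star). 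That $\Nv$ is a genuine random variable with respect to $d_{\G}$ then follows because its $\sigma$-algebra is finite (generated by finitely many atoms) and the atom-probabilities sum to $1$.

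For part (\ref{enu:prop-statistics_of_local_spectral_position-2}), the symmetry around $\tfrac12\deg v$ should come from an explicit measure-preserving involution on the secular manifold that flips the sign of $f_n$ near $v$ — concretely, the map that sends the local phase data $(k_n\ell_e \bmod \pi)_{e\in\E_v}$ to its ``complement'' (reflecting each edge phase $\theta_e \mapsto \pi-\theta_e$, equivalently conjugating the relevant vertex scattering data), which the Barra--Gaspard measure is invariant under because that measure is built symmetrically from the edge-length data. Under this involution each outgoing ray that contributed a sign change stops contributing one and vice versa, so the local count $\Nv$ is replaced by $\deg v - \Nv$; pushing forward the measure gives $\P(\Nv=j)=\P(\Nv=\deg v-j)$. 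I would model the bookkeeping here on the proof of the symmetry of the Neumann surplus $\omega$ (Theorem \ref{thm:Neumann_surplus_main}(\ref{enu:thm-density_of_Neumann_surplus-c})), which is the global analogue and presumably uses the same involution.

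The main obstacle, I expect, is making precise the claim that $\Nv$ really is (off a null set) a function on the secular manifold and identifying the correct involution: one has to be careful that the ``spectral position of the star $\ndv$'' is not merely a function of the restricted eigenfunction's boundary data but is genuinely determined by the global torus point, and that the chosen sign-flip involution both preserves the Barra--Gaspard measure and maps the generic locus to itself. Once Lemma \ref{lem:spectral-position-equals-nodal-count} pins down $\Nv$ as a local sign/phase count, both the existence of the distribution and its symmetry become corollaries of the equidistribution and the measure-preserving involution, respectively, exactly parallel to the treatment of $\omega$ and $\sigma$.
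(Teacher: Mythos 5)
Your proposal is correct and takes essentially the same route as the paper: the paper identifies $\Nv(n)$ (for $n\ge 2\left|\Gamma\right|/L_{\mathrm{min}}$) with the locally-constant function $\Nvb(\kv)=\tfrac12\deg v-\tfrac12\sum_{e\in\E_v}\sgn\left(q_{v,v,e}(\kv)\right)$ on $\mgen$, handles the finitely many exceptional low eigenvalues via an auxiliary random variable whose pre-images differ from those of $\Nv$ by finitely many indices (hence equal density), applies the Jordan-set/equidistribution lemma (Lemma \ref{lem: random-variables}) for existence, and gets the symmetry from the global inversion $\I(\kv)=\modp{-\kv}$, which preserves $\BGm$ and anti-commutes with $q_{v,v,e}$. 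The only imprecision is your description of the involution as a \emph{local} phase flip on the edges at $v$ only; such a map need not preserve the secular manifold, and the correct involution is indeed the global inversion used in the $\omega$-symmetry proof, exactly as you anticipated.
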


\begin{prop}
\label{prop: statistics_of_wavelength_capacity} Let $\Gamma$ be
a standard graph with rationally independent edge lengths. Let $v\in\V\setminus\partial\Gamma$.
\begin{enumerate}
\item \label{enu:prop-statistics_of_local_wave_capacity-1} There exists
a discrete measure $p^{(v)}$ supported on a finite set $\left\{ x_{j}\right\} _{j=1}^{m}$,
and a density function $\pi^{(v)}$, such that for every interval
$\left(a,b\right)\subset\R$
\begin{equation}
d_{\G}\left(\left(\rho^{(v)}\right)^{-1}\left((a,b)\right)\right)=\int_{a}^{b}\pi^{(v)}\left(x\right)dx+\sum_{x_{j}\in\left(a,b\right)}p^{(v)}\left(x_{j}\right).\label{eq:denstiy-as-integral-of-distribution}
\end{equation}
\item \label{enu:prop-statistics_of_local_wave_capacity-2} Both $\pi^{(v)}$
and $p^{(v)}$ are symmetric around $\frac{1}{2}\deg v$.
\end{enumerate}
\end{prop}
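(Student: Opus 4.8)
The plan is to reduce the statement about $\rho^{(v)}$ to the analysis of a suitable function on the secular manifold (torus $\torus$) and then exploit the structure of the Neumann domain $\Omega_n^{(v)}$ together with the already-established facts about $N^{(v)}$ in Proposition \ref{prop: statistics_of_spectral_position}. First I would recall that for $k_n$ large enough the Neumann domain $\Omega_n^{(v)}$ is a star graph centered at $v$ (Remark \ref{rem:why_conditioning_on_large_eigenvalue} and the relevant lemma), so $\rho^{(v)}(n)=\frac{|\Omega_n^{(v)}|k_n}{\pi}$ where $|\Omega_n^{(v)}|$ is a sum of partial edge lengths: on each edge $e\in\E_v$ the Neumann domain reaches from $v$ to the nearest Neumann point of $f_n$ on $e$. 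The key point is that on each such half-edge the eigenfunction looks like $A_e\cos(k_n x + \varphi_e)$ in the arclength coordinate from $v$, and the distance to the first extremum is a smooth function of the phase $\varphi_e$ (an angle recorded by the vertex scattering data), while $\varphi_e$ itself is, via Barra--Gaspard / the results imported in Section \ref{sec: tools_and_methods}, an explicit smooth function of the point on the torus and of $k$. Hence $\rho^{(v)}$ is (the restriction to $k\in\Sigma^{\mathrm{gen}}$ of) a function $F^{(v)}$ on $\torus$, and by the equidistribution of the flow $\{k\lv \bmod 2\pi\}$ with respect to the appropriate measure (the Barra--Gaspard measure, which under rational independence is the normalized flow-invariant measure), the density $d_\G((\rho^{(v)})^{-1}((a,b)))$ equals the $\mu$-measure of $(F^{(v)})^{-1}((a,b))$ inside $\mgen$, normalized by $\mu(\mgen)$.

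Next I would decompose this pushforward measure into its atomic and absolutely continuous parts, which is where the mixed form \eqref{eq:denstiy-as-integral-of-distribution} comes from. The atoms arise from the ``trivial'' directions: whenever an edge $e\in\E_v$ has its far endpoint in $\dg$ (a degree-one vertex), the Neumann condition there forces $f_n$ to have an extremum exactly at that boundary vertex, so the corresponding half-edge contributes its \emph{full} length $l_e$ to $|\Omega_n^{(v)}|$ with no dependence on the phase; more generally the Neumann point on such a pendant edge may sit at the boundary vertex, contributing a rigid $l_e$ term, and this produces a lattice of possible ``rigid'' contributions to $\rho$, hence atoms. On the non-rigid edges the contribution is $k_n$ times a non-constant smooth function of the torus point with non-vanishing differential on a full-measure set, so the corresponding part of the pushforward is absolutely continuous with a bounded density $\pi^{(v)}$; the finitely many atoms $\{x_j\}$ come from combining the finitely many rigid configurations. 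Concretely I would write $\rho^{(v)} = \rho^{(v)}_{\mathrm{rigid}} + \rho^{(v)}_{\mathrm{soft}}$, show $\rho^{(v)}_{\mathrm{rigid}}$ takes finitely many values (atoms), and show that conditioned on each value the soft part has an a.c. law via the coarea formula applied to $F^{(v)}$ on $\torus$, using that its gradient is nonzero a.e. — this last non-degeneracy is essentially the generic transversality already used to establish that $N^{(v)}$ is a random variable, so I would cite the corresponding lemma from Section \ref{sec: functions_on_secular_manifold} rather than reprove it.

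Finally, for the symmetry statement (part \ref{enu:prop-statistics_of_local_wave_capacity-2}) I would use the same involution on the secular manifold that yields the symmetry of $N^{(v)}$ around $\frac12\deg v$ in Proposition \ref{prop: statistics_of_spectral_position}\ref{enu:prop-statistics_of_local_spectral_position-2}, namely the map sending $f_n|_{\Omega_n^{(v)}}$ to the companion eigenfunction obtained by the reflection $x\mapsto l_e - x$ (equivalently $\varphi_e \mapsto \pi - \varphi_e$, equivalently a sign change in half the torus coordinates) which swaps, on each edge of the star, the segment $[v,\text{Neumann point}]$ with its complement $[\text{Neumann point}, \text{far endpoint}]$. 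Under this involution $\rho^{(v)} \mapsto \deg v - \rho^{(v)}$ (since the two complementary half-lengths on each edge, measured in units of $\pi/k_n$, sum to the total, and on a star with $\deg v$ edges the wavelength capacities of the domain and its ``complement'' add to $\deg v$, matching the $N^{(v)}$ count bookkeeping), while the measure $\mu$ on $\mgen$ is preserved because it is the flow-invariant measure and the involution is measure-preserving. Pushing forward the identity $\mu\circ(\text{involution}) = \mu$ through $\rho^{(v)}$ gives the desired symmetry of both $\pi^{(v)}$ and $p^{(v)}$ about $\frac12\deg v$ simultaneously. The main obstacle I anticipate is verifying cleanly that the involution acts as $x\mapsto\deg v - x$ on $\rho^{(v)}$ for \emph{all} edge configurations — in particular checking that rigid (pendant) edges behave correctly under the reflection and that the bookkeeping $\rho_{\mathrm{domain}} + \rho_{\mathrm{complement}} = \deg v$ survives the conditioning on $k_n$ being large — but this should follow from the same star-graph normal form used in Proposition \ref{prop:local_observables_bounds}.
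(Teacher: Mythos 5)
Your overall strategy (push $\rho^{(v)}$ to a function on the secular manifold, split the pushforward of the Barra--Gaspard measure into an atomic and an absolutely continuous part, and get the symmetry from a measure-preserving involution under which $\rho^{(v)}\mapsto\deg v-\rho^{(v)}$) matches the paper's. But your proposed mechanism for the atoms is wrong, and this is a genuine gap. Under the standing assumption $k_n>\pi/L_{\mathrm{min}}$ every edge incident to $v$ contains an interior Neumann point, so the star $\Omega_n^{(v)}$ never reaches a boundary vertex of $\Gamma$; there are no ``rigid'' pendant-edge contributions. On a pendant edge $e$ from $v$ to $u\in\dg$ one has $f|_e(y)=A\cos(k_ny)$ with $y$ measured from $u$, the Neumann point nearest to $v$ sits at $y=\lfloor k_nl_e/\pi\rfloor\pi/k_n$, and the edge's contribution to $\rho^{(v)}$ is the fractional part of $k_nl_e/\pi$ --- an equidistributed, not constant, quantity. (Even if a full length $l_e$ were contributed, $k_nl_e/\pi$ varies with $n$, so it could not produce an atom.) In the paper each edge contributes $\frac{1}{\pi}\tan^{-1}\bigl(q_{v,v,e}(\kv)/p_{v,v}(\kv)\bigr)$, making $\bs{\rho^{(v)}}$ real analytic on $\mgen$, and the atoms --- if any --- come from the finitely many connected components of $\mgen$ on which this analytic function happens to be constant (the authors conjecture there are none). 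Your decomposition $\rho^{(v)}=\rho^{(v)}_{\mathrm{rigid}}+\rho^{(v)}_{\mathrm{soft}}$ therefore does not exist.

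Two further problems. First, for the absolutely continuous part you invoke a ``non-vanishing differential on a full-measure set'' and claim it is ``essentially the generic transversality already used to establish that $N^{(v)}$ is a random variable''; but the $N^{(v)}$ argument rests on $\bs{N^{(v)}}$ being \emph{locally constant} on $\mgen$ --- the opposite property --- so there is nothing there to cite. The paper instead proves (Lemma \ref{lem: Zero-set-of-real-analyic}) that a non-constant real analytic function on a connected component pulls back Lebesgue-null sets to $\BGm$-null sets, and this is what delivers absolute continuity. Second, you pass from the density $d_{\G}$ to the measure $\BGm$ without checking that $\left(\bs{\rho^{(v)}}\right)^{-1}\left((a,b)\right)$ is a Jordan set; equidistribution only applies to Riemann-integrable indicators, and verifying that the boundary has measure zero is a genuine step (again resting on real analyticity). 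Your symmetry argument is essentially right in spirit --- the correct involution is the global inversion $\I(\kv)=\modp{-\kv}$, which negates \emph{all} torus coordinates rather than half of them, preserves $\BGm$, and satisfies $\bs{\rho^{(v)}}\circ\I=\deg v-\bs{\rho^{(v)}}$ --- and your worry about rigid edges under the reflection is moot, since there are none.
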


By the proposition above the pre-image $\left(\rho^{(v)}\right)^{-1}\left((a,b)\right)$
has density for every interval $\left(a,b\right)\subset\R$. Yet,
if $\pi^{(v)}\not\equiv0$ then we prove that there exist Borel sets,
$B\subset\R$, whose pre-images $\left(\rho^{(v)}\right)^{-1}(B)$
do not have density\footnote{Note that $\pi^{(v)}\not\equiv0$ is equivalent to $\imag(\rho^{(v)})$
being infinite.}. In this case, $\rho^{(v)}$ is not a random variable with respect
to $d_{\G}$. See Lemma \ref{lem: random_variable_implies_countable_image}
and Remark \ref{rem: wavelength-capacity-not-random-variable}. Furthermore,
we conjecture that $\pi^{(v)}\not\equiv0$ and $p^{(v)}\equiv0$,
i.e., the distribution on the RHS of (\ref{eq:denstiy-as-integral-of-distribution})
is absolutely continuous with no atoms and so $\rho^{(v)}$ is never
a random variable with respect to $d_{\G}$. See further discussion
after the proof of this proposition, in Remark \ref{rem: no_atoms_in_rho_distribution}.

\subsection{Local-global connections}

We end the results section by presenting some connections between
the values of local observables and those of global ones.
\begin{prop}
\label{prop:local-global-connections} Let $\Gamma$ be a standard
graph with minimal edge length $L_{min}$. Let $f_{n}$ be a generic
eigenfunction which corresponds to $k_{n}>\frac{\pi}{L_{min}}$.
\begin{enumerate}
\item The sum of the spectral positions of all non-trivial Neumann domains
is
\begin{equation}
\sum_{v\in\V\setminus\d\Gamma}N(\Omega_{n}^{(v)})=\sigma(n)-\omega(n)+\left(E-\left|\partial\Gamma\right|\right).\label{eq:sum_of_spec_pos}
\end{equation}
\item The sum of the wavelength capacities of all non-trivial Neumann domains
is
\begin{equation}
\sum_{v\in\V\setminus\d\Gamma}\rho(\Omega_{n}^{(v)})=\frac{\left|\Gamma\right|k_{n}}{\pi}-\left(\omega(n)+n\right)+\left(E-\left|\partial\Gamma\right|\right).\label{eq:sum_of_wavelength_capacities}
\end{equation}
\item The difference of the two expressions above is
\begin{equation}
\sum_{v\in\V\setminus\d\Gamma}\left(N(\Omega_{n}^{(v)})-\rho(\Omega_{n}^{(v)})\right)=\left(n-\frac{\left|\Gamma\right|k_{n}}{\pi}\right)+\sigma(n).\label{eq: sum_of_oscillatory_terms}
\end{equation}
\end{enumerate}
\end{prop}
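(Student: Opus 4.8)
The plan is to obtain all three identities by summing over the Neumann domains of $f_{n}$: the spectral positions give (\ref{eq:sum_of_spec_pos}), the lengths give (\ref{eq:sum_of_wavelength_capacities}), and the termwise difference gives (\ref{eq: sum_of_oscillatory_terms}). The key input is that, under the hypothesis $k_{n}>\pi/L_{min}$, every \emph{trivial} Neumann domain is an interval of length exactly $\pi/k_{n}$ carrying exactly one nodal point, while every non-trivial one is a star graph to which Lemma~\ref{lem:spectral-position-equals-nodal-count} applies.

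First I would record the local structure. On each edge $e$ the restriction $f_{n}|_{e}$ is a nonzero solution of $-y''=k_{n}^{2}y$, hence of the form $R\cos(k_{n}x-\varphi)$ with $R\neq0$; its critical points are spaced $\pi/k_{n}<L_{min}\leq l_{e}$ apart, so the interior of every edge contains at least one Neumann point. Consequently, moving from an interior vertex $v$ along any edge of $\E_{v}$ one meets a Neumann point before reaching the opposite vertex, so $\ndv$ is a star centred at $v$ (a loop at $v$ contributing two leaves); by Lemma~\ref{lem:spectral-position-equals-nodal-count}, $\Nv(n)=\phi\big(f_{n}|_{\ndv}\big)$, which --- since $f_{n}$ does not vanish at vertices and $\ndv$ is a tree --- equals the number of nodal points of $f_{n}$ in the interior of $\ndv$. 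Any Neumann domain that is not some $\ndv$ is a single interval bounded by two points where $f_{n}'=0$: either two consecutive Neumann points on one edge, or a degree-one vertex of $\Gamma$ (where $f_{n}'=0$ by the Neumann condition) together with its nearest Neumann point; on such an interval $f_{n}=R\cos(k_{n}x-\varphi)$ runs between two consecutive critical points, so the interval has length exactly $\pi/k_{n}$ and contains exactly one zero of $f_{n}$.

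Next I would count the trivial Neumann domains. If $m_{e}\geq1$ denotes the number of interior Neumann points on $e$, then $e$ contains exactly $m_{e}-1$ trivial domains in its interior, and each of the $\left|\dg\right|$ degree-one vertices of $\Gamma$ contributes one further trivial domain, all of them distinct; since $\xi(f_{n})=\sum_{e\in\E}m_{e}$, the number of trivial Neumann domains is
\[
T=\sum_{e\in\E}(m_{e}-1)+\left|\dg\right|=\xi(f_{n})-E+\left|\dg\right|.
\]
(Equivalently: the non-trivial domains are precisely the $\ndv$, one per interior vertex, so $V-\left|\dg\right|$ of them, while the total number of Neumann domains equals $\xi(f_{n})-\beta+1$; subtracting gives the same $T$.)

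Finally I would assemble. Deleting the Neumann points partitions $\Gamma$ into its Neumann domains, and since no nodal point of $f_{n}$ is a Neumann point or a vertex of any Neumann subgraph, each nodal point lies in the interior of exactly one Neumann domain. Summing nodal counts over all Neumann domains therefore gives $\phi(f_{n})=\sum_{v\in\V\setminus\dg}\Nv(n)+T$; substituting $\phi(f_{n})=\sigma(n)+n$, $\xi(f_{n})=\omega(n)+n$ and the value of $T$ gives (\ref{eq:sum_of_spec_pos}). Summing lengths instead gives $\left|\Gamma\right|=\sum_{v\in\V\setminus\dg}\left|\ndv\right|+T\cdot\frac{\pi}{k_{n}}$; multiplying by $k_{n}/\pi$ and using $\Rv(n)=\left|\ndv\right|k_{n}/\pi$ gives (\ref{eq:sum_of_wavelength_capacities}), and (\ref{eq: sum_of_oscillatory_terms}) is the termwise difference. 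I expect the only real work to be the bookkeeping of the counting step; the point needing care is verifying that the star-graph hypothesis of Lemma~\ref{lem:spectral-position-equals-nodal-count} genuinely applies to every $\ndv$ (this is exactly where $k_{n}>\pi/L_{min}$ is used) and that loops and parallel edges do not disrupt the count --- they do not, since a loop at $v$ simply adds two leaves to $\ndv$ and two units to the interior-arc count of its edge.
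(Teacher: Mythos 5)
Your proof is correct, and for part (2) it is essentially the paper's own argument (partition the total length over the Neumann domains, note each trivial domain has wavelength capacity $1$, and count the trivial domains — your edge-by-edge count $T=\xi(f_{n})-E+\left|\dg\right|$ agrees with the paper's boundary-point count of $\left|\mathcal{W}\right|$). For part (1), however, you take a genuinely different route. The paper derives the local identity $N(\Omega^{(u)})=\frac{\deg u}{2}-\frac{1}{2}\sum_{e\in\E_{u}}\sgn\left(f(u)\partial_{e}f(u)\right)$ by applying the sign-sum formula (\ref{eq:Diff-nodal-Neumann_by_vertices}) to each star domain, sums over interior vertices, and then invokes that same formula globally to convert the sum of signs into $\phi(f)-\xi(f)$. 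You instead observe that the nodal points are distributed among the Neumann domains — each non-trivial domain $\Omega_{n}^{(v)}$ contributing $\phi(f_{n}|_{\Omega_{n}^{(v)}})=N(\Omega_{n}^{(v)})$ by Lemma~\ref{lem:spectral-position-equals-nodal-count}, and each trivial domain contributing exactly one nodal point since it spans exactly half a wavelength — so that $\phi(f_{n})=\sum_{v\in\V\setminus\dg}N(\Omega_{n}^{(v)})+T$, and the same $T$ then feeds both identities. This is a nice economy: it unifies parts (1) and (2) under a single partition-and-count argument and avoids the $\sgn$ machinery entirely; the trade-off is that the paper's sign-sum identity is needed elsewhere anyway (it is the backbone of the bounds in Theorem~\ref{thm:Neumann_surplus_main} and of the definition of $\bs{N^{(v)}}$ on the secular manifold), so the paper gets part (1) nearly for free once that identity is in place. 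Your cautionary remarks — that the hypothesis $k_{n}>\pi/L_{min}$ is what forces a Neumann point into every edge (so each $\Omega_{n}^{(v)}$ is a star and Lemma~\ref{lem:spectral-position-equals-nodal-count} applies), and that loops merely add two leaves to the star — are exactly the right points to check and are handled correctly.
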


Note that in the second part of the proposition, the term $\frac{\left|\Gamma\right|k_{n}}{\pi}$
may be perceived as the wavelength capacity for the whole graph (and
so it is another global observable). This is also the value of the
well known Weyl term, which approximates the spectral position of
the eigenvalue $k_{n}$.

The last part of the proposition is an immediate implication of the
first two parts. We mention it explicitly thanks to its insightful
spectral meaning. The term $n-\frac{\left|\Gamma\right|k_{n}}{\pi}$
in the RHS is the difference between the actual spectral position
of an eigenvalue $k_{n}$ and the Weyl term. This difference appears
in the so called trace formula for quantum graphs, where it is expressed
as an infinite sum of oscillatory terms corresponding to periodic
orbits on the graph (\cite{KotSmi_ap99,KotSmi_prl97}). In the LHS
of (\ref{eq: sum_of_oscillatory_terms}) we have a sum over the local
analogues of a similar quantity, $N(\Omega_{n}^{(v)})-\rho(\Omega_{n}^{(v)})$.
It is interesting that the deviation between the global observable
$n-\frac{\left|\Gamma\right|k_{n}}{\pi}$ and the sum of the local
observables, $N(\Omega_{n}^{(v)})-\rho(\Omega_{n}^{(v)})$, is given
by the nodal surplus.

\section{Proofs of bounds and basic identities \label{sec: proofs_of_bounds}}
\begin{proof}
[Proof of Theorem \ref{thm:Neumann_surplus_main}, (\ref{enu:thm-Neumann_surplus_bounds})
]

The main step in the proof is to show that the difference between
the nodal count and the Neumann count of any generic eigenfunction
$f$ on $\Gamma$ satisfies the bounds
\begin{equation}
1-\beta\leq\phi\left(f\right)-\xi\left(f\right)\leq\beta-1+\left|\partial\Gamma\right|.\label{eq:bounds_on_nodal_Neumann_diff}
\end{equation}
Once we have this, the bounds (\ref{eq: Neumann_Surplus_bounds})
follow by
\begin{enumerate}
\item Observing that if $f$ is the $n$-th eigenfunction then $\phi(f)-\xi(f)=\sigma(n)-\omega(n)$.
\item Applying the bounds for $\sigma(n)$ when $f_{n}$ is generic (see
\cite[Thm 2.6]{Ber_cmp08},\cite[(1.16)]{BanBerRazSmi_cmp12}):
\begin{equation}
0\leq\sigma(n)\leq\beta\label{eq: nodal surplus bounds}
\end{equation}
\end{enumerate}
Now, to prove (\ref{eq:bounds_on_nodal_Neumann_diff}) we start with
the following decomposition
\begin{equation}
\phi\left(f\right)-\xi\left(f\right)=\sum_{e\in\E}\phi\left(f|_{e}\right)-\xi\left(f|_{e}\right),\label{eq:Diff-nodal_Neumann-decomposition_to_edges}
\end{equation}
where $\phi\left(f|_{e}\right),\xi\left(f|_{e}\right)$ are the nodal
and Neumann counts on the edge $e$.

Denoting the vertices of $e$ by $u,v$ and the outgoing derivatives
of $f$ at those vertices by $\partial_{e}f\left(v\right),\partial_{e}f\left(u\right)$,
we show next that

\begin{equation}
\phi\left(f|_{e}\right)-\xi\left(f|_{e}\right)=-\frac{\sgn\left(f\left(v\right)\partial_{e}f\left(v\right)\right)+\sgn\left(f\left(u\right)\partial_{e}f\left(u\right)\right)}{2},\label{eq:Diff-nodal-Neumann-on-edge}
\end{equation}
where
\[
\sgn\left(x\right):=\begin{cases}
1 & x>0\\
-1 & x\le0
\end{cases}.
\]

Clearly, for each $e\in\E$ we have $f|_{e}\left(x\right)=A_{e}\cos\left(\varphi_{e}+kx\right)$,
for some $A_{e},\varphi_{e}\in\R$ and arc-length parametrization
$x\in\left[0,l_{e}\right]$. We continue by assuming that all edge
lengths satisfy $l_{e}>\frac{2\pi}{k}$. This assumption is justified
by two observations: (a) extending the interval $\left[0,l_{e}\right]$
by $\frac{2\pi}{k}$, while keeping $f|_{e}\left(x\right)=A_{e}\cos\left(\varphi_{e}+kx\right)$
does not change the values and derivatives of $f$ at the endpoints
of the interval (i.e., $f(u),~f(v),~\partial_{e}f(u),~\partial_{e}f(v)$
are not changed by such an extension); (b) this extension does not
change the value of $\phi\left(f|_{e}\right)-\xi\left(f|_{e}\right)$.

The assumption $l_{e}>\frac{2\pi}{k}$ guarantees that there are at
least two Neumann points and two nodal points on each edge. Now, examine
the sets of nodal and Neumann points along the edge $e$. As the locations
of these points interlace, the difference in their count can be either
$0$ or $\pm1$. To determine the value of this difference we only
need to know whether a nodal point is the nearest to the vertex $v$
or is it a Neumann point (and similarly for the vertex $u$).

If $v\in\V\setminus\d\Gamma$ then $f\left(v\right)\partial_{e}f\left(v\right)\ne0$
since $f$ is assumed to be generic. If $f\left(v\right)\partial_{e}f\left(v\right)>0$
then a Neumann point is closer to $v$ than any nodal point and if
$f\left(v\right)\partial_{e}f\left(v\right)<0$ it is the other way
around. If $v\in\d\Gamma$ then the nearest point to $v$ is always
a nodal point and by the vertex conditions which $f$ satisfies we
have $f\left(v\right)\partial_{e}f\left(v\right)=0$. The arguments
above yield (\ref{eq:Diff-nodal-Neumann-on-edge}).

Substituting (\ref{eq:Diff-nodal-Neumann-on-edge}) into (\ref{eq:Diff-nodal_Neumann-decomposition_to_edges})
and changing summation to be over vertices gives
\begin{align}
\phi\left(f\right)-\xi\left(f\right) & =-\frac{1}{2}\sum_{v\in\V}\sum_{e\in\E_{v}}\sgn\left(f\left(v\right)\partial_{e}f\left(v\right)\right)\nonumber \\
 & =\frac{\left|\partial\Gamma\right|}{2}-\frac{1}{2}\sum_{v\in\V\setminus\partial\Gamma}\sum_{e\in\E_{v}}\sgn\left(f\left(v\right)\partial_{e}f\left(v\right)\right),\label{eq:Diff-nodal-Neumann_by_vertices}
\end{align}
where moving to the last line, we used that $\sgn\left(f\left(v\right)\partial_{e}f\left(v\right)\right)=-1$
for all $v\in\d\Gamma$. Recalling that $f$ is assumed to be generic
we get that if $v\in\V\setminus\partial\Gamma$ then $f\left(v\right)\partial_{e}f\left(v\right)\ne0$
for every $e\in\E_{v}$. But since by Neumann conditions we have $\sum_{e\in\E_{v}}f\left(v\right)\partial_{e}f\left(v\right)=0$,
we conclude that this sum must include at least one positive term
and at least one negative term, so that
\begin{equation}
\forall v\in\V\setminus\partial\Gamma,\quad\left|\sum_{e\in\E_{v}}\sgn\left(f\left(v\right)\partial_{e}f\left(v\right)\right)\right|\le\deg v-2.\label{eq: sum_of_signs_absolute_value}
\end{equation}
Substituting (\ref{eq: sum_of_signs_absolute_value}) in (\ref{eq:Diff-nodal-Neumann_by_vertices})
and using the identity
\begin{equation}
2E=\sum_{v\in\V}\deg v=\left|\d\Gamma\right|+\sum_{v\in\V\setminus\d\Gamma}\deg v\label{eq: graph_identity_number_of_edges}
\end{equation}
 gives
\begin{align}
\left|\phi\left(f\right)-\xi\left(f\right)-\frac{\left|\partial\Gamma\right|}{2}\right| & \leq\frac{1}{2}\sum_{v\in\V\backslash\partial\Gamma}\left(\deg v-2\right)\nonumber \\
= & \frac{1}{2}\left(2E-\left|\d\Gamma\right|\right)-\left(V-\left|\partial\Gamma\right|\right)\\
= & E-V+\frac{\left|\partial\Gamma\right|}{2}=\beta-1+\frac{\left|\partial\Gamma\right|}{2},\label{eq:Diff-nodal-Neumann-abs-value}
\end{align}
which are exactly the bounds (\ref{eq:bounds_on_nodal_Neumann_diff}).
\end{proof}
Before proceeding to prove Propositions \ref{prop:local_observables_bounds}
and \ref{prop:local-global-connections}, we bring a lemma which is
used in the proofs of both propositions (and in other proofs as well).
\begin{lem}
\label{lem:spectral-position-equals-nodal-count} Let $\Gamma$ be
a standard graph with minimal edge length $L_{min}$. Let $f$ be
a generic eigenfunction with eigenvalue $k>\frac{\pi}{L_{min}}$,
and let $\Omega$ be a Neumann domain of $f$. Then
\begin{enumerate}
\item $\Omega$ is either a star graph or an interval.
\item $\left.f\right|_{\Omega}$ is a generic eigenfunction of $\Omega$,
considered as a standard graph.
\item The spectral position of $\Omega$ equals the nodal count of $\left.f\right|_{\Omega}$,
i.e.,
\end{enumerate}
\begin{equation}
N(\Omega)=\phi(\left.f\right|_{\Omega}).\label{eq:spectral-position-equals-nodal-count}
\end{equation}
\end{lem}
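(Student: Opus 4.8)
The plan is to prove the three assertions in order; (1) is the structural heart of the lemma and (2)--(3) follow from it together with the nodal count on trees.

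For (1), recall that on each edge $e$ an eigenfunction has the form $\left.f\right|_{e}(x)=A_{e}\cos(\varphi_{e}+kx)$ in an arc-length coordinate $x\in[0,l_{e}]$, and that $A_{e}\neq 0$: if $\left.f\right|_{e}\equiv 0$, then either an endpoint of $e$ is an interior vertex, contradicting condition (\ref{enu:None-of-theenu: def-generic-non-zero-derivative-at-vertices}) of Definition \ref{def: generic_eigenfunction}, or $\Gamma$ is a single interval, on which a nonzero eigenfunction cannot vanish identically. Hence the Neumann points of $f$ on $e$ are the zeros of $\sin(\varphi_{e}+kx)$ lying in $(0,l_{e})$, and consecutive zeros in $\R$ are $\pi/k$ apart. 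Since $l_{e}\geq L_{min}>\pi/k$, the interval $(0,l_{e})$ contains at least one such zero: its least nonnegative zero lies in $[0,\pi/k)$, and if it equals $0$ the next lies at $\pi/k\in(0,l_{e})$. So every edge of $\Gamma$ carries an interior Neumann point; deleting all Neumann points therefore severs every edge, and a connected component $\Omega$ of the complement contains at most one vertex of $\Gamma$. If $\Omega$ contains none it is a subsegment of a single edge, i.e., an interval. If $\Omega$ contains a vertex $v$, then $\Omega$ is the star formed by $v$ and, for each $e\in\E_{v}$, the segment of $e$ from $v$ to the nearest Neumann point along it (recall $\E_{v}$ is a multiset, so a loop at $v$ contributes two such segments); its centre is $v$, with the same degree as in $\Gamma$, and all its other vertices have degree one, so $\Omega$ is a genuine star when $v$ is interior ($\deg v\geq 3$) and an interval when $v\in\dg$ ($\deg v=1$). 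In every case $\Omega$ has no vertex of degree two, hence is a standard graph; this proves (1) and records that the only possible interior vertex of $\Omega$ is its centre $v$.

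For (2), first note that $\left.f\right|_{\Omega}$ is an eigenfunction of $\Omega$ with eigenvalue $k^{2}$: continuity and the Kirchhoff derivative condition at the centre are inherited from $\Gamma$ unchanged (including the two derivative contributions of a loop at its vertex), at each degree-one vertex of $\Omega$ the outgoing derivative vanishes --- the vertex being either a Neumann point of $f$ or a boundary vertex of $\Gamma$ --- and $\left.f\right|_{\Omega}\not\equiv 0$ since $f$ does not vanish at vertices. As $\Omega$ is a tree, Remark \ref{rem: generic for trees} says it suffices to verify conditions (\ref{enu: def-generic-non-zero-value-at-vertices}) and (\ref{enu:None-of-theenu: def-generic-non-zero-derivative-at-vertices}) of Definition \ref{def: generic_eigenfunction}. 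Condition (\ref{enu:None-of-theenu: def-generic-non-zero-derivative-at-vertices}) concerns only the centre $v$, where $\partial_{e}f(v)\neq 0$ for all $e\in\E_{v}$ by genericity of $f$; condition (\ref{enu: def-generic-non-zero-value-at-vertices}) holds at $v$ and at the tips which are boundary vertices of $\Gamma$ by genericity of $f$, and at a tip which is a Neumann point $x$ on an edge $e$ because $f'(x)=0$ and $A_{e}\neq 0$ force $f(x)=\pm A_{e}\neq 0$. Hence $\left.f\right|_{\Omega}$ is generic and, by Remark \ref{rem: generic for trees}, $k^{2}$ is a simple eigenvalue of $\Omega$.

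For (3): since $k^{2}$ is a simple eigenvalue of $\Omega$ and $N(\Omega)$ is the number of eigenvalues of $\Omega$ strictly below $k^{2}$, the eigenvalue $k^{2}$ sits at index $N(\Omega)$ in the spectrum of $\Omega$, so $\left.f\right|_{\Omega}$ is a scalar multiple of the $N(\Omega)$-th eigenfunction of $\Omega$. Since $\Omega$ is a tree its first Betti number vanishes, so the nodal surplus bound (\ref{eq: nodal surplus bounds}) applied to the generic eigenfunction $\left.f\right|_{\Omega}$ gives $0\leq\phi(\left.f\right|_{\Omega})-N(\Omega)\leq 0$, which is (\ref{eq:spectral-position-equals-nodal-count}). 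The step I expect to be the main obstacle is (1): one must exclude the degenerate configurations --- an edge on which $f$ vanishes identically, or one on which every critical point of $f$ lies at a vertex --- and it is precisely there that the hypothesis $k>\pi/L_{min}$ enters, with loop edges needing a small extra remark so that a Neumann point separates the two half-edge stubs at the centre. Given (1), parts (2) and (3) are short, resting on the tree-genericity criterion of Remark \ref{rem: generic for trees} and on the vanishing Betti number of a tree.
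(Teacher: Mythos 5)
Your proposal is correct and follows essentially the same route as the paper: use $k>\pi/L_{\min}$ to place a Neumann point on every edge so that each Neumann domain is a star or interval, verify conditions (\ref{enu: def-generic-non-zero-value-at-vertices}) and (\ref{enu:None-of-theenu: def-generic-non-zero-derivative-at-vertices}) and invoke Remark \ref{rem: generic for trees} for genericity on the tree $\Omega$, then apply the nodal surplus bound (\ref{eq: nodal surplus bounds}) with $\beta=0$ to get $N(\Omega)=\phi(\left.f\right|_{\Omega})$. You merely supply more detail than the paper does (e.g., non-vanishing of $f$ at the tips of $\Omega$ that are Neumann points, and the loop case), which is a welcome tightening rather than a deviation.
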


\begin{proof}
[Proof of Proposition \ref{lem:spectral-position-equals-nodal-count}]~The
first statement of the lemma follows quite straightforwardly from
the assumption $k>\frac{\pi}{L_{min}}$. Indeed, for each edge $e\in\E$
we have $l_{e}>L_{min}>\frac{\pi}{k}$ which implies that the eigenfunction
$f$ has at least one Neumann point at each edge. Hence, no Neumann
domain contains an entire edge of the original graph. It follows that
a Neumann domain is either an interval (included in one of the graph
edges) or a star graph (which contains one of the graph's interior
vertices).

Next, we show that $f|_{\Omega}$ is a generic eigenfunction of $\Omega$,
where $\Omega$ is considered as a standard graph. It is clear that
$f|_{\Omega}$ is an eigenfunction of $\Omega$. In addition, $f$
does not vanish nor have vanishing derivatives at interior vertices
of $\Gamma$. Hence, the same holds for  $f|_{\Omega}$, which means
that $f|_{\Omega}$ satisfies conditions (\ref{enu: def-generic-non-zero-value-at-vertices})
and (\ref{enu:None-of-theenu: def-generic-non-zero-derivative-at-vertices})
in the genericity definition (Definition \ref{def: generic_eigenfunction}).
As mentioned in Remark \ref{rem: generic for trees}, it is enough
to conclude that $f|_{\Omega}$ is generic since $\Omega$ is a tree
graph.

Finally, we prove (\ref{eq:spectral-position-equals-nodal-count}).
The genericity of $\left.f\right|_{\Omega}$ together with $\Omega$
being a tree graph ($\beta=0$) allows to apply (\ref{eq: nodal surplus bounds})
and conclude that the spectral position of $\left.f\right|_{\Omega}$
(which is by definition the spectral position of $\Omega$) equals
the nodal count of $\left.f\right|_{\Omega}$, i.e., $N(\Omega)=\phi(\left.f\right|_{\Omega})$.

\vspace{6mm}
\end{proof}
\begin{proof}
[Proof of Proposition \ref{prop:local-global-connections}]~

In the current proof we omit everywhere for brevity the subscripts
'$n$', using $k$, $f$ and $\Omega^{(v)}$ instead of $k_{n}$,
$f_{n}$ and $\Omega_{n}^{(v)}$ as in the statement of the Proposition.\uline{}\\
\uline{~}\\
\uline{Proof of (\mbox{\ref{eq:sum_of_spec_pos}})} Let $u\in\V\backslash\partial\Gamma$.
We start by applying equation (\ref{eq:Diff-nodal-Neumann_by_vertices})
from the proof of Theorem \ref{thm:Neumann_surplus_main},(\ref{enu:thm-Neumann_surplus_bounds}).
We apply (\ref{eq:Diff-nodal-Neumann_by_vertices}) for the graph
$\Omega^{(u)}$ and its eigenfunction $\left.f\right|_{\Omega^{(u)}}$
to get
\begin{equation}
\phi\left(\left.f\right|_{\Omega^{(u)}}\right)-\xi\left(\left.f\right|_{\Omega^{(u)}}\right)=\frac{\left|\partial\Omega^{(u)}\right|}{2}-\frac{1}{2}\sum_{e\in\E_{u}}\sgn\left(f\left(u\right)\partial_{e}f\left(u\right)\right).\label{eq: nodal-Neumann-diff-on-star-domain}
\end{equation}
By Lemma \ref{lem:spectral-position-equals-nodal-count} we get that
$\Omega^{(u)}$ is a star graph, so that $\left|\partial\Omega^{(u)}\right|=\deg u$.
Lemma \ref{lem:spectral-position-equals-nodal-count} also gives $\phi(\left.f\right|_{\Omega^{(u)}})=N(\Omega^{(u)})$.
Furthermore, $\xi(\left.f\right|_{\Omega^{(u)}})=0$, since $\Omega^{(u)}$
is a Neumann domain and does not contain interior Neumann points.
Substituting all that in (\ref{eq: nodal-Neumann-diff-on-star-domain})
gives
\begin{equation}
N\left(\Omega^{(u)}\right)=\frac{\deg u}{2}-\frac{1}{2}\sum_{e\in\E_{u}}\sgn\left(f\left(u\right)\partial_{e}f\left(u\right)\right).\label{eq: spectral_position_single_ND}
\end{equation}
Summing (\ref{eq: spectral_position_single_ND}) over all $u\in\V\backslash\partial\Gamma$
and using (\ref{eq:Diff-nodal-Neumann_by_vertices}) again yields
\[
\sum_{u\in\V\setminus\partial\Gamma}N\left(\Omega^{(u)}\right)=\sum_{u\in\V\setminus\partial\Gamma}\frac{\deg u}{2}+\phi\left(f\right)-\xi\left(f\right)-\frac{\left|\partial\Gamma\right|}{2}.
\]

Applying identity (\ref{eq: graph_identity_number_of_edges}) gives
\[
\sum_{u\in\V\setminus\partial\Gamma}N\left(\Omega^{(u)}\right)=\phi\left(f\right)-\xi\left(f\right)+E-\left|\partial\Gamma\right|,
\]
which proves (\ref{eq:sum_of_spec_pos}) since $\sigma(n)-\omega(n)=\phi\left(f\right)-\xi\left(f\right)$.\\
~\\
~\\
\uline{Proof of (\mbox{\ref{eq:sum_of_wavelength_capacities}})}

We denote by $\mathcal{W}$ the set of all trivial Neumann domains
of $f$. Those are the Neumann domains which are intervals and do
not contain any interior vertex of the graph. We have that
\begin{align}
\frac{\left|\Gamma\right|k}{\pi} & =\sum_{v\in\V\setminus\partial\Gamma}\rho\left(\Omega^{(u)}\right)+\sum_{\Omega\in\mathcal{W}}\rho\left(\Omega\right)\nonumber \\
 & =\sum_{v\in\V\setminus\partial\Gamma}\rho\left(\Omega^{(u)}\right)+\left|\mathcal{W}\right|,\label{eq: sum_of_wave_capacities_on_all_graph}
\end{align}
where the first equality follows since the Neumann domains form a
partition of the graph $\Gamma$ and the second equality follows since
for all $\Omega\in\mathcal{W}$, $\rho\left(\Omega\right)=\frac{\left|\Omega\right|}{\pi}k=1$.
To complete the proof we express $\left|\mathcal{W}\right|$ by using
the following counting argument. Each Neumann domain in $\mathcal{W}$
has two boundary points. Each non-trivial Neumann domain is a star
graph (Lemma \ref{lem:spectral-position-equals-nodal-count}) with
$\deg v$ boundary points ($v$ being the central vertex of the star).
So, counting the boundary points of all Neumann domains of $\Gamma$
(either trivial or non-trivial) gives $2\left|\mathcal{W}\right|+\sum_{v\in\V\backslash\partial\Gamma}\deg v$.
On the other hand, in this sum each Neumann point appears twice and
each boundary point of the graph appears once,
\begin{equation}
2\left|\mathcal{W}\right|+\sum_{v\in\V\backslash\partial\Gamma}\deg v=2\xi(f)+\left|\partial\Gamma\right|.\label{eq: counting_trivial_Neumann_domains}
\end{equation}
Substituting (\ref{eq: counting_trivial_Neumann_domains}), (\ref{eq: graph_identity_number_of_edges})
and $\omega(n)=\xi(f)-n$ in (\ref{eq: sum_of_wave_capacities_on_all_graph})
gives the required (\ref{eq:sum_of_wavelength_capacities}).

\vspace{6mm}
\end{proof}
\begin{proof}
[Proof of Proposition \ref{prop:local_observables_bounds}]

Employing Lemma \ref{lem:spectral-position-equals-nodal-count} the
bounds we need to prove in (\ref{eq: Spectral_Position_bounds}) are
equivalent to
\[
1\leq\phi(\left.f\right|_{\Omega})\leq\left|\partial\Omega\right|-1.
\]
These bounds follow immediately from (\ref{eq:bounds_on_nodal_Neumann_diff}),
when taking $\Gamma=\Omega$ and noting that $\xi(\left.f\right|_{\Omega})=0$
(no Neumann points within a Neumann domain) and that $\Omega$ is
a star graph, so its first Betti number is $\beta=0$. Let us only
remark that the lower bound in (\ref{eq: Spectral_Position_bounds})
is trivial since $N(\Omega)\geq1$ by definition. Hence, the lower
bound holds even without the condition $k>\frac{\pi}{L_{\textrm{min}}}$
(as is also mentioned in Remark \ref{rem:why_conditioning_on_large_eigenvalue}).

Next we prove the bounds in (\ref{eq: Rho_bounds}). Clearly, the
external bounds follow immediately from (\ref{eq: Spectral_Position_bounds})
and we only need to prove the internal bounds,
\begin{equation}
\frac{1}{2}(N(\Omega)+1)\leq\rho(\Omega)\leq\frac{1}{2}(N(\Omega)+\left|\partial\Omega\right|-1).\label{eq:bounds-for-rho-within-proof}
\end{equation}
 The lower bound follows by applying \cite[Theorem 1]{Fri_aif05}.
With our notations, the statement of \cite[Theorem 1]{Fri_aif05}
is $k\geq\frac{\pi}{2\left|\Omega\right|}(N+1)$. From here, the required
lower bound in (\ref{eq:bounds-for-rho-within-proof}) follows, as
$\rho=\frac{\left|\Omega\right|k}{\pi}$. We note just as above that
this lower bound holds without assuming $k>\frac{\pi}{L_{\textrm{min}}}$
(as is also mentioned in Remark \ref{rem:why_conditioning_on_large_eigenvalue}).

We proceed to prove the upper bound in (\ref{eq:bounds-for-rho-within-proof}).
It follows from the next lemma, whose proof is given after the proof
of the proposition.
\begin{lem}
\label{lem:dual-star}Let $\Omega$ be a standard star graph and $f$
be a generic eigenfunction of $\Omega$ with eigenvalue $k$. Assume
that $f$ has no Neumann points. Then there exists a dual standard
star graph, $\widetilde{\Omega}$, which satisfies
\begin{enumerate}
\item \label{enu:lem-dual-star-1} Both star graphs have the same number
of edges, i.e., $\left|\partial\Omega\right|=\left|\partial\widetilde{\Omega}\right|$.
\item \label{enu:lem-dual-star-2} There exists a generic eigenfunction
$\tilde{f}$ of $\widetilde{\Omega}$ with eigenvalue $k$.
\item \label{enu:lem-dual-star-3} The eigenfunction $\tilde{f}$ has no
Neumann points. \\
Hence, it has a single Neumann domain, which is the whole of $\widetilde{\Omega}$.
\item \label{enu:lem-dual-star-4} The spectral positions and the wavelength
capacities of both graphs obey
\begin{equation}
N(\Omega)+N(\widetilde{\Omega})=\left|\partial\Omega\right|\textrm{~~~and~~~ }\rho(\Omega)+\rho(\widetilde{\Omega})=\left|\partial\Omega\right|\label{eq: sum_of_star_and_dual_star}
\end{equation}
\end{enumerate}
\end{lem}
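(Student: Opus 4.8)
The plan is to construct $\widetilde{\Omega}$ explicitly from the local data of $f$ at the central vertex of $\Omega$. Let $v$ be the center of the star $\Omega$, with edges $e_1,\dots,e_d$ of lengths $l_1,\dots,l_d$ (so $d=\deg v=|\partial\Omega|$), and on each edge write $\left.f\right|_{e_i}(x)=A_i\cos(\varphi_i+kx)$ with the arc-length coordinate $x\in[0,l_i]$ chosen so that $x=0$ corresponds to $v$. Genericity of $f$ means $A_i\neq 0$ and $\sin(\varphi_i)\neq 0$ (no outgoing derivative vanishes at $v$) and $\cos(\varphi_i)\neq 0$ (no vertex value is zero). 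The condition that $f$ has no Neumann point on $e_i$ means $\varphi_i+kx$ avoids every integer multiple of $\pi$ for $x\in(0,l_i)$; equivalently, the interval $(\varphi_i,\varphi_i+kl_i)$ contains no multiple of $\pi$. The idea for the dual graph is to keep the same edge lengths $l_i$ but replace the phase $\varphi_i$ by a ``complementary'' phase $\widetilde{\varphi_i}$, chosen so that the roles of zeros of $\widetilde{f}$ and zeros of $\widetilde{f}'$ on each edge are essentially swapped, and so that the Neumann vertex conditions still hold at the new center. Concretely one takes $\widetilde{\varphi_i}=\varphi_i-\tfrac{\pi}{2}$ (a quarter-period shift), which turns each $\cos$ into a $\sin$ of the same argument up to sign; the continuity and the Kirchhoff (derivative-sum) conditions at $v$ transform into the Kirchhoff and continuity conditions for $\widetilde f$, so $\widetilde f$ is a legitimate eigenfunction of $\widetilde\Omega$ with the same eigenvalue $k$. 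This gives parts \eqref{enu:lem-dual-star-1} and \eqref{enu:lem-dual-star-2} more or less immediately.

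For part \eqref{enu:lem-dual-star-3}, I would check that the quarter-period shift sends the ``no interior multiple of $\pi$'' condition for $\varphi_i+kx$ to the same condition for $\widetilde{\varphi_i}+kx+\tfrac{\pi}{2}=\varphi_i+kx$... so actually the cleaner way to see it: the Neumann points of $\widetilde f$ on $e_i$ are the zeros of $\cos(\widetilde\varphi_i+kx)$, i.e.\ the points where $\widetilde\varphi_i+kx\in\tfrac{\pi}{2}+\pi\Z$, i.e.\ where $\varphi_i+kx\in\pi\Z$ — which are exactly the \emph{nodal} points of the original $f$ on $e_i$, of which there are none in the interior by hypothesis (recall we are told $f$ — being a Neumann domain restriction — has no interior nodal points either, since Neumann domains lie strictly between consecutive nodal points... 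I should make sure this is actually available; if not, I replace the quarter-shift by a half-period-minus-something argument arranged so the image of the empty set is again empty). Then $\widetilde f$ has no interior Neumann point, so its unique Neumann domain is all of $\widetilde\Omega$, and $\widetilde f$ inherits genericity from the non-vanishing of $A_i,\sin,\cos$ at the center, using Remark \ref{rem: generic for trees} as in Lemma \ref{lem:spectral-position-equals-nodal-count}.

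The substantive part is \eqref{enu:lem-dual-star-4}. For the wavelength capacities this is a one-line computation once the construction is in place: $\rho(\Omega)+\rho(\widetilde\Omega)=\tfrac{k}{\pi}(|\Omega|+|\widetilde\Omega|)=\tfrac{2k|\Omega|}{\pi}$, and one needs to show this equals $d=|\partial\Omega|$; this should follow from summing, over the $d$ edges, an identity of the form ``(number of half-periods of $f$ on $e_i$) + (number of half-periods of $\widetilde f$ on $e_i$) $=$ something'' together with the interlacing of nodal and Neumann points and the boundary behaviour, i.e.\ it is really the same bookkeeping that produced \eqref{eq: spectral_position_single_ND}. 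For the spectral positions, by Lemma \ref{lem:spectral-position-equals-nodal-count} we have $N(\Omega)=\phi(\left.f\right|_\Omega)$ and $N(\widetilde\Omega)=\phi(\widetilde f)$, so I must show the nodal counts add up to $d$; since the nodal points of $\widetilde f$ on $e_i$ coincide with the Neumann points of $f$ on $e_i$ (and vice versa), this reduces to: on each edge, (number of nodal points of $f$) + (number of Neumann points of $f$) $=$ $1$, summed with the boundary count — again an interlacing/counting argument analogous to \eqref{eq: counting_trivial_Neumann_domains}, combined with \eqref{eq: spectral_position_single_ND} applied to both $\Omega$ and $\widetilde\Omega$. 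I expect the main obstacle to be purely organizational: getting the definition of the dual phases to \emph{simultaneously} preserve the vertex conditions, preserve genericity, turn ``no Neumann point'' into ``no Neumann point'', and make the two sum identities in \eqref{eq: sum_of_star_and_dual_star} fall out of a single edgewise count — i.e.\ checking that one and the same choice of shift does all four jobs at once, rather than each in isolation. Once the shift is pinned down, every individual verification is elementary trigonometry plus the interlacing already used in the proof of Theorem \ref{thm:Neumann_surplus_main}.
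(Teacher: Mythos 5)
Your construction does not work, and the failure is structural rather than organizational. You keep the edge lengths of $\Omega$ and only shift the phases by a quarter period, $\widetilde{\varphi_i}=\varphi_i-\tfrac{\pi}{2}$. Two things go wrong. First, the quarter-period shift does not preserve the standard vertex conditions at the centre: writing $\left.f\right|_{e_i}(x)=A_i\cos(\varphi_i+kx)$, continuity of $f$ at $v$ says that $A_i\cos(\varphi_i)$ is \emph{independent of} $i$, while the Kirchhoff condition says that $\sum_i A_i\sin(\varphi_i)=0$; after the shift, continuity of $\widetilde f$ at $v$ would require $A_i\sin(\varphi_i)$ to be independent of $i$ and the Kirchhoff condition would require $\sum_i A_i\cos(\varphi_i)=0$. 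Neither follows from the original conditions (the first is a "sum equals zero" statement, the second an "all values equal" statement, and genericity forbids the common value from being zero), so $\widetilde f$ is not an eigenfunction of your $\widetilde\Omega$. Second, since $|\widetilde\Omega|=|\Omega|$ in your construction, you would get $\rho(\Omega)+\rho(\widetilde\Omega)=2\rho(\Omega)$, and the identity in \eqref{eq: sum_of_star_and_dual_star} would force $\rho(\Omega)=\tfrac{1}{2}\left|\partial\Omega\right|$ for every such star — which is false (e.g.\ a three-edge star with $kl_1=kl_2=\tfrac{\pi}{4}$ and $kl_3=\pi-\tan^{-1}(2)$ satisfies all hypotheses but has $\rho(\Omega)\neq\tfrac{3}{2}$), and would in any case trivialize the bound the lemma is meant to produce. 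A further gap: your argument for part \eqref{enu:lem-dual-star-3} needs $f$ to have no interior \emph{nodal} points, but that is neither assumed nor true in the intended application — a Neumann domain generally contains nodal points (their number is exactly $N(\Omega)$, which can be as large as $\left|\partial\Omega\right|-1$).

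The fix, which is what the paper does, is to change the \emph{edge lengths} rather than the phases: set $\tilde{l}_j:=\tfrac{\pi}{k}-l_j$ (positive precisely because $f$ has no Neumann point on $e_j$, so $kl_j<\pi$) and $\tilde{f}|_{\tilde{e}_j}(x)=-A_j\cos\left(k(\tilde{l}_j-x)\right)$. Since $k\tilde{l}_j=\pi-kl_j$, this reflection preserves the value of $f$ at the centre and negates every outgoing derivative there, so both vertex conditions transfer verbatim; the second identity in \eqref{eq: sum_of_star_and_dual_star} becomes the tautology $\sum_j\tfrac{k}{\pi}\bigl(l_j+\tilde{l}_j\bigr)=\left|\partial\Omega\right|$; and the first follows because $k\tilde{l}_j>\tfrac{\pi}{2}$ exactly when $kl_j<\tfrac{\pi}{2}$, so the edges carrying a nodal point of $\tilde f$ are the complement of those carrying a nodal point of $f$. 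Your quarter-period idea cannot be repaired into this because it leaves the total length, and hence the wavelength capacity, unchanged.
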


Employing Lemma \ref{lem:spectral-position-equals-nodal-count}, we
have that $\Omega$ is star graph (or an interval, which is a particular
case of a star graph) and that $\left.f\right|_{\Omega}$ is a generic
eigenfunction of $\Omega$ with eigenvalue $k$. Furthermore, as $\Omega$
is a Neumann domain, $\left.f\right|_{\Omega}$ has no Neumann points.
Hence, we may apply Lemma \ref{lem:dual-star} and write
\begin{align*}
\rho(\Omega) & =\left|\partial\Omega\right|-\rho(\widetilde{\Omega})\\
 & \leq\left|\partial\Omega\right|-\frac{1}{2}(N(\widetilde{\Omega})+1)\\
 & =\left|\partial\Omega\right|-\frac{1}{2}(\left|\partial\Omega\right|-N(\Omega)+1)\\
 & =\frac{1}{2}(N(\Omega)+\left|\partial\Omega\right|-1),
\end{align*}
where in the first and the third lines we have used (\ref{eq: sum_of_star_and_dual_star})
from Lemma \ref{lem:dual-star}; and in the second line we have applied
the lower bound in (\ref{eq:bounds-for-rho-within-proof}) for $\widetilde{\Omega}$.
As a result we get the upper bound in (\ref{eq:bounds-for-rho-within-proof}).

This concludes the proof of the proposition and it is left to provide
a proof for Lemma \ref{lem:dual-star}, which we do next.
\end{proof}
\begin{proof}
[Proof of Lemma \ref{lem:dual-star}]

The lemma is proved by providing an explicit construction of the mentioned
dual star, $\widetilde{\Omega}$. We describe this construction below
and it is also demonstrated in Figure \ref{fig: Dual graph example}.

\begin{figure}[h]
\begin{centering}
\includegraphics[width=0.9\columnwidth]{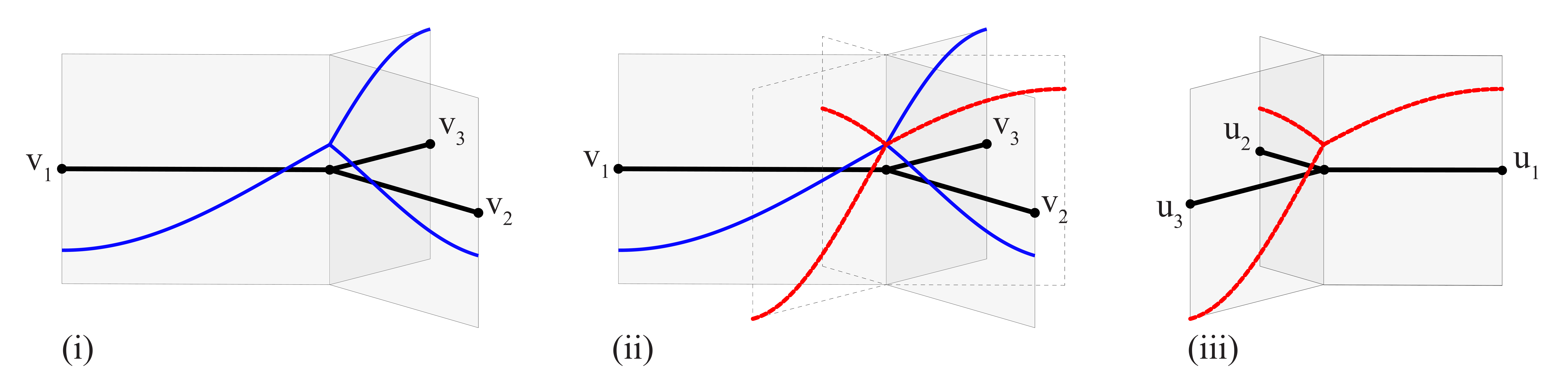}
\par\end{centering}
\centering{}\caption{(i) A star graph $\Omega$ with three edges (in black) and an eigenfunction
$f$ (in blue) whose derivative does not vanish in the interior of
$\Omega$, so that $\Omega$ is a single Neumann domain of $f$. (ii)
Adding the continuation of each of the restrictions $f|_{e_{j}}$
(in dashed red), which corresponds to $\tilde{f}|_{\tilde{e}_{j}}$
(in the proof of Lemma \ref{lem:dual-star}). (iii) The star graph
$\widetilde{\Omega}$ (in black) which is the dual of $\Omega$ and
the corresponding eigenfunction $\tilde{f}$ (in dashed red) of $\widetilde{\Omega}$.}
\label{fig: Dual graph example}
\end{figure}

Denote by $\left\{ l_{j}\right\} _{j=1}^{\left|\partial\Omega\right|}$
the edge lengths of the star graph $\Omega$. We take $\widetilde{\Omega}$
to be a star graph with the same number of edges, $\left|\partial\Omega\right|$,
and such that the edge lengths are taken to be $\tilde{l}_{j}:=\frac{\pi}{k}-l_{j}$.
So, every edge of $\Omega$ has a dual edge in $\widetilde{\Omega}$.
As usual, we consider $\widetilde{\Omega}$ to be a standard graph.
First, to justify that the construction above defines a valid graph
we need to show that $\tilde{l}_{j}>0$ for all $j$. This is equivalent
to showing that $l_{j}<\frac{\pi}{k}$ for all $j$. But this is clear,
since otherwise the derivative of $f$ would vanish somewhere within
the edge $e_{j}$, which violates the assumption of the lemma.

Note that by construction, statement (\ref{enu:lem-dual-star-1})
of the Lemma is satisfied and we need to verify that all other statements
(\ref{enu:lem-dual-star-2}),(\ref{enu:lem-dual-star-3}),(\ref{enu:lem-dual-star-4})
hold as well. To do so, note that $f$ may be written as
\[
\forall1\leq j\leq\left|\partial\Omega\right|,\quad\quad f|_{e_{j}}\left(x\right)=A_{j}\cos\left(k(l_{j}-x)\right),
\]
where $x=0$ at the central vertex, and the coefficients $A_{j}$
need to satisfy certain relations to ensure that $f$ satisfies Neumann
conditions at the central vertex. Define a function $\tilde{f}$ on
$\widetilde{\Omega}$ by
\begin{equation}
\forall1\leq j\leq\left|\partial\Omega\right|,\quad\quad\tilde{f}|_{\tilde{e}_{j}}\left(x\right)=-A_{j}\cos\left(k(\tilde{l}_{j}-x)\right),\label{eq:expressing-f-on-an-edge-of-Omega-tilde}
\end{equation}
where here as well $x=0$ at the central vertex. It is easy to see
that $\tilde{f}$ satisfies Neumann condition at all the boundary
vertices. At the central vertex we have
\begin{align}
\tilde{f}|_{\tilde{e}_{j}}\left(0\right) & =-A_{j}\cos\left(k\tilde{l}_{j}\right)=A_{j}\cos\left(kl_{j}\right)=f|_{e_{j}}\left(0\right)\nonumber \\
\tilde{f}'|_{\tilde{e}_{j}}\left(0\right) & =-kA_{j}\sin\left(k\tilde{l}_{j}\right)=-kA_{j}\sin\left(kl_{j}\right)=-f'|_{e_{j}}\left(0\right).\label{eq:values-of-f-related-to-values-of-dual-at-central-vertex}
\end{align}
Since $f$ satisfies Neumann boundary conditions at the central vertex
of $\Omega$ and we have the two relations above, we conclude that
$\tilde{f}$ satisfies Neumann vertex conditions at the central vertex
of $\widetilde{\Omega}$. Hence, $\tilde{f}$ satisfies Neumann vertex
conditions at all vertices of $\widetilde{\Omega}$ and it is therefore
an eigenfunction of the standard graph $\widetilde{\Omega}$ with
an eigenvalue $k$. Moreover, the relations (\ref{eq:values-of-f-related-to-values-of-dual-at-central-vertex})
together with the genericity of $f$ implies that $\tilde{f}$ satisfies
conditions (\ref{enu: def-generic-non-zero-value-at-vertices}) and
(\ref{enu:None-of-theenu: def-generic-non-zero-derivative-at-vertices})
of Definition \ref{def: generic_eigenfunction} which is enough to
conclude that $\tilde{f}$ is generic since $\widetilde{\Omega}$
is a tree (see Remark \ref{rem: generic for trees}). This proves
that statement (\ref{enu:lem-dual-star-2}) of the lemma holds.

By construction, the edge lengths of $\widetilde{\Omega}$ are bounded,
$\tilde{l}_{j}<\frac{\pi}{k}$. This together with (\ref{eq:expressing-f-on-an-edge-of-Omega-tilde})
shows that $\tilde{f'}$ vanishes only at the graph boundary vertices,
so that statement (\ref{enu:lem-dual-star-3}) of the lemma holds
as well.

Finally, to show statement (\ref{enu:lem-dual-star-4}) of the lemma,
we compute
\begin{align*}
\rho(\widetilde{\Omega}) & =\sum_{j=1}^{\left|\partial\Omega\right|}\frac{k}{\pi}\tilde{l_{j}}=\sum_{j=1}^{\left|\partial\Omega\right|}\frac{k}{\pi}\left(\frac{\pi}{k}-l_{j}\right)=\left|\partial\Omega\right|-\rho(\Omega).
\end{align*}
To compute the spectral position $N(\widetilde{\Omega})$ we observe
that since $\tilde{f}$ was shown to be generic eigenfunction and
since $\widetilde{\Omega}$ is a star graph with first Betti number
$\beta=0$, it follows (e.g., from (\ref{eq: nodal surplus bounds}))
that the spectral position $N(\widetilde{\Omega})$ equals the nodal
count of $\tilde{f}$. Now, since $\tilde{l}_{j}<\frac{\pi}{k}$ and
from (\ref{eq:expressing-f-on-an-edge-of-Omega-tilde}) the number
of nodal points on each edge $\tilde{e}_{j}$ is either zero (if $k\tilde{l}_{j}<\frac{\pi}{2}$)
or one (if $k\tilde{l}_{j}>\frac{\pi}{2}$). The genericity of $\tilde{f}$
implies $k\tilde{l}_{j}\neq\frac{\pi}{2}$ (as otherwise $f$ would
vanish at the central vertex). With those observations, we get
\begin{align*}
N(\widetilde{\Omega})= & \left|\left\{ 1\leq j\le\left|\partial\Omega\right|\,:\,k\tilde{l}_{j}>\frac{\pi}{2}\right\} \right|\\
= & \left|\left\{ 1\leq j\le\left|\partial\Omega\right|\,:\,kl_{j}<\frac{\pi}{2}\right\} \right|\\
= & \left|\partial\Omega\right|-N(\Omega).
\end{align*}
\end{proof}

\section{Review of existing tools for proving the probabilistic statements
\label{sec: tools_and_methods}}

\subsection{The characteristic torus and the secular manifold}

Let $\Gamma$ be a standard graph with $E=\left|\E\right|$ edges.
We will use the notation $\Gamma_{\lv}$ to emphasize the dependence
of $\Gamma$ on its edge lengths, $\lv\in(0,\infty)^{E}$. In particular,
we consider the standard graphs $\left\{ \Gamma_{\kv}\right\} _{\kv\in(0,2\pi]^{E}}$.
Each with the discrete graph structure of $\Gamma$ but with edge
lengths $\kv\in\opcl{0,2\pi}^{E}$.
\begin{defn}
Let $\Gamma$ be a graph with $E$ edges.
\begin{enumerate}
\item The flat torus $\torus:=\left(\R/2\pi\Z\right)^{E}$ is called the
characteristic torus of $\Gamma$. We consider the coordinates of
$\torus$ as taking values in $(0,2\pi]^{E}$. We also denote by $\modp{~~}:\R^{E}\rightarrow(0,2\pi]^{E}$
the projection to the torus by taking modulus $2\pi$ (with a slight
abuse of the usual modulus operator, as $2\pi$ is in its image here,
rather than $0$).
\item The following subsets of $\torus$,
\begin{align}
\Sigma & :=\set{\kv\in\torus}{1~\textrm{is an eigenvalue of }\Gamma_{\kv}}~~~\textrm{and}\nonumber \\
\mreg & :=\set{\kv\in\torus}{1~\textrm{is a simple eigenvalue of }\Gamma_{\kv}},\label{eq: def-of-sigma-and-sigma-reg}
\end{align}
 are called the secular manifold and its regular part.
\end{enumerate}
\end{defn}

We note that despite its suggestive name, $\Sigma$ is not necessarily
a smooth manifold, and it may have singularities. Nevertheless, the
set of its regular points is exactly $\mreg$, which is a real analytic
manifold of dimension $E-1$, see \cite{CdV_ahp15,AloBanBer_cmp18,Alon_PhDThesis}.

The following Lemma summarizes some results from \cite{CdV_ahp15,AloBanBer_cmp18,Alon_PhDThesis}.
The proofs of all sections of this Lemma appear in \cite{CdV_ahp15,AloBanBer_cmp18,Alon_PhDThesis}
either explicitly or between the lines. Nevertheless, for completeness
and didactic purpose we provide in Appendix \ref{sec: appendix=00005C-proof-of-two-lemmas}
a concise proof of the Lemma.
\begin{lem}
\label{lem: Canonical_eigenfunctions_and_Secular_mfld}\cite{CdV_ahp15,AloBanBer_cmp18,Alon_PhDThesis}
Let $\Gamma$ be a standard graph.
\begin{enumerate}
\item \label{enu: lem-Canonical_eigenfunctions_and_Secular_mfld-1} $k>0$
is a simple eigenvalue of $\Gamma_{\lv}$ if and only if $\modp{k\lv}\in\mreg$.
\item \label{enu: lem-Canonical_eigenfunctions_and_Secular_mfld-2} There
exists a family of functions $\left\{ f_{\kv}\right\} _{\kv\in\mreg}$
such that
\begin{enumerate}
\item \label{enu: lem-Canonical_eigenfunctions_and_Secular_mfld-2a}For
every $\kv\in\mreg$, $f_{\kv}$ is an eigenfunction of $\Gamma_{\kv}$
corresponding to the eigenvalue $1$.
\item \label{enu: lem-Canonical_eigenfunctions_and_Secular_mfld-2b}For
every $v,u\in\V$ and $e\in\E_{\V}$, there exist two real trigonometric
polynomials $p_{u,v}$ and $q_{u,v,e}$ such that for every $\kv\in\mreg$:
\begin{equation}
p_{u,v}(\kv)=f_{\kv}\left(u\right)\overline{f_{\kv}\left(v\right)},\text{ and}\label{eq: p_u_v_trig_poly}
\end{equation}
\begin{equation}
q_{u,v,e}(\kv)=f_{\kv}\left(u\right)\overline{\left(\d_{e}f_{\kv}\left(v\right)\right)}.\label{eq: q_u_v_e_trig_poly}
\end{equation}
\end{enumerate}
\item \label{enu: lem-Canonical_eigenfunctions_and_Secular_mfld-3} Let
$k>0$ and $\lv$ such that $\kv:=\modp{k\lv}\in\mreg$. Denote by
$f$ the real eigenfunction of $\Gamma_{\lv}$ which corresponds to
the eigenvalue $k$. There exists $c\in\C\setminus\left\{ 0\right\} $
such that for every $v\in\V$ and $e\in\E_{v}$,
\begin{align}
f\left(v\right) & =cf_{\kv}\left(v\right)\label{eq:value_of_eigenfunction_equals_to_canonical}\\
\frac{1}{k}\d_{e}f\left(v\right) & =c\d_{e}f_{\kv}\left(v\right).\label{eq:derivative_of_eigenfunction_equals_to_canonical}
\end{align}
Note that $c$ may depend on $\kv$.
\end{enumerate}
\end{lem}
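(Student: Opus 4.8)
The plan is to establish the three items in order, relying on two classical facts already recorded in the excerpt. The first is that, under the rescaling $x\mapsto x/k$, a (Neumann) eigenfunction of $\gl$ with eigenvalue $k^{2}$ corresponds bijectively to an eigenfunction of $\Gamma_{k\lv}$ with eigenvalue $1$; this rescaling preserves both continuity at vertices and the Kirchhoff condition on outgoing derivatives. The second is that an eigenfunction for eigenvalue $1$ on a standard graph is, away from the locus where some $\sin\kappa_{e}=0$, completely determined by its vertex values, through a linear system $h(\kv)\vec g=0$ whose coefficients are built only from $\sin\kappa_{e}$ and $\cos\kappa_{e}$ and hence depend on the edge lengths only through $\kv:=\modp{k\lv}$. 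Combining these, for item (\ref{enu: lem-Canonical_eigenfunctions_and_Secular_mfld-1}) I would note that the eigenspace of $\gl$ at $k^{2}$, of $\Gamma_{k\lv}$ at $1$, and of $\gk$ at $1$ all have the same dimension (the middle equality coming from $2\pi$-periodicity of $\sin$ and $\cos$, which lets one lengthen or shorten edges by multiples of $2\pi$ without changing vertex data), so ``$k$ is a simple eigenvalue of $\gl$'' is equivalent to ``$1$ is a simple eigenvalue of $\gk$'', i.e.\ to $\kv\in\mreg$. The only subtlety is the exceptional locus $\{\exists e:\sin\kappa_{e}=0\}$, where vertex values alone no longer determine the eigenfunction; there one replaces the vertex-value description by the full bond-amplitude description, which always determines it, and the dimension count goes through as before.

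For item (\ref{enu: lem-Canonical_eigenfunctions_and_Secular_mfld-2}) I would pass to the real symmetric matrix $\tilde h(\kv):=(\prod_{e}\sin\kappa_{e})\,h(\kv)$, whose entries are real trigonometric polynomials and whose kernel is one-dimensional exactly on $\mreg$ (to also cover the locus $\sin\kappa_{e}=0$ one can instead use the unitary bond-scattering matrix $U(\kv)=\Pi S\,\mathrm{diag}(\rme^{\rmi\kappa_{e}})$, with $S$ the real orthogonal Neumann vertex-scattering matrix, for which $\kv\in\mreg$ iff $1$ is a simple eigenvalue of $U(\kv)$). Since $\tilde h(\kv)\,\mathrm{adj}(\tilde h(\kv))=0$, every column of $\mathrm{adj}(\tilde h(\kv))$ lies in the one-dimensional kernel; I would \emph{define} $f_{\kv}$ to be the eigenfunction of $\gk$ for eigenvalue $1$ whose vertex data are a suitably scaled nonzero column of $\mathrm{adj}(\tilde h(\kv))$ (equivalently, whose plane-wave amplitudes are a column of $\mathrm{adj}(I-U(\kv))$), which gives (\ref{enu: lem-Canonical_eigenfunctions_and_Secular_mfld-2a}). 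After an overall rescaling by $\prod_{e}\sin\kappa_{e}$ to clear the denominators appearing in outgoing derivatives, the quantities $f_{\kv}(v)$ and $\d_{e}f_{\kv}(v)$ are linear combinations of those adjugate entries with coefficients among $1,\sin\kappa_{e},\cos\kappa_{e}$, hence are trigonometric polynomials in $\kv$. Finally, since $1$ is a \emph{simple} eigenvalue of the real operator $\Delta$ on $\gk$, the eigenfunction $f_{\kv}$ is real up to a single global phase; that phase therefore cancels in $f_{\kv}(u)\overline{f_{\kv}(v)}$ and in $f_{\kv}(u)\overline{\d_{e}f_{\kv}(v)}$, so these products are \emph{real}-valued trigonometric polynomials, which I would call $p_{u,v}$ and $q_{u,v,e}$, proving (\ref{enu: lem-Canonical_eigenfunctions_and_Secular_mfld-2b}).

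Item (\ref{enu: lem-Canonical_eigenfunctions_and_Secular_mfld-3}) then follows quickly: given $k>0$ and $\lv$ with $\kv=\modp{k\lv}\in\mreg$ and the real eigenfunction $f$ of $\gl$ at $k^{2}$, the rescaled function $x\mapsto f(x/k)$ is an eigenfunction of $\Gamma_{k\lv}$ at $1$, periodicity of $\sin$ and $\cos$ yields an eigenfunction of $\gk$ at $1$ with the same vertex values and outgoing derivatives, and simplicity of the eigenvalue forces it to equal $c\,f_{\kv}$ for some $c\in\C\setminus\{0\}$ (depending on $\kv$ and on the normalization of $f$). Reading off the vertex data — the outgoing derivatives picking up the factor $1/k$ from the rescaling — gives exactly (\ref{eq:value_of_eigenfunction_equals_to_canonical}) and (\ref{eq:derivative_of_eigenfunction_equals_to_canonical}).

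The step I expect to be the main obstacle is the trigonometric-polynomiality in item (\ref{enu: lem-Canonical_eigenfunctions_and_Secular_mfld-2b}): ensuring that the choice of adjugate column — equivalently, the normalization of $f_{\kv}$ — can be made consistently enough that a \emph{single} real trigonometric polynomial $p_{u,v}$ represents $f_{\kv}(u)\overline{f_{\kv}(v)}$ on all of $\mreg$, since two different columns differ by a $\kv$-dependent non-polynomial scalar and each column vanishes on a lower-dimensional subset of $\mreg$. I would handle this by covering $\mreg$ with finitely many pieces on which a fixed column is non-vanishing and checking that the resulting real trigonometric polynomials agree on overlaps (they represent the same function), or — which is all that is needed for the density arguments that use this lemma — by discarding a subset of $\kv$ of dimension strictly less than $E-1$. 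The careful execution of this point, together with the exceptional-locus point in item (\ref{enu: lem-Canonical_eigenfunctions_and_Secular_mfld-1}), is precisely what is carried out in detail in \cite{CdV_ahp15,AloBanBer_cmp18,Alon_PhDThesis}.
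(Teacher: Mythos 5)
Your overall route---reduce to $\Gamma_{\kv}$ by rescaling and $2\pi$-periodicity, describe eigenfunctions by a linear system whose coefficient matrix depends only on $\kv$, define $f_{\kv}$ through the adjugate of that matrix, and use simplicity of the eigenvalue to make the sesquilinear products real---is the same as the paper's. The paper works throughout with the bond-scattering description: $k>0$ is an eigenvalue of $\Gamma_{\lv}$ iff $\ker(\Id-\rme^{\rmi k\lv}S)\neq\{0\}$ for a fixed real orthogonal $S$, the kernel being isomorphic to the eigenspace via $f|_{e}(x)=a_{e}\rme^{\rmi kx}\rme^{-\rmi kl_{e}}+a_{\hat{e}}\rme^{-\rmi kx}$; this handles part (\ref{enu: lem-Canonical_eigenfunctions_and_Secular_mfld-1}) uniformly and avoids your exceptional locus $\sin\kappa_{e}=0$, but that difference is cosmetic. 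Your part (\ref{enu: lem-Canonical_eigenfunctions_and_Secular_mfld-3}) matches the paper's.

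The genuine gap is exactly the point you flag as the ``main obstacle,'' and neither of your proposed fixes closes it. Discarding a lower-dimensional subset of $\mreg$ weakens the lemma in a way the paper cannot afford: the identities (\ref{eq: p_u_v_trig_poly})--(\ref{eq: q_u_v_e_trig_poly}) are needed at \emph{every} point of $\mreg$, e.g.\ because $\mgen$ is characterized as the subset of $\mreg$ where $p_{v,v}$ and $q_{v,v,e}$ do not vanish, and the symmetries of Lemma \ref{lem: Inversion-properties} are applied pointwise. The patching fix has a normalization problem: two nonvanishing columns $\bs{a}^{(i)},\bs{a}^{(j)}$ of the adjugate differ on an overlap by a $\kv$-dependent scalar $c_{ij}(\kv)$, so the products built from them differ by the factor $|c_{ij}(\kv)|^{2}\neq 1$; the local trigonometric polynomials therefore do \emph{not} represent the same function on overlaps unless the norm of the column is first fixed, and any ``suitable scaling'' reintroduces a non-polynomial scalar. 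The paper's resolution is to never choose a column: since $\rme^{\rmi\kv}S$ is unitary, the left and right kernels of $\Id-\rme^{\rmi\kv}S$ coincide, so on $\mreg$ the rank-one adjugate factors as $\mathrm{adj}(\Id-\rme^{\rmi\kv}S)=\bs{a}\bs{a}^{*}$ with $\bs{a}$ in the kernel. This fixes $\|\bs{a}\|$---hence the normalization of $f_{\kv}$---up to a phase, and each product $f_{\kv}(u)\overline{f_{\kv}(v)}$ or $f_{\kv}(u)\overline{\partial_{e}f_{\kv}(v)}$ is a fixed linear combination, with coefficients among $1,\rme^{\pm\rmi\kappa_{e}}$, of the entries of $\bs{a}\bs{a}^{*}$, i.e.\ of minors of $\Id-\rme^{\rmi\kv}S$. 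It is therefore globally a single trigonometric polynomial on all of $\mreg$, with no patching; reality is then obtained exactly as you say, by passing to real parts and using that $f_{\kv}$ is real up to a global phase.
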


We follow the terminology of \cite{AloBanBer_cmp18} and call $\left\{ f_{\kv}\right\} _{\kv\in\mreg}$
the canonical eigenfunctions of the graph $\Gamma$.

The Lemma above emphasizes the importance of the subset $\mreg$ when
dealing with simple eigenvalues. To specialize our discussion for
generic eigenfunctions we need to define the \emph{generic manifold},
\begin{align}
\mgen:= & \left\{ \kv\in\mreg\,|\,f_{\kv}\,\textrm{is\,generic}\right\} ,\label{eq: definition generic}
\end{align}
the properties of which are described in the following lemma.
\begin{lem}
\label{lem: generic secman} \cite[Thm 3.9]{AloBanBer_cmp18}, \cite{Alon}
\cite[Section 5]{Alon_PhDThesis}
\begin{enumerate}
\item \label{enu: lem-generic-secman-1} $\mgen$ is a real analytic sub-manifold
of $\mreg$ of dimension $\left|\E\right|-1$ and has finitely many
connected components.
\item \label{enu: lem-generic-secman-2} Let $f$ be an eigenfunction of
$\Gamma_{\lv}$ with eigenvalue $k$. Then\\
$f$ is generic $\quad\Leftrightarrow\quad$ $\kv:=\modp{k\lv}\in\Sigma_{\G}$
$\quad\Leftrightarrow\quad$$f_{\kv}$ is generic.
\end{enumerate}
\end{lem}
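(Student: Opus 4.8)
The plan is to deduce the lemma from Lemma~\ref{lem: Canonical_eigenfunctions_and_Secular_mfld}, which already provides the identification of simple eigenvalues with points of $\mreg$, the pointwise proportionality between a genuine eigenfunction $f$ on $\Gamma_{\lv}$ and the canonical eigenfunction $f_{\kv}$, and the trigonometric‑polynomial description of the vertex data $f_{\kv}(v)$ and $\d_e f_{\kv}(v)$. \textbf{Part (2).} If $\kv:=\modp{k\lv}\notin\mreg$, then by Lemma~\ref{lem: Canonical_eigenfunctions_and_Secular_mfld}(\ref{enu: lem-Canonical_eigenfunctions_and_Secular_mfld-1}) the eigenvalue $k$ is not simple, so $f$ is not generic; also $\kv\notin\mgen\subseteq\mreg$ and $f_{\kv}$ is not defined, so all three conditions fail and the equivalence is vacuous. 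If $\kv\in\mreg$, then $k$ is a simple eigenvalue of $\Gamma_{\lv}$ and $1$ is a simple eigenvalue of $\Gamma_{\kv}$, so condition~(\ref{enu: def-generic-simple-evalue}) of Definition~\ref{def: generic_eigenfunction} holds for both $f$ and $f_{\kv}$. By Lemma~\ref{lem: Canonical_eigenfunctions_and_Secular_mfld}(\ref{enu: lem-Canonical_eigenfunctions_and_Secular_mfld-3}) there is $c\neq0$ with $f(v)=c\,f_{\kv}(v)$ and $\frac{1}{k}\d_e f(v)=c\,\d_e f_{\kv}(v)$ for all $v\in\V$, $e\in\E_v$, whence $f(v)\neq0\Leftrightarrow f_{\kv}(v)\neq0$ and $\d_e f(v)\neq0\Leftrightarrow\d_e f_{\kv}(v)\neq0$. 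So conditions~(\ref{enu: def-generic-non-zero-value-at-vertices}) and~(\ref{enu:None-of-theenu: def-generic-non-zero-derivative-at-vertices}) hold for $f$ exactly when they hold for $f_{\kv}$; this gives ``$f$ generic $\Leftrightarrow$ $f_{\kv}$ generic'', and ``$f_{\kv}$ generic $\Leftrightarrow$ $\kv\in\mgen$'' is the defining property~(\ref{eq: definition generic}).

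\textbf{Part (1), manifold structure.} Applying the pointwise criteria above with $f$ replaced by $f_{\kv}$ itself, one reads off from~(\ref{eq: definition generic}) that $\mgen$ is the subset of $\mreg$ cut out by the finitely many conditions $p_{v,v}(\kv)\neq0$ for $v\in\dg$ and $q_{v,v,e}(\kv)\neq0$ for $v\in\Vint$, $e\in\E_v$, where $p_{v,v}$, $q_{v,v,e}$ are the trigonometric polynomials of Lemma~\ref{lem: Canonical_eigenfunctions_and_Secular_mfld}(\ref{enu: lem-Canonical_eigenfunctions_and_Secular_mfld-2b}); note $q_{v,v,e}(\kv)=f_{\kv}(v)\overline{\d_e f_{\kv}(v)}\neq0$ already forces $f_{\kv}(v)\neq0$, so the interior conditions encode non‑vanishing of $f_{\kv}$ at interior vertices as well. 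A trigonometric polynomial is a globally defined real‑analytic function on $\torus$, so each of these conditions defines an open subset of $\torus$, hence of $\mreg$, and therefore $\mgen$ is open in $\mreg$. An open subset of the real‑analytic manifold $\mreg$ (of dimension $\left|\E\right|-1$) is again a real‑analytic manifold of dimension $\left|\E\right|-1$; non‑emptiness of $\mgen$ is elementary and can be verified on small graphs.

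\textbf{Finiteness of the number of components, and the main obstacle.} It remains to bound the number of connected components, and this is the one step that uses genuine analytic‑geometry input rather than bookkeeping. The set $\mreg$ is semianalytic in the compact real‑analytic manifold $\torus$: $\Sigma$ is the zero set of a secular (real‑analytic) function, and $\mreg$ is obtained by further removing the analytic locus on which the eigenvalue $1$ has multiplicity at least two (see \cite{CdV_ahp15,AloBanBer_cmp18,Alon_PhDThesis}). Intersecting with the finitely many semianalytic conditions $p_{v,v}\neq0$, $q_{v,v,e}\neq0$ keeps $\mgen$ semianalytic, and since a relatively compact semianalytic subset of a real‑analytic manifold has finitely many connected components while $\torus$ is compact, $\mgen$ has finitely many connected components. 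The care needed is precisely in verifying that $\mreg$ (and hence $\mgen$) is semianalytic with this finiteness property — the only place where the cited background on the secular manifold is essential — whereas every other assertion reduces to Lemma~\ref{lem: Canonical_eigenfunctions_and_Secular_mfld}.
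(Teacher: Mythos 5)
Your proposal is correct, and for part (2) it follows exactly the route the paper indicates: the paper itself dismisses part (2) as "a straightforward implication of (\ref{eq:value_of_eigenfunction_equals_to_canonical}), (\ref{eq:derivative_of_eigenfunction_equals_to_canonical}) and (\ref{eq: definition generic})", which is precisely the chain of equivalences you spell out (simplicity of $k$ corresponds to $\kv\in\mreg$ via Lemma \ref{lem: Canonical_eigenfunctions_and_Secular_mfld}(\ref{enu: lem-Canonical_eigenfunctions_and_Secular_mfld-1}), and the nonvanishing conditions transfer between $f$ and $f_{\kv}$ via the nonzero factor $c$). The real difference is in part (1): the paper does not prove it at all, but defers to \cite[Thm 3.9]{AloBanBer_cmp18} and \cite[Section 5]{Alon_PhDThesis} with the remark that the argument there "carries over" to this slightly different set $\mgen$. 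You instead give a self-contained argument: $\mgen$ is cut out of $\mreg$ by the open conditions $p_{v,v}\neq0$ and $q_{v,v,e}\neq0$ (correctly noting that $q_{v,v,e}\neq 0$ already forces $f_{\kv}(v)\neq0$ at interior vertices), so it inherits the real-analytic manifold structure and dimension from $\mreg$; and finiteness of the number of components follows because $\mgen$ is a relatively compact semianalytic subset of the compact torus. That finiteness step is the only place requiring nontrivial input, and your justification is sound provided one accepts that $\mreg$ itself is semianalytic (the singular locus being cut out by the vanishing of the adjugate of $\Id-\rme^{\rmi\kv}S$, hence analytic) — which is the same structural fact the cited references rely on. In short, your write-up is a legitimate expansion of what the paper outsources; the paper's approach buys brevity by citation, yours buys self-containedness at the cost of invoking the general finiteness theorem for semianalytic sets. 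The only loose ends, shared with the paper, are the degenerate case $k=0$ (not covered by Lemma \ref{lem: Canonical_eigenfunctions_and_Secular_mfld}(\ref{enu: lem-Canonical_eigenfunctions_and_Secular_mfld-1})) and non-emptiness of $\mgen$, neither of which affects the substance.
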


The proof of the first part of the Lemma can be found in \cite[Thm 3.9]{AloBanBer_cmp18},
\cite[Section 5]{Alon_PhDThesis} (note that our set $\mgen$ is different
than the one in \cite[Thm 3.9]{AloBanBer_cmp18}; yet the proof there
carries over to our case). The second part of the Lemma is a straightforward
implication of (\ref{eq:value_of_eigenfunction_equals_to_canonical}),
(\ref{eq:derivative_of_eigenfunction_equals_to_canonical}) and (\ref{eq: definition generic}).

\subsection{The Barra-Gaspard measure}

We start by introducing the following general framework.
\begin{defn}
\label{def: equidistributed_sequence}\cite[Definition 4.19]{einsiedler2013ergodic}
Let $X$ be a compact metric space and let $\mu$ be a Borel measure
on $X$. A sequence $\left\{ x_{n}\right\} _{n\in\N}$ of points in
$X$ is equidistributed according to $\mu$ if for any continuous
function $f\in C\left(X\right)$,
\begin{equation}
\lim_{N\rightarrow\infty}\frac{1}{N}\sum_{n=1}^{N}f(x_{n})=\int_{X}f\rmd\mu.\label{eq: equidistribution_definition}
\end{equation}
\end{defn}

\begin{lem}
\label{lem: Equidistribution-for_Riemann-integrable-and-Jordan-sets}\cite[Exercise 4.4.2]{einsiedler2013ergodic}
In Definition \ref{def: equidistributed_sequence}, the continuous
function $f$ may be replaced by a Riemann integrable function (i.e.,
a function whose discontinuity set is of $\mu$-measure zero).
\end{lem}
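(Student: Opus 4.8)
The plan is to reduce the statement to the already-assumed case of continuous functions via a two-sided continuous sandwich. Let $g$ be a bounded function on $X$ whose discontinuity set $D$ satisfies $\mu(D)=0$; note $\mu$ is finite since $X$ is compact (in fact a probability measure, by testing equidistribution on constants). First I would introduce the semicontinuous envelopes
\[
g^{+}(x):=\lim_{\delta\to0}\ \sup_{d(y,x)<\delta}g(y),\qquad g^{-}(x):=\lim_{\delta\to0}\ \inf_{d(y,x)<\delta}g(y),
\]
which are bounded, upper- respectively lower-semicontinuous, and satisfy $g^{-}\le g\le g^{+}$ pointwise (the extremes include $y=x$). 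The crucial observation is that $g^{+}(x)-g^{-}(x)$ is exactly the oscillation of $g$ at $x$, which vanishes precisely at points of continuity; hence $g^{+}=g^{-}=g$ off the $\mu$-null set $D$, and therefore $\int_{X}g^{+}\rmd\mu=\int_{X}g^{-}\rmd\mu=\int_{X}g\,\rmd\mu$.

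Next I would regularize the envelopes into Lipschitz functions, which is possible because $X$ is a metric space. Setting
\[
h_{k}(x):=\sup_{y\in X}\bigl(g^{+}(y)-k\,d(x,y)\bigr),\qquad\ell_{k}(x):=\inf_{y\in X}\bigl(g^{-}(y)+k\,d(x,y)\bigr),
\]
each $h_{k},\ell_{k}$ is $k$-Lipschitz (hence in $C(X)$), one has $\ell_{k}\le g^{-}\le g\le g^{+}\le h_{k}$, and $h_{k}\downarrow g^{+}$ while $\ell_{k}\uparrow g^{-}$ pointwise as $k\to\infty$. By monotone convergence (legitimate since $\mu$ is finite and the functions are bounded), $\int_{X}h_{k}\rmd\mu\to\int_{X}g\,\rmd\mu$ and $\int_{X}\ell_{k}\rmd\mu\to\int_{X}g\,\rmd\mu$. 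So given $\varepsilon>0$ I can fix $k$ with $\int_{X}h_{k}\rmd\mu<\int_{X}g\,\rmd\mu+\varepsilon$ and $\int_{X}\ell_{k}\rmd\mu>\int_{X}g\,\rmd\mu-\varepsilon$.

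The proof then closes by invoking the hypothesis on the continuous functions $h_{k}$ and $\ell_{k}$: since $\ell_{k}(x_{n})\le g(x_{n})\le h_{k}(x_{n})$ for all $n$,
\[
\limsup_{N\to\infty}\frac{1}{N}\sum_{n=1}^{N}g(x_{n})\le\lim_{N\to\infty}\frac{1}{N}\sum_{n=1}^{N}h_{k}(x_{n})=\int_{X}h_{k}\rmd\mu<\int_{X}g\,\rmd\mu+\varepsilon,
\]
and symmetrically $\liminf_{N\to\infty}\frac{1}{N}\sum_{n=1}^{N}g(x_{n})\ge\int_{X}\ell_{k}\rmd\mu>\int_{X}g\,\rmd\mu-\varepsilon$; letting $\varepsilon\to0$ yields the claimed limit. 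The one genuine obstacle is producing the two-sided continuous sandwich with controlled integrals, and everything hinges on the elementary but essential point that the semicontinuous envelopes of $g$ agree $\mu$-almost everywhere precisely because the discontinuity set is $\mu$-null, combined with the metric-space approximation of semicontinuous functions by Lipschitz ones. (The case actually used later — indicators $\mathbf{1}_{A}$ of sets $A$ with $\mu(\partial A)=0$ — is the special instance $g=\mathbf{1}_{A}$, for which $g^{+}=\mathbf{1}_{\overline{A}}$ and $g^{-}=\mathbf{1}_{A^{\circ}}$.)
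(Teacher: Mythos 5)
Your proof is correct and complete: the sandwich via semicontinuous envelopes (which agree with $g$ off the $\mu$-null discontinuity set), the Lipschitz regularization by inf-/sup-convolution with $k\,d(x,y)$, and the squeeze of the averages between the continuous approximants is exactly the standard argument. The paper itself gives no proof of this lemma --- it only cites it as \cite[Exercise 4.4.2]{einsiedler2013ergodic} --- so there is nothing in-paper to compare against; your write-up is precisely the intended solution to that exercise (the only point left tacit, harmless for the paper's use on indicators of Jordan sets, is that $\int_X g\,\rmd\mu$ is to be read as the common value $\int_X g^{+}\rmd\mu=\int_X g^{-}\rmd\mu$, since $g$ itself need only be measurable for the completion of $\mu$).
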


Next, we specialize the discussion to graphs and their secular manifolds.
\begin{defn}
Let $\Gamma$ be a standard graph with edge lengths $\lv$. Let $\left\{ k_{n}\right\} _{n=1}^{\infty}$
be the multi-set of the (square root of) eigenvalues of $\Gamma$,
where multiple eigenvalues appear more than once in this multi-set.
The map $\flow:\G\rightarrow\torus$ of the graph is defined as
\[
\flow(n):=\modp{k_{n}\lv}.
\]
\end{defn}

By Lemma \ref{lem: generic secman} we have that $\flow:\G\rightarrow\mgen$.
\begin{defn}
\label{def: BG-measure} Define a measure $\BGm$ on the closure of
the generic manifold, $\overline{\mgen}$ by
\begin{equation}
\rmd\BGm(\kv)=\begin{cases}
C\left|\lv\cdot\hat{n}\right|\rmd s & \kv\in\mgen\\
0 & \kv\in\partial\mgen,
\end{cases}\label{eq: BG-measure-definition}
\end{equation}
where $\rmd s$ is the volume element on $\mgen$ (which is an $E-1$
Riemannian manifold), $\hat{n}$ is the normal to $\mgen$, and $C=$$\left(\int_{\mgen}\left|\lv\cdot\hat{n}\right|\rmd s\right)^{-1}$
is a normalization constant\footnote{The normalization constant is computed explicitly in \cite{Alon,Alon_PhDThesis,AloBanBer_cmp18}
as part of the proof of Theorem \ref{thm: density-of-generic-and-loop-eigenfunctions}.
It is given by $\frac{1}{C}=\frac{\pi}{L}\frac{1}{\left(2\pi\right)^{E}}\left(1-\frac{L_{loops}}{2L}\right)$
and if $\lv$ is rationally independent then $\frac{1}{C}=\frac{\pi}{L}\frac{1}{\left(2\pi\right)^{E}}d\left(\G\right)$.}.
\end{defn}

Following \cite{CdV_ahp15} we call $\BGm$, the Barra-Gaspard measure.
\begin{thm}
\label{thm: equidistribution_by_BG_measure}\cite{BarGas_jsp00} \cite[Prop. 4.4]{BerWin_tams10}\cite[Lem. 3.1]{CdV_ahp15}Let
$\Gamma$ be a standard graph with rationally independent edge lengths
$\vec{l}$. Then
\end{thm}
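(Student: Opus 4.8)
The plan is to show that the sequence $\{\flow(n)\}_{n\in\N} = \{\modp{k_n \lv}\}_{n\in\N}$ is equidistributed in $\overline{\mgen}$ according to the Barra--Gaspard measure $\BGm$. The natural starting point is the Weyl law for quantum graphs, which gives $k_n \sim \frac{\pi n}{|\Gamma|}$, so that the eigenvalue counting function $\mathcal{N}(k) := \#\{n : k_n \le k\}$ satisfies $\mathcal{N}(k) = \frac{|\Gamma|}{\pi} k + O(1)$. Equivalently, the eigenvalues $k_n$ become, on average, equally spaced with gap $\pi/|\Gamma|$. The key structural fact to exploit is that, as $k$ ranges over $(0,\infty)$, the point $\modp{k\lv}$ traces out a straight-line flow (a linear winding) on the torus $\torus = (\R/2\pi\Z)^E$ with direction vector $\lv$; since $\lv$ is rationally independent, this flow is ergodic with respect to normalized Lebesgue measure on $\torus$ by the Weyl equidistribution theorem for linear flows. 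The eigenvalues $\{k_n\}$ are precisely the values of $k$ at which this line crosses the secular manifold $\Sigma$, counted with multiplicity.

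First I would make this crossing picture precise: the eigenvalue condition is $\modp{k\lv}\in\Sigma$, and on $\mreg$ the crossings are transversal because the relevant normal derivative is $\lv\cdot\hat n \neq 0$ along $\mgen$ (this transversality, and the fact that the line meets $\Sigma\setminus\mreg$ only on a set that contributes nothing in the limit, is what rational independence buys us — non-generic and non-simple crossings have density zero by Theorem \ref{thm: density-of-generic-and-loop-eigenfunctions}). Then I would count crossings weighted by a test function: for $g\in C(\torus)$ supported near $\mgen$, the sum $\sum_{k_n \le k} g(\flow(n))$ counts weighted intersections of the segment $\{t\lv : 0\le t\le k\}$ with $\Sigma$. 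A standard argument (a co-area / flux computation, as in Barra--Gaspard and as detailed in \cite{BerWin_tams10,CdV_ahp15}) converts this crossing count into a surface integral: the asymptotic rate of crossings per unit $k$ is proportional to $\int_{\mgen} g(\kv)\,|\lv\cdot\hat n|\,\rmd s$, because $|\lv\cdot\hat n|$ is exactly the flux factor measuring how fast the linear flow pierces the hypersurface. Normalizing by the total crossing rate $\mathcal{N}(k)/k \to |\Gamma|/\pi$ produces the measure $\BGm$ with its normalization constant $C = \left(\int_{\mgen}|\lv\cdot\hat n|\,\rmd s\right)^{-1}$.

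To carry this out rigorously I would first establish equidistribution against continuous $g$, then extend to Riemann-integrable functions via Lemma \ref{lem: Equidistribution-for_Riemann-integrable-and-Jordan-sets}, which is needed because indicators of the Jordan-measurable sets we ultimately care about are not continuous. The main obstacle is the flux/co-area step near the singular locus: one must control the contribution of the line's passages through $\Sigma\setminus\mreg$ (non-simple eigenvalues) and through $\partial\mgen$, showing these are negligible in the Cesàro average — this is where rational independence is essential and where the argument is most delicate. Secondary technical care is needed to handle tangential approaches of the line to $\mgen$ (again excluded generically) and to make the uniformity of the crossing-count asymptotics locally uniform in $g$. Since the statement is attributed to \cite{BarGas_jsp00,BerWin_tams10,CdV_ahp15}, I would lean on those references for the co-area estimate and present the argument as an assembly of the Weyl law, the linear-flow equidistribution on $\torus$, and the transversal-crossing flux formula.
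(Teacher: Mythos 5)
The paper does not prove this theorem: it is imported (up to restricting the measure from $\mreg$ to $\overline{\mgen}$ and renormalizing, a step justified by Theorem \ref{thm: density-of-generic-and-loop-eigenfunctions}) from the cited works of Barra--Gaspard, Berkolaiko--Winn and Colin de Verdi\`ere. Your sketch is a faithful outline of the argument given in those references --- the linear flow $k\mapsto\modp{k\lv}$ on $\torus$ made equidistributed by rational independence, transversal crossings of the secular manifold, and the flux factor $\left|\lv\cdot\hat{n}\right|$ from the thickened-slab/co-area computation --- so it takes the same approach as the (cited) proof and correctly flags the delicate points near the singular set.
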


\begin{enumerate}
\item \label{enu: thm-equidistribution_by_BG_measure-1} The sequence $\left\{ \flow(n)\right\} _{n\in\G}$
is equidistributed on $\overline{\mgen}$ with respect to the measure
$\BGm$.
\item \label{enu: thm-equidistribution_by_BG_measure-2} The measure $\BGm$
is an $\lv$ dependent smooth strictly positive regular Borel probability
measure on $\mgen$.
\end{enumerate}
\begin{rem}
In the references, \cite{BarGas_jsp00} \cite[Prop. 4.4]{BerWin_tams10}\cite[Lem. 3.1]{CdV_ahp15}
similar statements to the above appear for the manifolds $\Sigma$
and $\mreg$, with the measure extended to the larger manifold (and
normalized appropriately). Here, we restrict to $\overline{\mgen}$
as all statements of the current paper concern generic eigenfunctions
and those form a large enough venue for our explorations (see Theorem
\ref{thm: density-of-generic-and-loop-eigenfunctions}). We further
note that as the sequence $\left\{ \flow(n)\right\} _{n\in\G}$ is
contained in $\mgen$ and the measure $\BGm$ is supported on $\mgen$,
we employ the above theorem directly for $\mgen$ and not for its
closure $\overline{\mgen}$.
\end{rem}

\begin{defn}
We say that a set $\A\subset\overline{\mgen}$ has \emph{measure zero}
if it has zero Barra-Gaspard measure, $\mu_{\lv}\left(\A\right)=0$.
We say that $\A$ is \emph{Jordan }if its boundary $\partial\mathcal{A}\subset\overline{\mgen}$
is of measure zero.
\end{defn}

We do not specify for which $\lv$ as the above definitions are $\lv$
independent. To see that observe that for any $\lv$, $\mu_{\lv}$
has a strictly positive density on $\mgen$ (see (\ref{eq: BG-measure-definition})).
Hence, for any measurable $\A\subset\overline{\mgen}$,
\[
\mu_{\lv}\left(\A\right)=0\iff\int_{\mathcal{A}}\rmd s=0.
\]
 The observation that the indicator function $\chi_{\mathcal{A}}$
is Riemann integrable if and only if $\A$ is Jordan gives:
\begin{cor}
\label{cor: density_equals_BG_for_Jordan_set} Let $\Gamma$ be a
standard graph with rationally independent edge lengths $\vec{l}$.
Let $\A\subset\mgen$ be a Jordan set, then
\begin{equation}
d_{\G}\left(\set{n\in\G}{\flow(n)\in\A}\right)=\BGm\left(\A\right).\label{eq: density_equals_BG_for_Jordan_set}
\end{equation}
\end{cor}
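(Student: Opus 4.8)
The plan is to deduce the corollary directly from the equidistribution of $\{\flow(n)\}_{n\in\G}$ (Theorem \ref{thm: equidistribution_by_BG_measure}) by using the indicator $\chi_{\A}$ as a test function, which becomes legitimate once we upgrade from continuous test functions to Riemann integrable ones via Lemma \ref{lem: Equidistribution-for_Riemann-integrable-and-Jordan-sets}. First I would enumerate $\G=\{n_{1}<n_{2}<\cdots\}$; this is possible because $d(\G)\geq 1/2>0$ by Theorem \ref{thm: density-of-generic-and-loop-eigenfunctions}, so $\G$ is infinite, and by Lemma \ref{lem: generic secman} each $\flow(n_{j})$ lies in $\mgen\subset\overline{\mgen}$, the latter being a closed, hence compact, subset of the torus $\torus$. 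By definition, equidistribution then says $\frac{1}{N}\sum_{j=1}^{N}g(\flow(n_{j}))\to\int_{\overline{\mgen}}g\,\rmd\BGm$ for every $g\in C(\overline{\mgen})$.

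The one step that needs justification is the substitution $g=\chi_{\A}$. By Lemma \ref{lem: Equidistribution-for_Riemann-integrable-and-Jordan-sets} it suffices that the discontinuity set of $\chi_{\A}$ have $\BGm$-measure zero. That discontinuity set, computed inside $\overline{\mgen}$, is exactly the boundary $\partial\A$, and it is $\BGm$-null precisely because $\A$ is assumed to be a Jordan set --- this is the observation recorded immediately before the corollary, together with the fact (also noted there) that $\BGm$ and the Riemannian volume element $\rmd s$ on $\mgen$ share the same null sets. Hence Lemma \ref{lem: Equidistribution-for_Riemann-integrable-and-Jordan-sets} gives $\frac{1}{N}\sum_{j=1}^{N}\chi_{\A}(\flow(n_{j}))\to\int_{\overline{\mgen}}\chi_{\A}\,\rmd\BGm=\BGm(\A)$.

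Finally I would identify the left-hand side with the relative density $d_{\G}$. Since $|\G(n_{N})|=N$, for each $N$ the partial average equals $|\set{n\in\G(n_{N})}{\flow(n)\in\A}|/|\G(n_{N})|$, i.e., the defining ratio of $d_{\G}(\set{n\in\G}{\flow(n)\in\A})$ evaluated at the cutoff $M=n_{N}$. To promote convergence along this subsequence to convergence of the full limit over $M\to\infty$, I would note that $\G(M)$ does not change as $M$ runs over a block $[n_{N},n_{N+1})$, so the ratio $|\set{n\in\G(M)}{\flow(n)\in\A}|/|\G(M)|$ is constant on each such block and equal to its value at $M=n_{N}$; since $n_{N}\to\infty$, the full limit exists and equals $\BGm(\A)$, which is exactly \eqref{eq: density_equals_BG_for_Jordan_set}.

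I do not expect a genuine obstacle here, since essentially all the content is furnished by Theorem \ref{thm: equidistribution_by_BG_measure} and Lemma \ref{lem: Equidistribution-for_Riemann-integrable-and-Jordan-sets}. The only mildly delicate point is the middle paragraph: checking that ``$\A$ Jordan'' is precisely the hypothesis ``$\chi_{\A}$ Riemann integrable with respect to $\BGm$'', and that the competing notions of negligibility (zero $\BGm$-measure, zero $\rmd s$-measure, being a Riemann-null discontinuity set) coincide on $\overline{\mgen}$ --- but all of this has already been arranged in the discussion preceding the corollary.
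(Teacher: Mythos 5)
Your argument is correct and follows the same route as the paper: the corollary is obtained by applying the equidistribution of $\left\{ \flow(n)\right\} _{n\in\G}$ (Theorem \ref{thm: equidistribution_by_BG_measure}) to the test function $\chi_{\A}$, which is admissible by Lemma \ref{lem: Equidistribution-for_Riemann-integrable-and-Jordan-sets} precisely because $\A$ being Jordan makes $\chi_{\A}$ Riemann integrable. Your additional care in passing from the Ces\`aro averages along the enumeration of $\G$ to the full limit defining $d_{\G}$ (constancy of the ratio on each block $[n_{N},n_{N+1})$) is a correct fleshing-out of a step the paper leaves implicit.
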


It is easy to demonstrate that the corollary above does not hold for
sets whose boundary is not of measure zero. Take for example $\A=\cup_{n\in\G}\flow(n)$
which violates (\ref{eq: density_equals_BG_for_Jordan_set}), since
$d_{\G}\left(\set{n\in\G}{\flow(n)\in\A}\right)=1$, but $\mu_{\lv}\left(\A\right)=0$.

\subsection{Inversion map on the secular manifold\label{subsec: inversion}}

The following lemma is useful for proving that some probability distributions
are symmetric (Theorem \ref{thm:Neumann_surplus_main} and Propositions
\ref{prop: statistics_of_spectral_position} and \ref{prop: statistics_of_wavelength_capacity}).
A proof of a similar lemma is found in \cite{AloBanBer_cmp18}. Nevertheless,
we provide a concise proof of the lemma in Appendix \ref{sec: appendix=00005C-proof-of-two-lemmas}.
\begin{lem}
\label{lem: Inversion-properties} Let $\Gamma$ be a standard graph
with edge lengths $\lv$. Let $\mathcal{I}:\torus\rightarrow\torus$
be the inversion map of the torus, defined by $\mathcal{I}\left(\kv\right)=\modp{-\kv}$.
\begin{enumerate}
\item \label{enu: lem-Inversion-properties-1} Each of the manifolds, $\Sigma$,
$\mreg$ and $\mgen$ is invariant under the inversion map.
\item \label{enu: lem-Inversion-properties-3} The restriction of the inversion
to the generic part of the secular manifold, $\I|_{\mgen}$, preserves
the Barra-Gaspard measure, $\BGm$.
\item \label{enu: lem-Inversion-properties-2} For any $v,u\in\V$ and $e\in\E$,
there are $p_{u,v}$ and $q_{u,v,e}$ which satisfy the requirements
of Lemma \ref{lem: Canonical_eigenfunctions_and_Secular_mfld} (\ref{enu: lem-Canonical_eigenfunctions_and_Secular_mfld-2})
and has the following symmetry\textbackslash anti-symmetry relations:
\begin{align}
p_{u,v}\circ\I= & p_{u,v}\label{eq: inversion_of_ff}\\
q_{u,v,e}\circ\I= & -q_{u,v,e}.\label{eq: inversion_of_ff'}
\end{align}
\end{enumerate}
\end{lem}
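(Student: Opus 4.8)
The plan is to work directly with the description of $\mgen$, $\mreg$ and $\Sigma$ in terms of the characteristic torus and the canonical eigenfunctions (Lemma \ref{lem: Canonical_eigenfunctions_and_Secular_mfld}), and to exploit the elementary observation that replacing $\kv$ by $\modp{-\kv}$ amounts, edge by edge, to replacing the coordinate $\kappa_e$ by $2\pi-\kappa_e$, i.e. to the substitution $x_e \mapsto l_e - x_e$ on each edge of $\Gamma_{\kv}$ (with $l_e = \kappa_e$). Concretely, if $f_{\kv}$ is an eigenfunction of $\Gamma_{\kv}$ with eigenvalue $1$, then the function obtained by reading each edge from its other end, $g|_{e}(x_e) := f_{\kv}|_{e}(\kappa_e - x_e)$, is an eigenfunction of $\Gamma_{\I(\kv)}$ with eigenvalue $1$: the differential equation $-g'' = g$ is preserved, continuity at each vertex is preserved (the value at a vertex is unchanged, only the edge is traversed oppositely), and the Neumann (Kirchhoff) derivative condition picks up a global sign because each outgoing derivative flips sign under $x_e \mapsto \kappa_e - x_e$. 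Hence $1$ is an eigenvalue of $\Gamma_{\kv}$ iff it is an eigenvalue of $\Gamma_{\I(\kv)}$, with the same multiplicity, and the eigenfunction $g$ is generic iff $f_{\kv}$ is (genericity only refers to non-vanishing of values and of outgoing derivatives at vertices, which are preserved up to sign). This proves part (\ref{enu: lem-Inversion-properties-1}) for all three manifolds simultaneously, and sets up part (\ref{enu: lem-Inversion-properties-2}).

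For part (\ref{enu: lem-Inversion-properties-2}), I would define the candidate polynomials via the relation just described: set $f_{\I(\kv)} := g$ (this is a legitimate choice of canonical eigenfunction on $\I(\mgen) = \mgen$), and then compute $p_{u,v}(\I(\kv)) = f_{\I(\kv)}(u)\overline{f_{\I(\kv)}(v)} = f_{\kv}(u)\overline{f_{\kv}(v)} = p_{u,v}(\kv)$, while $q_{u,v,e}(\I(\kv)) = f_{\I(\kv)}(u)\overline{\d_e f_{\I(\kv)}(v)} = -f_{\kv}(u)\overline{\d_e f_{\kv}(v)} = -q_{u,v,e}(\kv)$. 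Here one must be a little careful: Lemma \ref{lem: Canonical_eigenfunctions_and_Secular_mfld} only asserts existence of \emph{some} trigonometric polynomials $p_{u,v}, q_{u,v,e}$ agreeing with these products on $\mgen$; the symmetrized versions $\tfrac12(p_{u,v} + p_{u,v}\circ\I)$ and $\tfrac12(q_{u,v,e} - q_{u,v,e}\circ\I)$ are still real trigonometric polynomials (the pullback by $\I$ of a trig polynomial is a trig polynomial, since $\I$ acts on Fourier modes by $n \mapsto -n$), still agree with the products on $\mgen$ by the identities just derived, and now satisfy (\ref{eq: inversion_of_ff})--(\ref{eq: inversion_of_ff'}) identically on $\torus$. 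So the honest statement is that one \emph{chooses} representatives with the desired (anti)symmetry, which is exactly what the lemma says.

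For part (\ref{enu: lem-Inversion-properties-3}) I would argue that $\I$ is an isometry of the flat torus $\torus$ (it is $-\mathrm{Id}$, which preserves the flat metric), hence it preserves the $(E-1)$-dimensional volume element $\rmd s$ on any submanifold it fixes setwise, in particular on $\mgen$; moreover the normal direction $\hat n$ is carried to $\pm\hat n$, so $|\lv\cdot\hat n|$ is carried to $|\lv \cdot (\pm\hat n)| = |\lv\cdot\hat n|$ — here it is essential that the Barra--Gaspard density involves the absolute value $|\lv\cdot\hat n|$ rather than the signed quantity. Therefore $\I^*(|\lv\cdot\hat n|\,\rmd s) = |\lv\cdot\hat n|\,\rmd s$, and since the normalization constant $C$ is determined by the total mass (which is invariant), $\I$ preserves $\BGm$. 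The only mild subtlety worth a sentence is orientation/normal issues on components of $\mgen$ that are setwise fixed versus swapped by $\I$, but since we only ever integrate $|\lv\cdot\hat n|\,\rmd s$, a genuinely measure-theoretic density, no orientation bookkeeping is needed.

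The main obstacle is essentially bookkeeping rather than conceptual: one has to make sure the edge-reversal map genuinely lands in $\Gamma_{\I(\kv)}$ (careful with the convention that coordinates of $\torus$ live in $(0,2\pi]^E$, so that $\modp{-\kappa_e} = 2\pi - \kappa_e$ for $\kappa_e\in(0,2\pi)$, while $\modp{-2\pi}=2\pi$ — the measure-zero locus where some $\kappa_e \in \{2\pi\}$ does not affect $\mgen$ anyway), and to verify the sign flip in the Kirchhoff condition is global and not vertex-dependent. Once the edge-reversal correspondence $f_{\kv}\leftrightarrow f_{\I(\kv)}$ is set up cleanly, all three parts follow quickly.
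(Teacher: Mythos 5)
Your overall architecture (establish a correspondence $f_{\kv}\leftrightarrow f_{\I(\kv)}$ that preserves vertex values and negates outgoing derivatives, then symmetrize the trigonometric polynomials, then treat the measure separately) is the right one, but the correspondence you actually write down does not work. The map $g|_{e}(x):=f_{\kv}|_{e}(\kappa_{e}-x)$ is defined on an interval of length $\kappa_{e}$, whereas the edge $e$ of $\Gamma_{\I(\kv)}$ has length $\modp{-\kappa_{e}}=2\pi-\kappa_{e}$; so $g$ is not a function on $\Gamma_{\I(\kv)}$ at all. Worse, $x\mapsto\kappa_{e}-x$ swaps the two endpoints of $e$, so $g|_{e}(0)$ is the value of $f_{\kv}$ at the \emph{other} vertex of $e$; the assertion that ``the value at a vertex is unchanged'' is false, and the resulting edgewise data is not even continuous at vertices for a general graph. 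Viewed correctly, reversing the arc-length coordinate on every edge is merely a reparametrization, under which $g$ \emph{is} $f_{\kv}$, and nothing is learned about $\Gamma_{\I(\kv)}$. The identities you then assert, $p_{u,v}\circ\I=p_{u,v}$ and $q_{u,v,e}\circ\I=-q_{u,v,e}$ on $\mreg$, are correct, but your derivation does not support them, and with them falls the invariance of $\mgen$ as you argue it. The paper's route is algebraic: writing the secular condition as $\ker(\Id-\rme^{\rmi\kv}S)\neq\{0\}$ with $S$ real orthogonal, one has $\Id-\rme^{\rmi\I(\kv)}S=\overline{\Id-\rme^{\rmi\kv}S}$, hence the invariance of $\Sigma$ and $\mreg$ with multiplicities, and $\mathrm{adj}(\Id-\rme^{\rmi\I(\kv)}S)=\overline{\bs{a}}\,\overline{\bs{a}}^{*}$, which yields $f_{\I(\kv)}(v)=c\,\overline{f_{\kv}(v)}$ and $\partial_{e}f_{\I(\kv)}(v)=-\overline{c}\,\overline{\partial_{e}f_{\kv}(v)}$. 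If you want a geometric picture, the correct one is complex conjugation of the edge amplitudes placed on edges of the \emph{complementary} lengths $2\pi-\kappa_{e}$ (equivalently, replacing $\cos(x+\varphi_{e})$ by $\cos(x-\varphi_{e})$ on $[0,2\pi-\kappa_{e}]$), which does preserve vertex values and negate outgoing derivatives at \emph{both} ends --- not edge reversal.

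The remaining two ingredients of your proposal are sound. The symmetrization $\frac{1}{2}(p_{u,v}+p_{u,v}\circ\I)$, $\frac{1}{2}(q_{u,v,e}-q_{u,v,e}\circ\I)$ is exactly what the paper does once the pointwise identities on $\mreg$ are in hand. Your argument for the measure: that $\I=-\mathrm{Id}$ is an isometry of the flat torus preserving $\mgen$ setwise, hence preserves $\rmd s$ and sends $\hat{n}$ to $\pm\hat{n}$ so that the density $\left|\lv\cdot\hat{n}\right|$ and the normalization constant are unchanged --- is correct and is in fact more self-contained than the paper, which simply cites \cite[Lem.~3.13]{AloBanBer_cmp18} for that part.
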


\section{Developing further tools: functions on $\protect\mgen$ and random
variables on $\protect\G$ \label{sec: functions_on_secular_manifold}}

This section provides the needed tools for the proofs related to the
probability distributions of the observables discussed in the paper
(Neumann count, spectral position, wavelength capacity). In order
to do so we first relate those observables to functions on the secular
manifold (Lemma \ref{lem: functions_on_secular_manifold_existence_and_symmetry}).
Then, we provide lemmas which aid in determining whether those observables
may be considered as random variables (Lemma \ref{lem: random-variables})
or not (Lemma \ref{lem: random_variable_implies_countable_image}).

\subsection{Functions on the secular manifold}

The next theorem is an essential ingredient in the proofs of the main
results of \cite{AloBanBer_cmp18,Ban_ptrsa14}.
\begin{thm}
\cite{AloBanBer_cmp18}\label{thm: surplus_function_on_secular_mnfld}
Let $\Gamma$ be a graph with first Betti number $\beta$. There exists
a function $\boldsymbol{\sigma}$ on $\mgen$ with the following properties:
\begin{enumerate}
\item For any choice of edge lengths $\lv\in\left(0,\infty\right)^{E}$,
let $\Gamma_{\lv}$ be the corresponding standard graph with generic
index set $\G$. Then for any $n\in\G$,
\begin{equation}
\forall n\in\G,\quad\sigma(n)=\bs{\sigma}(\flow(n)).\label{eq: Nodal_surplus_on_secular_manifold}
\end{equation}
\item $\boldsymbol{\sigma}$ is constant on each connected component of
$\mgen$
\item $\boldsymbol{\sigma}$ is anti-symmetric with respect to the inversion
$\mathcal{I}$ in the following sense
\begin{equation}
\boldsymbol{\sigma}\left(\I(\kv)\right)=\beta-\boldsymbol{\sigma}\left(\kv\right).\label{eq: nodal surplus inversion}
\end{equation}
\end{enumerate}
\end{thm}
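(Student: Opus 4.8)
The plan is to realize the nodal surplus as a function on the generic manifold by "averaging out" the choice of edge lengths. The starting point is that, for a fixed graph $\Gamma_{\lv}$ with rationally independent lengths, the nodal surplus $\sigma(n)$ of the $n$-th (generic) eigenfunction depends only on the secular data $\flow(n)=\modp{k_n\lv}\in\mgen$ — this is essentially the content of the known fact that $\sigma(n)$ can be computed from the signs of $f_{\kv}(v)$ and $\d_e f_{\kv}(v)$ via the edge-by-edge interlacing argument used in the proof of Theorem \ref{thm:Neumann_surplus_main}(\ref{enu:thm-Neumann_surplus_bounds}) (applied with $\beta$ possibly nonzero, where one tracks the ``shift'' from the Weyl term rather than the crude difference). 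Concretely, I would first write down, for $\kv\in\mgen$, a candidate
\[
\bs{\sigma}(\kv):=\#\{\text{nodal points of }f_{\kv}\text{ on }\Gamma_{\kv}\}-n(\kv),
\]
but since $n(\kv)$ (the spectral position of the eigenvalue $1$ for $\Gamma_{\kv}$) is awkward to handle directly, the cleaner route is to use an existing characterization: by \cite{Ber_cmp08,BanBerSmi_ahp12} the nodal surplus of a generic eigenfunction equals a quantity computable from the Morse index of an associated scattering/secular function, or equivalently from the signs $\{\sgn(f_{\kv}(v)\d_e f_{\kv}(v))\}$ together with the topology of $\Gamma$. I would take that formula as the \emph{definition} of $\bs{\sigma}$ on $\mgen$.

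Next I would verify property (1), i.e. $\sigma(n)=\bs{\sigma}(\flow(n))$ for all $n\in\G$. Given $n\in\G$, set $\kv=\flow(n)=\modp{k_n\lv}$; by Lemma \ref{lem: Canonical_eigenfunctions_and_Secular_mfld}(\ref{enu: lem-Canonical_eigenfunctions_and_Secular_mfld-3}) the genuine eigenfunction $f_n$ of $\Gamma_{\lv}$ and the canonical eigenfunction $f_{\kv}$ of $\Gamma_{\kv}$ agree up to a nonzero scalar on all vertex values and (after the factor $1/k$) on all outgoing derivatives. Hence all the sign data $\sgn(f_n(v)\d_e f_n(v))$ equal $\sgn(f_{\kv}(v)\d_e f_{\kv}(v))$, and — crucially — the count of nodal points on each edge $e$ of $\Gamma_{\lv}$ differs from the count on the corresponding edge of $\Gamma_{\kv}$ only by a multiple of the number of full half-periods, which is exactly accounted for by the difference between the eigenvalue index $n$ and the ``reduced'' index for $\Gamma_{\kv}$; these bookkeeping terms are the same ones that appear in the edge-extension argument of the proof of part (\ref{enu:thm-Neumann_surplus_bounds}) above. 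So $\sigma(n)$ depends only on $\kv$, and it equals whatever our chosen formula for $\bs{\sigma}(\kv)$ gives.

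For property (2), constancy on connected components: $\bs{\sigma}$ takes integer values and, being built out of the signs $\sgn(p_{u,v}(\kv))$ and $\sgn(q_{u,v,e}(\kv))$ (the trigonometric polynomials of Lemma \ref{lem: Canonical_eigenfunctions_and_Secular_mfld}(\ref{enu: lem-Canonical_eigenfunctions_and_Secular_mfld-2b})), it is locally constant precisely on the open set where none of these polynomials vanish. On $\mgen$ none of them vanish: the $q_{u,v,e}$ are nonzero because the eigenfunction has no vanishing outgoing derivative at interior vertices (genericity condition (\ref{enu:None-of-theenu: def-generic-non-zero-derivative-at-vertices})), and $p_{v,v}(\kv)=|f_{\kv}(v)|^2\ne 0$ by condition (\ref{enu: def-generic-non-zero-value-at-vertices}); the remaining $p_{u,v}$ have constant sign on a component because their sign is determined by the signs of $f_{\kv}(u)$ and $f_{\kv}(v)$ individually. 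Since $\bs{\sigma}$ is integer-valued and locally constant on $\mgen$, it is constant on each connected component, which (Lemma \ref{lem: generic secman}) are finite in number.

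For property (3), the inversion anti-symmetry $\bs{\sigma}(\I(\kv))=\beta-\bs{\sigma}(\kv)$: I would combine Lemma \ref{lem: Inversion-properties}(\ref{enu: lem-Inversion-properties-2}) — which says $p_{u,v}\circ\I=p_{u,v}$ and $q_{u,v,e}\circ\I=-q_{u,v,e}$ — with the relation between nodal count and Neumann count proved in Section \ref{sec: proofs_of_bounds}. Applying $\I$ flips every derivative sign while preserving every value sign; by equation (\ref{eq:Diff-nodal-Neumann-on-edge}) (or rather its generic-graph analogue) this exchanges, on each edge, the roles of ``nearest point is nodal'' and ``nearest point is Neumann'', so that $\phi(f_{\I(\kv)})-\xi(f_{\I(\kv)})=-(\phi(f_{\kv})-\xi(f_{\kv}))+(\text{a topological correction})$; feeding in the identity $\phi-\xi=\sigma-\omega$ together with the (independently established, or simultaneously bootstrapped) fact that $\omega$ has the symmetry around $\tfrac12(\beta-|\dg|)$, or more directly using the known inversion symmetry of the scattering matrix underlying $\bs{\sigma}$, yields $\bs{\sigma}\circ\I=\beta-\bs{\sigma}$. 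I expect \textbf{this last step to be the main obstacle}: getting the additive constant exactly right requires carefully matching the edge-level half-period counts under $\kv\mapsto\modp{-\kv}$ against the Betti number $\beta$, and one must be sure that the ``correction term'' collapses to precisely $\beta$ and not to something depending on $V$, $E$, or $|\dg|$ separately; the cleanest way to nail it is probably to appeal to the known inversion property of the secular (scattering) function from \cite{Ber_cmp08, AloBanBer_cmp18} rather than to re-derive it from the edge combinatorics.
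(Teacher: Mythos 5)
You should first note that the paper does not actually prove this theorem: it is quoted verbatim from \cite{AloBanBer_cmp18}, where the construction of $\bs{\sigma}$ goes through the magnetic characterization of the nodal surplus (the surplus equals the Morse index of the eigenvalue as a function of magnetic fluxes on the $\beta$ independent cycles, evaluated at zero flux), and all three properties are read off from that characterization. So any proof you write is necessarily being compared against that external argument rather than anything in this text.

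Measured against that, your proposal has a genuine gap, and it is located exactly where you would need a definition of $\bs{\sigma}$ to stand on. Your fallback characterization --- that the nodal surplus is computable ``from the signs $\sgn(f_{\kv}(v)\d_e f_{\kv}(v))$ together with the topology of $\Gamma$'' --- is false. Equation (\ref{eq:Diff-nodal-Neumann_by_vertices}) shows that the vertex sign data determines $\phi(f)-\xi(f)=\sigma(n)-\omega(n)$, i.e.\ the \emph{difference} of the two surpluses, not $\sigma$ itself; this is precisely why the paper's Lemma \ref{lem: functions_on_secular_manifold_existence_and_symmetry} must take $\bs{\sigma}$ as an external input (from this very theorem) when it builds $\bs{\omega}$ out of the signs $\sgn(q_{v,v,e})$. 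Without the Morse-index definition actually in hand, your verification of property (1) reduces to the sentence about ``bookkeeping terms'' cancelling between the edge half-period counts and the index $n$; but that cancellation --- that $\phi(f_n)-n$ depends on $\lv$ and $k_n$ only through $\modp{k_n\lv}$ --- \emph{is} the theorem, not a routine check, and it is exactly what the magnetic argument is needed for.

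The proof of property (3) as written is also circular within this paper's logical order. You propose to feed in the symmetry of $\omega$ around $\tfrac12(\beta-|\dg|)$, but Theorem \ref{thm:Neumann_surplus_main}(\ref{enu:thm-density_of_Neumann_surplus-c}) is itself derived from (\ref{eq: anti-symmetry_of_Neumann_surplus_on_secular_mnfld}), which in turn is obtained by combining the anti-symmetry of the $q_{v,v,e}$ with (\ref{eq: nodal surplus inversion}) --- the identity you are trying to prove. Your instinct that the clean route is ``the known inversion property of the scattering matrix underlying $\bs{\sigma}$'' is the right one: under $\I$ the scattering data is complex-conjugated, which reverses the magnetic perturbation, and the Morse index of a non-degenerate critical point of $-\lambda$ plus that of $\lambda$ on a $\beta$-dimensional flux torus sums to $\beta$; that is where the constant $\beta$ in (\ref{eq: nodal surplus inversion}) comes from, and it cannot be recovered from the edge-level sign combinatorics alone. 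Property (2) would likewise follow from the Morse-index picture (local constancy of the index of a non-degenerate Hessian varying continuously over $\mgen$), not from constancy of the sign data, which again only controls $\sigma-\omega$.
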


The next lemma is similar in spirit to Theorem \ref{thm: surplus_function_on_secular_mnfld}.
\begin{lem}
\label{lem: functions_on_secular_manifold_existence_and_symmetry}
Let $\Gamma$ be a graph with first Betti number $\beta$ and let
$v\in\V\backslash\partial\Gamma$ be an interior vertex of degree
$\deg v$. There exist functions $\bs{\omega}$, $\bs{N^{\left(v\right)}}$
and $\bs{\rho^{\left(v\right)}}$ on $\mgen$ with the following properties:
\begin{enumerate}
\item \label{enu: lem: functions_on_secular_manifold_existence_and_symmetry-1}
For any choice of edge lengths $\lv\in\left(0,\infty\right)^{E}$,
let $\Gamma_{\lv}$ be the corresponding standard graph with generic
index set $\G$. Then for any $n\in\G$,
\begin{equation}
\omega(n)=\bs{\omega}(\flow(n)).\label{eq: Neumann_surplus_on_secular_manifold}
\end{equation}
Moreover, denote the minimal edge length by $L_{min}$ and the sum
of edge lengths by $\left|\Gamma_{\lv}\right|$, and let $\{\ndv\}_{n\in\G}$
be the sequence of Neumann domains containing $v$ (see Definition
\ref{def: vertex-Neumann-domains}). Then for any $n\in\G$ such that
$n\ensuremath{\geq}2\frac{\left|\Gamma\right|}{L_{\textrm{min}}}$,
\begin{equation}
N(\ndv)=\bs{N^{(v)}}(\flow(n)),\,\text{ and}\label{eq: Spectral_position_on_secular_manifold}
\end{equation}
\begin{equation}
\rho(\ndv)=\bs{\rho^{(v)}}(\flow(n)).\label{eq: Wave_capacity_on_secular_manifold}
\end{equation}
\item \label{enu: lem: functions_on_secular_manifold_existence_and_symmetry-2}
The functions $\bs{\omega}$ and $\bs{N^{\left(v\right)}}$ are constant
on each connected component of $\mgen$.
\item \label{enu: lem: functions_on_secular_manifold_existence_and_symmetry-3}
$\bs{\rho^{(v)}}$ is real analytic on $\mgen$.
\item \label{enu: lem: functions_on_secular_manifold_existence_and_symmetry-4}
$\bs{\omega}$, $\bs{N^{\left(v\right)}}$ and $\bs{\rho^{\left(v\right)}}$
are anti-symmetric with respect to the inversion $\mathcal{I}$ in
the following sense
\begin{equation}
\bs{\omega}(\I(\kv))=\beta-\left|\partial\Gamma\right|-\bs{\omega}(\kv),\label{eq: anti-symmetry_of_Neumann_surplus_on_secular_mnfld}
\end{equation}
\begin{equation}
\bs{N^{(v)}}(\I(\kv))=\deg v-\bs{N^{(v)}}(\kv),\,\text{ and}\label{eq: anti-symmetry_of_spectral_position_on_secular_mnfld}
\end{equation}
\begin{equation}
\bs{\rho^{(v)}}(\I(\kv))=\deg v-\bs{\rho^{(v)}}(\kv).\label{eq: anti-symmetry_of_wavelength_capacity_on_secular_mnfld}
\end{equation}
\end{enumerate}
\end{lem}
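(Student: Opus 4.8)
The plan is to construct the three functions $\bs{\omega}$, $\bs{N^{(v)}}$ and $\bs{\rho^{(v)}}$ explicitly in terms of the canonical eigenfunctions $\{f_{\kv}\}_{\kv\in\mgen}$ from Lemma \ref{lem: Canonical_eigenfunctions_and_Secular_mfld}, and to check each asserted property directly. First I would recall that by Lemma \ref{lem: Canonical_eigenfunctions_and_Secular_mfld}(\ref{enu: lem-Canonical_eigenfunctions_and_Secular_mfld-3}), for $n\in\G$ the true eigenfunction $f_{n}$ of $\Gamma_{\lv}$ and the canonical $f_{\flow(n)}$ agree up to a nonzero scalar, and crucially the \emph{signs} of the products $f(v)\d_e f(v)$ are scalar-invariant (the scalar $c$ enters as $|c|^2$); likewise $|\Omega|k/\pi$, being expressed through edge lengths and $k$ via the positions of Neumann/nodal points, is determined by $\flow(n)$. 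This is exactly what lets every one of the counts descend to a function on $\mgen$.

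For $\bs{\omega}$ I would use the edgewise identity (\ref{eq:Diff-nodal-Neumann_by_vertices}) from the proof of Theorem \ref{thm:Neumann_surplus_main}: $\xi(f)-n = \sigma(n) - (\phi(f)-\xi(f))$, where $\phi(f)-\xi(f) = \tfrac{|\dg|}{2} - \tfrac12\sum_{v\in\Vint}\sum_{e\in\E_v}\sgn(f(v)\d_e f(v))$. Define $\bs{\omega}(\kv) := \bs{\sigma}(\kv) - \tfrac{|\dg|}{2} + \tfrac12\sum_{v\in\Vint}\sum_{e\in\E_v}\sgn\big(p_{v,v}(\kv)q_{v,v,e}(\kv)\big)$, using the trigonometric polynomials $p,q$ of Lemma \ref{lem: Canonical_eigenfunctions_and_Secular_mfld}(\ref{enu: lem-Canonical_eigenfunctions_and_Secular_mfld-2b}); on $\mgen$ none of these products vanish (genericity), so $\bs{\omega}$ is well-defined there, and (\ref{eq: Neumann_surplus_on_secular_manifold}) follows from Theorem \ref{thm: surplus_function_on_secular_mnfld} and sign-invariance under scaling. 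Constancy on connected components (part (\ref{enu: lem: functions_on_secular_manifold_existence_and_symmetry-2})) follows since $\bs{\sigma}$ is locally constant and the sign functions are locally constant on the set where the $pq$ products are nonvanishing, i.e. on all of $\mgen$; they are integer-valued and continuous on each component, hence constant. The anti-symmetry (\ref{eq: anti-symmetry_of_Neumann_surplus_on_secular_mnfld}) combines $\bs{\sigma}\circ\I = \beta-\bs{\sigma}$ with Lemma \ref{lem: Inversion-properties}(\ref{enu: lem-Inversion-properties-2}): $p_{v,v}\circ\I = p_{v,v}$ while $q_{v,v,e}\circ\I = -q_{v,v,e}$, so each $\sgn(pq)$ flips sign under $\I$, and $\sum_{e\in\E_v}1 = \deg v$ together with $\sum_{v\in\Vint}\deg v = 2E-|\dg|$ (identity (\ref{eq: graph_identity_number_of_edges})) produces the shift $\beta-|\dg|$ after the arithmetic.

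For $\bs{N^{(v)}}$ and $\bs{\rho^{(v)}}$ I would invoke Lemma \ref{lem:spectral-position-equals-nodal-count}: once $n\ge 2|\Gamma|/L_{\min}$ one has $k_n>\pi/L_{\min}$ (by the Friedlander/Weyl-type bound cited in Remark \ref{rem:why_conditioning_on_large_eigenvalue}), so $\ndv$ is a star graph with central vertex $v$, $\left.f\right|_{\ndv}$ is generic, and $N(\ndv)=\phi(\left.f\right|_{\ndv})$. Applying (\ref{eq: spectral_position_single_ND}) gives $N(\ndv) = \tfrac{\deg v}{2} - \tfrac12\sum_{e\in\E_v}\sgn(f(v)\d_e f(v))$, so I set $\bs{N^{(v)}}(\kv) := \tfrac{\deg v}{2} - \tfrac12\sum_{e\in\E_v}\sgn\big(p_{v,v}(\kv)q_{v,v,e}(\kv)\big)$; well-definedness, (\ref{eq: Spectral_position_on_secular_manifold}), local constancy, and the anti-symmetry (\ref{eq: anti-symmetry_of_spectral_position_on_secular_mnfld}) all follow by the same reasoning as for $\bs\omega$ (here the $\deg v$ is an explicit constant, so no global edge-count identity is needed). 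For the wavelength capacity, on the star $\ndv$ each edge through $v$ runs from $v$ to the nearest Neumann point of $f_n$ along that edge; on edge $e$ the canonical eigenfunction is $A_e\cos(\varphi_e + k x)$, and the distance from $v$ to the first interior zero of the derivative is $\tfrac{1}{k}\,g_e(\kv)$ for an explicit real-analytic $g_e$ on $\mgen$ built from $\arg$/$\arctan$ of the nonvanishing quantities $f_{\kv}(v)$, $\d_e f_{\kv}(v)$ (again expressible via $p_{v,v},q_{v,v,e}$, with the constant $c$ cancelling in the ratio). Then $\rho(\ndv) = \tfrac{k|\ndv|}{\pi} = \tfrac{1}{\pi}\sum_{e\in\E_v} g_e(\kv) =: \bs{\rho^{(v)}}(\kv)$, which is real analytic on $\mgen$ (part (\ref{enu: lem: functions_on_secular_manifold_existence_and_symmetry-3})), establishing (\ref{eq: Wave_capacity_on_secular_manifold}). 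For the anti-symmetry (\ref{eq: anti-symmetry_of_wavelength_capacity_on_secular_mnfld}), applying $\I$ sends $\kv\mapsto-\kv$, which amounts to reflecting each local picture on edge $e$; the distance to the nearest Neumann point becomes $\tfrac{\pi}{k} - \tfrac{1}{k}g_e(\kv)$ — precisely the length complementarity exploited in the dual-star construction of Lemma \ref{lem:dual-star} — so $\bs{\rho^{(v)}}\circ\I = \sum_e(1 - \tfrac1\pi g_e) = \deg v - \bs{\rho^{(v)}}$.

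The main obstacle I expect is the bookkeeping in part (\ref{enu: lem: functions_on_secular_manifold_existence_and_symmetry-3})–(\ref{enu: lem: functions_on_secular_manifold_existence_and_symmetry-4}) for $\bs{\rho^{(v)}}$: one must pin down a genuinely real-analytic formula for "distance from $v$ to the nearest Neumann point along edge $e$" that is valid globally on $\mgen$ (not just locally), and then verify that under $\kv\mapsto-\kv$ this quantity transforms by $x\mapsto \tfrac{\pi}{k}-x$ rather than by something branch-dependent. The safe route is to write this distance as $\tfrac1k\,\big(\tfrac\pi2 - \arctan\!\big(q_{v,v,e}(\kv)/(k\,p_{v,v}(\kv))\big)\big)$ or a similar closed form chosen so that the relevant argument stays in a fixed branch on all of $\mgen$ — using that the phase $\varphi_e + kx$ at $x=0$ lies in a fixed range dictated by $f$ being a Neumann domain with no interior critical point — and then the inversion symmetry is immediate from $q\circ\I = -q$, $p\circ\I = p$. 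Everything else is routine once the sign/phase invariances under the scalar $c$ and under $\I$ are recorded carefully.
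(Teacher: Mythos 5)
Your proposal is correct and follows essentially the same route as the paper: express $\bs{\omega}$ and $\bs{N^{(v)}}$ through the sign-sum identity (\ref{eq:Diff-nodal-Neumann_by_vertices}) applied to the trigonometric polynomials $q_{v,v,e}$ (your $\sgn(p_{v,v}q_{v,v,e})$ coincides with the paper's $\sgn(q_{v,v,e})$ since $p_{v,v}=|f_{\kv}(v)|^{2}>0$ on $\mgen$), express $\bs{\rho^{(v)}}$ via a fixed branch of $\tan^{-1}(q_{v,v,e}/p_{v,v})$ valid on all of $\mgen$ because $k\tilde{l}_{j}\in(0,\tfrac{\pi}{2})\cup(\tfrac{\pi}{2},\pi)$, and deduce all three anti-symmetries from $p_{v,v}\circ\I=p_{v,v}$, $q_{v,v,e}\circ\I=-q_{v,v,e}$ and $\bs{\sigma}\circ\I=\beta-\bs{\sigma}$. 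The subtlety you flag about pinning down a globally valid real-analytic branch is exactly the point the paper resolves by taking $\tan^{-1}:\R\setminus\{0\}\rightarrow(0,\tfrac{\pi}{2})\cup(\tfrac{\pi}{2},\pi)$ with $\tan^{-1}(-x)=\pi-\tan^{-1}(x)$.
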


\begin{proof}
The functions $p_{v,v}$ and $q_{v,v,e}$ (defined in Lemmas \ref{lem: Canonical_eigenfunctions_and_Secular_mfld}
and \ref{lem: Inversion-properties}) are used for expressing the
functions $\bs{\omega}$, $\bs{N^{\left(v\right)}}$ and $\bs{\rho^{\left(v\right)}}$
stated in the Lemma. Let us therefore recall some of their properties
that are stated in Lemmas \ref{lem: Canonical_eigenfunctions_and_Secular_mfld}
and \ref{lem: Inversion-properties}. For each $v\in\V\backslash\partial\Gamma$
and $e\in\E_{v}$, the functions $p_{v,v}$ and $q_{v,v,e}$ are real
trigonometric polynomials in $\kv$ that do not vanish on $\mgen$.
Their signs, $\sgn(p_{v,v})$ and $\sgn(q_{v,v,e})$, are therefore
constant on each connected component of $\mgen$. Moreover, $p_{v,v}$
and $q_{v,v,e}$ are symmetric\textbackslash anti-symmetric with
respect to $\I$. That is,
\begin{align}
p_{v,v}\circ\I & =p_{v,v},\,\text{and}\label{eq: p symmetry}\\
q_{v,v,e}\circ\I & =-q_{v,v,e}.\label{eq: q antisymmetry}
\end{align}
In addition, for any $\kv\in\mgen$, the canonical eigenfunction $f_{\kv}$
(i.e., the eigenfunction of $\Gamma_{\kv}$ with eigenvalue $k=1$)
is generic, and satisfies:
\begin{align}
p_{v,v}\left(\kv\right) & =\left|f_{\kv}\left(v\right)\right|^{2}\ne0,\,\text{and}\\
q_{v,v,e}\left(\kv\right) & =f_{\kv}\left(v\right)\overline{\d_{e}f_{\kv}\left(v\right)}\ne0.
\end{align}
Another useful observation is that since both $p_{v,v}$ and $q_{v,v,e}$
are real, continuous and non vanishing on $\mgen$.

The rest of the proof is divided into three parts, each treating one
of the functions $\bs{\omega}$, $\bs{N^{\left(v\right)}}$ and $\bs{\rho^{\left(v\right)}}$.

~\\
\uline{Proof for the Neumann surplus, \mbox{$\bs{\omega}$}}

We refer to the proof of the Theorem \ref{thm:Neumann_surplus_main},
(\ref{enu:thm-Neumann_surplus_bounds})\emph{ }and employ equation
(\ref{eq:Diff-nodal-Neumann_by_vertices}) to write

\begin{align}
\omega(n) & =\sigma(n)-\left(\phi(f_{n})-\mu(f_{n})\right)\nonumber \\
 & =\sigma(n)-\frac{\left|\d\Gamma\right|}{2}+\frac{1}{2}\sum_{v\in\V\setminus\d\Gamma}\sum_{e\in\E_{v}}\sgn\left(f_{n}\left(v\right)\overline{\d_{e}f_{n}\left(v\right)}\right).\label{eq:Neumann-surplus-as-sum-over-terms-1}
\end{align}

Now, we express all terms in the RHS of (\ref{eq:Neumann-surplus-as-sum-over-terms-1})
as functions on $\mgen$. By Theorem \ref{thm: surplus_function_on_secular_mnfld}
there exists a function $\bs{\sigma}$ on $\mgen$ such that $\sigma(n)=\bs{\sigma}(\flow(n))$
for all $n\in\G$ and the function $\bs{\sigma}$ is constant on each
connected component of $\mgen$. In addition, by (\ref{eq: q_u_v_e_trig_poly}),(\ref{eq:value_of_eigenfunction_equals_to_canonical})
and (\ref{eq:derivative_of_eigenfunction_equals_to_canonical}) in
Lemma \ref{lem: Canonical_eigenfunctions_and_Secular_mfld}
\begin{align}
\forall n\in\G,~\forall v\in\V,\,\forall e\in\E_{v},\quad\quad\,\,\frac{1}{k_{n}}f_{n}\left(v\right)\overline{\d_{e}f_{n}\left(v\right)} & =\left|c\right|^{2}f_{\flow(n)}(v)\overline{\partial_{e}f_{\flow(n)}(v)}\label{eq:function_at_vertex_equals_canonical_with_conjugation-1}\\
 & =\left|c\right|^{2}q_{v,v,e}\left(\flow(n)\right),
\end{align}
for some $c\in\C\setminus\left\{ 0\right\} $. Hence defining
\begin{equation}
\bs{\omega}(\kv):=\bs{\sigma}(\kv)-\frac{\left|\partial\Gamma\right|}{2}+\frac{1}{2}\sum_{v\in\V\setminus\d\Gamma}\sum_{e\in\E_{v}}\sgn\left(q_{v,v,e}\left(\kv\right)\right),\label{eq:Neumann-surplus-as-function-on-torus-1}
\end{equation}
we get that for all $n\in\G$, $\omega(n)=\bs{\omega}(\flow(n))$,
as required. Parts (\ref{enu: lem: functions_on_secular_manifold_existence_and_symmetry-2})
and (\ref{enu: lem: functions_on_secular_manifold_existence_and_symmetry-4})
of the Lemma (for $\bs{\omega}$) follow easily. Indeed, $\bs{\omega}$
is constant on each connected component of $\mgen$, as both $\bs{\sigma}$
and $\sgn\left(q_{v,v,e}\right)$ have this property. That $\bs{\omega}$
attains only finitely many values follows from (\ref{eq:Neumann-surplus-as-function-on-torus-1})
together with the statement that $\bs{\sigma}$ attains finitely many
values by Theorem \ref{thm: surplus_function_on_secular_mnfld}. In
addition, the anti-symmetric property of $\bs{\omega}$, (\ref{eq: anti-symmetry_of_Neumann_surplus_on_secular_mnfld}),
follows immediately from the anti-symmetry of $q_{v,v,e}$, (\ref{eq: inversion_of_ff'})
and of $\boldsymbol{\sigma}$ (\ref{eq: nodal surplus inversion}).

~\\
\uline{Proof for the spectral position, \mbox{$\bs{N^{(v)}}$}}

The $n^{\textrm{th}}$ eigenvalue is bounded, $k_{n}\geq\frac{\pi}{2\left|\Gamma\right|}(n+1)$,
\cite[Theorem 1]{Fri_aif05}. Hence, assuming $n\ensuremath{\geq}2\frac{\left|\Gamma\right|}{L_{\textrm{min}}}$
as in (\ref{eq: Spectral_position_on_secular_manifold}), we get that
$k_{n}>\frac{\pi}{L_{\textrm{min}}}$. This, together with $n\in\G$
are the required assumptions in Lemma \ref{lem:spectral-position-equals-nodal-count}.
That lemma guarantees that $\ndv$ is a star graph, that $\left.f_{n}\right|_{\ndv}$
is a generic eigenfunction and that $N(\ndv)=\phi(\left.f_{n}\right|_{\ndv})$.

To do so, we employ equation (\ref{eq:Diff-nodal-Neumann_by_vertices}),
taking the graph to be $\ndv$, and getting
\begin{align}
\phi(\left.f_{n}\right|_{\ndv}) & =\frac{1}{2}\deg v-\frac{1}{2}\sum_{e\in\E_{v}}\sgn\left(f_{n}\left(v\right)\overline{\d_{e}f_{n}\left(v\right)}\right)\label{eq: N of Omega as sum of signs-1}\\
 & =\frac{1}{2}\deg v-\frac{1}{2}\sum_{e\in\E_{v}}\sgn\left(f_{\flow(n)}(v)\overline{\partial_{e}f_{\flow(n)}(v)}\right)\\
 & =\frac{1}{2}\deg v-\frac{1}{2}\sum_{e\in\E_{v}}\sgn\left(q_{v,v,e}\left(\flow(n)\right)\right),
\end{align}
where moving to the second line we used (\ref{eq:value_of_eigenfunction_equals_to_canonical})
and (\ref{eq:derivative_of_eigenfunction_equals_to_canonical}) in
Lemma \ref{lem: Canonical_eigenfunctions_and_Secular_mfld} and the
last line follows from (\ref{eq: q_u_v_e_trig_poly}). This motivates
us to define $\bs{N^{\left(v\right)}}:\mgen\rightarrow\N$ by
\begin{equation}
\bs{N^{\left(v\right)}}\left(\kv\right):=\frac{1}{2}\deg v-\frac{1}{2}\sum_{e\in\E_{v}}\sgn\left(q_{v,v,e}\left(\kv\right)\right).\label{eq:definition of Nv-1}
\end{equation}
From all the arguments above, we obtain that $N(\ndv)=\phi(\left.f_{n}\right|_{\ndv})=\bs{N^{\left(v\right)}}(\flow(n))$,
for any $n\in\G$ such that $n\ensuremath{\geq}2\frac{\left|\Gamma\right|}{L_{\textrm{min}}}$.
Parts (\ref{enu: lem: functions_on_secular_manifold_existence_and_symmetry-2})
and (\ref{enu: lem: functions_on_secular_manifold_existence_and_symmetry-4})
of the Lemma (for $\bs{N^{\left(v\right)}}$) now follow easily. Indeed,
$\bs{N^{\left(v\right)}}$ is constant on connected components of
$\mgen$, as $\sgn\left(q_{v,v,e}\right)$ have this property. That
$\bs{N^{\left(v\right)}}$ attains only finitely many values follows
immediately from (\ref{eq:definition of Nv-1}). In addition, (\ref{eq: anti-symmetry_of_spectral_position_on_secular_mnfld}),
the anti-symmetric property of $\bs{N^{(v)}}$ follows from the anti-symmetry
of $q_{v,v,e}$, (\ref{eq: inversion_of_ff'}).

~\\

\uline{Proof for the wavelength capacity, \mbox{$\bs{\rho^{(v)}}$}}

Similarly to the previous part of the proof, we get that the assumption
$n\in\G$ and $n\ensuremath{\geq}2\frac{\left|\Gamma\right|}{L_{\textrm{min}}}$
in (\ref{eq: Wave_capacity_on_secular_manifold}) implies that $\ndv$
is a star graph and that $\left.f_{n}\right|_{\ndv}$ is a generic
eigenfunction. We denote the edges of $\Omega_{n}^{(v)}$ by $\{\tilde{e}_{j}\}_{j=1}^{\deg v}$
and the corresponding edge lengths by $\{\tilde{l}_{j}\}_{j=1}^{\deg v}$.
With this notation we may write the eigenfunction $\left.f_{n}\right|_{\ndv}$
as
\begin{equation}
\left.f_{n}\right|_{\tilde{e}_{j}}\left(x\right)=C_{j}\cos\left(k_{n}(\tilde{l}_{j}-x)\right),\label{eq:expressing-f-on-an-edge-of-Omega-2-1}
\end{equation}
for some real coefficients $C_{j}$ and using the arc-length parametrization
$x\in[0,\tilde{l}_{j}]$ with $x=0$ at the central vertex. The genericity
of $f_{n}|_{\ndv}$ implies that neither its value nor its derivative
vanish at the central vertex. Since $\Omega_{n}^{(v)}$ is a Neumann
domain, the derivative of $\left.f_{n}\right|_{\ndv}$ does not vanish
at the interior of the edges, $\tilde{e}_{j}$. It vanishes only at
the boundary vertices. These restrictions on values and derivatives
of $\left.f_{n}\right|_{\tilde{e}_{j}}$ together with (\ref{eq:expressing-f-on-an-edge-of-Omega-2-1})
imply that for all $j$, $k_{n}\tilde{l}_{j}\in\left(0,\frac{\pi}{2}\right)\cup\left(\frac{\pi}{2},\pi\right)$.
Hence, using (\ref{eq:expressing-f-on-an-edge-of-Omega-2-1}) we write
\begin{align}
\tan\left(k_{n}\tilde{l}_{j}\right) & =\left.\frac{\left.f_{n}\right|_{\tilde{e}_{j}}\cdot\frac{1}{k_{n}}\left.\partial_{\tilde{e}_{j}}f_{n}\right|_{\tilde{e}_{j}}}{\left.f_{n}^{2}\right|_{\tilde{e}_{j}}}\right|_{x=0}\nonumber \\
 & =\frac{f_{\flow(n)}(v)\cdot\overline{\partial_{e_{j}}f_{\flow(n)}(v)}}{f_{\flow(n)}(v)\overline{f_{\flow(n)}(v)}}\nonumber \\
 & =\frac{q_{v,v,e_{j}}\left(\flow(n)\right)}{p_{v,v}\left(\flow(n)\right)},\label{eq: tan_kl}
\end{align}
where moving to the second line we used (\ref{eq:value_of_eigenfunction_equals_to_canonical})
and (\ref{eq:derivative_of_eigenfunction_equals_to_canonical}) in
Lemma \ref{lem: Canonical_eigenfunctions_and_Secular_mfld}, and also
identified the edge $\tilde{e}_{j}$ of $\ndv$ as a subset of an
edge $e_{j}$ on the whole graph, so that $\partial_{\tilde{e}_{j}}=\partial_{e_{j}}$.
The the last line in (\ref{eq: tan_kl}) follows from (\ref{eq: p_u_v_trig_poly}),(\ref{eq: q_u_v_e_trig_poly}).
Next, to express $k_{n}\tilde{l}_{j}$, we use the following branch
of the inverse tangent, $\tan^{-1}:\R\backslash\left\{ 0\right\} \rightarrow\left(0,\frac{\pi}{2}\right)\cup\left(\frac{\pi}{2},\pi\right)$,
which is valid since we have shown above that $k_{n}\tilde{l}_{j}\in\left(0,\frac{\pi}{2}\right)\cup\left(\frac{\pi}{2},\pi\right)$.
Summing over all edges of the star graph $\ndv$ gives
\begin{align*}
\rho(\ndv) & =\frac{1}{\pi}\sum_{j=1}^{\deg v}k_{n}\tilde{l}_{j}\\
 & =\frac{1}{\pi}\sum_{j=1}^{\deg v}\tan^{-1}\left(\frac{q_{v,v,e_{j}}\left(\flow(n)\right)}{p_{v,v}\left(\flow(n)\right)}\right).
\end{align*}
Hence, to obtain the required relation (\ref{eq: Wave_capacity_on_secular_manifold}),
we define the wavelength capacity function $\bs{\rho^{(v)}}:\mgen\rightarrow\R$
as
\begin{equation}
\bs{\rho^{(v)}}(\kv):=\frac{1}{\pi}\sum_{j=1}^{\deg v}\tan^{-1}\left(\frac{q_{v,v,e_{j}}\left(\kv\right)}{p_{v,v}\left(\kv\right)}\right).\label{eq: rho_as_function_of_kappa}
\end{equation}
Parts (\ref{enu: lem: functions_on_secular_manifold_existence_and_symmetry-3})
and (\ref{enu: lem: functions_on_secular_manifold_existence_and_symmetry-4})
of the Lemma (for $\bs{\rho^{\left(v\right)}}$) now follow. To show
that $\Rvb$ is real analytic, we first notice that $q_{v,v,e_{j}}$
and $p_{v,v}$ are real trigonometric polynomials (on $\torus$) and
as such are real analytic on $\torus$. With the aid of the real analytic
version of the implicit function theorem \cite{KraPar_real_analytic_funcs_book}
we get that the restrictions of $q_{v,v,e_{j}}$ and $p_{v,v}$ to
the real analytic manifold $\mgen$ are real analytic, and so does
$\frac{q_{v,v,e_{j}}}{p_{v,v}}$, as $p_{v,v}$ does not vanish on
$\mgen$. Since $\tan^{-1}$ is also real analytic, we get that $\Rvb$
is real analytic on $\mreg$.

In addition, the anti-symmetric property of $\bs{\rho^{(v)}},$(\ref{eq: anti-symmetry_of_wavelength_capacity_on_secular_mnfld}),
follows straightforwardly by combining the anti-symmetry of $q_{v,v,e}$,
(\ref{eq: q antisymmetry}), the symmetry of $p_{v,v}$, (\ref{eq: p symmetry}),
and the anti-symmetry of the branch of $\tan^{-1}$ we have chosen,
$\tan^{-1}(-x)=\pi-\tan^{-1}(x)$.
\end{proof}

\subsection{Observables as random variables}

This paper concerns various functions (observables) such as the nodal
surplus, $\sigma$, the Neumann surplus, $\omega$, the spectral position,
$\Nv$ and the wavelength capacity, $\Rv$. All those functions are
defined on the generic index set, $\G$. In this section we provide
the tools needed to show whether such a function may be considered
as a random variable with respect to $d_{\G}$ (see Definition \ref{def: random variable}).
To this end, we need to achieve two tasks; first present a well-defined
probability space on $\G$ with the probability measure $d_{\G}$;
and second, to show that the considered function is measurable. Even
the first task is non-trivial, as the density $d_{\G}$ is not countably
additive on the power set of $\G$ and we should carefully choose
our $\sigma$-algebra on $\G$ to ensure this.

\begin{lem}
\label{lem: random-variables} Let $\Gamma$ be a standard graph with
rationally independent edge lengths $\vec{l}$. Let $\bs{\alpha}:\mgen\rightarrow\R$
and $\alpha:\mathcal{G}\rightarrow\R$ such that
\begin{equation}
\forall n\in\G,\quad\alpha(n):=\bs{\alpha}\left(\flow(n)\right).\label{eq: definition_of_alpha_on_G-1}
\end{equation}
If each element of the $\sigma$-algebra generated by $\bs{\alpha}$
is a Jordan set (i.e., having boundary of measure zero) then $\alpha$
is a random variable with respect to $d_{\G}$.\\
Moreover the probability distribution of $\alpha$ is given by
\begin{equation}
\forall j\in\mathrm{Image}\left(\alpha\right)\,\,\,d_{\mathcal{G}}\left(\alpha^{-1}\left(j\right)\right)=\mu_{\lv}\left(\boldsymbol{\alpha}^{-1}\left(j\right)\right),\label{eq: density_equals_BG_measure_for_level_set-1}
\end{equation}
where $\text{\ensuremath{\BGm}}$ is the Barra-Gaspard measure (\ref{eq: BG-measure-definition}).
\end{lem}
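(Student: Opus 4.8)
The plan is to verify the two requirements of Definition \ref{def: random variable} directly: that $d_{\G}(\alpha^{-1}(B))$ exists for every Borel set $B\subset\R$, and that $d_{\G}$ restricted to the $\sigma$-algebra generated by $\alpha$ is a probability measure. The link between the two objects is the identity $\alpha^{-1}(B)=\{n\in\G\,:\,\flow(n)\in\bs{\alpha}^{-1}(B)\}$, which follows immediately from \eqref{eq: definition_of_alpha_on_G-1}. The key engine is Corollary \ref{cor: density_equals_BG_for_Jordan_set}: if $\bs{\alpha}^{-1}(B)\subset\mgen$ is a Jordan set, then $d_{\G}(\alpha^{-1}(B))$ exists and equals $\BGm(\bs{\alpha}^{-1}(B))$. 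So the first task is to observe that, by hypothesis, $\bs{\alpha}^{-1}(B)$ is Jordan for every $B$ in the $\sigma$-algebra generated by $\bs{\alpha}$ (equivalently, for every Borel $B$, since $\bs{\alpha}^{-1}$ of a Borel set always lies in that $\sigma$-algebra), hence $d_{\G}(\alpha^{-1}(B))=\BGm(\bs{\alpha}^{-1}(B))$ for all such $B$.

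First I would establish that $\bs{\alpha}_{*}\BGm$ — the pushforward of the Barra-Gaspard measure by $\bs{\alpha}$ — is a well-defined Borel probability measure on $\R$. This is routine: $\BGm$ is a Borel probability measure on $\mgen$ by Theorem \ref{thm: equidistribution_by_BG_measure}\eqref{enu: thm-equidistribution_by_BG_measure-2}, and $\bs{\alpha}$ is Borel measurable (this is where one uses that $\bs{\alpha}^{-1}(B)$ is measurable for every Borel $B$, which is automatic), so $(\bs{\alpha}_{*}\BGm)(B):=\BGm(\bs{\alpha}^{-1}(B))$ defines a Borel probability measure on $\R$. Then the combination of the previous paragraph with this gives, for every Borel $B$,
\[
d_{\G}\bigl(\alpha^{-1}(B)\bigr)=\BGm\bigl(\bs{\alpha}^{-1}(B)\bigr)=(\bs{\alpha}_{*}\BGm)(B).
\]
In particular the limit defining $d_{\G}(\alpha^{-1}(B))$ exists for all Borel $B$. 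Next, the $\sigma$-algebra $\mathcal{F}_{\alpha}$ generated by $\alpha$ on $\G$ consists exactly of the sets $\alpha^{-1}(B)$, $B$ Borel; on this $\sigma$-algebra the set function $d_{\G}$ agrees with $B\mapsto(\bs{\alpha}_{*}\BGm)(B)$ transported via $\alpha^{-1}$. Since $\bs{\alpha}_{*}\BGm$ is a probability measure and $\alpha^{-1}$ respects countable unions, complements, and the empty set (it is a Boolean-algebra homomorphism), $d_{\G}\!\restriction_{\mathcal{F}_{\alpha}}$ inherits countable additivity and total mass $1$, hence is a probability measure on $\mathcal{F}_{\alpha}$. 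This verifies both conditions of Definition \ref{def: random variable}, so $\alpha$ is a random variable with respect to $d_{\G}$.

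Finally, \eqref{eq: density_equals_BG_measure_for_level_set-1} is the special case $B=\{j\}$ for $j\in\imag(\alpha)$: a singleton is Borel, so $\bs{\alpha}^{-1}(\{j\})$ is Jordan by hypothesis, and Corollary \ref{cor: density_equals_BG_for_Jordan_set} gives $d_{\G}(\alpha^{-1}(j))=\BGm(\bs{\alpha}^{-1}(j))$ directly. The only genuinely delicate point — and the step I expect to be the main obstacle — is the bookkeeping around $\sigma$-algebras: one must be careful that "each element of the $\sigma$-algebra generated by $\bs{\alpha}$ is Jordan" is exactly the right hypothesis to feed Corollary \ref{cor: density_equals_BG_for_Jordan_set} (which needs a Jordan set as input, not merely a Borel one), and that this is enough to conclude countable additivity of $d_{\G}$ on $\mathcal{F}_{\alpha}$ even though $d_{\G}$ is badly non-additive on the full power set of $\G$. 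The resolution is precisely that on $\mathcal{F}_{\alpha}$ the density $d_{\G}$ is pulled back from a bona fide measure on $\R$, so no pathology can occur there; everything else is a formal verification.
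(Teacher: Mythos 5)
Your proposal is correct and follows essentially the same route as the paper: reduce every $\alpha^{-1}(B)$ to $\set{n\in\G}{\flow(n)\in\bs{\alpha}^{-1}(B)}$, invoke the Jordan hypothesis together with Corollary \ref{cor: density_equals_BG_for_Jordan_set} to identify $d_{\G}$ with $\BGm$ on the generated $\sigma$-algebra, and then transport countable additivity and total mass from $\BGm$. The only cosmetic difference is that you package the additivity argument via the pushforward measure $\bs{\alpha}_{*}\BGm$, whereas the paper reduces to countable unions of level sets using the countability of $\imag(\alpha)$; both close the same gap.
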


\begin{proof}
Let $\F_{\alpha}$ be the $\sigma$-algebra on $\G$ generated by
$\alpha$, and let $\boldsymbol{\F}_{\boldsymbol{\boldsymbol{\alpha}}}$
be the $\sigma$-algebra on $\mgen$ generated by $\boldsymbol{\alpha}$.
These can be described by $\F_{\alpha}=\set{\alpha^{-1}\left(B\right)}{B\subset\R\,\text{ is Borel}}$
and similarly for $\boldsymbol{\F_{\boldsymbol{\alpha}}}$. Let $B\subset\R$
be a Borel set and consider the two sets, $\alpha^{-1}\left(B\right)\subset\G$
and $\boldsymbol{\alpha}^{-1}\left(B\right)\subset\mgen$, so that
$\alpha^{-1}\left(B\right)=\set{n\in\G}{\flow(n)\in\boldsymbol{\alpha}^{-1}\left(B\right)}$.
By the assumption, $\boldsymbol{\alpha}^{-1}\left(B\right)$ is Jordan
since it is an element in $\boldsymbol{\F_{\boldsymbol{\alpha}}}$.
Hence, by Corollary \ref{cor: density_equals_BG_for_Jordan_set}:

\begin{equation}
\mu_{\lv}\left(\boldsymbol{\alpha}^{-1}\left(B\right)\right)=d_{\G}\left(\alpha^{-1}\left(B\right)\right).\label{eq: alpha preimage B}
\end{equation}
This proves (\ref{eq: density_equals_BG_measure_for_level_set-1})
and that every element in $\F_{\alpha}$ has density. In order to
conclude that $d_{\G}$ is a probability measure on $\left(\G,\F_{\alpha}\right)$
we are left with showing that $d_{\G}$ is $\sigma$-additive on $\F_{\alpha}$
and $d_{\G}\left(\G\right)=1$. Since the image of $\alpha$ is countable,
then $d_{\G}$ is $\sigma$-additive on $\F_{\alpha}$ if for any
subset $J\subset\mathrm{Image}\left(\alpha\right)$,
\[
d_{\G}\left(\cup_{j\in J}\alpha^{-1}\left(j\right)\right)=\sum_{j\in J}d_{\G}\left(\alpha^{-1}\left(j\right)\right).
\]
This follows from (\ref{eq: alpha preimage B}), using $\alpha^{-1}\left(J\right)=\cup_{j\in J}\alpha^{-1}\left(j\right)$
and the fact that $\mu_{\lv}$ is a measure (hence $\sigma$-additive):
\[
d_{\G}\left(\alpha^{-1}\left(J\right)\right)=\mu_{\lv}\left(\boldsymbol{\alpha}^{-1}\left(J\right)\right)=\sum_{j\in J}\mu_{\lv}\left(\boldsymbol{\alpha}^{-1}\left(j\right)\right)=\sum_{j\in J}d_{\G}\left(\alpha^{-1}\left(j\right)\right).
\]
Applying (\ref{eq: alpha preimage B}) to $B=\R$ gives:
\[
d_{\G}\left(\G\right)=d_{\G}\left(\alpha^{-1}\left(\R\right)\right)=\mu_{\lv}\left(\boldsymbol{\alpha}^{-1}\left(\R\right)\right)=\mu_{\lv}\left(\mgen\right)=1.
\]
Therefore, $d_{\G}$ is a probability measure on $\left(\G,\F_{\alpha}\right)$,
and so $\alpha$ is a random variable with respect to $d_{\G}$ (see
Definition \ref{def: random variable}).
\end{proof}
The Lemma above is used in order to show that all the observables
discussed in the paper, with the exception of $\Rv$, are random variables
(see proofs for Theorem \ref{thm:Neumann_surplus_main} and Proposition
\ref{prop: statistics_of_spectral_position}). In the other direction,
the next lemma aids in showing that there are observables (we believe
that $\Rv$ is such) which cannot be considered as random variables.
\begin{lem}
\label{lem: random_variable_implies_countable_image}Let $\Gamma$
be a standard graph with rationally independent edge lengths $\vec{l}$.
Let $\bs{\alpha}:\mgen\rightarrow\R$ be a Riemann integrable function
(i.e., having discontinuity set of measure zero), and define $\alpha:\mathcal{G}\rightarrow\R$
by
\[
\forall n\in\G,\quad\alpha(n):=\bs{\alpha}\left(\flow(n)\right).
\]
If $\alpha$ is a random variable with respect to $d_{\mathcal{G}}$
then there exists $X\subset\Sigma^{gen}$ of full measure, $\mu_{\lv}\left(X\right)=1$,
such that $\mathrm{Image}\left(\boldsymbol{\alpha}|_{X}\right)$ is
countable.
\end{lem}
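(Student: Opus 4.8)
The plan is to compare two Borel probability measures on $\R$. Write $P:=d_{\G}\circ\alpha^{-1}$ for the push-forward of the density $d_{\G}$ under $\alpha$, and $\nu:=\BGm\circ\bs{\alpha}^{-1}$ for the push-forward of the Barra--Gaspard measure under $\bs{\alpha}$. The measure $P$ is a genuine Borel probability measure on $\R$ precisely because $\alpha$ is assumed to be a random variable with respect to $d_{\G}$ (Definition \ref{def: random variable}); the measure $\nu$ is a Borel probability measure because $\BGm$ is one (Theorem \ref{thm: equidistribution_by_BG_measure}, part (\ref{enu: thm-equidistribution_by_BG_measure-2})) and $\bs{\alpha}$, being Riemann integrable, is $\BGm$-measurable. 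I will show that $P=\nu$ and that $P$ is purely atomic; then the set $X:=\bs{\alpha}^{-1}(A)$, where $A\subset\R$ is the countable set of atoms of $\nu$, is the full-measure set on which $\bs{\alpha}$ takes only countably many values, which is exactly the claim.

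First I would record that $P$ is purely atomic: since $\G\subseteq\N$, the set $\imag(\alpha)$ is countable, and $P(\imag(\alpha))=d_{\G}(\alpha^{-1}(\imag(\alpha)))=d_{\G}(\G)=1$, so $P$ is carried by a countable set. Next I would isolate two exceptional sets attached to $\bs\alpha$: let $D\subset\mgen$ be the discontinuity set of $\bs{\alpha}$, which is $\BGm$-null by Riemann integrability, and let $C:=\set{x\in\R}{\BGm(\bs{\alpha}^{-1}(\{x\}))>0}$, which is at most countable since it indexes pairwise disjoint sets of positive measure in a probability space.

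The core step is the identity $P=\nu$, which I would first establish on intervals. Fix $a<b$ with $a,b\notin C$ and put $\A_{a,b}:=\bs{\alpha}^{-1}((a,b))\subset\mgen$. I claim $\A_{a,b}$ is a Jordan subset of $\mgen$: if $\kv\in\mgen$ is a boundary point of $\A_{a,b}$ at which $\bs\alpha$ is continuous, then $\bs\alpha(\kv)\notin(a,b)$ (else a whole neighbourhood of $\kv$ lies inside $\A_{a,b}$) and $\bs\alpha(\kv)\notin\R\setminus[a,b]$ (else a whole neighbourhood lies in the complement), so $\bs\alpha(\kv)\in\{a,b\}$; consequently $\partial\A_{a,b}\subseteq\partial\mgen\cup D\cup\bs\alpha^{-1}(\{a,b\})$, and each of these three sets is $\BGm$-null (the last one because $a,b\notin C$). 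Corollary \ref{cor: density_equals_BG_for_Jordan_set} then yields
\[
P((a,b))=d_{\G}\!\left(\set{n\in\G}{\flow(n)\in\A_{a,b}}\right)=\BGm(\A_{a,b})=\nu((a,b)).
\]
Since $\R\setminus C$ is co-countable, hence dense, the intervals $(a,b)$ with $a,b\in\R\setminus C$ (together with the empty set) form a $\pi$-system generating the Borel $\sigma$-algebra of $\R$; two probability measures agreeing on a generating $\pi$-system coincide, so $P=\nu$ on all Borel sets.

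Combining the two facts, $\nu=P$ is purely atomic; letting $A=\{x_j\}_j$ be its countable atom set and $X:=\bs{\alpha}^{-1}(A)$, one gets $\BGm(X)=\nu(A)=1$ (a purely atomic probability measure puts all its mass on its atoms) and $\imag(\bs{\alpha}|_X)\subseteq A$ is countable, completing the proof. I expect the one genuinely delicate point to be the Jordan property of $\A_{a,b}$: this is where the a.e.\ continuity of $\bs\alpha$ enters and where the two ``heavy'' level values $a,b\in C$ must be avoided. After that, everything is a routine application of Corollary \ref{cor: density_equals_BG_for_Jordan_set} together with a $\pi$--$\lambda$ argument, and boundedness of $\bs\alpha$ is not needed since only the indicators $\chi_{\A_{a,b}}$ feed into the equidistribution input.
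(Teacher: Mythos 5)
Your proof is correct, but it takes a genuinely different route from the paper's. The paper works value by value: for each $j$ in the countable image of $\alpha$ it proves only the one-sided inequality $\BGm\left(\bs{\alpha}^{-1}(j)\right)\geq d_{\G}\left(\alpha^{-1}(j)\right)$, by noting that $\overline{\bs{\alpha}^{-1}(j)}\setminus\bs{\alpha}^{-1}(j)$ lies in the discontinuity set, then sandwiching the closure between an outer-regular open neighbourhood and a Urysohn function $f_{\varepsilon}$, and feeding $f_{\varepsilon}$ into the equidistribution theorem; summing over $j$ and using $\sigma$-additivity of $d_{\G}$ (the random-variable hypothesis) forces $X=\sqcup_{j}\bs{\alpha}^{-1}(j)$ to have full measure. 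You instead prove the stronger two-sided statement that the push-forwards $d_{\G}\circ\alpha^{-1}$ and $\BGm\circ\bs{\alpha}^{-1}$ coincide as Borel measures, by verifying the Jordan property of $\bs{\alpha}^{-1}\left((a,b)\right)$ for endpoints off the countable atom set, invoking Corollary \ref{cor: density_equals_BG_for_Jordan_set}, and closing with a $\pi$--$\lambda$ uniqueness argument; atomicity of the left-hand side then gives the conclusion. Your route buys a cleaner and more reusable intermediate fact (the two distributions of $\alpha$ and $\bs{\alpha}$ agree, which meshes nicely with Lemma \ref{lem: random-variables}), at the cost of the extra bookkeeping with the exceptional endpoint set $C$ and the measure-uniqueness machinery; the paper's route is leaner, needing only an inequality on level sets, but has to reach for regularity of $\BGm$ and Urysohn's lemma directly rather than reusing Corollary \ref{cor: density_equals_BG_for_Jordan_set}. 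Both arguments use Riemann integrability in exactly parallel ways (you to control $\partial\left\{\bs{\alpha}^{-1}\left((a,b)\right)\right\}$, the paper to control the closure of a level set), and both place the random-variable hypothesis where it is genuinely needed, namely for countable additivity of $d_{\G}$ on the preimages. The only points worth tightening are cosmetic: the measurability of $\bs{\alpha}^{-1}(B)$ for Borel $B$ strictly requires passing to the completion of $\BGm$ (the paper elides the same point), and your boundary computation should be read in the subspace topology of $\overline{\mgen}$, which is why including $\partial\mgen$ in the exceptional set, as you do, is the right move.
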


\begin{rem}
\label{rem: wavelength-capacity-not-random-variable} As an immediate
corollary from the Lemma above we deduce that if $\bs{\alpha}:\mgen\rightarrow\R$
is continuous and non-constant on some connected open set, then $\alpha$
is not a random variable. Due to this we cannot generally regard the
wavelength capacity, $\Rv$, as a random variable. For the wavelength
capacity to be considered as a random variable, we need that the corresponding
function on the secular manifold, $\bs{\Rv}$ be constant on each
connected component of $\mgen$. This is unlikely by (\ref{eq: rho_as_function_of_kappa})
(see also numerical findings in Section \ref{sec:Discussion}) and
we even make a stronger conjecture that there is no open set of $\mgen$
at which $\bs{\Rv}$ is constant (see Remark \ref{rem: no_atoms_in_rho_distribution}).
\end{rem}

\begin{proof}
[Proof of Lemma \ref{lem: random_variable_implies_countable_image}]

Observe that the image of $\alpha$ is countable and that for any
value $j\in\imag\left(\alpha\right)$, the set $\bs{\alpha}^{-1}(j)\subset\mreg$
is measurable since $\bs{\alpha}$ is a measurable function (it is
even Riemann integrable). Define $X:=\sqcup_{j\in\imag(\alpha)}\bs{\alpha}^{-1}(j)$.
We will show that for all $j\in\imag(\alpha)$, $\BGm(\bs{\alpha}^{-1}(j))\geq d_{\G}(\alpha^{-1}(j))$
and conclude
\begin{align}
\BGm(X)=\sum_{j\in\imag(\alpha)}\BGm(\bs{\alpha}^{-1}(j)) & \geq\sum_{j\in\imag(\alpha)}d_{\G}(\alpha^{-1}(j))\nonumber \\
 & =d_{\G}(\sqcup_{j\in\imag(\alpha)}\alpha^{-1}(j))=d_{\G}(\G)=1,\label{eq: mu_tilde_X-decomposed-1}
\end{align}
where the first equality follows since $\mu_{\lv}$ is a measure and
$X=\sqcup_{j\in\imag(\alpha)}\bs{\alpha}^{-1}(j)$ is a disjoint countable
union; and all equalities on the second lines are by assumption of
the lemma that $d_{\G}$ is a probability measure. The statement of
the lemma now follows from (\ref{eq: mu_tilde_X-decomposed-1}) and
$\BGm(X)\le\BGm(\mgen)=1$. It is left to finish the proof by showing
that for all $j\in\imag(\alpha)$, $\BGm(\bs{\alpha}^{-1}(j))\geq d_{\G}(\alpha^{-1}(j))$,
which we do next.

Let $j\in\imag(\alpha)$ and denote the closure of $\bs{\alpha}^{-1}(j)$
by $\boldsymbol{\A}=\overline{\bs{\alpha}^{-1}(j)}$. Notice that
the points in $\boldsymbol{\A}\setminus\bs{\alpha}^{-1}(j)$ are discontinuity
points of $\boldsymbol{\alpha}$. Since $\bs{\alpha}$ was assumed
to be Riemann integrable then $\boldsymbol{\A}\setminus\bs{\alpha}^{-1}(j)$
is of measure zero and so
\begin{equation}
\BGm(\bs{\alpha}^{-1}(j))=\BGm(\boldsymbol{\A}).\label{eq: measure of preimage and its closure}
\end{equation}
Using that $\BGm$ is a regular measure (Theorem \ref{thm: equidistribution_by_BG_measure})
we have that for every $\varepsilon>0$, there exist an open set $U_{\varepsilon}$
such that $\bs{\A}\subset U_{\varepsilon}$ and $\BGm\left(U_{\varepsilon}\backslash\bs{\A}\right)<\varepsilon$.
By Urysohn's Lemma there exist a continuous function $f_{\varepsilon}:\mgen\rightarrow\left[0,1\right]$
supported inside $U_{\varepsilon}$ and such that $f_{\varepsilon}|_{\bs{\A}}\equiv1$.
By this construction,
\begin{equation}
\forall n\in\alpha^{-1}(j),\qquad f_{\varepsilon}(\flow(n))=1.\label{eq: f epsilon}
\end{equation}
Denoting $\G(N):=\set{n\in\G}{n\leq N}$, we get
\begin{align}
d_{\G}(\alpha^{-1}(j)) & =\lim_{N\rightarrow\infty}\frac{\left|\set{n\in\G(N)}{n\in\alpha^{-1}(j)}\right|}{\left|\G(N)\right|}\nonumber \\
 & \leq\lim_{N\rightarrow\infty}\frac{1}{\left|\G(N)\right|}\sum_{n\in\G(N)}f_{\varepsilon}(\flow(n))\nonumber \\
 & =\int_{\mgen}f_{\varepsilon}\rmd\BGm\nonumber \\
 & =\int_{\bs{\A}}f_{\varepsilon}\rmd\BGm+\int_{U_{\varepsilon}\backslash\bs{\A}}f_{\varepsilon}\rmd\BGm\nonumber \\
 & <\BGm\left(\bs{\A}\right)+\varepsilon=\BGm\left(\bs{\alpha}^{-1}(j)\right)+\varepsilon,\label{eq: density_smaller_than_BGm}
\end{align}
where moving to the third line we use that $\left\{ \flow(n)\right\} _{n\in\G}$
is an equidistributed sequence and $f_{\varepsilon}$ is continuous,
and the last equality is due to (\ref{eq: measure of preimage and its closure}).
As (\ref{eq: density_smaller_than_BGm}) holds for every $\varepsilon>0$,
we get the required inequality $\BGm(\bs{\alpha}^{-1}(j))\geq d_{\G}(\alpha^{-1}(j))$,
which finishes the proof.
\end{proof}
\begin{rem*}
We note that the results of the current subsection (Lemmata \ref{lem: random-variables}
and \ref{lem: random_variable_implies_countable_image}) may be similarly
proved for a general compact metric space, a measure on it and a corresponding
equidistributed sequence. Also, the random variables discussed above
are scalars, but the same statements hold for random vector variables
whose range of definition is $\R^{n}$ (for any $n\in\N$).
\end{rem*}

\section{Probability distributions of the Neumann count and the spectral position\protect \\
(proofs of Theorem \ref{thm:Neumann_surplus_main},(\ref{enu:thm-density_of_Neumann_surplus}),
Theorem \ref{Thm: (3,1)-regular-trees} and Proposition \ref{prop: statistics_of_spectral_position}).
\label{sec: proofs_Neumann_count_and_spectral_position}}

\subsection{Proving the existence and symmetry of the probability distributions
of $\omega$ and $N^{\left(v\right)}$\protect \\
(Theorem \ref{thm:Neumann_surplus_main},(\ref{enu:thm-density_of_Neumann_surplus})
and Proposition \ref{prop: statistics_of_spectral_position}).}

\vspace{2mm}

\begin{proof}
[Proof of Theorem \emph{ }\ref{thm:Neumann_surplus_main},(\ref{enu:thm-density_of_Neumann_surplus})]

We wish to apply Lemma \ref{lem: random-variables} in order to show
that the Neumann surplus, $\omega:\G\rightarrow\Z$, is a random variable
with respect to $d_{\G}$. For this purpose, we use the function $\bs{\omega}:\mgen\rightarrow\Z$,
whose existence and properties were established in Lemma \ref{lem: functions_on_secular_manifold_existence_and_symmetry}
and in particular, $\omega(n)=\bs{\omega}(\flow(n))$ for all $n\in\G$.
By Lemma \ref{lem: functions_on_secular_manifold_existence_and_symmetry},
the function $\bs{\omega}$ is constant on connected components of
$\mgen$. Let $\F_{\boldsymbol{\omega}}$ be the $\sigma$-algebra
generated by $\bs{\omega}$. Then each element in $\F_{\boldsymbol{\omega}}$
is a union of connected components of $\mgen$. $\mgen$ has a finite
number of connected components, by Lemma \ref{lem: generic secman},
and each of them is both open and close since $\mgen$ is locally
connected. Hence, each element in $\F_{\boldsymbol{\omega}}$ is also
open and closed and as such has no boundary and is Jordan. This is
exactly the condition in Lemma \ref{lem: random-variables}, by which
we get that $\omega$ is a random variable with respect to $d_{\G}$.
Furthermore, $\omega$ is a finite random variable. Indeed, by the
above $\imag(\bs{\omega})$ is finite and so $\imag(\omega)$ is finite
as well\footnote{This was also proven in Theorem \ref{thm:Neumann_surplus_main},(\ref{enu:thm-Neumann_surplus_bounds}).}.
This completes the proof of part (\ref{enu:thm-density_of_Neumann_surplus-a})
of the theorem.

To prove the next two parts of the theorem note that another implication
of Lemma \ref{lem: random-variables} is
\begin{equation}
\forall j\in\imag(\bs{\omega}),\quad d_{\G}(\omega^{-1}(j))=\BGm(\bs{\omega}^{-1}(j)).\label{eq: density-equals-BGm-Neumann-surplus}
\end{equation}

Let $j$ be such that $\omega^{-1}(j)\neq\emptyset$, so $\boldsymbol{\omega}^{-1}\left(j\right)\ne\emptyset$.
As argued above, $\bs{\omega}^{-1}(j)$ is an open set. This together
with $\BGm$ being strictly positive (Theorem \ref{thm: equidistribution_by_BG_measure},(\ref{enu: thm-equidistribution_by_BG_measure-2}))
yields $\BGm(\bs{\omega}^{-1}(j))>0$. By (\ref{eq: density-equals-BGm-Neumann-surplus}),
this proves part (\ref{enu:thm-density_of_Neumann_surplus-b}) of
the theorem.

Finally, to prove part (\ref{enu:thm-density_of_Neumann_surplus-c})
of the theorem, we use the inversion map, $\I:\kv\mapsto\modp{-\kv}$
which acts on $\mgen$. Since by Lemma \ref{lem: Inversion-properties},(\ref{enu: lem-Inversion-properties-3}),
$\I$ preserves the measure $\BGm$ we have

\begin{align}
\forall j\quad\BGm(\bs{\omega}^{-1}(j)) & =\BGm(\I(\bs{\omega}^{-1}(j)))\nonumber \\
 & =\BGm(\left(\bs{\omega}\circ\I\right)^{-1}(j))\nonumber \\
 & =\BGm(\bs{\omega}^{-1}(\beta-\left|\partial\Gamma\right|-j)),\label{eq: symmetry_of_BG_measure_on_Neumann_surplus}
\end{align}
where in the second line we used that $\I$ is an involution and the
third line is obtained from (\ref{eq: anti-symmetry_of_Neumann_surplus_on_secular_mnfld}),
which reads $\left(\bs{\omega}\circ\I\right)(\kv)=\beta-\left|\partial\Gamma\right|-\bs{\omega}(\kv)$.
Combining (\ref{eq: symmetry_of_BG_measure_on_Neumann_surplus}) with
(\ref{eq: density-equals-BGm-Neumann-surplus}) gives the desired
symmetry of the probability distribution.
\end{proof}
The next proof is quite similar to the previous one. The only essential
difference\footnote{This difference is due to the special role which is played by a finite
number of eigenvalues appearing in the beginning of the spectrum -
see details within the proof.} is in introducing an auxiliary random variable with respect to $d_{\G}$.
\begin{proof}
[Proof of Proposition \ref{prop: statistics_of_spectral_position}]

We start by recalling the function $\bs{N^{(v)}}:\mgen\rightarrow\N$,
whose existence and properties were established in Lemma \ref{lem: functions_on_secular_manifold_existence_and_symmetry}
and introduce $\widetilde{N}^{(v)}:\G\rightarrow\N$ by
\begin{equation}
\forall n\in\G,\quad\widetilde{N}^{(v)}(n)=\bs{N^{(v)}}(\flow(n)).\label{eq: spectral_position_auxiliary_func}
\end{equation}
Following exactly the same arguments as in the beginning of the preceding
proof (proof of Theorem \ref{thm:Neumann_surplus_main},(\ref{enu:thm-density_of_Neumann_surplus})),
we get that $\widetilde{N}^{(v)}$ is a finite random variable with
respect to $d_{\G}$. We need to obtain a similar statement for $N^{(v)}$.
To do so, we note that by (\ref{eq: Spectral_position_on_secular_manifold})
and (\ref{eq: spectral_position_auxiliary_func})
\begin{equation}
\forall n\in\G~\textrm{s.t. }n\geq2\frac{\left|\Gamma\right|}{L_{\textrm{min}}},\quad N^{(v)}(n)=\widetilde{N}^{(v)}(n),\label{eq: spec_pos_equals_spec_pos_auxiliary}
\end{equation}
where $L_{\mathrm{min}}$ is the minimal edge length of $\Gamma$
and $\left|\Gamma\right|$ is the sum of all edge lengths. Thanks
to (\ref{eq: spec_pos_equals_spec_pos_auxiliary}), for any Borel
set $B\subset\R$ the pre-images $\left(N^{(v)}\right)^{-1}(B)$ and
$\left(\widetilde{N}^{(v)}\right)^{-1}(B)$ differ by a finite number
of elements from $\Gsing$. Since the density of $\left(\widetilde{N}^{(v)}\right)^{-1}(B)$
exists (because $\widetilde{N}^{(v)}$ is a random variable with respect
to $d_{\G}$), so does the density of $\left(N^{(v)}\right)^{-1}(B)$,
and both are equal. A similar argument shows that $d_{\G}$ is a probability
measure on the $\sigma$-algebra generated by $N^{(v)}$, and so $N^{(v)}$
is a random variable with respect to $d_{\G}$. It is finite since
$\widetilde{N}^{(v)}$ is finite and their images may differ by at
most a finite number of values. This proves the first part of the
proposition.

To prove the second part, we use the inversion map, $\I:\kv\mapsto\modp{-\kv}$
which acts on $\mgen$. By Lemma \ref{enu: lem-Inversion-properties-3}
$\I$ preserves the measure $\BGm$, and so

\begin{align*}
\forall j\quad\BGm\left(\left(\bs{N^{(v)}}\right)^{-1}(j)\right) & =\BGm\left(\I\left(\left(\bs{N^{(v)}}\right)^{-1}(j)\right)\right)\\
 & =\BGm\left(\left(\bs{N^{(v)}}\circ\I\right)^{-1}(j)\right)\\
 & =\BGm\left(\left(\bs{N^{(v)}}\right)^{-1}(\deg v-j)\right),
\end{align*}
where in the second line we used that $\I$ is an involution and the
third line is obtained from (\ref{eq: anti-symmetry_of_spectral_position_on_secular_mnfld}),
which reads $\left(\bs{N^{(v)}}\circ\I\right)(\kv)=\deg v-\bs{N^{(v)}}(\kv)$.
To finish the proof observe that
\begin{equation}
\forall j\quad d_{\G}\left(\left(\Nv\right)^{-1}(j)\right)=d_{\G}\left(\left(\widetilde{N}^{(v)}\right)^{-1}(j)\right)=\BGm\left(\left(\Nvb\right)^{-1}(j)\right),\label{eq: density_equals_measure_spectral_pos}
\end{equation}
where the first equality was shown above, and the second follows from
(\ref{eq: density_equals_BG_measure_for_level_set-1}) in Lemma \ref{lem: random-variables}.
\end{proof}

\subsection{The Neumann count distribution of $(3,1)$-regular trees\protect \\
(proof of Theorem\emph{ }\ref{Thm: (3,1)-regular-trees}).\protect \\
~\protect \\
}

Theorem \ref{Thm: (3,1)-regular-trees} is proved using the following.
\begin{defn}
A bridge (or cut-edge) of a graph $\Gamma$ is an edge of $\Gamma$,
whose removal increases the number of connected components of $\Gamma$.
\end{defn}

\begin{prop}
\label{prop: independent_symmetry_of_spectral_positions} Let $\Gamma$
be a standard graph with rationally independent edge lengths. Let
$\widetilde{\Gamma}$ be a sub-graph of $\Gamma$ such that all edges
between $\widetilde{\Gamma}$ and $\Gamma\backslash\widetilde{\Gamma}$
are bridges of $\Gamma$. Let $\tilde{v}$ be a vertex of $\widetilde{\Gamma}$
with degree $\deg{\tilde{v}}>1$ and denote its spectral position
random variable by $N^{(\tilde{v})}$. Further denote by $\overrightarrow{N}^{(\Gamma\backslash\widetilde{\Gamma})}$
the vector of all the spectral position random variables of interior
vertices (of $\Gamma$) in $\Gamma\backslash\widetilde{\Gamma}$.
Then
\begin{equation}
\forall\thinspace1\leq j\leq d_{\tilde{v}}-1,\quad\P\left(N^{(\tilde{v})}=j\thinspace|\thinspace\overrightarrow{N}^{(\Gamma\backslash\widetilde{\Gamma})}\right)=\P\left(N^{(\tilde{v})}=d_{\tilde{v}}-j\thinspace|\thinspace\overrightarrow{N}^{(\Gamma\backslash\widetilde{\Gamma})}\right).\label{eq: independent_symmetry_of_local_spectral_positions}
\end{equation}
Namely, $N^{(\tilde{v})}$ is symmetrically distributed, independently
of all spectral position random variables of $\Gamma\backslash\widetilde{\Gamma}$.
\end{prop}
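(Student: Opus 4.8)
The plan is to transfer the statement to the secular manifold $\mgen$, where conditioning on $\overrightarrow{N}^{(\Gamma\setminus\widetilde\Gamma)}$ becomes intersecting with a level set, and then to exhibit a single $\BGm$-preserving involution of $\mgen$ which shifts the local spectral position at $\tilde v$ while fixing every spectral position attached to $\Gamma\setminus\widetilde\Gamma$. Concretely, for each interior vertex $w$ of $\Gamma$, Lemma~\ref{lem: functions_on_secular_manifold_existence_and_symmetry} supplies $\bs{N^{(w)}}\colon\mgen\to\N$, constant on each connected component of $\mgen$, with $N^{(w)}(n)=\bs{N^{(w)}}(\flow(n))$ for all $n\in\G$ with $n\geq 2|\Gamma|/L_{\min}$; as in the proof of Proposition~\ref{prop: statistics_of_spectral_position}, the finitely many indices below that bound do not affect any natural density. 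Writing $\overrightarrow{\bs N}:=\bigl(\bs{N^{(w)}}\bigr)_w$ for the tuple over the interior vertices $w$ of $\Gamma$ lying in $\Gamma\setminus\widetilde\Gamma$, this gives, for every integer $j$ and every value $\vec a$ in the finite range of $\overrightarrow{\bs N}$,
\[
\P\!\bigl(N^{(\tilde v)}=j\ \big|\ \overrightarrow{N}^{(\Gamma\setminus\widetilde\Gamma)}=\vec a\bigr)=\frac{\BGm\!\bigl(\,(\bs{N^{(\tilde v)}})^{-1}(j)\cap\overrightarrow{\bs N}^{-1}(\vec a)\,\bigr)}{\BGm\!\bigl(\overrightarrow{\bs N}^{-1}(\vec a)\bigr)},
\]
where every set in sight is a union of connected components of $\mgen$ (each $\bs{N^{(\cdot)}}$ being component-wise constant), hence open and closed, hence Jordan, so Corollary~\ref{cor: density_equals_BG_for_Jordan_set} applies as in Lemma~\ref{lem: random-variables}. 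It therefore suffices to build an involution $T\colon\mgen\to\mgen$ with $T_*\BGm=\BGm$, such that $\bs{N^{(\tilde v)}}\circ T=\deg{\tilde v}-\bs{N^{(\tilde v)}}$ while $\bs{N^{(w)}}\circ T=\bs{N^{(w)}}$ for every interior vertex $w$ of $\Gamma$ in $\Gamma\setminus\widetilde\Gamma$.

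\textbf{The involution.} Recall that $\bs{N^{(w)}}$ depends on $\kv$ only through the signs $\sgn\bigl(q_{w,w,e}(\kv)\bigr)=\sgn\bigl(f_{\kv}(w)\,\partial_e f_{\kv}(w)\bigr)$, $e\in\E_w$ (see \eqref{eq:definition of Nv-1}). I would define $T$ as a dual-type modification of the canonical eigenfunction $f_{\kv}$, carried out on the $\widetilde\Gamma$ side of the bridges, in the spirit of Lemma~\ref{lem:dual-star}: on each edge of $\widetilde\Gamma$ apply $x\mapsto-x$, $\kappa_e\mapsto\pi-\kappa_e$ (organised, where the relevant subgraph is bipartite, by a $2$-colouring so that the vertex conditions inside $\widetilde\Gamma$ persist), so that at $\tilde v$ the value $f_{\kv}(\tilde v)$ is kept while every outgoing derivative $\partial_e f_{\kv}(\tilde v)$, $e\in\E_{\tilde v}$, is negated; then, on each bridge $b$ joining a vertex of $\widetilde\Gamma$ to $\Gamma\setminus\widetilde\Gamma$, re-select the length $\kappa_b$ so that the transformed solution on $b$ realises both the (now mismatched) value/derivative data at its $\widetilde\Gamma$-endpoint and the \emph{unchanged} data at its $\Gamma\setminus\widetilde\Gamma$-endpoint. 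Negating all $\partial_e f_{\kv}(\tilde v)$ flips every $q_{\tilde v,\tilde v,e}$, hence sends $\bs{N^{(\tilde v)}}$ to $\deg{\tilde v}-\bs{N^{(\tilde v)}}$; leaving the restriction of the eigenfunction to $\Gamma\setminus\widetilde\Gamma$ literally untouched keeps every $q_{w,w,e}$ with $w\in\Gamma\setminus\widetilde\Gamma$, hence every $\bs{N^{(w)}}$ there. The hypothesis that all edges between $\widetilde\Gamma$ and $\Gamma\setminus\widetilde\Gamma$ are bridges enters exactly in the bridge re-selection: a bridge carries a length which is a free parameter along $\mgen$, and this is precisely the degree of freedom needed to absorb the boundary mismatch produced by the dual transformation — on a non-bridge connecting edge there would be no such freedom. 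One must then check that $T$ preserves the secular equation, preserves simplicity and genericity (all the value/derivative relations are ``sign-preserving or sign-reversing'', so non-vanishing is preserved), and is an involution; this is bookkeeping of the kind done in the proof of Lemma~\ref{lem:dual-star}.

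\textbf{Measure invariance and conclusion.} Near a generic point, realise $\mgen$ as a graph over coordinates that keep the edge coordinates of $\widetilde\Gamma$ and of $\Gamma\setminus\widetilde\Gamma$ free and express (at least one of) the bridge coordinates through the secular equation. In these coordinates $T$ acts by the reflections $\kappa_e\mapsto\pi-\kappa_e$ on the $\widetilde\Gamma$ block and by the identity on the $\Gamma\setminus\widetilde\Gamma$ block, in particular with unit Jacobian, whereas $\rmd\BGm$ is a constant multiple of $|\lv\cdot\hat n|$ times the coordinate volume and, precisely because the connecting edges are bridges, $\lv\cdot\hat n$ splits into a $\widetilde\Gamma$-contribution, a $\Gamma\setminus\widetilde\Gamma$-contribution, and bridge-length terms; the second is untouched by $T$, the third is constant, and the first is invariant because $T$ enters it only through squares and $\sec^2$-type expressions of the reflected coordinates. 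Hence $T_*\BGm=\BGm$. Finally, using $T^2=\mathrm{id}$, $\bs{N^{(\tilde v)}}\circ T=\deg{\tilde v}-\bs{N^{(\tilde v)}}$ and $\overrightarrow{\bs N}\circ T=\overrightarrow{\bs N}$, the map $T$ carries $(\bs{N^{(\tilde v)}})^{-1}(j)\cap\overrightarrow{\bs N}^{-1}(\vec a)$ bijectively and measure-preservingly onto $(\bs{N^{(\tilde v)}})^{-1}(\deg{\tilde v}-j)\cap\overrightarrow{\bs N}^{-1}(\vec a)$, so the two numerators above agree, and dividing by $\BGm(\overrightarrow{\bs N}^{-1}(\vec a))$ gives \eqref{eq: independent_symmetry_of_local_spectral_positions}.

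The main obstacle is the middle step: writing down $T$ and verifying it is well defined on $\mgen$ — well-definedness being exactly what the bridge hypothesis buys (and the non-bipartite case of $\widetilde\Gamma$, and vertices incident to several bridges, needing a little care in how the transformation on $\widetilde\Gamma$ and the bridge re-lengthening are combined). The reduction and the Jacobian/normal-vector computation are routine once $T$ is in hand. A viable alternative is to derive the statement alongside the block-decomposition of the nodal surplus in \cite{AloBanBer_cmp18}, which rests on the same phenomenon that bridges decouple the secular manifold.
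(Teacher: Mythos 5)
Your reduction is exactly the paper's: pass to $\mgen$ via Lemma \ref{lem: functions_on_secular_manifold_existence_and_symmetry}, note that all the level sets involved are unions of connected components (hence clopen, hence Jordan), rewrite the conditional probabilities as ratios of $\BGm$-measures, and then look for a $\BGm$-preserving involution $T$ with $\bs{N^{(\tilde v)}}\circ T=\deg\tilde v-\bs{N^{(\tilde v)}}$ and $\bs{N^{(w)}}\circ T=\bs{N^{(w)}}$ for interior $w$ in $\Gamma\setminus\widetilde\Gamma$. That framework is correct and is precisely how the paper proceeds.

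The genuine gap is that you never actually produce $T$, and the construction you sketch is the hard direction. You try to modify the canonical eigenfunction directly on $\widetilde\Gamma$ (keeping values, negating derivatives there) while leaving $\Gamma\setminus\widetilde\Gamma$ literally untouched; as you yourself concede, this runs into consistency problems on cycles of $\widetilde\Gamma$ (your bipartite/$2$-colouring caveat), at vertices meeting several bridges, and in verifying that the resulting $\kv'$ still lies on $\mgen$. Since $\widetilde\Gamma$ is allowed to contain cycles, these are not edge cases one can wave away, and the measure-invariance computation cannot be checked for an undefined map. The paper sidesteps all of this by composing \emph{known} involutions: it takes $T=\I\circ\mathcal{R}_{v_1,e_1}\circ\cdots\circ\mathcal{R}_{v_m,e_m}$, where $\I$ is the global inversion of Lemma \ref{lem: Inversion-properties} and the $\mathcal{R}_{v_i,e_i}$ are the bridge-inversion maps of Lemma \ref{lem: bridge-inv-properties} (imported from \cite[Lem.~4.15]{AloBanBer_cmp18}), one for each connecting bridge $e_i$. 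The inversion $\I$ flips the sign of every $q_{v,v,e}$ once, and each $\mathcal{R}_{v_i,e_i}$ flips back exactly the $q$'s of the component $\Gamma_i$ beyond its bridge; so every interior vertex of $\Gamma\setminus\widetilde\Gamma$ has its signs flipped twice (net unchanged) while every vertex of $\widetilde\Gamma$ has them flipped once (net reversed), and by (\ref{eq:definition of Nv-1}) this gives exactly the two identities you need. Measure preservation and invariance of $\mgen$ are then inherited from Lemmas \ref{lem: Inversion-properties} and \ref{lem: bridge-inv-properties} rather than proved by a Jacobian computation. Your closing remark that one could lean on the block decomposition of \cite{AloBanBer_cmp18} is pointing at the right ingredient; to make the proof complete you should replace your ad hoc dual-type transformation by these bridge-inversion maps (or prove their existence and properties yourself, which is the real content missing from the proposal).
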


Before proving the proposition above we first point out and prove
a few of its implications, the last of which is exactly the statement
of Theorem\emph{ }\ref{Thm: (3,1)-regular-trees}.
\begin{cor}
\label{cor: independent_symmetry_of_local_spectra_positions_implications}
Let $\Gamma$ be a standard graph with rationally independent edge
lengths.
\end{cor}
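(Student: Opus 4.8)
The plan is to obtain the corollary as a short chain of consequences of Proposition \ref{prop: independent_symmetry_of_spectral_positions}, the final link being the binomial law of Theorem \ref{Thm: (3,1)-regular-trees}. The first step is to specialise Proposition \ref{prop: independent_symmetry_of_spectral_positions} to tree graphs. In a tree every edge is a bridge, so for any interior vertex $v$ one may apply the proposition with $\widetilde\Gamma$ equal to the single vertex $v$ (carrying no edges): the edges joining $\widetilde\Gamma$ to $\Gamma\setminus\widetilde\Gamma$ are exactly the $\deg v$ edges at $v$, all bridges, and $\Gamma\setminus\widetilde\Gamma$ contains every interior vertex of $\Gamma$ except $v$. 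The proposition then states that $N^{(v)}$ is symmetric about $\tfrac12\deg v$ conditionally on the vector $\bigl(N^{(u)}\bigr)_{u\in\Vint,\,u\neq v}$. Marginalising recovers the plain symmetry of $N^{(v)}$ about $\tfrac12\deg v$ (Proposition \ref{prop: statistics_of_spectral_position}(\ref{enu:prop-statistics_of_local_spectral_position-2})), but it is the conditional form that is used below.

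Next I would upgrade this conditional symmetry to full independence in the $(3,1)$-regular tree case. There every interior vertex has degree $3$, so the Neumann domain $\Omega^{(v)}$ is, for all but finitely many $n$, a star with three edges (Lemma \ref{lem:spectral-position-equals-nodal-count}), and Proposition \ref{prop:local_observables_bounds} gives $N^{(v)}\in\{1,2\}$. On a two-point set, symmetry about $\tfrac32$ conditionally on the remaining spectral positions forces $\P(N^{(v)}=1\mid(N^{(u)})_{u\neq v})=\P(N^{(v)}=2\mid(N^{(u)})_{u\neq v})=\tfrac12$ for every value of the conditioning variables; since this conditional probability is the constant $\tfrac12$, the same value persists after marginalising over any subset of the other $N^{(u)}$. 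Listing the $m:=|\partial\Gamma|-2$ interior vertices as $v_1,\dots,v_m$ (the counts $m=|\partial\Gamma|-2$ and $E=2|\partial\Gamma|-3$ follow from $E=V-1$ and $\sum_v\deg v=2E$ for a $(3,1)$-regular tree) and telescoping,
\[
\P\bigl(N^{(v_1)}=a_1,\dots,N^{(v_m)}=a_m\bigr)=\prod_{i=1}^{m}\P\bigl(N^{(v_i)}=a_i\mid N^{(v_1)}=a_1,\dots,N^{(v_{i-1})}=a_{i-1}\bigr)=2^{-m},
\]
so the family $\bigl(N^{(v)}\bigr)_{v\in\Vint}$ is i.i.d.\ uniform on $\{1,2\}$.

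Finally I would convert this into the distribution of the Neumann surplus via the local-global identity (\ref{eq:sum_of_spec_pos}) of Proposition \ref{prop:local-global-connections}. A tree has $\beta=0$, hence $\sigma\equiv0$ by the bound (\ref{eq: nodal surplus bounds}); and $E-|\partial\Gamma|=|\partial\Gamma|-3$ for a $(3,1)$-regular tree. Thus for every $n\in\G$ with $k_n>\pi/L_{\min}$,
\[
-\omega(n)-1=\sum_{v\in\Vint}N^{(v)}(n)-\bigl(E-|\partial\Gamma|\bigr)-1=\sum_{v\in\Vint}\bigl(N^{(v)}(n)-1\bigr)=\bigl|\{\,v\in\Vint:N^{(v)}(n)=2\,\}\bigr|.
\]
As this holds for all but finitely many $n$, it is an identity of random variables with respect to $d_{\G}$; combined with the previous step, $-\omega-1$ is a sum of $|\partial\Gamma|-2$ independent fair Bernoulli variables, i.e.\ $\mathrm{Bin}\bigl(|\partial\Gamma|-2,\tfrac12\bigr)$, which is exactly Theorem \ref{Thm: (3,1)-regular-trees}.

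The step I expect to be the main obstacle is the second one: turning the conditional-symmetry statement of Proposition \ref{prop: independent_symmetry_of_spectral_positions} into genuine mutual independence. This goes through cleanly here only because the spectral positions on a $(3,1)$-regular tree are two-valued, so conditional symmetry pins the conditional law down completely; care is needed to justify that marginalisation preserves a constant conditional probability, and to handle the finitely many low eigenvalues (where the star-graph description of $\Omega^{(v)}$ and identity (\ref{eq:sum_of_spec_pos}) may fail) — the latter exactly as in the proof of Proposition \ref{prop: statistics_of_spectral_position}, by working with the auxiliary variable $\widetilde N^{(v)}=\bs{N^{(v)}}\circ\flow$, which agrees with $N^{(v)}$ outside a finite set and hence has the same $d_{\G}$-distribution.
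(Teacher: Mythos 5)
Your argument for the $(3,1)$-regular tree parts of the corollary is correct and follows essentially the same route as the paper: apply Proposition \ref{prop: independent_symmetry_of_spectral_positions} with $\widetilde{\Gamma}=\{\tilde v\}$ (legitimate since every edge of a tree is a bridge), use the fact that $N^{(\tilde v)}$ is two-valued so that conditional symmetry about $\tfrac{3}{2}$ pins the conditional law to $(\tfrac12,\tfrac12)$ and hence yields mutual independence, and then convert via the local--global identity (\ref{eq:sum_of_spec_pos}) with $\sigma\equiv0$, $E=V-1$ and $|\V\setminus\partial\Gamma|=|\partial\Gamma|-2$ into $-\omega-1=\sum_{v}(N^{(v)}-1)\sim\mathrm{Bin}(|\partial\Gamma|-2,\tfrac12)$, i.e.\ Theorem \ref{Thm: (3,1)-regular-trees}. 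Your explicit telescoping of conditional probabilities and your remark about handling the finitely many low eigenvalues via the auxiliary variable $\widetilde N^{(v)}$ are, if anything, slightly more careful than the paper's write-up.

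The one omission: the corollary also asserts (part (\ref{enu: cor-independent_symmetry-1})) that for a \emph{general} tree, the spectral positions $N^{(v)},N^{(u)}$ at two distinct interior vertices are uncorrelated, and your proposal does not address this. There the variables are not two-valued, so conditional symmetry does not force independence; the paper instead computes $\mathbb{E}\bigl[(N^{(v)}-\tfrac{d_v}{2})(N^{(u)}-\tfrac{d_u}{2})\bigr]$ directly, using the conditional symmetry of $N^{(v)}$ given $N^{(u)}$ (obtained from Proposition \ref{prop: independent_symmetry_of_spectral_positions} with any $\widetilde{\Gamma}$ containing $v$ but not $u$) to make the inner sum over $j$ cancel in pairs, giving covariance zero. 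This is a short additional step, entirely in the spirit of what you wrote, but it is needed to cover the full statement.
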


\begin{enumerate}
\item \label{enu: cor-independent_symmetry-1} If $\Gamma$ is a tree graph
and $v,u\in\V\backslash\partial\Gamma$ are two different vertices
then their spectral positions are uncorrelated random variables.
\item \label{enu: cor-independent_symmetry-2} If $\Gamma$ is a $(3,1)$-regular
tree graph then its spectral position random variables, $\left\{ N^{(v)}\right\} _{v\in\V\backslash\partial\Gamma}$
are mutually independent.
\item \label{enu: cor-independent_symmetry-3} If $\Gamma$ is a $(3,1)$-regular
tree graph then the probability distribution of the random variable
$-\omega-1$ is binomial, $\textrm{Bin}(\left|\partial\Gamma\right|-2,\frac{1}{2})$.
\end{enumerate}
\begin{proof}
\uline{Proof of part (\mbox{\ref{enu: cor-independent_symmetry-1}})}.
To apply Proposition \ref{prop: independent_symmetry_of_spectral_positions}
choose for $\widetilde{\Gamma}$ any sub-graph of $\Gamma$ which
contains $v$, but does not contain $u$ (we may even choose $\widetilde{\Gamma}$
to be just $v$). Since $\Gamma$ is a tree, each of its edges is
a bridge and hence indeed any such choice of $\widetilde{\Gamma}$
satisfies the conditions of Proposition \ref{prop: independent_symmetry_of_spectral_positions}.

Now, as a particular case of (\ref{eq: independent_symmetry_of_local_spectral_positions})
we get that
\begin{equation}
\forall\thinspace1\leq j\leq d_{v}-1,\quad\P\left(N^{(v)}=j\thinspace|\thinspace N^{(u)}\right)=\P\left(N^{(v)}=d_{v}-j\thinspace|\thinspace N^{(u)}\right)\label{eq: symmetry_independent_of_two_vertices}
\end{equation}
 and so

\begin{align}
\mathbb{E}\left[\left(N^{(v)}-\frac{d_{v}}{2}\right)\left(N^{(u)}-\frac{d_{u}}{2}\right)\right] & =\sum_{j=1}^{d_{v}-1}\sum_{i=1}^{d_{u}-1}\left(j-\frac{d_{v}}{2}\right)\left(i-\frac{d_{u}}{2}\right)\thinspace\P\left(N^{(v)}=j\thinspace\wedge\thinspace N^{(u)}=i\right)\nonumber \\
=\sum_{j=1}^{d_{v}-1}\sum_{i=1}^{d_{u}-1} & \left(j-\frac{d_{v}}{2}\right)\left(i-\frac{d_{u}}{2}\right)\thinspace\P\left(N^{(v)}=j\thinspace|\thinspace N^{(u)}=i\right)\P\left(N^{(u)}=i\right)\nonumber \\
=\sum_{i=1}^{d_{u}-1}\left(i-\frac{d_{u}}{2}\right) & \P\left(N^{(u)}=i\right)\sum_{j=1}^{d_{v}-1}\left(j-\frac{d_{v}}{2}\right)\P\left(N^{(v)}=j\thinspace|\thinspace N^{(u)}=i\right)=0,\label{eq: expected_value_of_product_of_spctral_positions}
\end{align}
where the sum over $j$ in the last line contains terms which cancel
each other by (\ref{eq: symmetry_independent_of_two_vertices}) and
so the whole sum vanishes. By Proposition \ref{prop: statistics_of_spectral_position},(\ref{enu:prop-statistics_of_local_spectral_position-2})
we have $\mbox{\ensuremath{\mathbb{E}\left[N^{(v)}-\frac{d_{v}}{2}\right]=\mathbb{E}\left[N^{(u)}-\frac{d_{u}}{2}\right]=0}}$,
and so $\mathbb{E}\left[\left(N^{(v)}-\frac{d_{v}}{2}\right)\left(N^{(u)}-\frac{d_{u}}{2}\right)\right]=\mathbb{E}\left[N^{(v)}-\frac{d_{v}}{2}\right]\mathbb{E}\left[N^{(u)}-\frac{d_{u}}{2}\right]$
and $N^{(v)}-\frac{d_{v}}{2}$, $N^{(u)}-\frac{d_{u}}{2}$ are uncorrelated
random variables. We conclude that $N^{(v)}$, $N^{(u)}$ are uncorrelated
random variables.

We remark that the calculation in (\ref{eq: expected_value_of_product_of_spctral_positions})
may be extended to include not just two individual vertices, but any
two sets of vertices, and thus prove a more general statement.

\uline{Proof of part (\mbox{\ref{enu: cor-independent_symmetry-2}})}

Let $\tilde{v}$ be a vertex of $\Gamma\backslash\partial\Gamma$.
Namely, $\tilde{v}$ is not a boundary vertex, and so as $\Gamma$
is $(3,1)$-regular, $d_{\tilde{v}}=3.$ Choose $\widetilde{\Gamma}$
to be $\tilde{v}$ and apply (\ref{eq: independent_symmetry_of_local_spectral_positions})
in Proposition \ref{prop: independent_symmetry_of_spectral_positions},
with $d_{\tilde{v}}=3$ and $j=1$ to get
\[
\P\left(N^{(\tilde{v})}=1\thinspace|\thinspace\overrightarrow{N}^{(\V\backslash\tilde{v})}\right)=\P\left(N^{(\tilde{v})}=2\thinspace|\thinspace\overrightarrow{N}^{(\V\backslash\tilde{v})}\right),
\]
where $\overrightarrow{N}^{(\V\backslash\tilde{v})}$ indicates the
vector of random variables of spectral positions of all non-boundary
vertices of $\Gamma$ except $\tilde{v}$. Now, as $N^{(\tilde{v})}$
gets only the two values $1$ and $2$, we obtain that it is independent
from all other random variables in $\overrightarrow{N}^{(\V\backslash\tilde{v})}$,
as stated in part (\ref{enu: cor-independent_symmetry-2}) of the
Proposition.

\uline{Proof of part (\mbox{\ref{enu: cor-independent_symmetry-3}})}

Applying (\ref{eq:sum_of_spec_pos}) in Proposition \ref{prop:local-global-connections}
gives
\[
\sum_{v\in\V\backslash\partial\Gamma}N^{(v)}=\sigma-\omega+(\left|\E\right|-\left|\partial\Gamma\right|).
\]

As $\Gamma$ is a tree graph, we have $\left|\E\right|=\left|\V\right|-1$
and $\sigma\equiv0$ (see (\ref{eq: nodal surplus bounds})) and so
\[
\omega=-\sum_{v\in\V\backslash\partial\Gamma}N^{(v)}+\left|\V\backslash\partial\Gamma\right|-1,
\]
which we rewrite as
\begin{equation}
-\omega-1=\sum_{v\in\V\backslash\partial\Gamma}\left(N^{(v)}-1\right).\label{eq: Neumann_surplus_as_sum_of_Bernoullis}
\end{equation}
The right hand side of (\ref{eq: Neumann_surplus_as_sum_of_Bernoullis})
is a sum of independent random variables, $N^{(v)}-1$, as is proven
in the previous part. In addition, by Proposition \ref{prop: statistics_of_spectral_position},(\ref{enu:prop-statistics_of_local_spectral_position-2}),
each $N^{(v)}-1$ is a symmetric Bernoulli random variable (it obtains
$0$ and $1$, each with probability $\frac{1}{2}$), and so their
sum is distributed as $\mathrm{Bin}(\left|\V\backslash\partial\Gamma\right|,\frac{1}{2})$.
To prove the required statement we only need to show that $\left|\V\backslash\partial\Gamma\right|=\left|\partial\Gamma\right|-2$,
which follows from
\[
\left|\E\right|=\left|\V\right|-1\quad\textrm{and}\quad2\left|\E\right|=3\left|\V\backslash\partial\Gamma\right|+\left|\partial\Gamma\right|.
\]
The first equality above comes from $\Gamma$ being a tree and the
second from summing all vertex degrees and using that $\Gamma$ is
$(3,1)$-regular.
\end{proof}
We proceed to prove Proposition \ref{prop: independent_symmetry_of_spectral_positions}.
The next lemma forms an essential tool towards this proof. This lemma
is of similar spirit to Lemma \ref{lem: Inversion-properties}; it
provides an involution of the secular manifold, which is applicable
for a graph that contains a bridge.
\begin{lem}
\label{lem: bridge-inv-properties}\emph{\cite[Lem. 4.15]{AloBanBer_cmp18}}

Let $\Gamma$ be a standard connected graph. Let $e$ be a bridge
of $\Gamma$ and denote by $\Gamma_{1},\Gamma_{2}$ the sub-graphs
of $\Gamma$ which are the two connected components of $\Gamma\backslash\{e\}$.
Let $v$ be the vertex of $\Gamma_{1}$ which is connected to $e$.
There exists a map $\rve:\torus\rightarrow\torus$ (which we call
the bridge-inversion map) such that
\begin{enumerate}
\item \label{enu: lem: bridge-inv-properties-1} The bridge-inversion is
an involution, $(\rve)^{2}=\mathrm{Id}$.
\item \label{enu: lem: bridge-inv-properties-2} Each of the manifolds,
$\Sigma$, $\mreg$ and $\mgen$ is invariant under the bridge-inversion.
\item \label{enu: lem: bridge-inv-properties-3} The restriction of the
bridge-inversion to the generic part of the secular manifold, $\rve|_{\mgen}$,
is a map which preserves the Barra-Gaspard measure, $\BGm$.
\item \label{enu: lem: bridge-inv-properties-4} For every $u\in\V$ and
$e\in\E_{u}$ the functions $p_{u,u}$ and $q_{u,u,e}$ (defined in
Lemmas \ref{lem: Canonical_eigenfunctions_and_Secular_mfld} and \ref{lem: Inversion-properties})
satisfy the following symmetry\textbackslash anti-symmetry with respect
to $\rve$:
\[
\forall\kv\in\mreg,\quad p_{u,u}\left(\rve\left(\kv\right)\right)=p_{u,u}\left(\kv\right),
\]
and
\[
\forall\kv\in\mreg,\quad\quad q_{u,u,e}\left(\rve\left(\kv\right)\right)=\begin{cases}
q_{u,u,e}\left(\kv\right) & u\in\V_{1}\\
-q_{u,u,e}\left(\kv\right) & u\in\V_{2}
\end{cases}.
\]
\end{enumerate}
\end{lem}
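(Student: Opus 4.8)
The plan is to build $\rve$ from an explicit surgery on eigenfunctions across the bridge and then read off each of the four assertions. Write $e=(v,w)$ with $v\in\Gamma_{1}$ and $w\in\Gamma_{2}$. The key feature of a bridge is that the vertex (secular) conditions of $\Gamma_{\kv}$ split into the conditions living on $\Gamma_{1}$ and those living on $\Gamma_{2}$, coupled only through the two endpoint equations at $v$ and at $w$. So, given $\kv\in\Sigma$ with an eigenfunction $f$ of $\Gamma_{\kv}$ at eigenvalue $1$, I would set $\tilde{f}:=f$ on $\Gamma_{1}$ and $\tilde{f}:=-f$ on $\Gamma_{2}$, and adjust only the bridge length so that this glues into a genuine eigenfunction. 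Writing $M(\ell)$ for the $2\times2$ matrix with rows $(\cos\ell,\sin\ell)$ and $(\sin\ell,-\cos\ell)$, which is the transfer matrix along an edge of length $\ell$ at eigenvalue $1$ relating the boundary pair $(f(v),\d_{e}f(v))$ to $(f(w),\d_{e}f(w))$, one has $M(\ell+\pi)=-M(\ell)$; hence the data $(f(v),\d_{e}f(v))$ and $(-f(w),-\d_{e}f(w))$ are compatible across an edge precisely when the length is changed from $\kappa_{e}$ to $\modp{\kappa_{e}+\pi}$. This singles out the map $\rve:\torus\to\torus$ which fixes every coordinate except the one attached to $e$ and sends $\kappa_{e}\mapsto\modp{\kappa_{e}+\pi}$; assertion (\ref{enu: lem: bridge-inv-properties-1}) is then immediate, since adding $\pi$ twice to a torus coordinate is the identity.

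For (\ref{enu: lem: bridge-inv-properties-2}) I would verify that the surgered $\tilde{f}$ satisfies the Neumann conditions of $\Gamma_{\rve(\kv)}$: continuity and the derivative balance $\sum_{e'\in\E_{u}}\d_{e'}\tilde{f}(u)=0$ are inherited verbatim at every interior vertex $u$ of $\Gamma_{1}$ (there $\tilde{f}=f$) and of $\Gamma_{2}$ (there $\tilde{f}$ and all its outgoing derivatives are $-1$ times those of $f$), while at $v$ and at $w$ the balance closes up exactly because of the $\pi$-shift of the bridge length, as arranged above. Running the same construction for $\rve=\rve^{-1}$ shows $\Sigma$ is $\rve$-invariant. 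Since $f\mapsto\tilde{f}$ is a linear bijection between the eigenspace of $\Gamma_{\kv}$ and that of $\Gamma_{\rve(\kv)}$ at eigenvalue $1$, multiplicity is preserved and $\mreg$ is $\rve$-invariant; and since $\tilde{f}(u)=\pm f(u)$ and $\d_{e'}\tilde{f}(u)=\pm\d_{e'}f(u)$ at every vertex $u$, each of the genericity conditions of Definition \ref{def: generic_eigenfunction} survives, so $\mgen$ is $\rve$-invariant as well.

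For (\ref{enu: lem: bridge-inv-properties-3}) the point is that $\rve$ is a translation of the flat torus $\torus$, hence an isometry; by (\ref{enu: lem: bridge-inv-properties-2}) it maps $\mgen$ onto itself, so it preserves the Riemannian volume element $\rmd s$ on $\mgen$ and carries the unit normal at $\kv$ to the unit normal at $\rve(\kv)$ without rotating it. Therefore the weight $\lv\cdot\hat{n}$ appearing in (\ref{eq: BG-measure-definition}) is unchanged along $\rve$, and $(\rve)_{*}\BGm=\BGm$ follows.

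For (\ref{enu: lem: bridge-inv-properties-4}) I would invoke Lemma \ref{lem: Canonical_eigenfunctions_and_Secular_mfld}(\ref{enu: lem-Canonical_eigenfunctions_and_Secular_mfld-3}) to identify the canonical eigenfunction $f_{\rve(\kv)}$, up to a nonzero scalar, with the $\tilde{f}$ built from $f_{\kv}$; then $f_{\rve(\kv)}(u)$ and $\d_{e}f_{\rve(\kv)}(u)$ equal $f_{\kv}(u)$ and $\d_{e}f_{\kv}(u)$ for $u\in\V_{1}$, and minus those for $u\in\V_{2}$, and one substitutes into the gauge-invariant products $p_{u,u}=|f_{\kv}(u)|^{2}$ and $q_{u,u,e}=f_{\kv}(u)\overline{\d_{e}f_{\kv}(u)}$. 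The symmetry of $p_{u,u}$ drops out at once; the asserted anti-symmetry of $q_{u,u,e}$ on the $\Gamma_{2}$-side, however, does not come from this naive bookkeeping alone (where the two sign changes would cancel), and recovering it requires a careful choice of the trigonometric-polynomial representatives of $p$ and $q$ on all of $\torus$ --- as in Lemma \ref{lem: Inversion-properties}, made compatibly with $\I$ and with every bridge-inversion simultaneously --- together with the exact normalization convention for the canonical eigenfunctions and the orientation convention for outgoing derivatives along the re-parametrized bridge edge. I expect precisely this last step to be the main obstacle; everything else in the lemma reduces to the routine verification of the vertex conditions for $\tilde{f}$ and to the elementary fact that $\rve$ is a coordinate translation of $\torus$.
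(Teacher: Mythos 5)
Your candidate map $\rve\colon\kappa_{e}\mapsto\modp{\kappa_{e}+\pi}$ (all other coordinates fixed), together with the surgery $\tilde f=f$ on $\Gamma_{1}$ and $\tilde f=-f$ on $\Gamma_{2}$, does yield a measure-preserving involution of $\torus$ preserving $\Sigma$, $\mreg$ and $\mgen$; your verifications of parts (\ref{enu: lem: bridge-inv-properties-1})--(\ref{enu: lem: bridge-inv-properties-3}) are sound, the gluing across the lengthened bridge working because $f(x+\pi)=-f(x)$ for any solution of $f''=-f$. The fatal problem is part (\ref{enu: lem: bridge-inv-properties-4}), and it is not the technicality about representatives and normalizations that you describe: your map is simply not the bridge-inversion of the lemma. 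By your own bookkeeping, for $u\in\V_{2}$ one has $\tilde f(u)=-f(u)$ and $\partial_{e'}\tilde f(u)=-\partial_{e'}f(u)$, hence $\tilde f(u)\,\overline{\partial_{e'}\tilde f(u)}=f(u)\,\overline{\partial_{e'}f(u)}$. For $\kv\in\mgen$ the sign of $q_{u,u,e'}(\kv)$ is intrinsic: it equals $\sgn\bigl(g(u)\,\partial_{e'}g(u)\bigr)$ for any real eigenfunction $g$ of $\Gamma_{\kv}$, independently of the phase or normalization of $f_{\kv}$ and of the choice of trigonometric-polynomial extension off $\mreg$. Therefore no choice of representative can turn your computation into the required $q_{u,u,e'}\circ\rve=-q_{u,u,e'}$ on $\V_{2}$; your map provably satisfies $\sgn(q_{u,u,e'}\circ\rve)=+\sgn(q_{u,u,e'})$ there. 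Since the one-sided sign flip is exactly what drives the proof of Proposition \ref{prop: independent_symmetry_of_spectral_positions}, this is the essential content of the lemma and cannot be deferred.

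What the correct map must do, at the level of vertex data, is act like the global inversion $\I$ but only on the $\Gamma_{2}$ side: values preserved and outgoing derivatives negated (compare (\ref{eq: canonical-func-after-inversion})), which on the internal coordinates of $\Gamma_{2}$ is realized by $\kappa\mapsto\modp{-\kappa}$ together with $y\mapsto f(-y)$, not by an overall sign change of $f$. The genuine difficulty, which your proposal does not confront, sits on the bridge coordinate: matching the untouched data at $v$ to the reflected data at $w$ forces the new bridge length to be $2\varphi-\kappa_{e}$, where $\varphi$ is the phase of $f|_{e}$ at $v$, so no constant shift or reflection of $\kappa_{e}$ can work for all $\kv$. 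This is why the cited proofs (\cite[Lem.~4.15]{AloBanBer_cmp18}, \cite[Lem.~4.42]{Alon_PhDThesis} --- note the present paper does not reprove the lemma but defers to them) build $\rve$ from the factorization of the secular data across the bridge rather than by pointwise surgery. Finally, even where your signs are correct, identifying $f_{\rve(\kv)}$ with $\tilde f$ only up to a nonzero scalar gives $p_{u,u}\circ\rve=\lambda\,p_{u,u}$ with $\lambda>0$ unknown; the exact equalities of part (\ref{enu: lem: bridge-inv-properties-4}) require comparing $\mathrm{adj}(\Id-\rme^{\rmi\kv}S)$ at $\kv$ and at $\rve(\kv)$, as is done for $\I$ in Appendix \ref{sec: appendix=00005C-proof-of-two-lemmas}.
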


The proof of this lemma may be found within the proofs of \cite[Lem. 4.15]{AloBanBer_cmp18}
and \cite[Lem. 4.42]{Alon_PhDThesis}.

\begin{proof}
[Proof of Proposition \ref{prop: independent_symmetry_of_spectral_positions}]

This proof is somewhat similar in spirit to the one of Proposition
\ref{prop: statistics_of_spectral_position},(\ref{enu:prop-statistics_of_local_spectral_position-2}),
where it was proven that the spectral position of any vertex is a
symmetric random variable. Yet, to prove that it is not only symmetric,
but independently symmetric (when conditioning on spectral position
of other vertices), requires some more work.

Let $\Gamma$, $\widetilde{\Gamma}$ and $\tilde{v}$ as in the statement
of the proposition. Let $j\in\N$ and let $\vec{n}$ be a vector of
natural numbers of length $\left|\Gamma\backslash\left(\partial\Gamma\cup\widetilde{\Gamma}\right)\right|$.
So, to prove the proposition we need to show that
\begin{equation}
\P\left(N^{(\tilde{v})}=j\thinspace\wedge\thinspace\overrightarrow{N}^{(\Gamma\backslash\widetilde{\Gamma})}=\vec{n}\right)=\P\left(N^{(\tilde{v})}=d_{\tilde{v}}-j\thinspace\wedge\thinspace\overrightarrow{N}^{(\Gamma\backslash\widetilde{\Gamma})}=\vec{n}\right).\label{eq: symmetric_propbabilites_spectral_position}
\end{equation}

Translating (\ref{eq: symmetric_propbabilites_spectral_position})
into densities gives
\begin{equation}
d_{\G}\left(\left(N^{(\tilde{v})}\right)^{-1}(j)\thinspace\cap\thinspace\left(\overrightarrow{N}^{(\Gamma\backslash\widetilde{\Gamma})}\right)^{-1}(\vec{n})\right)=d_{\G}\left(\left(N^{(\tilde{v})}\right)^{-1}(d_{\tilde{v}}-j)\thinspace\cap\thinspace\left(\overrightarrow{N}^{(\Gamma\backslash\widetilde{\Gamma})}\right)^{-1}(\vec{n})\right).\label{eq: symmetric_densities_spectral_position}
\end{equation}
This may be further translated into the corresponding Barra-Gaspard
measure on the secular manifold,
\begin{equation}
\BGm\left(\left(\bs{N^{(\tilde{v})}}\right)^{-1}(j)\thinspace\cap\thinspace\left(\bs{\overrightarrow{N}^{(\Gamma\backslash\widetilde{\Gamma})}}\right)^{-1}(\vec{n})\right)=\BGm\left(\left(\bs{N^{(\tilde{v})}}\right)^{-1}(d_{\tilde{v}}-j)\thinspace\cap\thinspace\left(\bs{\overrightarrow{N}^{(\Gamma\backslash\widetilde{\Gamma})}}\right)^{-1}(\vec{n})\right),\label{eq: symmetric_measure_spectral_position}
\end{equation}
where $\bs{N^{(\tilde{v})}}$ and $\bs{\bs{\overrightarrow{N}^{(\Gamma\backslash\widetilde{\Gamma})}}}$
are the spectral position functions on the secular manifold which
were introduced in Lemma \ref{lem: functions_on_secular_manifold_existence_and_symmetry}.
The argument for translating the natural densities, $d_{\G}$ to the
Barra-Gaspard measure, $\BGm$ , is as in the proof of Proposition
\ref{prop: statistics_of_spectral_position} (see (\ref{eq: density_equals_measure_spectral_pos})
there).

We proceed to prove (\ref{eq: symmetric_measure_spectral_position}).
Let $\left\{ e_{i}\right\} _{i=1}^{m}$ be all the edges which connect
$\widetilde{\Gamma}$ to $\Gamma\backslash\widetilde{\Gamma}$. By
assumption, all those edges are bridges. For each $e_{i}$ denote
by $v_{i}$ the vertex of $\widetilde{\Gamma}$ which is connected
to $e_{i}$. Further denote by $\Gamma_{i}$ the connected component
of $\Gamma\backslash e_{i}$ which does not contain $\widetilde{\Gamma}$.
With those notations, we may write the following decomposition of
the graph
\[
\Gamma=\widetilde{\Gamma}\cup\bigcup_{i=1}^{m}e_{i}\cup\bigcup_{i=1}^{m}\Gamma_{i},
\]
which is disjoint up to the vertices at the endpoints of the bridges
$\left\{ e_{i}\right\} _{i=1}^{m}$. Consider the following composition
of involutions, $\I\cdot\mathcal{R}_{v_{1},e_{1}}\cdot\ldots\cdot\mathcal{R}_{v_{m},e_{m}}$
and note that all those involutions are $\BGm$ measure preserving
(Lemma \ref{lem: Inversion-properties},(\ref{enu: lem-Inversion-properties-3})
and Lemma \ref{lem: bridge-inv-properties},(\ref{enu: lem: bridge-inv-properties-3})).
Therefore, in order to prove (\ref{eq: symmetric_measure_spectral_position})
it is enough to show that
\begin{equation}
\I\circ\mathcal{R}_{v_{1},e_{1}}\circ\ldots\circ\mathcal{R}_{v_{m},e_{m}}\left(\left(\bs{N^{(\tilde{v})}}\right)^{-1}(j)\right)=\left(\bs{N^{(\tilde{v})}}\right)^{-1}(d_{\tilde{v}}-j)\label{eq: action-of-involutions-1}
\end{equation}
 and
\begin{equation}
\I\circ\mathcal{R}_{v_{1},e_{1}}\circ\ldots\circ\mathcal{R}_{v_{m},e_{m}}\left(\left(\bs{\overrightarrow{N}^{(\Gamma\backslash\widetilde{\Gamma})}}\right)^{-1}(\vec{n})\right)=\left(\bs{\overrightarrow{N}^{(\Gamma\backslash\widetilde{\Gamma})}}\right)^{-1}(\vec{n}).\label{eq: action-of-involutions-2}
\end{equation}

To verify both, we recall that (see (\ref{eq:definition of Nv-1}))
\begin{equation}
\forall v\in\V,\quad\bs{N^{(v)}}(\kv)=\frac{1}{2}d_{v}-\frac{1}{2}\sum_{e\in\E_{v}}\sgn\left(q_{v,v,e}\left(\kv\right)\right)\label{eq: spectral_position_by_values_and_derivatives}
\end{equation}
and that (by Lemmas \ref{lem: Inversion-properties} and \ref{lem: bridge-inv-properties})
\begin{equation}
\forall v\in\V~\forall e\in\E_{v},\quad\quad q_{v,v,e}\left(\I\left(\kv\right)\right)=-q_{v,v,e}\left(\kv\right),\label{eq: action-of-involutions-3}
\end{equation}
\begin{equation}
\forall v\in\V~\forall e\in\E_{v},\quad\quad q_{v,v,e}\left(\mathcal{R}_{v_{i},e_{i}}(\kv)\right)=\begin{cases}
q_{v,v,e}\left(\kv\right) & v\in\Gamma\backslash\Gamma_{i}\\
-q_{v,v,e}\left(\kv\right) & v\in\Gamma_{i}
\end{cases}.\label{eq: action-of-involutions-4}
\end{equation}

Each vertex $v\in\Gamma\backslash\widetilde{\Gamma}$ belongs to exactly
one of the $\Gamma_{i}$ graphs, so by (\ref{eq: action-of-involutions-3})
and (\ref{eq: action-of-involutions-4}) we have that (\ref{eq: spectral_position_by_values_and_derivatives})
is invariant under $\I\circ\mathcal{R}_{v_{1},e_{1}}\circ\ldots\circ\mathcal{R}_{v_{m},e_{m}}$
(as each $\sgn\left(q_{v,v,e}\right)$ is inverted twice by its action
and hence unchanged). This implies
\[
\bs{N^{(v)}}(\I\circ\mathcal{R}_{v_{1},e_{1}}\circ\ldots\circ\mathcal{R}_{v_{m},e_{m}}(\kv))=\bs{N^{(v)}}(\kv),
\]
which proves (\ref{eq: action-of-involutions-2}). Similarly, any
vertex $\tilde{v}\in\widetilde{\Gamma}$ is not contained in any of
the $\Gamma_{i}$ graphs and so by (\ref{eq: action-of-involutions-3})
and (\ref{eq: action-of-involutions-4}) we have that each $\sgn\left(q_{\tilde{v},\tilde{v},e}\right)$
in (\ref{eq: spectral_position_by_values_and_derivatives}) is inverted
once by the action of $\I\circ\mathcal{R}_{v_{1},e_{1}}\circ\ldots\circ\mathcal{R}_{v_{m},e_{m}}$.
This implies
\[
\bs{N^{(\tilde{v})}}(\I\circ\mathcal{R}_{v_{1},e_{1}}\circ\ldots\circ\mathcal{R}_{v_{m},e_{m}}(\kv))=\deg v-\bs{N^{(\tilde{v})}}(\kv)
\]
 and proves (\ref{eq: action-of-involutions-1}).
\end{proof}

\section{Probability distribution of the wavelength capacity \protect \\
(proof of Proposition \ref{prop: statistics_of_wavelength_capacity})
\label{sec: proofs_wavelength_capacity}}

The proof of Proposition \ref{prop: statistics_of_wavelength_capacity}
is based on the existence and properties of the function $\Rvb$ defined
on $\mgen$ (Lemma \ref{lem: functions_on_secular_manifold_existence_and_symmetry}).
In the proof we need to argue that some level sets of $\Rvb$ are
of measure zero. To show it we use that $\mgen$ (and actually even
$\mreg$) is a real analytic manifold \cite{CdV_ahp15,AloBanBer_cmp18,Alon_PhDThesis}
and the following.
\begin{lem}
\label{lem: Zero-set-of-real-analyic} Let $\M\subset\torus$ be a
connected real analytic manifold of dimension $E-1$ and let $d\mu$
be its volume element. Let $g$ be a real analytic function on $\mathcal{M}$.
\begin{enumerate}
\item \label{enu: lem-zero-set-of-real-analytic-1} Either $g|_{\M}\equiv0$
or the zero set of $g$ is of co-dimension at least one in $\M$.
\item \label{enu: lem-zero-set-of-real-analytic-2} If $g$ is not constant
on $\mathcal{M}$ then the pre-image by $g$ of any set of (Lebesgue)
measure zero is of $\mu$-measure zero.
\end{enumerate}
\end{lem}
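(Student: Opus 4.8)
The plan is to prove Lemma~\ref{lem: Zero-set-of-real-analyic} in two parts, the first being a standard fact about real analytic functions and the second a consequence of it combined with a local Lipschitz/coarea argument.

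For part~(\ref{enu: lem-zero-set-of-real-analytic-1}), the strategy is to work locally in real analytic charts. Since $\M$ is connected, it suffices to show that the set $Z$ of points near which $g$ vanishes identically is both open (trivially) and closed in $\M$; if $Z$ is nonempty this forces $g\equiv 0$ on $\M$. For closedness, fix a point $p$ in the closure of $Z$, choose a connected real analytic chart $U$ around $p$ identified with an open subset of $\R^{E-1}$, and note that on the nonempty open set $U\cap Z$ the function $g$ vanishes; by the identity theorem for real analytic functions of several variables (all partial derivatives of $g$ vanish on an open set, hence everywhere on the connected $U$), $g\equiv 0$ on $U$, so $p\in Z$. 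Thus if $g\not\equiv 0$ then its zero set contains no interior point in any chart, i.e. it is a proper analytic subset; such a set has Hausdorff dimension at most $E-2$, hence is of co-dimension at least one and in particular of $\mu$-measure zero. (For the purposes of the paper the measure-zero conclusion is what matters, and it follows from covering $Z\cap U$ by countably many charts and invoking that the zero set of a nonzero real analytic function on a connected open subset of $\R^{E-1}$ has Lebesgue measure zero — a standard induction on dimension using Fubini.)

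For part~(\ref{enu: lem-zero-set-of-real-analytic-2}), let $A\subset\R$ have Lebesgue measure zero and suppose $g$ is non-constant on the connected $\M$. I would cover $\M$ by countably many chart domains and reduce to showing: for a non-constant real analytic $g$ on a connected open $\Omega\subset\R^{E-1}$, and $A\subset\R$ of measure zero, $g^{-1}(A)$ has measure zero. Split $\Omega$ into the critical set $C=\{x:\nabla g(x)=0\}$ and its complement. On $C$, apply part~(\ref{enu: lem-zero-set-of-real-analytic-1}) to the analytic function $\partial_i g$ for each $i$: since $g$ is non-constant some $\partial_i g\not\equiv 0$, so $C$ is contained in a proper analytic set and has measure zero, hence $g^{-1}(A)\cap C$ has measure zero. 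On the open set $\Omega\setminus C$, near every point some partial derivative is nonzero, so by the implicit function theorem $g$ is, in suitable local coordinates, a coordinate projection (up to a diffeomorphism); then $g^{-1}(A)$ locally looks like $(\text{diffeomorphism})^{-1}$ of $A\times\R^{E-2}$, which has measure zero since $A$ does (Fubini), and diffeomorphisms preserve null sets. Covering $\Omega\setminus C$ by countably many such neighborhoods finishes the argument.

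The main obstacle — really the only non-routine point — is handling the critical set in part~(\ref{enu: lem-zero-set-of-real-analytic-2}): one must ensure $g^{-1}(A)$ is null there, and the clean way is to observe that the critical set itself is already null (via part~(\ref{enu: lem-zero-set-of-real-analytic-1}) applied to the gradient components), which bypasses any delicate Sard-type estimate. Everything else is a packaging of the real analytic identity theorem, the implicit function theorem, and Fubini's theorem, together with the fact that a real analytic manifold is second countable so that all the ``countably many charts'' reductions are legitimate. I would state the single-variable/several-variable null-set fact about zero sets of real analytic functions as a cited standard result (e.g. from \cite{KraPar_real_analytic_funcs_book}) rather than reprove it.
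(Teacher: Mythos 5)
Your proof is correct and follows essentially the same strategy as the paper's: part (\ref{enu: lem-zero-set-of-real-analytic-1}) via the identity theorem plus an open--closed connectedness argument, and part (\ref{enu: lem-zero-set-of-real-analytic-2}) by isolating the critical set, which is shown to be null by applying part (\ref{enu: lem-zero-set-of-real-analytic-1}) to the components of the gradient. Two points of comparison. First, on the regular set you finish with a self-contained argument (implicit function theorem to flatten $g$ into a coordinate projection, then Fubini and invariance of null sets under diffeomorphisms), whereas the paper instead cites a theorem of Ponomarev asserting that a function whose critical set is null pulls back Lebesgue-null subsets of $\R$ to null sets; your route is more elementary and avoids the external citation, at the cost of a few extra lines. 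Second, there is a small gap in your part (\ref{enu: lem-zero-set-of-real-analytic-2}): when you ``reduce to showing'' the statement for a non-constant real analytic $g$ on a connected open $\Omega\subset\R^{E-1}$, you implicitly assume $g$ is non-constant on every chart domain. This does require an argument --- if $g$ were constant on some chart with its value lying in the null set $A$, then $g^{-1}(A)$ would contain an open set --- and the paper handles it explicitly with a second open--closed partition (charts on which $g$ is constant versus the rest), concluding from connectedness of $\M$ that non-constancy of $g$ on $\M$ forces non-constancy on every chart. The fix is exactly the identity-theorem argument you already deployed in part (\ref{enu: lem-zero-set-of-real-analytic-1}), applied to $g-c$, so this is an omission of a routine step rather than an error.
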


\begin{proof}
Denote the zero set of $g$ by $Z_{g}:=\left\{ \kv\in\M\,|\,g\left(\kv\right)=0\right\} .$
Choose some atlas $\left\{ \left(U_{n},~\varphi_{n}\right)\right\} $
for $\M$, such that for all $n$, $\varphi_{n}:U_{n}\rightarrow O_{n}\subset\R^{E-1}$.
Since $\mathcal{M}$ is a real analytic manifold and $g$ is real
analytic, we get that for all $n$, $g_{n}:=g\circ\varphi_{n}^{-1}$
is real analytic on $O_{n}\subset\R^{E-1}$. By \cite[Prop. 3]{Mit15}
we get that either $g_{n}|_{O_{n}}\equiv0$ or its zero set, $\varphi_{n}\left(Z_{g}\cap U_{n}\right)$
is of co-dimension at least one in $\R^{E-1}$. Hence, either $g|_{U_{n}}\equiv0$
or $Z_{g}\cap U_{n}$ has a positive co-dimension in $U_{n}$. We
use this dichotomy to define
\begin{align*}
N_{1} & :=\set n{U_{n}\subset Z_{g}}\\
N_{2} & :=\set n{Z_{g}\cap U_{n}\textrm{ has a positive co-dimension in }U_{n}}
\end{align*}

and

\begin{align*}
A=\cup_{n\in N_{1}}U_{n} & ,\quad B=\cup_{n\in N_{2}}U_{n}.
\end{align*}
Clearly, $A\cap B=\emptyset$ and so we get the partition of $\M=A\sqcup B$
as a union of disjoint open sets. Since $\mathcal{M}$ is connected
we get either $A=\M$ which implies $g|_{\M}\equiv0$ or $B=\M$ which
implies that $Z_{g}$ is of positive co-dimension in $\M$ and proves
the first part of the lemma.

To prove the second part of the Lemma, consider the gradient of $g_{n}$
which we denote by $\nabla g_{n}:O_{n}\rightarrow\R^{E-1}$ and its
zero set which we denote by $Z_{\nabla g_{n}}$. Each of the components
of $\nabla g_{n}$ is real analytic, since $g_{n}$ itself is real
analytic. Next, apply the argument from the first part of the proof
for each component $\partial_{i}g_{n}$ of the gradient. We get that
either the zero set of $\partial_{i}g_{n}$ is $O_{n}$ or it is of
positive co-dimension. If for all components $\partial_{i}g_{n}$
the zero set is $O_{n}$ this implies that $g_{n}$ is constant on
$U_{n}$. We denote
\begin{align*}
N_{3} & :=\set n{g_{n}\textrm{ is constant}}\\
N_{4} & :=\set n{g_{n}\textrm{ is not constant}}
\end{align*}
 and

\begin{align*}
C=\cup_{n\in N_{3}}U_{n} & ,\quad D=\cup_{n\in N_{4}}U_{n}.
\end{align*}
Exactly as above we get that since $\mathcal{M}$ is connected either
$C=\mathcal{M}$ or $D=\mathcal{M}$. In the former case $g$ is constant
on $\mathcal{M}$. But, $g$ is not constant by the assumption of
our lemma and so $D=\mathcal{M}$ and $g_{n}$ is not constant (for
any $n$). By the argument above, this means that for each $n$ there
exists $\partial_{i}g_{n}$, whose zero set is of positive co-dimension,
which implies that the zero set of the gradient, $\nabla g_{n}$ is
of positive co-dimension and hence is of measure zero (with respect
to Lebesgue measure on $\R^{E-1}$). This is the condition stated
in \cite[Thm. 1]{Ponomarev_smj87}, from which we deduce that if $B\subset\R$
is of measure zero then $g_{n}^{-1}(B)\subset\R^{E-1}$ is of measure
zero. Therefore, $\varphi_{n}^{-1}(g_{n}^{-1}(B))=g^{-1}(B)\cap U_{n}$
is of $\mu$-measure zero and this holds for all charts (since $D=\mathcal{M}$).
Hence, we get that $g^{-1}(B)$ is of $\mu$-measure zero.
\end{proof}
~\\

\begin{proof}
[Proof of Proposition \ref{prop: statistics_of_wavelength_capacity}]

The first step in the proof is to express the LHS of (\ref{eq:denstiy-as-integral-of-distribution})
as
\begin{equation}
\forall a<b,\quad d_{\G}\left(\left(\rho^{(v)}\right)^{-1}\left((a,b)\right)\right)=\BGm\left(\left(\bs{\rho^{(v)}}\right)^{-1}\left((a,b)\right)\right),\label{eq: pre-image_of_rho_density_and_measure}
\end{equation}
where $\bs{\rho^{\left(v\right)}}:\mgen\rightarrow\R$ is the real
analytic function whose existence and properties specified in Lemma
\ref{lem: functions_on_secular_manifold_existence_and_symmetry}.
In order to show (\ref{eq: pre-image_of_rho_density_and_measure})
we need to argue that $\left(\bs{\rho^{(v)}}\right)^{-1}\left((a,b)\right)$
is a Jordan set and then apply Corollary \ref{cor: density_equals_BG_for_Jordan_set}.

To prove that $\left(\bs{\rho^{(v)}}\right)^{-1}\left((a,b)\right)$
is a Jordan set, we start by noting that $\bs{\rho^{\left(v\right)}}:\mgen\rightarrow\R$
is a continuous function (being even real analytic). Hence, for every
open $I\subset\R$
\begin{align}
\partial\left\{ \left(\bs{\rho^{\left(v\right)}}\right)^{-1}\left(I\right)\right\} = & \overline{\left(\bs{\rho^{\left(v\right)}}\right)^{-1}\left(I\right)}\backslash\left(\bs{\rho^{\left(v\right)}}\right)^{-1}\left(I\right)\nonumber \\
\subset & \left(\bs{\rho^{\left(v\right)}}\right)^{-1}\left(\overline{I}\right)\backslash\left(\bs{\rho^{\left(v\right)}}\right)^{-1}\left(I\right)=\left(\bs{\rho^{\left(v\right)}}\right)^{-1}\left(\partial I\right)\label{eq: topology_of_inv_rho_1}
\end{align}
where $\overline{\phantom{I}}$ and $\partial\phantom{}$ denote,
correspondingly, the closure and boundary operators (in $\R$ or in
$\mgen$ according to the context). Now using that $I$ is open we
get $I\cap\partial I=\emptyset$ and in particular, $\left(\bs{\rho^{\left(v\right)}}\right)^{-1}\left(I\right)\cap\left(\bs{\rho^{\left(v\right)}}\right)^{-1}\left(\partial I\right)=\emptyset$.
As the first set is open and the second is closed we have
\begin{equation}
\partial\left\{ \left(\bs{\rho^{\left(v\right)}}\right)^{-1}\left(I\right)\right\} \cap\left\{ \left(\bs{\rho^{\left(v\right)}}\right)^{-1}\left(\partial I\right)\right\} ^{\circ}=\emptyset,\label{eq: topology_of_inv_rho_2}
\end{equation}
where $\phantom{I}^{\circ}$ denotes the interior operator. Combining
(\ref{eq: topology_of_inv_rho_2}) with (\ref{eq: topology_of_inv_rho_1})
yields the stronger inclusion
\[
\partial\left\{ \left(\bs{\rho^{\left(v\right)}}\right)^{-1}\left(I\right)\right\} \subset\partial\left\{ \left(\bs{\rho^{\left(v\right)}}\right)^{-1}\left(\partial I\right)\right\} .
\]
In particular, choosing $I=(a,b)$ gives
\begin{equation}
\partial\left\{ \left(\bs{\rho^{\left(v\right)}}\right)^{-1}\left((a,b)\right)\right\} \subset\partial\left\{ \left(\bs{\rho^{\left(v\right)}}\right)^{-1}\left(a\right)\right\} \cup\partial\left\{ \left(\bs{\rho^{\left(v\right)}}\right)^{-1}\left(b\right)\right\} .\label{eq:boundary of rho sets}
\end{equation}

Hence, since $\mgen$ has finitely many connected components (Lemma
\ref{lem: generic secman}) and by Lemma \ref{lem: Zero-set-of-real-analyic},
we get that each of $\left\{ \left(\bs{\rho^{\left(v\right)}}\right)^{-1}\left(a\right)\right\} $
and $\left\{ \left(\bs{\rho^{\left(v\right)}}\right)^{-1}\left(b\right)\right\} $
is a finite union of connected components of $\mgen$ and sets of
positive co-dimension in $\mgen$. The boundary of a connected component
of $\mgen$ is an empty set\footnote{We have shown in the proof of Theorem \ref{thm:Neumann_surplus_main},
(\ref{enu:thm-density_of_Neumann_surplus}) that a connected component
of $\mgen$ is open and closed and hence it has an empty boundary. }. The boundary of a set of positive co-dimension has itself positive
co-dimension and is therefore of measure zero. Hence, the boundaries
$\partial\left\{ \left(\bs{\rho^{\left(v\right)}}\right)^{-1}\left(a\right)\right\} $
and $\partial\left\{ \left(\bs{\rho^{\left(v\right)}}\right)^{-1}\left(b\right)\right\} $
are of measure zero and by (\ref{eq:boundary of rho sets}) we conclude
that $\partial\left\{ \left(\bs{\rho^{\left(v\right)}}\right)^{-1}\left((a,b)\right)\right\} $
is also of measure zero. We conclude that $\left(\bs{\rho^{\left(v\right)}}\right)^{-1}\left((a,b)\right)$
is indeed a Jordan set and may now apply Corollary \ref{cor: density_equals_BG_for_Jordan_set}
and get
\[
d_{\G}\left(\flow{}^{-1}\left(\left(\bs{\rho^{\left(v\right)}}\right)^{-1}\left((a,b)\right)\right)\right)=\BGm\left(\left(\bs{\rho^{(v)}}\right)^{-1}\left((a,b)\right)\right).
\]
From here, we obtain (\ref{eq: pre-image_of_rho_density_and_measure})
from the beginning of the proof by observing that the sets $\flow{}^{-1}\left(\left(\bs{\rho^{\left(v\right)}}\right)^{-1}\left((a,b)\right)\right)$
and $\left(\rho^{\left(v\right)}\right)^{-1}\left((a,b)\right)$ differ
by at most finitely many elements. Indeed, this was shown in (\ref{eq: Wave_capacity_on_secular_manifold})
of Lemma \ref{lem: functions_on_secular_manifold_existence_and_symmetry}.

Having shown (\ref{eq: pre-image_of_rho_density_and_measure}), the
first part of the Proposition is therefore equivalent to
\begin{equation}
\BGm\left(\left(\bs{\rho^{\left(v\right)}}\right)^{-1}\left((a,b)\right)\right)=\int_{a}^{b}\pi^{(v)}\left(x\right)dx+\sum_{x_{j}\in\left(a,b\right)}p^{(v)}\left(x_{j}\right),\label{eq: denstiy-as-integral-of-distribution-in-proof}
\end{equation}
for any interval $\left(a,b\right)$, where $\pi^{(v)}$ is a density
function and $p^{(v)}$ is a discrete measure supported on the finite
set $\left\{ x_{j}\right\} _{j=1}^{m}$. To define $p^{(v)}$ recall
that $\mgen$ has finitely many connected components (Lemma \ref{lem: generic secman}),
denote the connected components on which $\bs{\rho^{\left(v\right)}}$
is constant by $\left\{ M_{i}\right\} _{i=1}^{m}$ and denote their
union by
\[
\mgendisc=\cup_{j=1}^{m}M_{j}.
\]
Let $\left\{ x_{j}\right\} _{j=1}^{m}$ be the values of $\bs{\rho^{\left(v\right)}}$
on $\mgendisc$ such that $\bs{\rho^{\left(v\right)}}|_{M_{i}}\equiv x_{i}$.
Note that not all $x_{i}$ are necessarily different. Define a discrete
measure on $\R$ by
\[
p^{\left(v\right)}:=\sum_{j=1}^{m}\mu_{\lv}\left(M_{j}\right)\delta_{x_{j}},
\]
where $\delta_{x_{j}}$is the Dirac measure at $x_{j}$. For any Borel
set $B\subset\R$,
\begin{equation}
p^{\left(v\right)}\left(B\right)=\mu_{\lv}\left(\mgendisc\cap\left(\bs{\rho^{\left(v\right)}}\right)^{-1}\left(B\right)\right).\label{eq: discrete_probability_density}
\end{equation}
Denote the complement $\mgencont=\mgen\setminus\mgendisc$ and define
$\zeta^{\left(v\right)}:=\bs{\rho^{\left(v\right)}}_{*}\left(\mu_{\lv}|_{\mgencont}\right)$
as the push-forward by $\bs{\rho^{\left(v\right)}}$ of the restricted
Barra-Gaspard measure $\mu_{\lv}|_{\mgencont}$. Namely, for any Borel
$B\subset\R$,
\begin{equation}
\zeta^{\left(v\right)}\left(B\right)=\mu_{\lv}\left(\mgencont\cap\left(\bs{\rho^{\left(v\right)}}\right)^{-1}\left(B\right)\right).\label{eq: continuous_probability_density}
\end{equation}
Summing (\ref{eq: discrete_probability_density}) and (\ref{eq: continuous_probability_density})
gives that for any Borel $B\subset\R$,
\begin{align*}
\zeta^{\left(v\right)}\left(B\right)+p^{\left(v\right)}\left(B\right)= & \mu_{\lv}\left(\left(\bs{\rho^{\left(v\right)}}\right)^{-1}\left(B\right)\right).
\end{align*}
We will now show that $\zeta^{\left(v\right)}(B)+p^{\left(v\right)}(B)$
is a decomposition of a measure to an absolutely continuous part and
a discrete part\footnote{Actually, we conjecture that there is no connected component on which
$\bs{\rho^{\left(v\right)}}$ is constant, and so $p^{\left(v\right)}\equiv0$.
See Remark \ref{rem: no_atoms_in_rho_distribution} after the proof.} . As $p^{\left(v\right)}$ was defined as a discrete measure, we
only need to show that that $\zeta^{\left(v\right)}$ is absolutely
continuous. Once proving that $\zeta^{\left(v\right)}$ is absolutely
continuous, we may prove (\ref{eq: denstiy-as-integral-of-distribution-in-proof})
by defining $\pi^{\left(v\right)}\left(x\right)$ as the Radon-Nykodim
derivative of $\zeta^{\left(v\right)}$ with respect to the Lebesgue
measure, so that we get $d\zeta^{\left(v\right)}\left(x\right)=\pi^{\left(v\right)}\left(x\right)dx$.

The measure $\zeta^{\left(v\right)}$ is absolutely continuous if
for every $B\subset\R$ of Lebesgue measure zero, $\zeta^{\left(v\right)}\left(B\right)=0$.
Let $B\subset\R$ of Lebesgue measure zero and let $M$ be a connected
component of $\mgencont$, with the volume element $d\mu$. As $M$
is a connected real analytic manifold and the restriction $\bs{\rho^{\left(v\right)}}|_{M}$
is a non-constant real analytic function, we may use Lemma \ref{lem: Zero-set-of-real-analyic},(\ref{enu: lem-zero-set-of-real-analytic-2})
to deduce that $\mu((\bs{\rho^{(v)}})^{-1}(B)\cap M)=0$. Since $\mu$
and the restriction of the Barra-Gaspard measure $\mu_{\lv}|_{M}$
are equivalent, we get that also $\mu_{\lv}((\bs{\rho^{\left(v\right)}})^{-1}(B)\cap M)=0$.
As this holds for every connected component of $\mgencont$, taking
the union gives
\[
\zeta^{\left(v\right)}\left(B\right)=\mu_{\lv}\left(\bs{\rho^{\left(v\right)}}^{-1}\left(B\right)\cap\mgencont\right)=0,
\]
which finishes the first part of the proof.

~\\

Next, we prove part (\ref{enu:prop-statistics_of_local_wave_capacity-2})
of the proposition,
\begin{equation}
\pi^{\left(v\right)}(x)=\pi^{\left(v\right)}(\deg v-x)\quad\quad\textrm{and}\quad\quad p^{\left(v\right)}(x)=p^{\left(v\right)}(\deg v-x).\label{eq: symmetry_for_densities}
\end{equation}
By (\ref{eq: anti-symmetry_of_wavelength_capacity_on_secular_mnfld})
in Lemma \ref{lem: functions_on_secular_manifold_existence_and_symmetry},
we have that $\bs{\rho^{\left(v\right)}}\left(\I\left(\kv\right)\right)=\deg v-\bs{\rho^{\left(v\right)}}\left(\kv\right)$,
where $\I:\kv\mapsto\modp{-\kv}$ is the inversion (see Section \ref{subsec: inversion}).
If $M_{i}$ is a connected component of $\mgendisc$, on which $\bs{\rho^{\left(v\right)}}|_{M_{i}}\equiv x_{i}$
then $\I\left(M_{i}\right)$ is a connected open subset of $\mgen$
on which $\bs{\rho^{\left(v\right)}}|_{\I\left(M_{i}\right)}\equiv\deg v-x_{i}$.
It follows $\I\left(M_{i}\right)$ is contained in some $M_{j}$,
connected component of $\mgendisc$. Since $\I^{2}\left(M_{i}\right)=M_{i}$
then in fact $\I\left(M_{i}\right)=M_{j}$, with $x_{j}=\deg v-x_{i}$.
This argument shows that $\mgendisc$ is $\I$ invariant, and since
we know that $\mgen$ is also $\I$ invariant (Lemma \ref{lem: Inversion-properties},(\ref{enu: lem-Inversion-properties-1}))
then so does $\mgencont$. Given the relation $\bs{\rho^{\left(v\right)}}\left(\I\left(\kv\right)\right)=\deg v-\bs{\rho^{\left(v\right)}}\left(\kv\right)$,
then for any Borel set $B\subset\R$ and its reciprocal set:
\[
\deg v-B:=\set{\deg v-x}{x\in B},
\]
we have
\begin{align*}
\I\left(\mgendisc\cap\left(\bs{\rho^{\left(v\right)}}\right)^{-1}\left(B\right)\right) & =\mgendisc\cap\left(\bs{\rho^{\left(v\right)}}\right)^{-1}\left(\deg v-B\right),\,\,\text{and}\\
\I\left(\mgencont\cap\left(\bs{\rho^{\left(v\right)}}\right)^{-1}\left(B\right)\right) & =\mgencont\cap\left(\bs{\rho^{\left(v\right)}}\right)^{-1}\left(\deg v-B\right).
\end{align*}
The symmetry of both $p^{\left(v\right)}$ and $\zeta^{\left(v\right)}$
is now a consequence of $\I$ being $\BGm$ preserving (Lemma \ref{lem: Inversion-properties}):
\begin{align*}
p^{\left(v\right)}\left(B\right)=\mu_{\lv}\left(\mgendisc\cap\left(\bs{\rho^{\left(v\right)}}\right)^{-1}\left(B\right)\right) & =\mu_{\lv}\left(\mgendisc\cap\left(\bs{\rho^{\left(v\right)}}\right)^{-1}\left(\deg v-B\right)\right)=p^{\left(v\right)}\left(\deg v-B\right)\\
\zeta^{\left(v\right)}\left(B\right)=\mu_{\lv}\left(\mgencont\cap\left(\bs{\rho^{\left(v\right)}}\right)^{-1}\left(B\right)\right) & =\mu_{\lv}\left(\mgencont\cap\left(\bs{\rho^{\left(v\right)}}\right)^{-1}\left(\deg v-B\right)\right)=\zeta^{\left(v\right)}\left(\deg v-B\right).
\end{align*}
\end{proof}
\begin{rem}
\label{rem: no_atoms_in_rho_distribution}

We conjecture that the probability distribution of the wavelength
capacity does not contain a discrete part, namely that $p^{(v)}\equiv0$.
In view of the proof above, this is equivalent to the function $\bs{\Rv}$
not being constant on connected components of $\mgen$. Apart from
an intuition (as demanding $\bs{\Rv}$ to be constant is highly restrictive)
our conjecture is supported by numerical investigations, which show
no atoms for the probability distribution of $\Rv$ on various graphs
(see discussion in Section \ref{sec:Discussion} and Figure \ref{fig: graph rho example}).
\end{rem}

\section{Discussion and open problems \label{sec:Discussion}}

This paper presents Neumann domains on metric (quantum) graphs. Neumann
domains may be perceived as the counterpart of nodal domains. Nodal
domains are sub-graphs which are bounded by the zeros of the eigenfunction,
whereas Neumann domains are sub-graphs bounded by zeros of the eigenfunction's
derivative. This close similarity between nodal domains and Neumann
domains calls for a comparison. On one hand there are analogous results
for both: bounds and probability distributions of their counts (compare
Theorem \ref{thm: surplus_function_on_secular_mnfld} for Neumann
count with \cite[Theorem 2.1]{AloBanBer_cmp18} for nodal count).
Yet, there are also similar statements which have different incarnations.
The nodal count of graphs with disjoint cycles is similar to the Neumann
count of $(3,1)$-regular tree graphs - both counts have binomial
distributions (compare \cite[Theorem 2.3]{AloBanBer_cmp18} with Theorem
\ref{Thm: (3,1)-regular-trees} here).

A useful viewpoint for comparison between the nodal count and the
Neumann count is the perspective of inverse problems. Namely, which
information on the underlying graph is stored in the nodal\textbackslash Neumann
count sequence. A partial (but quite satisfying) answer lies in Corollary
\ref{cor:Expected_values_of_nodal_and_Neumann_supluses}. By this
corollary, the expected value of the nodal surplus distribution equals
$\frac{1}{2}\beta$, where $\beta$ is the first Betti number of the
graph; and the expected value of the Neumann nodal surplus distribution
equals $\frac{1}{2}(\beta-\left|\partial\Gamma\right|)$, where $\left|\partial\Gamma\right|$
is the size of the graph's boundary. An easy demonstration of the
difference between these two 'pieces' of information may be given
in terms of tree graphs. By Corollary \ref{cor:Expected_values_of_nodal_and_Neumann_supluses}
it is evident that the nodal surplus distribution reveals whether
the underlying graph is a tree ($\beta=0$) or not. Furthermore, all
tree graphs have exactly the same nodal count sequence (\cite{Schapotschnikow06,AlO_viniti92,PokPryObe_mz96}),
which distinguishes them from any non-tree graph \cite{Ban_ptrsa14}.
But, this also means that the nodal count sequence does not store
any information which allows to distinguish between different trees.
In this case, the Neumann surplus sequence is more informative as
it predicts $\left|\dg\right|$, the size of the tree's boundary.
Furthermore, we find numerically that there are trees, which share
the same boundary size, yet their Neumann surplus distributions are
different (though with the same expected value); See Figure \ref{fig: different trees}.
This raises the additional challenge of exposing which information
is stored in the Neumann surplus sequence beyond the value of $\beta-\left|\partial\Gamma\right|$.

\begin{figure}[h]
\begin{centering}
(i) $\quad\Gamma_{1}$\includegraphics[scale=0.4]{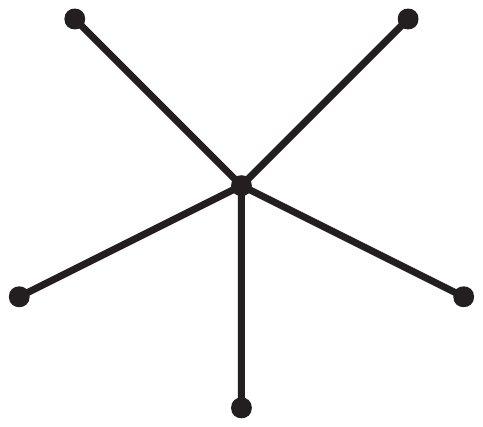} (ii)
$\quad\Gamma_{2}$\includegraphics[scale=0.4]{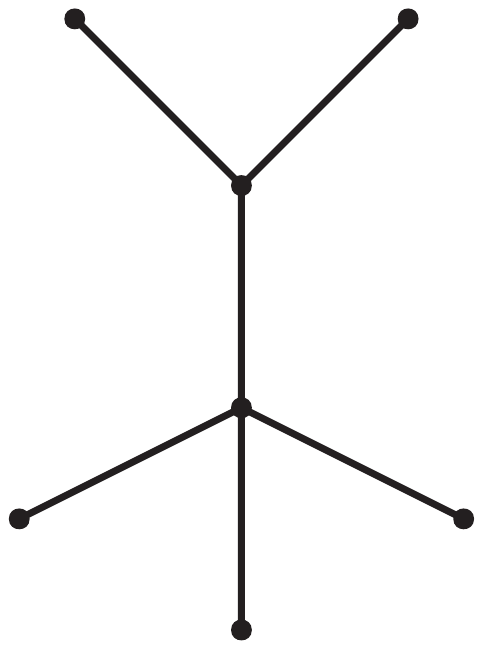}(iii)$\quad\Gamma_{3}$\includegraphics[scale=0.4]{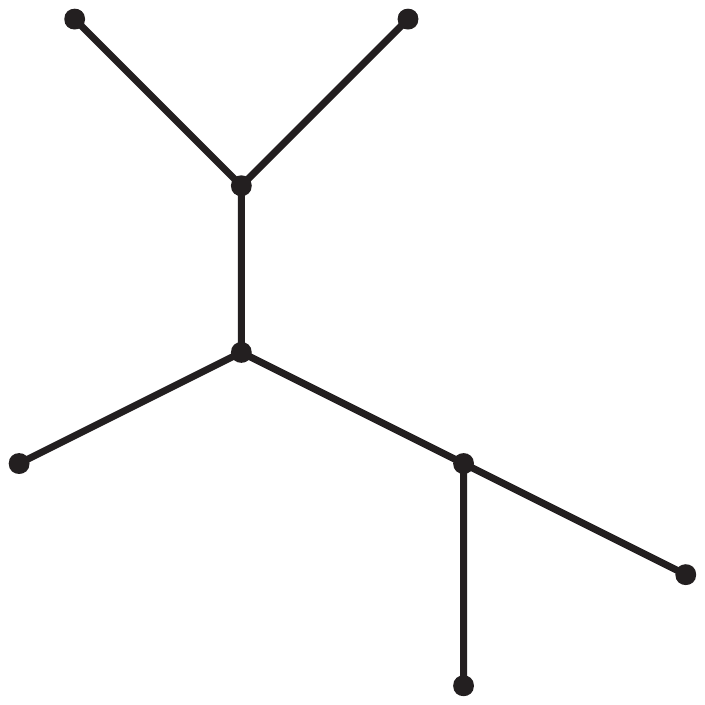}
\par\end{centering}
\includegraphics[width=0.7\paperwidth]{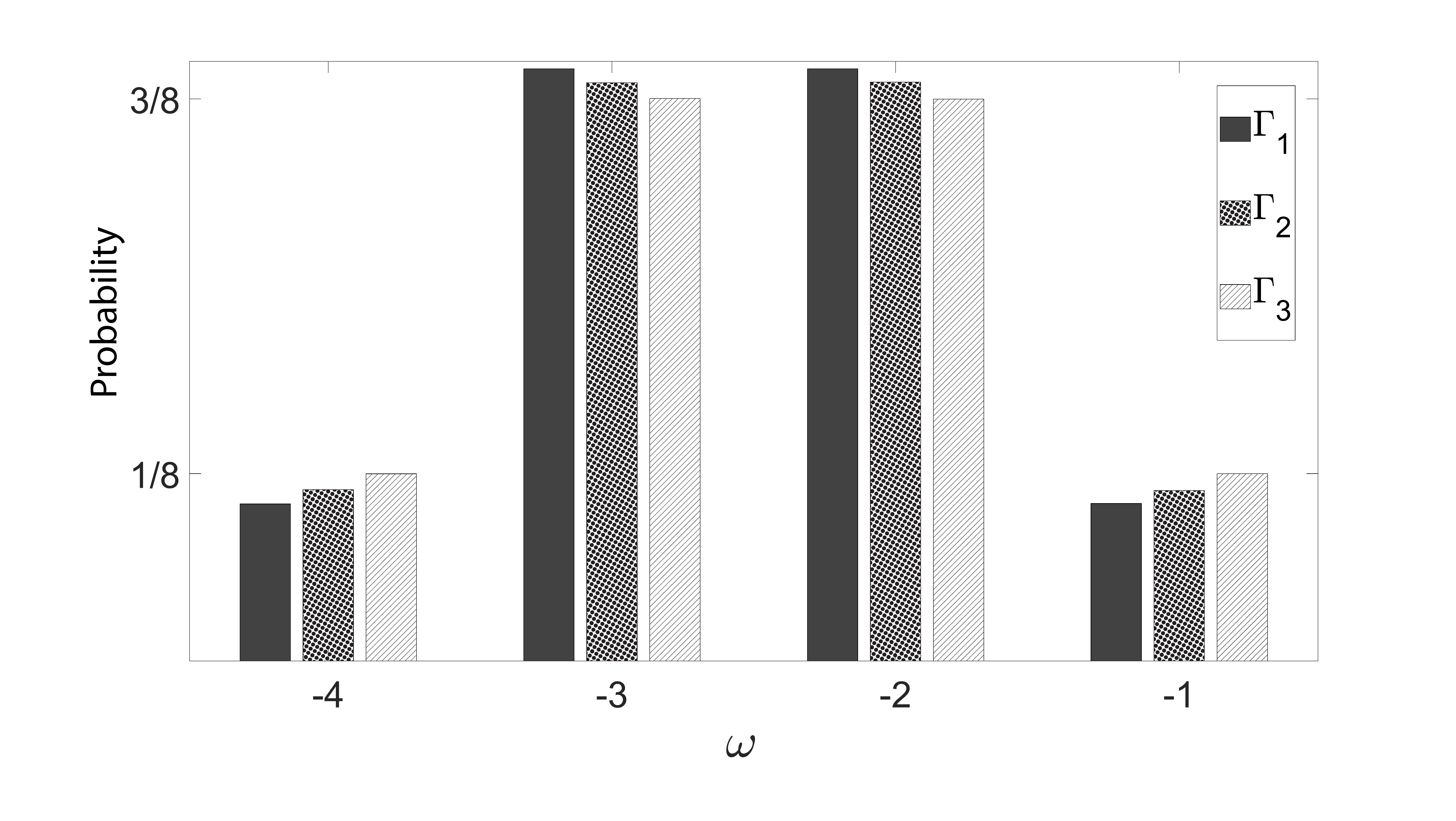} \centering{}\caption{The Neumann surplus distribution of three different trees with the
same boundary size, $\left|\partial\Gamma\right|=5$. The expected
value of all three probability distributions is the same, but the
distributions themselves are different. The numerical data was calculated
for $\sim10^{6}$ eigenfunctions per graph. }
\label{fig: different trees}
\end{figure}

Having demonstrated the differences between the nodal count and the
Neumann count, let us consider also uniting their forces. Namely,
what we can infer on the underlying graph if we know both its nodal
count sequence and its Neumann count sequence. A simple calculation,
taking into account that the minimal degree of an interior vertex
is at least three gives
\begin{align*}
V & \le2\beta+2\left|\partial\Gamma\right|-2\\
E & \le3\beta+2\left|\partial\Gamma\right|-3.
\end{align*}
The above shows that knowing the values of $\beta$ and $\left|\partial\Gamma\right|$
(which are stored in the nodal and Neumann count sequences), bounds
the number of vertices and edges and by this restricts the possible
candidates to a finite set of graphs. This is a huge progress in solving
the inverse problem, as it reduces the infinitely many possible graphs
(e.g., in the example of tree graphs mentioned above) to a finite
number.

A related inverse problem concerns isospectrality. Isospectral graphs
are graphs which share the same eigenvalues. It was conjectured that
such graphs would have different nodal count \cite{GnuSmiSon_jpa05},
or in other words that nodal count resolves isospectrality\footnote{The scope of the conjecture was actually broader than just for quantum
graphs and it was stated also for isospectral manifolds and isospectral
discrete graphs.}. This conjecture in its most general form have been refuted by now
in \cite{BanShaSmi_jpa06,OreBan_jpa12,JuuJoy_jphys18,BruFaj_cmp12}.
Yet, given the discussion above, one may ask whether isospectrality
is resolved by combining both the nodal count and the Neumann count.

The next step would be to investigate the joint probability distribution
of the nodal and Neumann surpluses. See Figure \ref{fig: correlations}.
This figure shows in particular that the random variables $\sigma$
and $\omega$ are dependent. It is interesting to study the correlation
coefficient between both, $\mathrm{corr}(\sigma,\omega):=\frac{\mathbb{E}\left[\left(\sigma-\mathbb{E}[\sigma]\right)\left(\omega-\mathbb{E}[\omega]\right)\right]}{\sqrt{\mathbb{V}\left[\sigma\right]\mathbb{V}\left[\omega\right]}}$,
where $\mathbb{E}$ and $\mathbb{V}$ indicate the corresponding expected
values and variances. In this context, see also in Appendix \ref{sec: Appendix-examples}
a related discussion on the bounds for $\omega-\sigma$ and for $\omega$.

\begin{figure}[h]
\includegraphics[width=0.7\paperwidth]{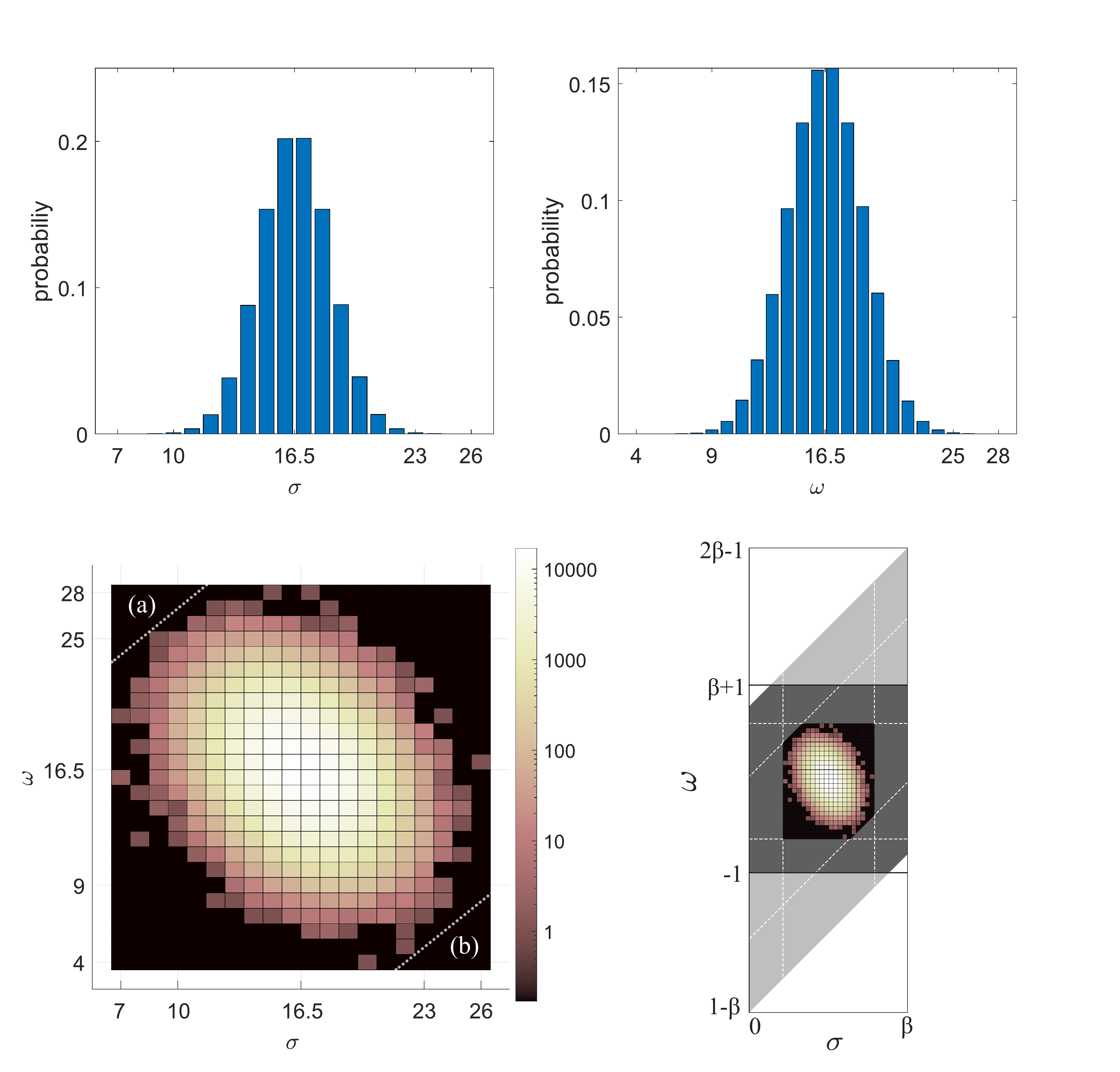} \centering{}\caption{The nodal and Neumann statistics for a random 6 regular graph with
16 vertices. The upper two figures show the probability distributions
of the nodal surplus, $\sigma$, and the Neumann surplus, $\omega$.
The lower left figure shows the joint probability distribution, as
a 2d histogram of the pair $\left(\sigma,\omega\right)$ in a log
scale. The support of $\left(\sigma,\omega\right)$ in this example
is bounded by $7\le\sigma\le26$, $4\le\omega\le28$ and $-16\le\sigma-\omega\le17$.
The white diagonal lines (a) and (b) in the lower left figure represent
the support of $\sigma-\omega$. In the lower right figure, the support
of $\left(\sigma,\omega\right)$ is presented with respect to the
bounds on $\omega$ and $\sigma-\omega$. The region which contains
the allowed values for $\sigma-\omega$ is shaded with gray colors;
light gray is for the $\omega$ bounds of Theorem \ref{thm:Neumann_surplus_main}
(\ref{enu:thm-Neumann_surplus_bounds}) and dark gray for Conjecture
\ref{conj:strict_bounds_Neumann_surplus}. The numerical data was
calculated for $\sim10^{6}$ eigenfunctions. }
\label{fig: correlations}
\end{figure}

We end the discussion of the Neumann surplus distribution with
\begin{conjecture}
\label{conj:universality_Neumann_count} Let $\left\{ \Gamma^{(m)}\right\} _{m=1}^{\infty}$
be a sequence of standard quantum graphs each with rationally independent
edge lengths. Denote by $\beta^{(m)}$ and $\left|\partial\Gamma^{(m)}\right|$
the first Betti number and the boundary size of the graph $\Gamma^{(m)}$,
correspondingly. Denote by $\omega^{(m)}$ the Neumann surplus random
variable of $\Gamma^{(m)}$. If we assume that the graphs in the sequence
increase, in the sense $\lim_{m\rightarrow\infty}\left(\beta^{(m)}+\left|\partial\Gamma^{(m)}\right|\right)=\infty$,
then
\begin{equation}
\frac{\omega^{(m)}-\mathbb{E}\left[\omega^{(m)}\right]}{\sqrt{\mathbb{V}\left[\omega^{(m)}\right]}}\xrightarrow[m\rightarrow\infty]{\mathcal{D}}N(0,1),\label{eq:universality_Neumann_count}
\end{equation}
where the convergence is in distribution and $N(0,1)$ is the standard
normal distribution.
\end{conjecture}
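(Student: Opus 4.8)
\medskip
\noindent\emph{Proof strategy.} The plan is to transfer the whole statement to the secular manifold, where it turns into a purely probabilistic question. By Theorem~\ref{thm: equidistribution_by_BG_measure} and Lemma~\ref{lem: functions_on_secular_manifold_existence_and_symmetry} (see also Lemma~\ref{lem: random-variables}, which applies since $\bs{\omega}^{(m)}$ is constant on connected components of the generic manifold of $\Gamma^{(m)}$), for every $m$ the law of $\omega^{(m)}$ under $d_{\G}$ equals the law of $\bs{\omega}^{(m)}$ under the Barra--Gaspard measure of $\Gamma^{(m)}$. Hence $(\ref{eq:universality_Neumann_count})$ is equivalent to
\[
\frac{\bs{\omega}^{(m)}-\mathbb{E}[\bs{\omega}^{(m)}]}{\sqrt{\mathbb{V}[\bs{\omega}^{(m)}]}}\ \xrightarrow[m\to\infty]{\mathcal{D}}\ N(0,1),
\]
with all moments taken with respect to $\BGm$; this strips off the number-theoretic layer (rational independence, equidistribution) and leaves an analytic problem about a sequence of functions on secular manifolds. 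Since the distribution of $\bs{\omega}^{(m)}$ is symmetric about its mean (Theorem~\ref{thm:Neumann_surplus_main}\,(\ref{enu:thm-density_of_Neumann_surplus-c})), all odd central moments vanish automatically, so it suffices to show $\mathbb{E}[(\bs{\omega}^{(m)}-\mathbb{E}\bs{\omega}^{(m)})^{2k}]\sim(2k-1)!!\,\mathbb{V}[\bs{\omega}^{(m)}]^{k}$ for each fixed $k$.

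The structural input is the local--global identity: combining $(\ref{eq:Neumann-surplus-as-function-on-torus-1})$ with $2E=|\partial\Gamma|+\sum_{v\in\Vint}\deg v$ and formula $(\ref{eq:definition of Nv-1})$ for $\bs{N^{(v)}}$ gives $\bs{\omega}=\bs{\sigma}+(E-|\partial\Gamma|)-\sum_{v\in\Vint}\bs{N^{(v)}}$. Thus, up to a deterministic shift, $\bs{\omega}$ is the nodal-surplus function $\bs{\sigma}$ (ranging over $\{0,\dots,\beta\}$, genuinely global, not a bounded remainder) plus local observables $-\bs{N^{(v)}}$, one per interior vertex, with $1\le\bs{N^{(v)}}\le\deg v-1$, i.e.\ $|\bs{N^{(v)}}-\tfrac12\deg v|\le\tfrac12\deg v-1$. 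One then wants to treat $\{\bs{N^{(v)}}\}_{v\in\Vint}$ as a weakly dependent array, each term localized near $v$ in $\Gamma$ with dependence decaying in graph distance. The natural device is the family of measure-preserving involutions of $\mgen$ --- the global inversion $\I$ (Lemma~\ref{lem: Inversion-properties}) and the bridge-inversions $\mathcal{R}_{v,e}$ (Lemma~\ref{lem: bridge-inv-properties}) --- which already yield \emph{exact} conditional symmetry of one spectral position given the rest whenever the separating edges are bridges; this is the mechanism of Proposition~\ref{prop: independent_symmetry_of_spectral_positions}. For bridge-rich graphs, and in particular for trees, one orders $\Vint$ so that each vertex is cut from its predecessors by a bridge and applies the bridge-inversion conditionally, exhibiting $\{\bs{N^{(v)}}-\tfrac12\deg v\}$ as a bounded martingale-difference array; a martingale central limit theorem (McLeish, Brown) then concludes, provided the conditional variances concentrate. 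This already settles the conjecture for $(3,1)$-regular trees, where the $\bs{N^{(v)}}$ are i.i.d.\ Bernoulli on $\{1,2\}$ by Corollary~\ref{cor: independent_symmetry_of_local_spectra_positions_implications}\,(\ref{enu: cor-independent_symmetry-2}), so that (Theorem~\ref{Thm: (3,1)-regular-trees}) $\omega^{(m)}$ is, up to sign and shift, $\mathrm{Bin}(|\partial\Gamma^{(m)}|-2,\tfrac12)$-distributed and the normalized binomial converges to $N(0,1)$ by the de~Moivre--Laplace theorem, whose hypothesis $|\partial\Gamma^{(m)}|\to\infty$ coincides with $\beta^{(m)}+|\partial\Gamma^{(m)}|\to\infty$ for trees.

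For a general graph the edges inside cycles are not bridges, the inversions no longer decouple the corresponding vertices, and the martingale structure breaks --- this is the crux. One must then replace exact decorrelation by a quantitative mixing estimate for the Barra--Gaspard measure: the covariance under $\BGm$ of $\sgn(q_{u,u,e})$ and $\sgn(q_{v,v,e'})$ should decay in $\mathrm{dist}_{\Gamma}(u,v)$, uniformly along the sequence. I would try to obtain this from the secular equation, using that a vertex ``sees'' a distant region only through long return paths whose contribution to the relevant scattering data is small; fed into Stein's method for dependency graphs, or into the method of moments (expand $\mathbb{E}[\prod_{i}(\bs{N^{(v_i)}}-\tfrac12\deg{v_i})]$ and match the Wick pairings, the long-range and unpaired terms being lower order), this would give the Lindeberg/Lyapunov bound --- the bounded-degree case being cleanest, the unbounded-degree case needing a truncation based on $|\bs{N^{(v)}}-\tfrac12\deg v|\le\tfrac12\deg v-1$. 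Two further ingredients are required: a central limit theorem for $\bs{\sigma}$ together with control of its joint moments with the $\bs{N^{(v)}}$ --- this is itself the nodal-count universality conjecture of \cite{AloBanBer_conj}, so the Neumann statement should be proved jointly with, or conditionally on, it --- and a matching variance lower bound $\mathbb{V}[\omega^{(m)}]\asymp\beta^{(m)}+|\partial\Gamma^{(m)}|\to\infty$, which should follow from the near-independence and from each $\bs{N^{(v)}}$ being genuinely non-deterministic (Theorem~\ref{thm:Neumann_surplus_main}\,(\ref{enu:thm-density_of_Neumann_surplus-b})), once systematic cancellation between $\bs{\sigma}$ and $\sum_{v}\bs{N^{(v)}}$ is excluded. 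The main obstacle is precisely this mixing estimate along cycles: at present there is no quantitative decorrelation theory for $\BGm$, and it is exactly what separates the accessible tree case from the general conjecture.
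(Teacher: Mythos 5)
The statement you are asked to prove is stated in the paper as a \emph{conjecture} (Conjecture \ref{conj:universality_Neumann_count}); the paper contains no proof of it, only supporting evidence, so there is no argument of the authors' to compare yours against. Your proposal is, by your own account, a strategy with an acknowledged hole rather than a proof, and the hole is real. The parts you do establish rigorously coincide exactly with the paper's own evidence: the transfer of the law of $\omega^{(m)}$ to the pushforward of $\BGm$ under $\bs{\omega}^{(m)}$ is justified by Lemma \ref{lem: random-variables} together with Lemma \ref{lem: functions_on_secular_manifold_existence_and_symmetry}; the identity $\bs{\omega}=\bs{\sigma}+(E-\left|\partial\Gamma\right|)-\sum_{v\in\Vint}\bs{N^{(v)}}$ follows correctly from (\ref{eq:Neumann-surplus-as-function-on-torus-1}) and (\ref{eq:definition of Nv-1}); and the $(3,1)$-regular tree case is indeed settled by Theorem \ref{Thm: (3,1)-regular-trees} plus de Moivre--Laplace, which is precisely the analytic evidence the paper itself cites. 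These steps are sound.

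The gap is everything beyond that. (i) For general trees, Proposition \ref{prop: independent_symmetry_of_spectral_positions} gives conditional \emph{symmetry} of $N^{(\tilde v)}-\tfrac12 d_{\tilde v}$ given the other spectral positions, hence a martingale-difference structure, but a martingale CLT additionally requires concentration of the conditional variances; for vertices of degree $\ge 4$ the conditionally symmetric law on $\{1,\dots,\deg v-1\}$ can have fluctuating conditional variance, and nothing in the paper controls this. (ii) For graphs with cycles the bridge-inversions $\rve$ do not exist across cycle edges, and there is no decorrelation estimate of any kind for $\mathrm{Cov}_{\BGm}\bigl(\sgn(q_{u,u,e}),\sgn(q_{v,v,e'})\bigr)$ in terms of graph distance; you correctly identify this as the crux, but identifying a missing estimate is not supplying it. (iii) The term $\bs{\sigma}$ is not a bounded perturbation (it ranges over $\{0,\dots,\beta\}$ with $\beta\to\infty$ allowed), so your argument is explicitly conditional on the nodal-surplus CLT of \cite{AloBanBer_conj}, itself open. (iv) The variance lower bound $\mathbb{V}[\omega^{(m)}]\to\infty$ is asserted but not proved; the paper states that no general expression for this variance is known. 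In short: your reduction and your treatment of $(3,1)$-regular trees are correct and match the paper's discussion, but the general statement remains unproved, as the paper's labelling of it as a conjecture already indicates.
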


Observe that (\ref{eq:universality_Neumann_count}) describes a convergence
of finitely supported distributions, $\frac{\omega^{(m)}-\mathbb{E}\left[\omega^{(m)}\right]}{\sqrt{\mathbb{V}\left[\omega^{(m)}\right]}}$,
in to a continuous distribution, $N\left(0,1\right)$. This can only
happen if the size of the support of $\omega^{\left(m\right)}$ is
unbounded. The size of the support of $\omega^{\left(m\right)}$ is
bounded by $3\beta+\left|\dg\right|-2$ according to Theorem \ref{thm:Neumann_surplus_main}
(\ref{enu:thm-Neumann_surplus_bounds}), and by $\beta+\left|\dg\right|+2$
if Conjecture \ref{conj:strict_bounds_Neumann_surplus} is true. Both
bounds diverge if and only if $\beta+\left|\dg\right|$ diverges,
which is why we consider the limit, $\lim_{m\rightarrow\infty}\left(\beta^{(m)}+\left|\partial\Gamma^{(m)}\right|\right)=\infty$,
in the conjecture.

Note that by Corollary \ref{cor:Expected_values_of_nodal_and_Neumann_supluses}
it is known that $\mathbb{E}\left[\omega^{(m)}\right]=\frac{1}{2}(\beta^{(m)}-\left|\partial\Gamma^{(m)}\right|)$,
but we do not have a general expression for the variance. There are
a few sources of support towards this conjecture. First, by Theorem
\ref{Thm: (3,1)-regular-trees}, we get that the conjecture holds
for an increasing family of $(3,1)$-regular tree graphs. Indeed,
the Neumann surplus of each such graph is binomially distributed,
so that the central limit theorem guarantees the convergence to the
normal distribution, as in the conjecture. Furthermore, we tested
the validity of this conjecture by various numerical explorations.
We have checked increasing families of graphs such as random d-regular
graphs and complete graphs. For each such family we have calculated
the Kolmogorov-Smirnov distance from a normal distribution and observed
that this distance converges to zero in the limit of increasing graphs.
We also provide an analytic evidence to the above conjecture in an
ongoing work, where we prove that families of stowers and mandarins
(see Appendix \ref{sec: Appendix-examples}) satisfy the above conjecture.
Finally, both authors together with Berkolaiko conjecture a similar
statement for the nodal surplus. See \cite{AloBanBer_conj} for the
exact details, as well as supportive evidence. We believe that the
work on these two conjectures should be done in parallel and that
their confirmation might occur simultaneously.

We dedicate the last part of the discussion to the role of the local
observables; the spectral position and the wavelength capacity. Those
serve as a fundamental tool for studying Neumann domains. Analogous
observables are defined for Neumann domains on manifolds \cite{BanCoxEgg_arxiv20,BanCoxEgg_prep21,BanEggTay_jga20,BanFaj_ahp16}.
For manifolds, estimating the spectral position may serve as a tool
towards calculating the asymptotics of the Neumann count. But, determining
the spectral position of a Neumann domain on a manifold is a hard
task (even numerically). To aid this, it was shown in \cite{AloBanBerEgg_lms20}
that the wavelength capacity\footnote{The analogue of the wavelength capacity for manifolds is called the
area to perimeter ratio in \cite{AloBanBerEgg_lms20}.} is bounded from above in terms of the spectral position. Hence, computation
of the former allows to estimate the latter and tackle the problem
mentioned above. In the current paper we show that even a stronger
connection exists for quantum graphs - bounds from both sides of the
wavelength capacity are expressed in terms of the spectral position.

Taking the perspective of inverse problems, we consider the probability
distributions of the wavelength capacity, $\pi^{(v)}$ and $p^{(v)}$,
as described in Proposition \ref{prop: statistics_of_wavelength_capacity}.
We ask what information on the graph is stored in those distributions.
First, as those distributions are symmetric around $\frac{1}{2}\deg v$,
the degree of that particular vertex is revealed. But, those distributions
obviously do not solely depend on $\deg v$, as is demonstrated in
Figure \ref{fig: graph rho example}.
\begin{figure}[H]
\begin{raggedright}
\includegraphics[width=0.75\paperwidth]{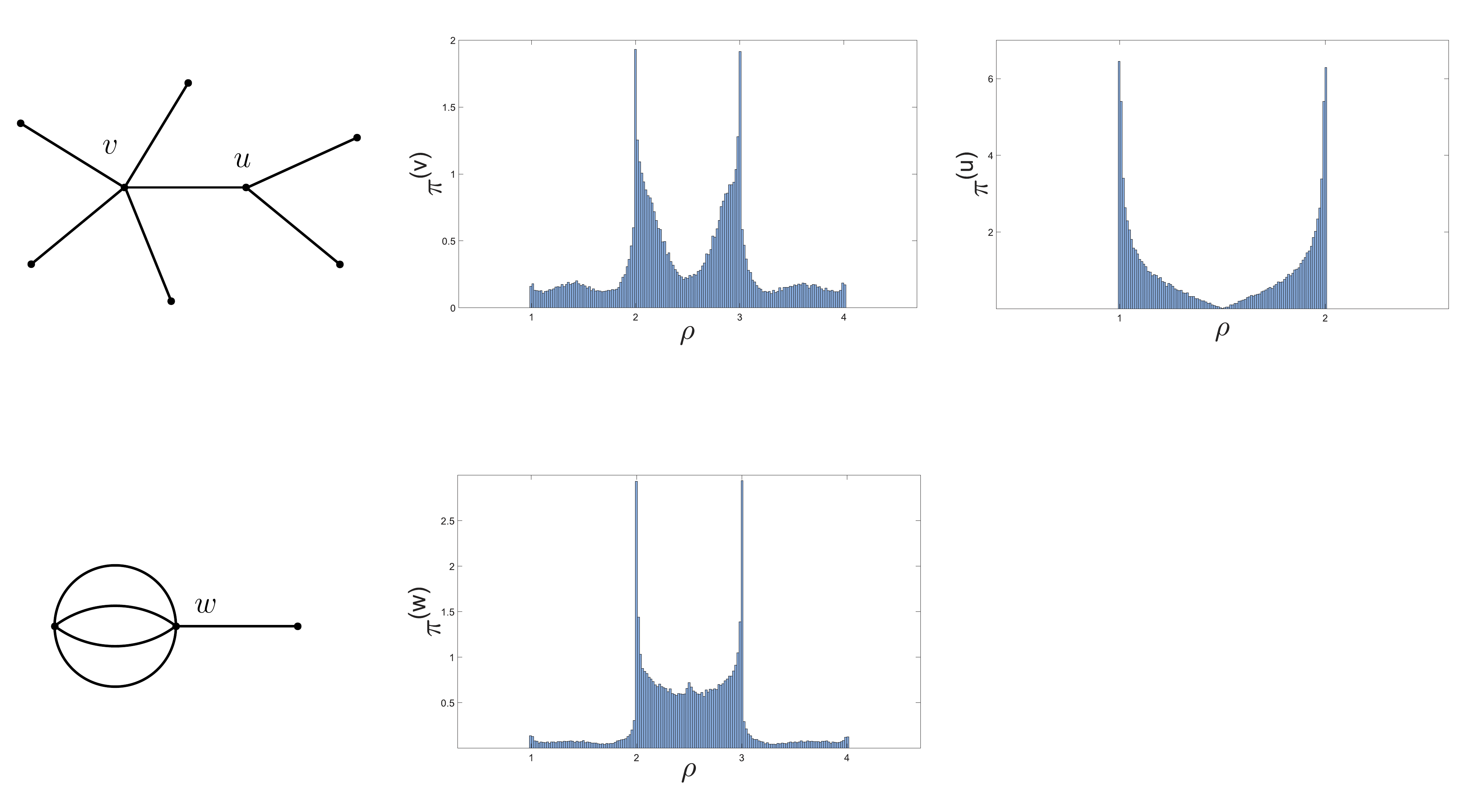}
\par\end{raggedright}
\centering{}\caption{The top line shows a graph with two marked vertices $v$ and $u$
such that $\protect\deg v=5$ and $\protect\deg u=3$. Next to the
graph are the probability distributions $\pi^{(v)}$ and $\pi^{(u)}$.
The bottom line shows a different graph with a marked vertex $w$
of $\protect\deg w=5$ and the probability distribution $\pi^{(w)}$.
The numerical data was calculated for $\sim10^{6}$ eigenfunctions
per graph.}
\label{fig: graph rho example}
\end{figure}
A stimulating problem would be to understand better the probability
distribution $\pi^{(v)}$ and the information it stores on the graph.
In particular, the exact values at which it gets minima and maxima
(see Figure \ref{fig: graph rho example}) are yet unknown. Another
numerical observation is that the probability distribution of the
wavelength capacity contains no atoms. Namely, $p^{(v)}\equiv0$.
We conjecture that this is always the case - see Remark \ref{rem: no_atoms_in_rho_distribution}.

\subsection*{Acknowledgments}

We acknowledge Michael Bersudsky and Sebastian Egger for their permission
to split this work from the review paper \cite{AloBanBerEgg_lms20}
co-authored with them and for the discussions in the course of the
common work. We thank Ron Rosenthal and Uri Shapira for the interesting
discussions throughout the work. We thank Adi Alon for her graphical
assistance with the figures.

The authors were supported by ISF (Grant No. 844/19) and by the Binational
Science Foundation Grant (Grant No. 2016281). LA was also supported
by the Ambrose Monell Foundation and the Institute for Advanced Study.

\appendix

\section{\label{sec: Appendix-examples} Nodal and Neumann surpluses of particular
examples}

An intensive numerical investigation led us to believe that the Neumann
surplus bounds in Theorem \ref{thm:Neumann_surplus_main} are not
optimal and to propose better bounds in Conjecture \ref{conj:strict_bounds_Neumann_surplus}.
In this appendix we provide analytic evidence supporting the conjectured
bounds, by calculating the support of the Neumann surplus for three
families of graphs:
\begin{enumerate}
\item Stowers - Proposition \ref{prop: stower Neumann surplus}.
\item Mandarins - Proposition \ref{prop: mandarins}.
\item Trees - Proposition \ref{prop: tree gets all surplus difference}.
\end{enumerate}
See Figure \ref{fig: stower and mandarin} for examples of a stower
and a mandarin. Using the support of the stowers we deduce that the
bounds on $\sigma-\omega$, as presented in (\ref{eq:bounds_on_nodal_Neumann_diff}),
are optimal in terms of $\beta$ and $\left|\dg\right|$.

\begin{figure}[H]
\begin{raggedright}
\centering{}\includegraphics[width=0.4\paperwidth]{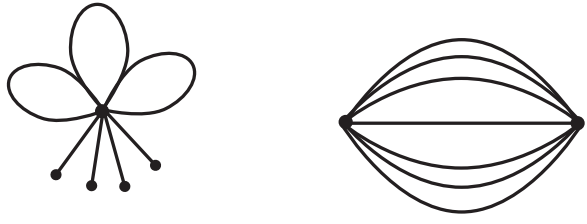}
\par\end{raggedright}
\centering{}\caption{On the left, a stower graph with $n=3$ loops and $m=4$ tails. On
the right, a mandarin graph with $E=7$ edges. }
\label{fig: stower and mandarin}
\end{figure}

\subsection{Stowers}

We say that $\Gamma$ is an $\left(n,m\right)$-stower graph with
$n$ loops (petals) and $m$ tails (leaves), if it has only one interior
vertex $v$ (the central vertex), $m$ boundary vertices, each connected
to the central vertex, and $n$ loops connecting the central vertex
to itself. See figure \ref{fig: stower and mandarin} for example.
In such case the first Betti number is $\beta=n$ and the boundary
size is $\left|\partial\Gamma\right|=m$.
\begin{prop}
\label{prop: stower Neumann surplus}Let $\Gamma$ be a stower, then
its Neumann surplus $\omega$ is bounded by
\[
1-\left|\partial\Gamma\right|\le\omega\le\beta-1.
\]
Moreover, if $\Gamma$ has rationally independent edge lengths then
its Neumann surplus and nodal surplus distributions satisfy
\begin{align*}
\forall j\in\left\{ 1-\left|\dg\right|,...,\beta-1\right\}  & \qquad P\left(\omega=j\right)>0\\
\forall j\in\left\{ 1-\beta,...,\beta+\left|\dg\right|-1\right\}  & \qquad P\left(\sigma-\omega=j\right)>0,
\end{align*}
and if $\left|\dg\right|>0$, then
\[
\forall j\in\left\{ 0,...,\beta\right\} \,\,\,\,\,P\left(\sigma=j\right)>0.
\]
In the case of $\left|\dg\right|=0$, namely $\Gamma$ is a flower
graph, the nodal surplus us bounded by
\[
1\le\sigma\le\beta-1,
\]
and

\[
\forall j\in\left\{ 1,...,\beta-1\right\} \,\,\,\,\,P\left(\sigma=j\right)>0.
\]
\end{prop}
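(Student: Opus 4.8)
The plan is to work on the secular manifold of the stower and use the tools developed in Section~\ref{sec: functions_on_secular_manifold}, in particular the characterization of $\bs{\omega}$, $\bs{\sigma}$, and $\bs{N^{(v)}}$ via the signs of the trigonometric polynomials $q_{v,v,e}$ and the fact that the Barra--Gaspard measure is strictly positive on $\mgen$. Since a stower has a \emph{single} interior vertex $v$ of degree $\deg v = 2\beta + \left|\dg\right|$ (each loop contributes $2$ to the degree, each tail contributes $1$), all the relevant local observables collapse to a single-vertex computation. First I would record that on $\mgen$ one has, from \eqref{eq:definition of Nv-1}, $\bs{N^{(v)}}(\kv) = \tfrac12\deg v - \tfrac12\sum_{e\in\E_v}\sgn(q_{v,v,e}(\kv))$, and from \eqref{eq:Neumann-surplus-as-function-on-torus-1} together with $\left|\V\setminus\dg\right| = 1$ that $\bs{\omega}(\kv) = \bs{\sigma}(\kv) - \tfrac12\left|\dg\right| + \tfrac12\sum_{e\in\E_v}\sgn(q_{v,v,e}(\kv))$. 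For the tail edges the vertex condition and the Dirichlet-type endpoint force $\sgn(q_{v,v,e}) = -1$ always (as in the derivation of \eqref{eq:Diff-nodal-Neumann_by_vertices}), so only the $2\beta$ loop-half-edges carry free signs. This already gives $\sum_{e\in\E_v}\sgn(q_{v,v,e}(\kv)) = -\left|\dg\right| + s$ where $s$ is the signed count over the $2\beta$ loop directions, and $s$ ranges over $\{-2\beta, -2\beta+2,\dots\}$ subject to the constraint $\big|\sum_{e\in\E_v}\sgn(q_{v,v,e})\big|\le \deg v - 2$ from \eqref{eq: sum_of_signs_absolute_value}.

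Next I would pin down exactly which values of $s$ are realized, i.e.\ which sign patterns on the $2\beta$ loop directions actually occur on some connected component of $\mgen$. The key structural point for loops is that the two half-edges of a single loop at $v$ are \emph{not} independent: if $f|_e(x) = A\cos(\varphi + kx)$ on a loop $e$ of length $l_e$ parametrized from $v$ to $v$, then the two outgoing derivatives at $v$ are $-kA\sin\varphi$ and $+kA\sin(\varphi + kl_e)$, and one checks (this is the analogue of Lemma~\ref{lem:dual-star}-type bookkeeping, or a direct $2\pi$-extension argument as in the proof of Theorem~\ref{thm:Neumann_surplus_main}) that $\phi(f|_e) - \xi(f|_e)$ for a loop depends on the pair of signs at its two ends, and that both signs at the two ends of a loop can be made $+1$, both $-1$, or mixed, by an appropriate choice of $\varphi$ and $l_e$, \emph{independently across distinct loops}. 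Combined with genericity (no $q_{v,v,e}$ vanishes, so components of $\mgen$ are open and the relevant sign patterns are attained on open sets of positive $\BGm$-measure), this shows that $s$ attains every value in $\{-2\beta, -2\beta+2, \dots, 2\beta\}$ that is compatible with \eqref{eq: sum_of_signs_absolute_value}. Feeding this into the displayed formula for $\bs{\omega}$, and using that for the stower $\bs{\sigma}$ also ranges appropriately (either citing the flower/stower nodal surplus support, or re-deriving it from the same sign analysis via $0\le\bs\sigma\le\beta$ from \eqref{eq: nodal surplus bounds} and the surjectivity of the loop sign patterns), yields $1 - \left|\dg\right| \le \omega \le \beta - 1$ and the positivity $P(\omega=j)>0$ for every integer $j$ in that range, since each such $j$ corresponds to a nonempty open subset of $\mgen$ of positive Barra--Gaspard measure and hence, by Corollary~\ref{cor: density_equals_BG_for_Jordan_set} / Lemma~\ref{lem: random-variables}, to positive density.

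For the remaining assertions I would proceed similarly. For $\sigma - \omega$: by \eqref{eq:bounds_on_nodal_Neumann_diff} we know $1-\beta \le \sigma-\omega \le \beta-1+\left|\dg\right|$ in general; for the stower, $\sigma - \omega = \phi(f) - \xi(f) = \tfrac12\left|\dg\right| - \tfrac12(-\left|\dg\right| + s) = \left|\dg\right| - \tfrac{s}{2}$ on the part of $\mgen$ where the edge signs are as above (here I am using $\sigma-\omega$ depends only on the single vertex $v$), so as $s$ sweeps $\{-2\beta,\dots,2\beta\}$ the quantity $\sigma-\omega$ sweeps $\{\left|\dg\right|-\beta,\dots,\left|\dg\right|+\beta\}= \{1-\beta,\dots,\beta-1+\left|\dg\right|\}$ after accounting for the $\pm1$ endpoint corrections of \eqref{eq: sum_of_signs_absolute_value}; positivity of each $P(\sigma-\omega=j)$ follows as before from strict positivity of $\BGm$ on the corresponding open set. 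For $\sigma$ itself when $\left|\dg\right|>0$: one uses the known bound $0\le\sigma\le\beta$ and shows each value is attained with positive probability by a loop/tail sign analysis (the presence of at least one tail provides the extra flexibility needed to reach $\sigma=0$ and $\sigma=\beta$, which a pure flower cannot); and in the flower case $\left|\dg\right|=0$ the sharper bound $1\le\sigma\le\beta-1$ comes from \eqref{eq: sum_of_signs_absolute_value} with $\deg v = 2\beta$ forcing $|s|\le 2\beta-2$, again with positivity of every interior value. The main obstacle I anticipate is the careful verification that the loop sign patterns are genuinely \emph{jointly} realizable on $\mgen$ --- i.e.\ that choosing phases and lengths loop-by-loop does not run into the global Neumann condition at $v$ or the genericity constraints; handling this cleanly will likely require either an explicit construction of a point of $\mgen$ with prescribed loop signs (building $f$ edge by edge and then checking $\sum_{e\in\E_v}\partial_e f(v)=0$ can be satisfied, using the freedom in the amplitudes $A_e$), or an inductive/continuity argument adding one loop at a time. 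Everything else is sign bookkeeping of the kind already carried out in the proof of Theorem~\ref{thm:Neumann_surplus_main}.
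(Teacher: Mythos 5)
Your overall strategy (reduce everything to the signs of $q_{v,v,e}$ at the single interior vertex, then show each admissible sign pattern occupies an open set of positive Barra--Gaspard measure) is in the right spirit, and is close to what the paper does via its explicit Lemma on stowers. However, two of your concrete sign claims are wrong, and they change the answer. First, the tail edges do \emph{not} contribute a forced $-1$: a tail terminates in a degree-one vertex with \emph{Neumann} (not Dirichlet) conditions, and the paper's convention $\sgn\left(f(v)\partial_e f(v)\right)=-1$ applies at that boundary vertex itself (where $\partial_e f=0$), not at the central vertex looking into the tail. Writing $f|_{e_j}(x)=A_j\cos\left(k(l_j-x)\right)$ with $x=0$ at the centre, one gets $f(v)\partial_{e_j}f(v)=kA_j^2\cos(kl_j)\sin(kl_j)$, whose sign is $\sgn\left(\tan(kl_j)\right)$ and takes both values. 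Freezing the tail signs at $-1$ collapses the tail degree of freedom entirely; carried through, your formula gives $\omega=\sigma-\left|\dg\right|+s/2$ with $s$ ranging only over loop signs, whose range does not reproduce $\{1-\left|\dg\right|,\dots,\beta-1\}$. Second, the two half-edges of a loop cannot carry mixed signs for a generic eigenfunction: the loop inversion is an isometry fixing the rest of the graph, so a simple eigenfunction is either symmetric or antisymmetric on each loop, and the antisymmetric case is a loop-eigenfunction, excluded by genericity. Hence the symmetric form $C_i\cos(x)$ forces both outgoing derivatives at the two ends of a loop to be equal, and each loop contributes $\pm 2$ (never $0$) to the sign sum.

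There is also a structural gap: to deduce the range of $\bs{\omega}=\bs{\sigma}-\frac{\left|\dg\right|}{2}+\frac12\sum_{e\in\E_v}\sgn(q_{v,v,e})$ and of $\bs{\sigma}-\bs{\omega}$ you need to know which pairs (value of $\bs{\sigma}$, sign pattern) actually co-occur on $\mgen$, not merely the separate ranges of each. The paper resolves this by proving that for the stower $\bs{\sigma}$ equals the number of loops with $\tan(y_i/2)<0$, i.e.\ $\bs{\sigma}$ is itself a function of the sign pattern; this requires computing the spectral position $N(\yv,\zv)$ explicitly by an interlacing argument for the secular function $t\mapsto\sum_j\tan(tz_j)+2\sum_i\tan(ty_i/2)$, plus a density argument to pass from a ``good'' subset to all of $\mgen$. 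Your proposal cites the nodal-surplus range as an external input and treats $\bs{\sigma}$ and the sign sum as independently tunable, which is exactly the point that needs proof. The part of your plan that does survive is the final step: once the correct set of achievable sign patterns is identified (all $(i_{\mathrm{tails}},i_{\mathrm{loops}})$ with $0\le i_{\mathrm{tails}}\le\left|\dg\right|$, $0\le i_{\mathrm{loops}}\le\beta$ and $1\le i_{\mathrm{tails}}+i_{\mathrm{loops}}\le\left|\dg\right|+\beta-1$, the last constraint coming from the vertex condition forcing at least one positive and one negative summand), each pattern defines a nonempty open subset of $\mgen$, and positivity of the probabilities follows from strict positivity of $\BGm$ exactly as you say.
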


This result shows that for any possible choice of $\beta$ and $\left|\dg\right|$
there is a corresponding stower graph such that its $\sigma-\omega$
sequence achieves both upper and lower bounds in (\ref{eq:bounds_on_nodal_Neumann_diff}).
This result supports Conjecture \ref{conj:strict_bounds_Neumann_surplus}
and shows that the bounds in (\ref{eq:bounds_on_nodal_Neumann_diff})
are optimal.

To prove Proposition \ref{prop: stower Neumann surplus} let us first
state the following.
\begin{defn}
\label{def: stowers bad sets}~Let $\Gamma$ be a stower with $m$
tails and $n$ loops. For distinction, denote points in $\torus$
by $\left(\yv,\zv\right)$ such that $\yv\in\mathbb{T}^{n}$ corresponds
to loops and $\zv\in\mathbb{T}^{m}$ corresponds to tails. Call a
tail coordinate $z_{j}$ ``bad'' if either $\sin(z_{j})=0$ or $\cos(z_{j})=0$.
Similarly, a loop coordinate $y_{i}$ is ``bad'' if either $\sin(\frac{y_{i}}{2})=0$
or $\cos(\frac{y_{i}}{2})=0$.

Denote the set of points in $\torus$ which have at least one ``bad''
coordinate by $\boldsymbol{B}^{\left(1\right)}$, and the similarly
denote the set of points with at least two ``bad'' coordinates by
$\boldsymbol{B}^{\left(2\right)}$.
\end{defn}

\begin{rem}
\label{rem: stowers - codimension}Notice that $\dim\left(\boldsymbol{B}^{\left(j\right)}\right)=E-j$.
\end{rem}

\begin{lem}
\label{lem: stowers explicit}Let $\Gamma$ be a stower with $m$
tails and $n$ loops and consider the notations of Definition (\ref{def: stowers bad sets}).
Then,
\end{lem}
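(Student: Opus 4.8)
\textbf{Proof proposal for Lemma \ref{lem: stowers explicit}.}

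The plan is to compute everything explicitly on the secular manifold of a stower using the concrete form of its eigenfunctions. Since a stower has a single interior vertex $v$, the canonical eigenfunction $f_{\kv}$ of $\Gamma_{\kv}$ (eigenvalue $1$) is entirely determined by its value $f_{\kv}(v)$ at the central vertex and the (outgoing) phases along each edge. First I would write down, for each tail edge $e_j$ of length $z_j$, the restriction $f_{\kv}|_{e_j}(x) = f_{\kv}(v)\,\frac{\cos(z_j - x)}{\cos(z_j)}$ (using that a tail ends in a boundary vertex, where the derivative vanishes), so that the outgoing derivative at $v$ is $\partial_{e_j} f_{\kv}(v) = f_{\kv}(v)\tan(z_j)$; and for each loop of length $y_i$, the restriction is a combination of $\cos$ and $\sin$ pinned by continuity and by the Neumann derivative condition at $v$, yielding outgoing derivatives whose relevant combination involves $\tan(y_i/2)$ (or $\cot(y_i/2)$, depending on the parity of the half-loop). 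The Neumann vertex condition at $v$ then becomes the single secular equation $\sum_j \tan(z_j) + 2\sum_i \tan(y_i/2)$ (up to a conventional sign/normalization) $= 0$, which describes $\Sigma$ near $\mreg$; and the genericity conditions amount to requiring $f_{\kv}(v)\ne 0$ and $\partial_{e} f_{\kv}(v)\ne 0$ for every edge, i.e. exactly that no tail coordinate satisfies $\sin(z_j)=0$ or $\cos(z_j)=0$ and no loop coordinate satisfies $\sin(y_i/2)=0$ or $\cos(y_i/2)=0$. That is, $\mgen = \mreg \setminus \boldsymbol{B}^{(1)}$, and more precisely the ``bad'' sets $\boldsymbol{B}^{(1)}, \boldsymbol{B}^{(2)}$ of Definition \ref{def: stowers bad sets} are precisely the loci where the secular equation degenerates or the eigenfunction fails genericity.

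Next I would express the observables explicitly. Using \eqref{eq:definition of Nv-1} and the identification $q_{v,v,e}(\kv) = f_{\kv}(v)\overline{\partial_e f_{\kv}(v)}$, the sign of $q_{v,v,e_j}$ for a tail is $\sgn(\tan z_j) = \sgn(\sin z_j \cos z_j)$, and for a loop the relevant sign is $\sgn(\tan(y_i/2))$; hence $\bs{N^{(v)}}(\kv) = \tfrac12\deg v - \tfrac12\big(\sum_j \sgn(\tan z_j) + \sum_i (\text{two sign terms per loop})\big)$, which I would state as an explicit combinatorial count of how many coordinates lie in which quadrant. Similarly, by \eqref{eq: rho_as_function_of_kappa} and the branch choice for $\tan^{-1}$, one gets $\bs{\rho^{(v)}}(\kv) = \tfrac1\pi\big(\sum_j [z_j \bmod \pi]^* + \sum_i (\text{contributions of } y_i/2)\big)$ where $[\cdot]^*$ denotes the representative in $(0,\pi)$; and $\bs{\sigma}$, $\bs{\omega}$ follow from Theorem \ref{thm: surplus_function_on_secular_mnfld} and \eqref{eq:Neumann-surplus-as-function-on-torus-1}. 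The content of the lemma is then the assertion that these functions, restricted to $\mgen$, take the explicit piecewise-constant (for $\bs{N^{(v)}}$, $\bs\omega$, $\bs\sigma$) or explicit formulas stated in the lemma's display — which would be verified by matching the quadrant/interval bookkeeping above to the claimed expressions.

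The step I expect to be the main obstacle is the loop bookkeeping: a single loop at $v$ contributes \emph{two} directed half-edges to $\E_v$, and carefully tracking how the coefficient pair $(A_i, B_i)$ on loop $i$ is fixed by continuity and the Neumann condition — and hence how many Neumann points, nodal points, and sign contributions each loop produces as $y_i/2$ ranges over its quadrants — requires a genuinely case-by-case analysis (four quadrants of $y_i/2$, with the ``bad'' cases $\sin(y_i/2)=0$ and $\cos(y_i/2)=0$ excluded). In contrast, the tail contributions are essentially one-dimensional and immediate. Once the per-loop and per-tail contributions are tabulated, assembling the stated formulas is routine bookkeeping, and the codimension statement of Remark \ref{rem: stowers - codimension} ($\dim \boldsymbol{B}^{(j)} = E - j$) is clear since each ``bad'' coordinate imposes one analytic equation on the torus and the secular equation on $\Sigma$ restricts to $\dim E - 1$, so imposing $j$ independent bad conditions drops the dimension by $j$.
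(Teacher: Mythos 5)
Your first half is essentially the paper's argument: you write the eigenfunction explicitly on tails and loops, derive the secular equation $\sum_j\tan(z_j)+2\sum_i\tan(y_i/2)=0$ from the Neumann condition at the central vertex, and identify genericity with avoiding the bad set $\boldsymbol{B}^{(1)}$. (The paper is slightly more careful than you are about the loop analysis: it uses the reflection symmetry of each loop to split eigenfunctions into symmetric and anti-symmetric parts, disposes of the anti-symmetric ones as loop-eigenfunctions with $y_i=2\pi\in\boldsymbol{B}^{(1)}$, and shows that a symmetric eigenfunction with $f(v)=0$ forces two bad coordinates; but these are details you could fill in.)

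The genuine gap is in the second half. The lemma's part (2) asserts the explicit formulas $\boldsymbol{\sigma}=\bs i_{\textrm{loops}}$ and $\boldsymbol{\omega}=n-(\bs i_{\textrm{tails}}+\bs i_{\textrm{loops}})$, and since $\sigma=\phi-N$ and $\omega=\xi-N$, proving them requires computing the \emph{spectral position} $N(\yv,\zv)$ of the eigenvalue $1$ in the spectrum of $\Gamma_{(\yv,\zv)}$ — the nodal and Neumann counts $\phi$ and $\xi$ are the easy part. Your proposal never addresses this: appealing to Theorem \ref{thm: surplus_function_on_secular_mnfld} is not enough, because that theorem only asserts the \emph{existence} of a function $\boldsymbol{\sigma}$ on $\mgen$ with certain symmetry properties, not a coordinate formula for it, and \eqref{eq:Neumann-surplus-as-function-on-torus-1} expresses $\boldsymbol{\omega}$ in terms of that same unknown $\boldsymbol{\sigma}$. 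The paper closes this gap by counting the zeros of $F_{(\yv,\zv)}(t)=\sum_j\tan(tz_j)+2\sum_i\tan(ty_i/2)$ for $t\in(0,1]$: on a ``good'' set $M$ where the ray $\{(t\yv,t\zv)\}$ avoids $\boldsymbol{B}^{(2)}\cup\msing$, the poles of $F$ are simple, $F$ is increasing off its poles, so zeros and poles interlace and $N$ equals a pole count; the resulting formulas are then extended from $M$ to all of $\mgen$ by a separate density argument (the cone-dimension estimate showing $\dim(PO)\le E-1$ forces $\mgen\setminus\overline{M}=\emptyset$) combined with local constancy of $\boldsymbol{\sigma}$, $\boldsymbol{\omega}$, $\bs i_{\textrm{tails}}$, $\bs i_{\textrm{loops}}$ on connected components. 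None of this machinery — the interlacing computation of $N$, the good set $M$, or the density argument — appears in your proposal, and without it the claimed formulas for $\boldsymbol{\sigma}$ and $\boldsymbol{\omega}$ are not established. (A smaller point: $\bs{N^{(v)}}$ and $\bs{\rho^{(v)}}$, on which you spend some effort, are not part of this lemma, and the codimension claim $\dim\boldsymbol{B}^{(j)}=E-j$ refers to subsets of the $E$-dimensional torus, not of $\Sigma$.)
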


\begin{enumerate}
\item \label{enu: stowers sigma reg }$\mgen$ is given by
\[
\mgen=\set{\left(\yv,\zv\right)\in\torus\setminus\boldsymbol{B}^{\left(1\right)}}{\sum_{j=1}^{m}\tan(z_{j})+2\sum_{i=1}^{n}\tan\left(\frac{y_{i}}{2}\right)=0}.
\]

\item \label{enu: stowers indicies}We denote for $\left(\yv,\zv\right)\in\mgen$
\begin{align*}
\bs i_{\textrm{tails}} & (\zv):=\left|\set{1\leq j\leq m}{\tan\left(z_{j}\right)<0}\right|,\\
\bs i_{\textrm{loops}} & (\yv):=\left|\set{1\leq i\leq n}{\tan\left(\frac{y_{i}}{2}\right)<0}\right|.
\end{align*}
Then the Neumann surplus and nodal surplus functions (introduced in
Lemma \ref{lem: functions_on_secular_manifold_existence_and_symmetry})
satisfy,
\begin{align*}
\boldsymbol{\omega}\left(\yv,\zv\right) & =n-(\bs i_{\textrm{tails}}\left(\zv\right)+\bs i_{\textrm{loops}}\left(\yv\right)),\\
\boldsymbol{\sigma}\left(\yv,\zv\right) & =\bs i_{\textrm{loops}}\left(\yv\right).
\end{align*}
\end{enumerate}
The proof of Lemma \ref{lem: stowers explicit} appears after the
proof of Proposition \ref{prop: stower Neumann surplus}:
\begin{proof}
[Proof of Proposition \ref{prop: stower Neumann surplus} ]According
to Theorem \ref{thm:Neumann_surplus_main} (\ref{enu:thm-density_of_Neumann_surplus-b}),
Theorem \ref{thm: equidistribution_by_BG_measure} and Lemma \ref{lem: functions_on_secular_manifold_existence_and_symmetry},
it is enough to prove that the functions $\boldsymbol{\sigma}$ and
$\boldsymbol{\omega}$, defined on $\mgen$, satisfy
\begin{align}
\imag\left(\boldsymbol{\omega}\right) & =\left\{ 1-\left|\dg\right|,...,\beta-1\right\} ,\label{eq: stower Image omega}\\
\imag\left(\boldsymbol{\sigma}-\boldsymbol{\omega}\right) & =\left\{ 1-\beta,...,\beta+\left|\dg\right|-1\right\} ,\label{eq: stower image sigma}
\end{align}
and
\begin{equation}
\imag\left(\boldsymbol{\sigma}\right)=\begin{cases}
\left\{ 0,...\beta\right\}  & \left|\dg\right|>0\\
\left\{ 1,...\beta-1\right\}  & \left|\dg\right|=0
\end{cases}.\label{eq: stowe image difference}
\end{equation}
Let $\Gamma$ be a stower with $m$ tails and $n$ loops and consider
the notations of Definition (\ref{def: stowers bad sets}). For every
point $\left(\yv,\zv\right)\in\torus$, define $t_{j},s_{i}\in\R\cup\left\{ \infty\right\} $
by $t_{j}:=\tan\left(z_{j}\right)$ and $s_{i}:=\tan\left(\frac{y_{i}}{2}\right)$
for all coordinates $z_{j}$ and $y_{i}$. Notice that $\left(\yv,\zv\right)\in\torus\setminus\boldsymbol{B}^{\left(1\right)}$
if and only if all $t_{j}$'s and $s_{j}$' lie in $\R\setminus\left\{ 0\right\} $.
According to Lemma \ref{lem: stowers explicit} (\ref{enu: stowers sigma reg }),
$\left(\yv,\zv\right)\in\mgen$ if and only if $\left(\yv,\zv\right)\in\torus\setminus\boldsymbol{B}^{\left(1\right)}$
and
\begin{equation}
\sum_{j=1}^{m}t_{j}+2\sum_{j=1+m}^{m+n}s_{j}=0.\label{eq: stower sum of t and s}
\end{equation}
The $\bs i_{\textrm{tails}}\left(\zv\right)$ and $\bs i_{\textrm{loops}}\left(\yv\right)$
indices, defined in Lemma \ref{lem: stowers explicit} (\ref{enu: stowers indicies}),
are equal to the number of negative $t_{j}$'s and $s_{j}$'s correspondingly.
Observe that
\begin{align}
0 & \le\bs i_{\textrm{tails}}\le m,\,\text{and}\label{eq: i tails}\\
0 & \le\bs i_{\textrm{loops}}\le n,\label{eq: i loops}
\end{align}
however a solution to (\ref{eq: stower sum of t and s}) with non-zero
$t_{j}$'s and $s_{j}$' must have at least one positive and one negative
summands, and therefore
\begin{equation}
1\le\bs i_{\textrm{tails}}+\bs i_{\textrm{loops}}\le m+n-1.\label{eq: itails plus iloops}
\end{equation}
Using Lemma \ref{lem: stowers explicit} (\ref{enu: stowers indicies})
and the bounds in (\ref{eq: i tails}), (\ref{eq: i loops}) and (\ref{eq: itails plus iloops}),
proves the corresponding bounds on $\boldsymbol{\omega},\,\boldsymbol{\sigma}$
and $\boldsymbol{\sigma}-\boldsymbol{\omega}$ and hence inclusion
in (\ref{eq: stower Image omega}),(\ref{eq: stower image sigma})
and (\ref{eq: stowe image difference}).

In order to prove the actual equalities in (\ref{eq: stower Image omega}),(\ref{eq: stower image sigma})
and (\ref{eq: stowe image difference}), namely that every value is
attained, we show that for any integers $i_{t}$ and $i_{s}$ satisfying
\begin{align*}
0 & \le i_{t}\le m,\\
0 & \le i_{s}\le n,\,\text{and}\\
1 & \le i_{t}+i_{s}\le m+n-1,
\end{align*}
there exist $\left(\yv,\zv\right)\in\mgen$ for which $\bs i_{\textrm{tails}}\left(\zv\right)=i_{t}$
and $\bs i_{\textrm{loops}}\left(\yv\right)=i_{s}$. Clearly, one
can construct a solution to (\ref{eq: stower sum of t and s}) for
which all $t_{j}$'s and $s_{j}$' lie in $\R\setminus\left\{ 0\right\} $
and there are exactly $i_{t}$ negative $t_{j}$'s and $i_{s}$ negative
$s_{j}$'s. Consider $\tan^{-1}:\R\setminus\left\{ 0\right\} \rightarrow\left(0,\frac{\pi}{2}\right)\cup\left(\frac{\pi}{2},\pi\right)$
and define $z_{j}=\tan^{-1}\left(t_{j}\right)$ for all $j\le m$
and $y_{i}=2\tan^{-1}\left(s_{i}\right)$ for $1\le i\le n$. By construction,
$\left(\yv,\zv\right)\in\mgen$ and satisfies $\bs i_{\textrm{tails}}\left(\zv\right)=i_{t}$
and $\bs i_{\textrm{loops}}\left(\yv\right)=i_{s}$.
\end{proof}
It remains to prove Lemma \ref{lem: stowers explicit}:
\begin{proof}
[Proof of Lemma \ref{lem: stowers explicit} ] The first part of
the lemma is an explicit characterization $\mgen$. To do so consider
the standard graph $\Gamma_{\left(\yv,\zv\right)}$, i.e., with edge
lengths $\lv=\left(\yv,\zv\right)$. In the following, we construct
an eigenfunction $f$ of $\Gamma_{\kv}$ with eigenvalue $k=1$, providing
the conditions on $\left(\yv,\zv\right)\in\torus$ for which such
an eigenfunction exists.

Let $v$ be the central vertex of $\Gamma$, and consider a parametrization
of each tail $e_{j}$ by arc-length parametrization $x_{j}\in\left[0,z_{j}\right]$
with $x_{j}=0$ at $v$. Consider a similar parametrization for each
loop $e_{i}$ but let $x_{i}\in\left[-\frac{y_{i}}{2},\frac{y_{i}}{2}\right]$
such that $x_{i}=\pm\frac{y_{i}}{2}$ at $v$. Let $f$ be an eigenfunction
of eigenvalue $k=1$, then its restriction to every tail $e_{j}$
can be written as:
\begin{align}
f|_{e_{j}}\left(x_{j}\right) & =A_{j}\cos\left(z_{j}-x_{j}\right),\label{eq: stower f restriction}
\end{align}
and its restriction to every loop $e_{i}$ can be written as:

\begin{align}
f|_{e_{i}}\left(x_{i}\right) & =C_{i}\cos\left(x_{i}\right)+B_{i}\sin\left(x_{i}\right).\label{eq: stower f restriction-1}
\end{align}
For every loop $e_{i}$, consider the inversion defined by $x_{i}\mapsto-x_{i}$
on $e_{i}$ while fixing $\Gamma\setminus e_{i}$. It is an isometry
of $\Gamma_{\left(\yv,\zv\right)}$ and so we may choose all eigenfunctions
of $\Gamma_{\left(\yv,\zv\right)}$ to be either symmetric or anti-symmetric
with respect to this inversion. Anti-symmetric eigenfunctions are
loop-eigenfunctions (as defined in Subsection \ref{subsec: loop_eigenfunctions_and_generic_eigenfunctions})
and are not generic eigenfunctions (in the sense of Definition \ref{def: generic_eigenfunction}).
A loop-eigenfunction which is supported on $e_{i}$ exists if and
only if $y_{i}=2\pi$, in which case $\left(\yv,\zv\right)\in\boldsymbol{B}^{\left(1\right)}$.

We call $f$ symmetric if it is symmetric on every loop. If $f$ is
symmetric then all $B_{i}$'s in (\ref{eq: stower f restriction-1})
are zero. For symmetric eigenfunctions, the Neumann vertex condition
at $v$ can be written as,
\begin{align}
\forall j\le m\qquad & A_{j}\cos\left(z_{j}\right)=f\left(v\right),\label{eq: stower continuity tail}\\
\forall i\le n\qquad & C_{i}\cos\left(\frac{y_{i}}{2}\right)=f\left(v\right),\,\,\text{ and}\label{eq: stower continuity loop}\\
-k\big(\sum_{j=1}^{m}A_{j}\sin\left(z_{j}\right) & +2\sum_{i=1}^{m}C_{i}\sin\left(\frac{y_{i}}{2}\right)\big)=0.\label{eq: stower derivatives}
\end{align}
In particular, for (\ref{eq: stower derivatives}) to hold there must
be at least two non zero amplitudes among all $A_{j}$'s and $C_{i}$'s.
We may deduce from (\ref{eq: stower continuity tail}) and (\ref{eq: stower continuity loop})
that if $f$ is symmetric with $f\left(v\right)=0$, then $\left(\yv,\zv\right)\in\boldsymbol{B}^{\left(2\right)}$.

A symmetric eigenfunction $f$, with $f\left(v\right)\ne0$, implies
that $\cos\left(z_{j}\right)\ne0$ for every tail and $A_{j}=\frac{f\left(v\right)}{\cos\left(z_{j}\right)}$
by (\ref{eq: stower continuity tail}). Similarly, for every loop,
$\cos\left(\frac{y_{i}}{2}\right)\ne0$ and $C_{i}=\frac{f\left(v\right)}{\cos\left(\frac{y_{i}}{2}\right)}$
by (\ref{eq: stower continuity loop}). Dividing (\ref{eq: stower derivatives})
by $-kf\left(v\right)\ne0$, gives
\begin{equation}
\sum_{j=1}^{m}\tan\left(x_{j}\right)+2\sum_{i=1}^{n}\tan\left(\frac{y_{i}}{2}\right)=0.\label{eq: derivative condition-1}
\end{equation}
In such case, the outgoing derivatives of $f$ at $v$ do not vanish
if and only if,
\begin{equation}
\forall j\le m\,\,\,\sin\left(z_{j}\right)\ne0\,\,\text{ and }\,\,\forall1\le i\le n\,\,\sin\left(\frac{y_{i}}{2}\right)\ne0.\label{eq: stower sin derivatives-1-1}
\end{equation}
We may now deduce that $f$ is not generic if $\left(\yv,\zv\right)\in\boldsymbol{B}^{\left(1\right)}$.
Moreover, if $\left(\yv,\zv\right)\notin\boldsymbol{B}^{\left(1\right)}$
then $\Gamma_{\left(\yv,\zv\right)}$ has an eigenfunction $f$ of
eigenvalue $k=1$ if and only if (\ref{eq: derivative condition-1})
holds, and in such case the construction implies that $f$ is unique
(up to a scalar multiplication), $f\left(v\right)\ne0$, and non of
the outgoing derivatives of $f$ at $v$ vanish. Namely, $f$ is generic.
This proves Lemma \ref{lem: stowers explicit} (\ref{enu: stowers sigma reg }),
that is:
\[
\mgen=\set{\left(\zv,\yv\right)\in\torus\setminus\boldsymbol{B}^{\left(1\right)}}{\sum_{j=1}^{m}\tan\left(x_{j}\right)+2\sum_{i=1}^{n}\tan\left(\frac{y_{i}}{2}\right)=0}.
\]
To prove Lemma \ref{lem: stowers explicit} (\ref{enu: stowers indicies}),
let $\left(\yv,\zv\right)\in\mgen$ and let $f$ be the corresponding
generic eigenfunction of eigenvalue $k=1$. Since $f$ must be symmetric,
all $B_{i}$'s in are zero, and we may derive the nodal and Neumann
counts from (\ref{eq: stower f restriction}) and (\ref{eq: stower f restriction-1}):

\begin{align}
\phi\left(f\right) & =\left|\set{j\le m}{z_{j}>\frac{\pi}{2}}\right|+\left|\set{j\le m}{z_{j}>\frac{3\pi}{2}}\right|+...\label{eq: nodal count for stower}\\
 & ...+2\left|\set{1\le i\le n}{y_{i}>\pi}\right|,\\
\xi\left(f\right) & =\left|\set{j\le m}{z_{j}>\pi}\right|+n.\label{eq: Neumann count for stower}
\end{align}
The nodal and Neumann surpluses are given by subtracting the spectral
position $N\left(\zv,\yv\right)$ from the above. The spectral position
is defined by,
\begin{align*}
N\left(\yv,\zv\right):= & \left|\set{k\in\opcl{0,1}}{k\,\,\text{is an eigenvalue of }\Gamma_{\left(\yv,\zv\right)}}\right|,\\
= & \left|\set{t\in\opcl{0,1}}{\left(t\yv,t\zv\right)\in\Sigma}\right|,
\end{align*}
both counts including multiplicity. In order to avoid multiplicities
and to ease the computations define the ``good'' set:
\[
M=\set{\left(\yv,\zv\right)\in\mgen}{\forall t\in\opcl{0,1}\,\,\,\,\,\,\left(t\yv,t\zv\right)\notin\boldsymbol{B}^{\left(2\right)}\cup\msing}.
\]
Let $\left(\yv,\zv\right)\in M$ and let $t\in\left(0,1\right)$ such
that $\left(t\yv,t\zv\right)\in\Sigma$, then $\left(t\yv,t\zv\right)\in\mreg$
and corresponds to an eigenfunction that satisfies $f\left(v\right)\ne0$
since $\left(t\yv,t\zv\right)\notin\boldsymbol{B}^{\left(2\right)}$
and $ty_{i}<2\pi$ for all $i\le n$. Therefore, $N\left(\yv,\zv\right)$
for $\left(\yv,\zv\right)\in M$ is given by
\[
N\left(\yv,\zv\right)=\left|\set{t\in\opcl{0,1}}{\sum_{j=1}^{m}\tan\left(tz_{j}\right)+2\sum_{i=1}^{n}\tan\left(t\frac{y_{i}}{2}\right)=0}\right|.
\]
Define $F_{\left(\yv,\zv\right)}:\opcl{0,1}\rightarrow\R\cup\left\{ \infty\right\} $
by
\[
F_{\left(\yv,\zv\right)}\left(t\right)=\sum_{j=1}^{m}\tan\left(tz_{j}\right)+2\sum_{i=1}^{n}\tan\left(t\frac{y_{i}}{2}\right).
\]
If $\left(\yv,\zv\right)\in M$ then only one of the summands of $F_{\left(\yv,\zv\right)}\left(t\right)$
can diverge at each time $t\in\opcl{0,1}$ since $\left(t\yv,t\zv\right)\notin\boldsymbol{B}^{\left(2\right)}$.
Therefore, the poles of $F_{\left(\yv,\zv\right)}\left(t\right)$
are simple. Since $\frac{d}{dt}F_{\left(\yv,\zv\right)}\left(t\right)>0$
whenever $t$ is not a pole then the zeros of $F_{\left(\yv,\zv\right)}$
are simple and they interlace with the poles. Since the interlacing
sequence of zeros and poles ends with a zero at $F_{\left(\yv,\zv\right)}\left(1\right)=0$,
we can find $N\left(\yv,\zv\right)$ by counting poles:
\begin{align*}
N\left(\yv,\zv\right) & =\left|\set{j\le m}{z_{j}>\frac{\pi}{2}}\right|+\left|\set{j\le m}{z_{j}>\frac{3\pi}{2}}\right|+....\\
... & +\left|\set{1\le i\le n}{y_{i}>\pi}\right|.
\end{align*}
Subtracting $N\left(\yv,\zv\right)$ from (\ref{eq: nodal count for stower})
and (\ref{eq: Neumann count for stower}) gives the nodal and Neumann
surplus for any $\left(\yv,\zv\right)\in M$:
\begin{align*}
\boldsymbol{\omega}\left(\yv,\zv\right) & =n+\left|\set{j\le m}{z_{j}>\pi}\right|-N\left(\zv,\yv\right),\\
\boldsymbol{\sigma}\left(\yv,\zv\right) & =\left|\set{1\le i\le n}{y_{i}>\pi}\right|,
\end{align*}
which can be rewritten, using $\bs i_{\textrm{tails}}$ and $\bs i_{\textrm{loops}}$
(as defined in Lemma \ref{lem: stowers explicit} (\ref{enu: stowers indicies})),
by
\begin{align}
\boldsymbol{\omega}\left(\yv,\zv\right) & =n-\left(\bs i_{\textrm{tails}}\left(\zv\right)-\bs i_{\textrm{loops}}\left(\yv\right)\right).\label{eq: Neumann surplus stower}\\
\boldsymbol{\sigma}\left(\yv,\zv\right) & =\bs i_{\textrm{loops}}\left(\yv\right).
\end{align}
This proves the statement of Lemma \ref{lem: stowers explicit} (\ref{enu: stowers indicies})
for points in $M$.

In order to extend this result from $M$ to $\mgen$ we recall that
$\boldsymbol{\sigma}$ and $\boldsymbol{\omega}$ are constant on
each connected component of $\mgen$ (Lemma \ref{lem: functions_on_secular_manifold_existence_and_symmetry}
and Theorem \ref{thm: surplus_function_on_secular_mnfld}). Observe
that $\bs i_{\textrm{tails}}$ and $\bs i_{\textrm{loops}}$ are constant
on connected components of $\torus\setminus\boldsymbol{B}^{\left(1\right)}$
and are therefore constant on connected components of $\mgen\subset\torus\setminus\boldsymbol{B}^{\left(1\right)}$
as well. It is thus enough to show that $M$ intersects each of the
connected components of $\mgen$ to conclude that Lemma \ref{lem: stowers explicit}
(\ref{enu: stowers indicies}) holds. To do so, we prove that $M$
is dense in $\mgen$.

Consider the open set $O=\mgen\setminus\overline{M}$ so that we need
to prove that $O=\emptyset$. Consider the cone $PO:=\set{\left(t\yv,t\zv\right)}{t\in\opcl{0,1},\,\left(\yv,\zv\right)\in O}$.
By the definition of $PO$, it is a union of lines of the form $\set{\left(t\yv,t\zv\right)}{t\in\opcl{0,1}}$
for some $\left(\yv,\zv\right)\in O$. Each such line intersect $\boldsymbol{B}^{\left(2\right)}\cup\msing$,
by the definition of $O$, and each two such lines are either disjoint
or one is contained in the other. It follows that the dimension of
$PO$ is bounded by
\begin{equation}
\dim\left(PO\right)\le\dim\left(\boldsymbol{B}^{\left(2\right)}\cup\msing\right)+1=E-1.\label{eq: dim PO<E}
\end{equation}
In the last equality we use Remark \ref{rem: stowers - codimension}
which states that $\dim\left(\boldsymbol{B}^{\left(2\right)}\right)=E-2$
and the fact that $\dim\left(\msing\right)\le E-2$. On the other
hand, by Lemma \ref{lem: stowers explicit} (\ref{enu: stowers sigma reg }),
\[
\mgen=\set{\left(\yv,\zv\right)\in\torus\setminus\boldsymbol{B}^{\left(1\right)}}{F_{\left(\yv,\zv\right)}\left(1\right)=0},
\]
and since $\frac{d}{dt}F_{\left(\yv,\zv\right)}\left(t\right)|_{t=1}>0$
on every $\left(\yv,\zv\right)\in\mgen$, then each of the lines in
$PO$ intersect $\mgen$ transversely. It follows that
\[
\dim\left(PO\right)=\dim\left(O\right)+1.
\]
It follows from (\ref{eq: dim PO<E}) that $\dim\left(O\right)\le E-2$.
However, $O$ is open in $\mgen$ which is of $\dim\mgen=E-1$ so
$O=\emptyset$. Therefore, $M$ is dense in $\mgen$.
\end{proof}

\subsection{Mandarins. }

A mandarin graph $\Gamma$ is a graph with only two vertices and $E\ge3$
edges such that every edge is connected to both vertices. See figure
\ref{fig: stower and mandarin} for example. In such case $\beta=E-1$
and $\partial\Gamma=\emptyset$.
\begin{prop}
\label{prop: mandarins}If $\Gamma$ is a mandarin then its Neumann
surplus is bounded by
\[
0\le\omega\le\beta,
\]
its nodal surplus is bounded by
\[
1\le\sigma\le\beta-1,
\]
and the difference $\sigma-\omega$ is supported inside the set $\left\{ 1-\beta,3-\beta,5-\beta,...,-1+\beta\right\} $.
\\
Moreover, if $\Gamma$ has rationally independent edge lengths, then
its nodal surplus and Neumann surplus distributions satisfy
\begin{align*}
\forall j\in\left\{ 0,1,...\beta-1\right\} \qquad & P\left(\omega=j\right)+P\left(\omega=j+1\right)>0,\\
\forall j\in\left\{ 1,...\beta-2\right\} \qquad & P\left(\sigma=j\right)+P\left(\sigma=j+1\right)>0,\,\,\text{and}\\
\forall j\in\left\{ 1-\beta,3-\beta,5-\beta,...,-1+\beta\right\} \qquad & P\left(\sigma-\omega=j\right)>0.
\end{align*}
\end{prop}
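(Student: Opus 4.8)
The plan is to imitate the treatment of stowers (Lemma \ref{lem: stowers explicit} and the proof of Proposition \ref{prop: stower Neumann surplus}): describe the generic manifold $\mgen$ of a mandarin explicitly, express the functions $\bs{\sigma},\bs{\omega}$ of Theorem \ref{thm: surplus_function_on_secular_mnfld} and Lemma \ref{lem: functions_on_secular_manifold_existence_and_symmetry} on it, and then read off their images together with the positivity of the probabilities exactly as there, via the equidistribution of $\left\{\flow(n)\right\}_{n\in\G}$ with respect to $\BGm$ (Theorem \ref{thm: equidistribution_by_BG_measure}) and Theorem \ref{thm:Neumann_surplus_main}(\ref{enu:thm-density_of_Neumann_surplus-b}). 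Write $E=\beta+1$, let $v_{0},v_{1}$ be the two vertices, and parametrise each edge $e_{j}$ by $y_{j}\in\left[-\tfrac{\kappa_{j}}{2},\tfrac{\kappa_{j}}{2}\right]$, $\kappa_{j}\in(0,2\pi]$, with midpoint $y_{j}=0$.

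The first task is to pin down $\mgen$. The mandarin carries the isometric involution $\iota$ which swaps $v_{0}\leftrightarrow v_{1}$ by reflecting every edge about its midpoint, so any eigenfunction with a simple eigenvalue is either $\iota$-even ($f|_{e_{j}}=P_{j}\cos y_{j}$) or $\iota$-odd ($f|_{e_{j}}=Q_{j}\sin y_{j}$). Writing out the Neumann conditions at $v_{0},v_{1}$ and discarding the (finite union of) lower-dimensional coordinate subsets on which some $\cos(\kappa_{j}/2)$ or $\sin(\kappa_{j}/2)$ vanishes, one finds that a generic even eigenfunction with eigenvalue $1$ exists iff $\sum_{j}\tan(\kappa_{j}/2)=0$, a generic odd one iff $\sum_{j}\cot(\kappa_{j}/2)=0$, and that these two conditions cut out codimension-one ``sheets'' meeting in codimension two. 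Hence, up to a set of $\BGm$-measure zero, $\mgen$ is the disjoint union of $\Sigma^{\mathrm{even}}=\left\{\sum_{j}\tan(\kappa_{j}/2)=0\right\}$ and $\Sigma^{\mathrm{odd}}=\left\{\sum_{j}\cot(\kappa_{j}/2)=0\right\}$, minus the subsets above. Set $q:=\left|\left\{j:\kappa_{j}>\pi\right\}\right|$; on either sheet the defining equation forces at least one positive and one negative summand, so $1\le q\le\beta$, and $q$ is constant on connected components.

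Next I would compute $\bs{\sigma},\bs{\omega}$ on each sheet. On $\Sigma^{\mathrm{even}}$ the restriction $P_{j}\cos y_{j}$ has one interior Neumann point ($y_{j}=0$) on every edge and two nodal points ($y_{j}=\pm\tfrac{\pi}{2}$) precisely when $\kappa_{j}>\pi$, so $\xi(f)=E=\beta+1$ and $\phi(f)=2q$. For the spectral position $N$ I would, as in the stower proof, restrict to a dense ``good'' subset of $\Sigma^{\mathrm{even}}$ on which, along each ray $\left\{t\kv:t\in(0,1]\right\}$, the secular function has only simple poles and splits into the even branch $t\mapsto\sum_{j}\tan(t\kappa_{j}/2)$ (increasing between its poles, with exactly one pole in $(0,1)$ per edge having $\kappa_{j}>\pi$) and the odd branch $t\mapsto\sum_{j}\cot(t\kappa_{j}/2)$ (strictly decreasing, pole-free on $(0,1)$); counting zeros by interlacing gives $N=q+\mathbf{1}\!\left[\sum_{j}\cot(\kappa_{j}/2)<0\right]$ (the indicator being $1$ when the inequality holds and $0$ otherwise), whence
\[
\bs{\sigma}=q-\mathbf{1}\!\left[\textstyle\sum_{j}\cot(\kappa_{j}/2)<0\right],\qquad\bs{\omega}=(\beta+1-q)-\mathbf{1}\!\left[\textstyle\sum_{j}\cot(\kappa_{j}/2)<0\right].
\]
The symmetric computation on $\Sigma^{\mathrm{odd}}$ ($\phi(f)=E$, $\xi(f)=2q$, $N=q+\mathbf{1}\!\left[\sum_{j}\tan(\kappa_{j}/2)>0\right]$) gives $\bs{\sigma}=(\beta+1-q)-\mathbf{1}\!\left[\sum_{j}\tan(\kappa_{j}/2)>0\right]$ and $\bs{\omega}=q-\mathbf{1}\!\left[\sum_{j}\tan(\kappa_{j}/2)>0\right]$. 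On either sheet $\bs{\sigma}-\bs{\omega}=\pm(2q-\beta-1)$, so with $1\le q\le\beta$ one gets $\imag(\bs{\sigma}-\bs{\omega})\subseteq\left\{1-\beta,3-\beta,\dots,\beta-1\right\}$, and a priori $\bs{\sigma},\bs{\omega}\in\left\{0,\dots,\beta\right\}$.

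It remains to sharpen $0\le\sigma\le\beta$ to $1\le\sigma\le\beta-1$, and this is the one genuinely new ingredient. On $\Sigma^{\mathrm{even}}$, $\bs{\sigma}=0$ forces $q=1$ and $\sum_{j}\cot(\kappa_{j}/2)<0$; writing $a_{j}:=\tan(\kappa_{j}/2)$ with $a_{1}<0<a_{2},\dots,a_{E}$ and $\sum_{j}a_{j}=0$, one has $-a_{1}^{-1}=\bigl(\sum_{j\ge2}a_{j}\bigr)^{-1}$, so $\sum_{j}\cot(\kappa_{j}/2)=\sum_{j}a_{j}^{-1}>0$ is equivalent to $\bigl(\sum_{j\ge2}a_{j}\bigr)\bigl(\sum_{j\ge2}a_{j}^{-1}\bigr)>1$, which holds by Cauchy--Schwarz since $\bigl(\sum_{j\ge2}a_{j}\bigr)\bigl(\sum_{j\ge2}a_{j}^{-1}\bigr)\ge(E-1)^{2}\ge4$. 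Thus $q=1$ excludes $\sum_{j}\cot<0$ and forces $\bs{\sigma}=1$, so $\bs{\sigma}\ge1$ on $\Sigma^{\mathrm{even}}$; since $\I$ preserves each sheet (Lemma \ref{lem: Inversion-properties}) and $\bs{\sigma}\circ\I=\beta-\bs{\sigma}$ (Theorem \ref{thm: surplus_function_on_secular_mnfld}), this also gives $\bs{\sigma}\le\beta-1$, and the identical Cauchy--Schwarz estimate with $\tan$ and $\cot$ interchanged settles $\Sigma^{\mathrm{odd}}$. Combined with the displayed formulas this yields all three bounds. For the ``moreover'' part, for each $q\in\left\{1,\dots,\beta\right\}$ pick $\tau_{1},\dots,\tau_{E}\in\R\setminus\left\{0\right\}$ with exactly $q$ of them negative and $\sum_{j}\tau_{j}=0$, set $\kappa_{j}:=2\tan^{-1}(\tau_{j})$ with the branch $\tan^{-1}\colon\R\setminus\left\{0\right\}\to(0,\tfrac{\pi}{2})\cup(\tfrac{\pi}{2},\pi)$ used in the stower proof; this produces a point of $\Sigma^{\mathrm{even}}$ with that value of $q$, whose connected component is a nonempty open subset of $\mgen$, hence of positive $\BGm$-measure, on which $\bs{\omega}\in\left\{\beta-q,\beta+1-q\right\}$, $\bs{\sigma}\in\left\{q-1,q\right\}$ and $\bs{\sigma}-\bs{\omega}=2q-\beta-1$. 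Translating back to $d_{\G}$ as in the proof of Theorem \ref{thm:Neumann_surplus_main}(\ref{enu:thm-density_of_Neumann_surplus}) and letting $q$ range over $\left\{1,\dots,\beta\right\}$ gives the three positivity statements. The main obstacle I anticipate is the first step: the explicit description of $\mgen$ and, within it, the correct count of $N$ along the rays $t\kv$, which (as for stowers) requires isolating a dense ``good'' set on which the even and odd branches of the secular function do not interfere and all relevant poles are simple; the Cauchy--Schwarz exclusion of $\sigma\in\left\{0,\beta\right\}$ is the other, shorter, non-routine point.
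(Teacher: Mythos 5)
Your proposal follows essentially the same route as the paper's proof: the same even/odd (symmetric/anti-symmetric) decomposition of $\mgen$ into two sheets, the same formulas for $\bs{\sigma},\bs{\omega}$ in terms of the number $q$ of coordinates with $\tan(\kappa_{j}/2)<0$ and an indicator of the sign of the complementary secular branch, the same interlacing count of $N$ along the rays $t\kv$ restricted to a dense ``good'' set, the same Cauchy--Schwarz exclusion of $\sigma\in\left\{ 0,\beta\right\}$ (the paper uses the equivalent AM--HM inequality), and the same realization of every value of $q$ followed by equidistribution. The only point to tighten is that your constructed tuple $\tau_{1},\dots,\tau_{E}$ must also satisfy $\sum_{j}\tau_{j}^{-1}\neq0$, since otherwise the resulting point lies in $\msing$ rather than in $\mgen$; a small perturbation preserving the sign pattern and the constraint $\sum_{j}\tau_{j}=0$ always achieves this.
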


This example supports Conjecture \ref{conj:strict_bounds_Neumann_surplus}
and shows that the bounds in (\ref{eq:bounds_on_nodal_Neumann_diff})
can be achieved. To prove Proposition \ref{prop: mandarins} let us
first state the following.
\begin{rem}
The bound $1\le\sigma\le\beta-1$ for mandarin graphs was already
shown in \cite{BanBerWey_jmp15}.
\end{rem}

\begin{defn}
\label{def: stowers bad sets-1}~Let $\Gamma$ be a mandarin graph
with $E$ edges. Call a coordinate $\kappa_{j}$ ``bad'' if either
$\sin(\frac{\kappa_{j}}{2})=0$ or $\cos(\frac{\kappa_{j}}{2})=0$.
Denote the set of points in $\torus$ which have at least one ``bad''
coordinate by $\boldsymbol{B}^{\left(1\right)}$, and the similarly
denote the set of points with at least two ``bad'' coordinates by
$\boldsymbol{B}^{\left(2\right)}$.
\end{defn}

\begin{lem}
\label{lem: mandarins explicit}Let $\Gamma$ be a mandarin with $E$
edges. Define two functions on $\torus$ by,
\begin{align}
F_{s}\left(\kv\right) & :=\sum_{j=1}^{E}\tan\left(\frac{\kappa_{j}}{2}\right),\,\text{ and}\\
F_{a}\left(\kv\right) & :=\sum_{j=1}^{E}\cot\left(\frac{\kappa_{j}}{2}\right).
\end{align}
Then $\mgen$ has a disjoint decomposition $\mgen=\mgen_{s}\sqcup\mgen_{a}$
such that,
\end{lem}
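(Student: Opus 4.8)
The plan is to exploit the reflection symmetry of the mandarin and then to run, essentially verbatim, the interlacing-and-density argument from the proof of Lemma~\ref{lem: stowers explicit}. Write $v_{1},v_{2}$ for the two vertices and parametrize each edge $e_{j}$ by arc length $x\in[0,\kappa_{j}]$ with $x=0$ at $v_{1}$. The map $x\mapsto\kappa_{j}-x$ on every edge is an isometry of $\Gamma_{\kv}$ swapping $v_{1}\leftrightarrow v_{2}$, so every eigenspace admits an orthogonal basis of eigenfunctions which are symmetric or anti-symmetric under it; in particular on $\mreg$ the (simple) canonical eigenfunction $f_{\kv}$ is of exactly one type. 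For eigenvalue $k=1$ a symmetric eigenfunction has $f|_{e_{j}}(x)=A_{j}\cos\big(x-\tfrac{\kappa_{j}}{2}\big)$ and an anti-symmetric one has $f|_{e_{j}}(x)=B_{j}\sin\big(x-\tfrac{\kappa_{j}}{2}\big)$.

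First I would impose the vertex conditions. Continuity and the Kirchhoff condition at $v_{1}$ (those at $v_{2}$ follow by symmetry) read, in the symmetric case, $f(v_{1})=f(v_{2})=A_{j}\cos(\kappa_{j}/2)$ for all $j$ together with $\sum_{j}A_{j}\sin(\kappa_{j}/2)=0$, and in the anti-symmetric case $f(v_{1})=-f(v_{2})=-B_{j}\sin(\kappa_{j}/2)$ for all $j$ together with $\sum_{j}B_{j}\cos(\kappa_{j}/2)=0$. If some coordinate is ``bad'' (Definition~\ref{def: stowers bad sets-1}) then $f_{\kv}$ cannot be generic: when $\cos(\kappa_{j}/2)=0$ in the symmetric case, or $\sin(\kappa_{j}/2)=0$ in the anti-symmetric case, the eigenfunction vanishes at $v_{1}$, and in the two remaining cases an outgoing derivative at $v_{1}$ vanishes. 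Hence $\mgen\subset\torus\setminus\boldsymbol{B}^{(1)}$, and off $\boldsymbol{B}^{(1)}$ with $f(v_{1})\neq 0$ one gets $A_{j}=f(v_{1})/\cos(\kappa_{j}/2)$ (resp. $B_{j}=-f(v_{1})/\sin(\kappa_{j}/2)$), so that the Kirchhoff condition becomes exactly $F_{s}(\kv)=0$ (resp. $F_{a}(\kv)=0$) and genericity holds automatically. Finally, a point with $F_{s}(\kv)=F_{a}(\kv)=0$ off $\boldsymbol{B}^{(1)}$ carries both a symmetric and an anti-symmetric eigenfunction of eigenvalue $1$, so that eigenvalue is not simple and the point lies outside $\mreg$; this gives the disjointness and the explicit descriptions
\[
\mgen_{s}=\set{\kv\in\torus\setminus\boldsymbol{B}^{(1)}}{F_{s}(\kv)=0,\ F_{a}(\kv)\neq 0},\qquad
\mgen_{a}=\set{\kv\in\torus\setminus\boldsymbol{B}^{(1)}}{F_{a}(\kv)=0,\ F_{s}(\kv)\neq 0}.
\]

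Next I would compute $\boldsymbol{\sigma}=\phi(f_{\kv})-N(\kv)$ and $\boldsymbol{\omega}=\xi(f_{\kv})-N(\kv)$ on a dense ``good'' subset $M\subset\mgen$. On an edge with $\kappa_{j}\in(0,\pi)\cup(\pi,2\pi)$ the canonical eigenfunction has a fixed number of interior nodal and Neumann points, jumping by $2$ as $\kappa_{j}$ crosses $\pi$ (with the roles of ``nodal'' and ``Neumann'' interchanged between the symmetric and anti-symmetric cases), which yields closed expressions for $\phi(f_{\kv})$ and $\xi(f_{\kv})$ in terms of $|\set{j}{\kappa_{j}>\pi}|$ and a count depending on the component type. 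For the spectral position I would set $M:=\set{\kv\in\mgen}{t\kv\notin\boldsymbol{B}^{(2)}\cup\msing\ \text{for all }t\in\opcl{0,1}}$; on $M$ every eigenvalue of $\Gamma_{t\kv}$ below $1$ is a simple zero in $\opcl{0,1}$ of $t\mapsto F_{s}(t\kv)$ or of $t\mapsto F_{a}(t\kv)$, and these zeros are simple and interlace with the (simple) poles because $\tfrac{d}{dt}F_{s}(t\kv)>0$ and $\tfrac{d}{dt}F_{a}(t\kv)<0$ away from poles, so $N(\kv)$ equals the number of such poles in $\opcl{0,1}$, an explicit count of the $\kappa_{j}$'s exceeding $\pi$. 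Subtracting gives the desired formulas for $\boldsymbol{\omega}$ and $\boldsymbol{\sigma}$ on $M$; since both are constant on connected components of $\mgen$ (Lemma~\ref{lem: functions_on_secular_manifold_existence_and_symmetry} and Theorem~\ref{thm: surplus_function_on_secular_mnfld}) while the right-hand sides are constant on components of $\torus\setminus\boldsymbol{B}^{(1)}$, it remains only to check that $M$ is dense in $\mgen$; this is the same dimension count as for stowers: $\dim\boldsymbol{B}^{(2)}=E-2$ and $\dim\msing\le E-2$, the cone over $\mgen\setminus\overline{M}$ then has dimension at most $E-1$ and meets $\mgen$ transversally (as $\tfrac{d}{dt}F_{s},\tfrac{d}{dt}F_{a}$ are nonzero at $t=1$ on $\mgen$), forcing $\mgen\setminus\overline{M}$ to be empty.

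The main obstacle is the spectral-position bookkeeping: along a ray $t\kv$ one meets symmetric and anti-symmetric eigenvalues intertwined, so $N(\kv)$ is controlled by two functions $F_{s}(t\kv)$ and $F_{a}(t\kv)$ simultaneously, and one must rule out coincidences of a pole of one with a zero of the other and the appearance of spurious modes vanishing at both vertices --- which is precisely what removing $\boldsymbol{B}^{(2)}\cup\msing$ and passing to the dense set $M$ is for. Getting the signs and the thresholds right there is where essentially all the care goes; the remaining steps are routine trigonometry parallel to the stower computation.
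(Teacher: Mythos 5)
Your proposal follows the paper's own route almost exactly: decompose eigenfunctions according to the reflection swapping the two vertices, translate the Kirchhoff condition into $F_{s}(\kv)=0$ (symmetric) or $F_{a}(\kv)=0$ (anti-symmetric), exclude $\boldsymbol{B}^{(1)}$ and the locus $F_{s}=F_{a}=0$ to get genericity and simplicity, and then compute the surpluses on a good set $M$ by an interlacing argument along the ray $t\kv$, extending to all of $\mgen$ by the same density/dimension count as for stowers. Two points, however. First, the one step you single out as delicate is in fact stated incorrectly: it is not true that $N(\kv)$ equals the number of poles of $t\mapsto F_{s}(t\kv)$ and $t\mapsto F_{a}(t\kv)$ in $\opcl{0,1}$. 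For $\kv\notin\boldsymbol{B}^{(1)}$ the function $t\mapsto F_{a}(t\kv)$ has no poles at all in $\opcl{0,1}$ (a pole of $\cot(t\kappa_{j}/2)$ there would force $\kappa_{j}=2\pi$, a bad coordinate); it decreases continuously from $+\infty$ and therefore contributes either zero or one anti-symmetric eigenvalue according to the sign of $F_{a}(\kv)$. This sign indicator is exactly the paper's $C(\kv)$, and it survives into the final formulas $\boldsymbol{\sigma}=\boldsymbol{i}-C$ and $\boldsymbol{\omega}=E-\boldsymbol{i}-C$ on $\mgen_{s}$; pure pole-counting would return $\boldsymbol{i}(\kv)$ for the spectral position and hence the wrong surpluses. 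Second, you never verify the assertion, which is part of the statement, that $T(\kv)=\modp{\kv+(\pi,\dots,\pi)}$ maps $\mgen_{s}$ isometrically onto $\mgen_{a}$; this is a one-line check from $\tan\left(\frac{\kappa+\pi}{2}\right)=-\cot\left(\frac{\kappa}{2}\right)$, giving $F_{s}\circ T=-F_{a}$, and it is also what lets you transport the formulas from $\mgen_{s}$ to $\mgen_{a}$ instead of redoing the anti-symmetric computation from scratch.
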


\begin{enumerate}
\item \label{enu: mandarin sigma reg}$\mgen_{s}$ is given by
\[
\mgen_{s}=\set{\kv\in\torus\setminus\boldsymbol{B}^{\left(1\right)}}{F_{s}\left(\kv\right)=0,\,\text{ and }\,F_{a}\left(\kv\right)\ne0},
\]
and $\mgen_{a}$ is given by
\[
\mgen_{s}=\set{\kv\in\torus\setminus\boldsymbol{B}^{\left(1\right)}}{F_{s}\left(\kv\right)\ne0,\,\text{ and }\,F_{a}\left(\kv\right)=0}.
\]
In particular, the inversion $T\left(\kv\right):=\modp{\kv+\left(\pi,\pi,\pi....\right)}$,
is an isometry between the two components, $T\left(\mgen_{s}\right)=\mgen_{a}$.
\item \label{enu: mandarin indicies}Denote for $\kv\in\mgen$
\begin{align*}
\bs i(\kv) & :=\left|\set{1\leq j\leq m}{\tan(\frac{\kappa_{j}}{2})<0}\right|,\\
C\left(\kv\right) & :=\begin{cases}
1 & F_{a}\left(\kv\right)\le0\\
0 & F_{a}\left(\kv\right)>0
\end{cases}.
\end{align*}
The Neumann surplus and nodal surplus functions (introduced in Lemma
\ref{lem: functions_on_secular_manifold_existence_and_symmetry} and
Theorem \ref{thm: surplus_function_on_secular_mnfld}) and satisfy,
for $\kv\in\mgen_{s}$,
\begin{align*}
\boldsymbol{\sigma}\left(\kv\right) & =\boldsymbol{i}\left(\kv\right)-C\left(\kv\right),\,\,\,\text{and}\\
\boldsymbol{\omega}\left(\kv\right) & =E-\boldsymbol{i}\left(\kv\right)-C\left(\kv\right).
\end{align*}
For $\kv\in\mgen_{a}$, $T\left(\kv\right)\in\mgen_{s}$ and
\begin{align*}
\boldsymbol{\sigma}\left(\kv\right) & =\boldsymbol{\sigma}\left(T\left(\kv\right)\right),\,\,\,\text{and}\\
\boldsymbol{\omega}\left(\kv\right) & =\boldsymbol{\omega}\left(T\left(\kv\right)\right).
\end{align*}
\end{enumerate}
\begin{rem}
The $s,a$ labeling of $\mgen_{s},\mgen_{a}$ stands for symmetric
and anti-symmetric, as these parts are defined according to a certain
symmetry of the eigenfunctions. It will be introduced in the proof
of the Lemma.
\end{rem}

The proof of Lemma \ref{lem: mandarins explicit} appears after the
proof of the Proposition \ref{prop: mandarins}:
\begin{proof}
[Proof of Proposition \ref{prop: mandarins} ] Define $t_{j}:=\tan\left(\frac{\kappa_{j}}{2}\right)$
so that $\kv\in\mgen_{s}$ if and only if $t_{j}\in\R\setminus\left\{ 0\right\} $
for every $j$ and the following conditions hold:
\begin{align}
\sum_{j=1}^{E}t_{j} & =0,\,\text{and}\\
\sum_{j=1}^{E}\frac{1}{t_{j}} & \ne0.
\end{align}
In such case $\boldsymbol{i}\left(\kv\right)$ is the number of negative
$t_{j}$'s, and it is not hard to show that for any $1\le i\le E-1$,
one can construct a tuple of $t_{j}$'s that satisfy the above conditions
such that the number of negative $t_{j}$'s is equal to $i$. Namely,
there exist $\kv\in\mgen_{s}$ with $\boldsymbol{i}\left(\kv\right)=i$
for any $1\le i\le E-1=\beta$. In such case, by Lemma \ref{lem: mandarins explicit}
(\ref{enu: mandarin indicies}),
\begin{align*}
\boldsymbol{\sigma}\left(\kv\right) & =i-C\left(\kv\right)\in\left\{ i,i-1\right\} ,\\
\boldsymbol{\omega}\left(\kv\right) & =E-i-C\left(\kv\right)\in\left\{ \beta+1-i,\beta-i\right\} ,\,\,\,\text{and so}\\
\boldsymbol{\sigma}\left(\kv\right)-\boldsymbol{\omega}\left(\kv\right) & =2i-E.
\end{align*}
It follows that the image of $\boldsymbol{\sigma}-\boldsymbol{\omega}$
is $\left\{ 2-E,4-E...,E-2\right\} $, and that the Neumann surplus
is bounded by
\[
0\le\boldsymbol{\omega}\le\beta.
\]
Moreover, it follows that for any $j\in\left\{ 0,1,...\beta-1\right\} $,
$\imag\left(\boldsymbol{\omega}\right)$ contains at least one of
$j,j+1$. Same for $\imag\left(\boldsymbol{\sigma}\right)$. Due to
the $T$ invariance the restriction of each of the above functions
to $\mgen_{a}$ has the same image as the restriction to $\mgen_{s}$.
This ends the proof of Proposition \ref{prop: mandarins} by the same
arguments used in the proof of Proposition \ref{prop: stower Neumann surplus}.

As already stated, the bounds $1\le\sigma\le\beta-1$ was shown in
\cite{BanBerWey_jmp15}, but for completeness we will prove that $\boldsymbol{\sigma}\left(\kv\right)\ne0$
from which the above bounds follows from symmetry arguments. Assume
by contradiction that there is a point $\kv\in\mgen_{s}$ for which
$\boldsymbol{\sigma}\left(\kv\right)=0$. It follows that $\boldsymbol{i}\left(\kv\right)=1$
and $C\left(\kv\right)=1$. Without loss of generality assume that
the negative $t_{j}$ is $t_{1}$ so that the $t_{j}$'s satisfy
\[
\left|t_{1}\right|=\sum_{j=2}^{E}t_{j},
\]
and $C\left(\kv\right)=1$ implies that
\[
\sum_{j=2}^{E}\frac{1}{t_{j}}\le\frac{1}{\left|t_{1}\right|}=\frac{1}{\sum_{j=2}^{E}t_{j}}.
\]
As the $t_{j}$'s for $j\ge2$ are strictly positive, we get
\[
\frac{\sum_{j=2}^{E}t_{j}}{E-1}\le\frac{1}{E-1}\frac{1}{\sum_{j=2}^{E}\frac{1}{t_{j}}}<\frac{E-1}{\sum_{j=2}^{E}\frac{1}{t_{j}}},
\]
which contradicts the means inequality.
\end{proof}
The proof of Lemma \ref{lem: mandarins explicit} is very similar
to that of Lemma \ref{lem: stowers explicit} and we will therefore
provide less details.~
\begin{proof}
[Proof of Lemma \ref{lem: mandarins explicit} ] Let $\Gamma$ be
a mandarin graph, and let $v_{-}$ and $v_{+}$ denote its two vertices.
Consider $\Gamma_{\kv}$ for some $\kv\in\torus$ with arc-length
parametrization $x_{j}\in\left[-\frac{\kappa_{j}}{2},\frac{\kappa_{j}}{2}\right]$
for every edge $e_{j}$, such that $x_{j}=\pm\frac{\kappa_{j}}{2}$
at $v_{\pm}$. As discussed in \cite{BanBerWey_jmp15}, the inversion
$x_{j}\mapsto-x_{j}$ for all edges simultaneously, is an isometry
of $\Gamma_{\kv}$ and we can choose all eigenfunctions to be symmetric\textbackslash anti-symmetric.
In particular, define
\begin{align*}
\mgen_{s} & :=\set{\kv\in\mgen}{f_{\kv}\,\text{ is symmetric}},\,\text{ and}\\
\mgen_{a} & :=\set{\kv\in\mgen}{f_{\kv}\,\text{ is anti-symmetric}}.
\end{align*}
A symmetric eigenfunction $f^{\left(s\right)}$ of eigenvalue $k=1$
has the structure:
\begin{equation}
f^{\left(s\right)}|_{e_{j}}\left(x_{j}\right)=A_{j}\cos\left(x_{j}\right).\label{eq: mandarin f sym}
\end{equation}
As in the proof of Lemma \ref{lem: stowers explicit} if $f^{\left(s\right)}\left(v_{\pm}\right)=0$
then $\kv\in\boldsymbol{B}^{\left(2\right)}$, and if $f^{\left(s\right)}\left(v_{\pm}\right)\ne0$
then
\begin{equation}
\forall j\le E\,\qquad\cos\left(\kappa_{j}\right)\ne0,\label{eq: mandarin symmetric cos}
\end{equation}
and the Neumann vertex condition (on each of the vertices) gives
\[
\sum_{j=1}^{E}\tan\left(\kappa_{j}\right)=F_{s}\left(\kv\right)=0.
\]
In such case, the outgoing derivatives of $f^{\left(s\right)}$ does
not vanish if and only if
\begin{equation}
\forall j\le E\,\qquad\sin\left(\kappa_{j}\right)\ne0.\label{eq: mandarin symmetric sin}
\end{equation}
In particular, as in Lemma \ref{lem: stowers explicit}, in the case
of $\kv\notin\boldsymbol{B}^{\left(2\right)}$ there is a unique (up
to scalar multiplication) construction of a $k=1$ symmetric eigenfunction
with $f^{\left(s\right)}\left(v_{\pm}\right)\ne0$ is and only if
$F_{s}\left(\kv\right)=0$. Moreover, if the eigenfunction is generic
then (\ref{eq: mandarin symmetric cos}) and (\ref{eq: mandarin symmetric sin})
implies that $\kv\notin\boldsymbol{B}^{\left(1\right)}$.

An anti-symmetric eigenfunction $f^{\left(a\right)}$ of eigenvalue
$k=1$ has the structure:
\begin{equation}
f^{\left(a\right)}|_{e_{j}}\left(x_{j}\right)=B_{j}\sin\left(x_{j}\right).\label{eq: mandarin f anti sym}
\end{equation}
A similar argument would show that in the case of $\kv\notin\boldsymbol{B}^{\left(2\right)}$
there is a unique (up to scalar multiplication) construction of a
$k=1$ anti-symmetric eigenfunction with $f^{\left(a\right)}\left(v_{\pm}\right)\ne0$
is and only if $F_{a}\left(\kv\right)=0$. Moreover, if the eigenfunction
is generic then (\ref{eq: mandarin symmetric cos}) and (\ref{eq: mandarin symmetric sin})
implies that $\kv\notin\boldsymbol{B}^{\left(1\right)}$.

We may conclude that,
\begin{align*}
\mgen_{s} & =\set{\kv\in\torus\setminus\boldsymbol{B}^{\left(1\right)}}{F_{s}\left(\kv\right)=0,\,\text{ and }\,F_{a}\left(\kv\right)\ne0},\,\text{ and}\\
\mgen_{a} & =\set{\kv\in\torus\setminus\boldsymbol{B}^{\left(1\right)}}{F_{s}\left(\kv\right)\ne0,\,\text{ and }\,F_{a}\left(\kv\right)=0}.
\end{align*}
Observe that $\begin{pmatrix}\cos\left(\frac{\kappa+\pi}{2}\right)\\
\sin\left(\frac{\kappa+\pi}{2}\right)
\end{pmatrix}=\begin{pmatrix}-\sin\left(\frac{\kappa}{2}\right)\\
\cos\left(\frac{\kappa}{2}\right)
\end{pmatrix}$, and therefore $\boldsymbol{B}^{\left(1\right)}$ is invariant to
$T$ and $F_{s}\circ T=-F_{a}$. It follows that
\[
T\left(\mgen_{s}\right)=\mgen_{a}.
\]
This proves Lemma \ref{lem: mandarins explicit} (\ref{enu: mandarin sigma reg}).
To prove Lemma \ref{lem: mandarins explicit} (\ref{enu: mandarin indicies}),
consider $\boldsymbol{i}\left(\kv\right):=\left|\set{j\le E}{\tan\left(\frac{\kappa_{j}}{2}\right)<0}\right|$
and let $\kv\in\mgen_{s}$ so that $f_{\kv}$ is symmetric and generic.
Using (\ref{eq: mandarin f sym}) we derive the nodal and Neumann
counts:

\begin{align}
\phi\left(f_{\kv}\right) & =2\left|\set{j\le E}{\frac{\kappa_{j}}{2}>\frac{\pi}{2}}\right|=2\boldsymbol{i}\left(\kv\right),\,\text{and}\label{eq: mandarin nodal count sym}\\
\xi\left(f_{\kv}\right) & =E.\label{eq: mandarin Neumann count sym}
\end{align}
Similarly, for $\kv\in\mgen_{a}$, using (\ref{eq: mandarin f anti sym}),
\begin{align}
\phi\left(f_{\kv}\right) & =E,\,\text{and}\label{eq: mandarin nodal count anti sym}\\
\xi\left(f_{\kv}\right) & =2\left|\set{j\le E}{\frac{\kappa_{j}}{2}>\frac{\pi}{2}}\right|=2\boldsymbol{i}\left(\kv\right).\label{eq: mandarin Neumann count anti sym}
\end{align}
 As in the proof of Lemma \ref{lem: stowers explicit} (\ref{enu: stowers indicies}),
we define a ``good'' set:
\[
M:=\set{\kv\in\mgen}{\forall t\in\opcl{0,1}\,\,t\kv\notin\boldsymbol{B}\cup\msing},
\]
such that the spectral position for $\kv\in M$ is given by
\begin{align*}
N\left(\kv\right)= & \left|\set{t\in\opcl{0,1}}{F_{s}\left(t\kv\right)=0}\right|+\left|\set{t\in\opcl{0,1}}{F_{a}\left(t\kv\right)=0}\right|.
\end{align*}
For $\kv\in M$ and $t\in\left(0,1\right)$ the function $t\mapsto F_{a}\left(t\kv\right)$
is continuous, and decreasing from infinity and so
\[
\left|\set{t\in\opcl{0,1}}{F_{a}\left(t\kv\right)=0}\right|=\begin{cases}
1 & F_{a}\left(\kv\right)\le0\\
0 & F_{a}\left(\kv\right)>0
\end{cases}=C\left(\kv\right)
\]
For $\kv\in M$, the function $t\mapsto F_{s}\left(t\kv\right)$ is
increasing and has interlacing poles and zeros as in the proof of
Lemma \ref{prop: stower Neumann surplus} and using the same argument
we get:
\begin{align*}
\left|\set{t\in\opcl{0,1}}{F_{s}\left(t\kv\right)=0}\right| & =\left|\set{j\le E}{\tan\left(\frac{\kappa_{j}}{2}\right)<0}\right|-1+\begin{cases}
1 & F_{s}\left(\kv\right)\ge0\\
0 & F_{s}\left(\kv\right)<0
\end{cases}\\
= & \boldsymbol{i}\left(\kv\right)-1+C\left(T\left(\kv\right)\right),
\end{align*}
so that for any $\kv\in M$,
\[
N\left(\kv\right)=\boldsymbol{i}\left(\kv\right)-1+C\left(\kv\right)+C\left(T\left(\kv\right)\right)
\]
Observe that $C\left(\kv\right)\equiv1$ on $\mgen_{a}$ and so $C\left(T\left(\kv\right)\right)\equiv1$
on $\mgen_{s}$. Subtracting $N\left(\kv\right)$ from (\ref{eq: mandarin nodal count sym})
and (\ref{eq: mandarin Neumann count sym}), for any $\kv\in M\cap\mgen_{s}$,
gives
\begin{align*}
\boldsymbol{\sigma}\left(\kv\right) & =\boldsymbol{i}\left(\kv\right)-C\left(\kv\right),\,\,\,\text{and}\\
\boldsymbol{\omega}\left(\kv\right) & =E-\boldsymbol{i}\left(\kv\right)-C\left(\kv\right).
\end{align*}
Similarly for $\kv\in M\cap\mgen_{a}$,
\begin{align*}
\boldsymbol{\sigma}\left(\kv\right) & =E-\boldsymbol{i}\left(\kv\right)-C\left(T\left(\kv\right)\right)=\boldsymbol{i}\left(T\left(\kv\right)\right)-C\left(T\left(\kv\right)\right)=\boldsymbol{\sigma}\left(T\left(\kv\right)\right),\\
\boldsymbol{\omega}\left(\kv\right) & =\boldsymbol{i}\left(\kv\right)-C\left(T\left(\kv\right)\right)=E-\boldsymbol{i}\left(T\left(\kv\right)\right)-C\left(T\left(\kv\right)\right)=\boldsymbol{\omega}\left(T\left(\kv\right)\right).
\end{align*}
The same argument as in the proof of Lemma \ref{prop: stower Neumann surplus}
shows that $M$ is dense in $\mgen$ and so the above holds for all
$\kv\in\mgen$.
\end{proof}

\subsection{Trees}

As a particular case of Proposition \ref{prop: stower Neumann surplus},
the Neumann surplus of a star graph is bounded and gets all integer
values between $-1$ to $1-\left|\partial\Gamma\right|$. This result
can be generalized to tree graphs using the following lemma:
\begin{lem}
\label{lem: tree + star} Let $\Gamma^{\left(1\right)}$ and $\Gamma^{\left(2\right)}$
be tree graphs with edge lengths $\lv^{\left(1\right)}$ and $\lv^{\left(2\right)}$,
and let $\Gamma$ be the tree graph obtained by gluing the two graphs
at two boundary vertices $v_{1}\in\partial\Gamma^{\left(1\right)}$
and $v_{2}\in\partial\Gamma^{\left(2\right)}$. Let $f^{\left(1\right)}$
and $f^{\left(2\right)}$ be generic eigenfunctions of $\Gamma^{\left(1\right)}$
and $\Gamma^{\left(2\right)}$ with the same eigenvalue $k$ and define
the function $f$ on $\Gamma$ by:
\begin{align*}
f|_{\Gamma^{\left(1\right)}} & =\frac{1}{f^{\left(1\right)}\left(v_{1}\right)}f^{\left(1\right)},\\
f|_{\Gamma^{\left(2\right)}} & =\frac{1}{f^{\left(2\right)}\left(v_{2}\right)}f^{\left(2\right)}.
\end{align*}
Then $f$ is a generic eigenfunction of $\Gamma$ with Neumann surplus:
\[
\omega\left(f\right)=\omega\left(f|_{\Gamma^{\left(1\right)}}\right)+\omega\left(f|_{\Gamma^{\left(2\right)}}\right)+1.
\]
\end{lem}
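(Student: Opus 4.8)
The plan is to realize $\Gamma$ concretely, verify that $f$ is a generic eigenfunction, and then compare the Neumann and nodal counts of the three trees edge by edge. First I would make the gluing explicit: identifying $v_{1}$ with $v_{2}$ creates a vertex of degree $\deg v_{1}+\deg v_{2}=2$, which we suppress (as is done throughout the paper for degree-two Neumann vertices), so that the edge $e_{1}\in\E(\Gamma^{(1)})$ incident to $v_{1}$ and the edge $e_{2}\in\E(\Gamma^{(2)})$ incident to $v_{2}$ merge into a single edge $e_{*}$ of length $l_{e_{1}}+l_{e_{2}}$; the resulting $\Gamma$ is a connected tree. The crucial observation is that the Neumann (Kirchhoff) condition at the \emph{leaf} $v_{i}$ forces $\partial f^{(i)}(v_{i})=0$; hence, in an arc-length coordinate on $e_{i}$ with $v_{i}$ at the endpoint, the restriction $f^{(i)}|_{e_{i}}$ is (up to a scalar) a cosine having a critical point exactly at $v_{i}$. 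After the rescaling in the statement both restrictions take the value $1$ at $v_{i}$ and have vanishing one-sided derivative there; by uniqueness for $-u''=k^{2}u$ they patch into the single real-analytic function $y\mapsto\cos\!\left(k(y-l_{e_{1}})\right)$ on $e_{*}$, where $y$ is the arc-length coordinate on $e_{*}$ and $y=l_{e_{1}}$ is the image of $v_{1}=v_{2}$. In particular $f$ is continuous and $C^{1}$ across that point, so, together with the vertex conditions inherited from $\Gamma^{(1)}$ and $\Gamma^{(2)}$, it is an eigenfunction of $\Gamma$ with eigenvalue $k$.

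Next I would check that $f$ is generic on $\Gamma$. Since $\Gamma$ is a tree, by Remark~\ref{rem: generic for trees} it suffices to verify conditions (\ref{enu: def-generic-non-zero-value-at-vertices}) and (\ref{enu:None-of-theenu: def-generic-non-zero-derivative-at-vertices}) of Definition~\ref{def: generic_eigenfunction}. Every vertex of $\Gamma$ is a vertex of $\Gamma^{(1)}$ or of $\Gamma^{(2)}$ --- the former leaves $v_{1},v_{2}$ have become an \emph{interior} point of $e_{*}$, not a vertex --- and every interior vertex of $\Gamma$ is an interior vertex of one of the $\Gamma^{(i)}$. At each such vertex $f$ is a nonzero scalar multiple of $f^{(1)}$ or $f^{(2)}$, so non-vanishing of values and of all outgoing derivatives is inherited from the genericity of $f^{(1)},f^{(2)}$.

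The heart of the proof is then the count on $e_{*}$. Outside $e_{*}$ the Neumann (resp. nodal) points of $f$ coincide with those of $f^{(1)}$ and $f^{(2)}$ outside $e_{1},e_{2}$. On $e_{*}$, the critical points of $y\mapsto\cos(k(y-l_{e_{1}}))$ in the interior $(0,l_{e_{1}}+l_{e_{2}})$ are the points $y=l_{e_{1}}+j\pi/k$, $j\in\Z$: those with $j<0$ correspond bijectively to the interior Neumann points of $f^{(1)}$ on $e_{1}$, those with $j>0$ to the interior Neumann points of $f^{(2)}$ on $e_{2}$, and the single one with $j=0$ is the image of $v_{1}=v_{2}$ --- a boundary vertex of $\Gamma^{(i)}$ (hence contributing no Neumann point there) that has become an interior point of $\Gamma$, i.e.\ a genuinely new Neumann point. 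Thus $\xi(f)=\xi(f|_{\Gamma^{(1)}})+\xi(f|_{\Gamma^{(2)}})+1$; and since $f(l_{e_{1}})=1\neq0$ no new nodal point appears, so $\phi(f)=\phi(f|_{\Gamma^{(1)}})+\phi(f|_{\Gamma^{(2)}})$ (rescaling by a nonzero constant changes neither count). Finally, for a tree the nodal surplus vanishes by (\ref{eq: nodal surplus bounds}), so the spectral position equals the nodal count and $\omega=\xi-\phi$ on each of $\Gamma,\Gamma^{(1)},\Gamma^{(2)}$; subtracting the last two identities yields $\omega(f)=\omega(f|_{\Gamma^{(1)}})+\omega(f|_{\Gamma^{(2)}})+1$. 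The only delicate point is this bookkeeping on $e_{*}$ --- in particular recognizing that the glued vertex, which was a graph vertex of both $\Gamma^{(i)}$ and therefore never counted, does become an interior Neumann point of $\Gamma$, while everything else transfers verbatim.
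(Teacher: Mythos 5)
Your proposal is correct and follows essentially the same route as the paper: verify that the matched values ($=1$) and vanishing one-sided derivatives at the glued point make $f$ an eigenfunction, inherit genericity from $f^{(1)},f^{(2)}$ via Remark \ref{rem: generic for trees}, observe that the glued point contributes exactly one new Neumann point and no new nodal point, and convert to surpluses using $\sigma\equiv 0$ on trees so that $\omega=\xi-\phi$. The only cosmetic difference is that you make the merged edge $e_{*}$ and the explicit cosine form concrete, where the paper keeps the glued point as a degree-two interior point and argues directly with the ODE and the continuity of $f$ and $f'$; the substance of the argument is identical.
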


\begin{proof}
First we show that $f$ is an eigenfunction of $\Gamma$. Let $x_{0}$
be the point in $\Gamma$ which is identified with both $v_{1}$ and
$v_{2}$. As $\deg{x_{0}}=2$ we will consider it as an interior point
$x_{0}\in\Gamma\setminus\dg$ and denote the edge containing $x_{0}$
by $e_{0}$. By construction, the Neumann vertex conditions are satisfied
on all vertices of $\Gamma$, and the restrictions of $f$ on every
edge $e\ne e_{0}$ satisfies the ODE, $f|_{e}''=-k^{2}f|_{e}$. It
is also clear that the latter ODE holds for $f$ restricted to $e_{0}\setminus\left\{ x_{0}\right\} $,
and we only need to show that both $f$ and $f'$ are continuous at
$x_{0}$ in order to conclude that $f|_{e_{0}}''=-k^{2}f|_{e_{0}}$.
The normalization of the two parts of $f$ was chosen such that, $f|_{\Gamma^{\left(1\right)}}\left(x_{0}\right)=\frac{1}{f^{\left(1\right)}\left(v_{1}\right)}f^{\left(1\right)}\left(v_{1}\right)=1$,
and in the same way, $f|_{\Gamma^{\left(2\right)}}\left(x_{0}\right)=1$.
The two directional derivatives at $x_{0}$ are
\begin{align*}
f|'_{\Gamma^{\left(1\right)}}\left(x_{0}\right) & \propto\partial_{e_{1}}f^{\left(1\right)}\left(v_{1}\right)=0,\\
f|'_{\Gamma^{\left(2\right)}}\left(x_{0}\right) & \propto\partial_{e_{2}}f^{\left(2\right)}\left(v_{2}\right)=0.
\end{align*}
This proves that $f$ is an eigenfunction of $\Gamma$. By its construction,
it satisfies both conditions (\ref{enu: def-generic-non-zero-value-at-vertices})
and (\ref{enu:None-of-theenu: def-generic-non-zero-derivative-at-vertices})
of the genericity in Definition \ref{def: generic_eigenfunction},
and is therefore generic (see Remark \ref{rem: generic for trees}).
Since $f$ is generic, we may consider its nodal count $\phi\left(f\right)$
and Neumann count $\xi\left(f\right)$. By construction, $\phi\left(f\right)=\phi\left(f^{\left(1\right)}\right)+\phi\left(f^{\left(2\right)}\right)$
and since $x_{0}$ is an interior point on which $f'\left(x_{0}\right)=0$
then $\xi\left(f\right)=\xi\left(f^{\left(1\right)}\right)+\xi\left(f^{\left(2\right)}\right)+1$.
Since $\Gamma,\,\Gamma^{\left(1\right)}$ and $\Gamma^{\left(2\right)}$
are trees then $\sigma\left(f\right),\,\sigma\left(f^{\left(1\right)}\right)$
and $\sigma\left(f^{\left(2\right)}\right)$ are zero (see (\ref{eq: nodal surplus bounds})).
Therefore,
\begin{align*}
\omega\left(f^{\left(1\right)}\right) & =\omega\left(f^{\left(1\right)}\right)-\sigma\left(f^{\left(1\right)}\right)=\xi\left(f^{\left(1\right)}\right)-\phi\left(f^{\left(1\right)}\right),
\end{align*}
and in the same way
\begin{align*}
\omega\left(f^{\left(2\right)}\right) & =\xi\left(f^{\left(2\right)}\right)-\phi\left(f^{\left(2\right)}\right),\,\text{ and}\\
\omega\left(f\right) & =\xi\left(f\right)-\phi\left(f\right).
\end{align*}
Therefore,
\begin{align*}
\omega\left(f\right)= & \xi\left(f\right)-\phi\left(f\right)\\
= & \left(\xi\left(f^{\left(1\right)}\right)-\phi\left(f^{\left(1\right)}\right)\right)+\left(\xi\left(f^{\left(2\right)}\right)-\phi\left(f^{\left(2\right)}\right)\right)+1\\
= & \omega\left(f^{\left(1\right)}\right)+\omega\left(f^{\left(2\right)}\right)+1.
\end{align*}
\end{proof}
\begin{prop}
\label{prop: tree gets all surplus difference} If $\Gamma$ is a
tree graph then its Neumann surplus is bounded by
\[
-\left|\dg\right|+1\le\omega\le-1.
\]
Moreover, if $\Gamma$ has rationally independent edge lengths then
its Neumann surplus distribution satisfies:
\[
\forall j\in\left\{ -\left|\dg\right|+1,-\left|\dg\right|+2...,-1\right\} \qquad P\left(\omega=j\right)>0.
\]
\end{prop}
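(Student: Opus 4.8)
The bounds are immediate: a tree has $\beta=0$, so substituting into Theorem~\ref{thm:Neumann_surplus_main},(\ref{enu:thm-Neumann_surplus_bounds}) gives $-\left|\dg\right|+1\le\omega\le-1$. The substance is the positivity statement, and by Theorem~\ref{thm:Neumann_surplus_main},(\ref{enu:thm-density_of_Neumann_surplus-b}) it is enough to show that for every $j\in\left\{-\left|\dg\right|+1,\ldots,-1\right\}$ there is some $n\in\G$ with $\omega(n)=j$. Via Lemma~\ref{lem: functions_on_secular_manifold_existence_and_symmetry} this amounts to showing $j\in\imag(\bs{\omega})$, and here I would emphasize that $\bs{\omega}$ is attached to the \emph{combinatorial} graph, is constant on the finitely many open-and-closed connected components of $\mgen$, and that $\left\{\flow(n)\right\}_{n\in\G}$ is equidistributed with respect to the everywhere-positive Barra--Gaspard measure $\BGm$. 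Consequently, once $j\in\imag(\bs{\omega})$ is established for the combinatorial tree underlying $\Gamma$, the preimage $\omega^{-1}(j)$ is nonempty for our particular (rationally independent) $\Gamma$, and Theorem~\ref{thm:Neumann_surplus_main},(\ref{enu:thm-density_of_Neumann_surplus-b}) finishes the job. So it suffices to exhibit, for each such $j$, a \emph{single} metric realization of the combinatorial tree carrying a generic eigenfunction of Neumann surplus $j$.

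The plan for building these eigenfunctions is to splice together star graphs using Lemma~\ref{lem: tree + star}. Assume $\left|\dg\right|\ge3$ (the case $\left|\dg\right|=2$ is an interval, where $\xi(f_n)=n-1$ so $\omega\equiv-1$, matching the singleton $\{-1\}$). Let $v_1,\ldots,v_k$ be the interior vertices of $\Gamma$ (so $k=\left|\V\setminus\dg\right|\ge1$), and let $S_i$ be the star graph with $\deg{v_i}$ edges. The interior vertices together with the edges of $\Gamma$ joining them form a subtree with $k-1$ edges, and reading $\Gamma$ off this subtree, $\Gamma$ is obtained from $S_1,\ldots,S_k$ by $k-1$ successive gluings of boundary vertices — one per interior--interior edge — the leftover leaves of the $S_i$ becoming the leaves of $\Gamma$; a count using $\sum_i\deg{v_i}=2E-\left|\dg\right|$ and $E=V-1$ confirms that exactly $\left|\dg\right|$ leaves remain. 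On each $S_i$ pick a generic eigenfunction $f_i$; rescaling all edge lengths of $S_i$ by a common positive factor preserves genericity and the value of the Neumann surplus while rescaling the eigenvalue, so we may assume all the $f_i$ share one eigenvalue. Applying Lemma~\ref{lem: tree + star} $k-1$ times then produces a generic eigenfunction $f$ on a metric realization of $\Gamma$ with
\[
\omega(f)=\sum_{i=1}^{k}\omega(f_i)+(k-1).
\]

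It remains to choose the summands. A star with $d$ edges is a $(0,d)$-stower, so Proposition~\ref{prop: stower Neumann surplus} (equivalently Lemma~\ref{lem: stowers explicit} with no loops) shows that the Neumann surplus on $S_i$ attains every value in $\left\{1-\deg{v_i},\ldots,-1\right\}$; write $\omega(f_i)=-1-m_i$ with $m_i\in\left\{0,\ldots,\deg{v_i}-2\right\}$ chosen arbitrarily. Then
\[
\omega(f)=-k-\sum_{i=1}^{k}m_i+(k-1)=-1-\sum_{i=1}^{k}m_i,
\]
and since the $m_i$ range independently over contiguous integer intervals, $\sum_i m_i$ ranges over the full interval $\left\{0,\ldots,\sum_i(\deg{v_i}-2)\right\}$. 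Finally $\sum_i(\deg{v_i}-2)=(2E-\left|\dg\right|)-2(V-\left|\dg\right|)=2(E-V)+\left|\dg\right|=\left|\dg\right|-2$, so $\omega(f)$ realizes every value in $\left\{-1,\ldots,1-\left|\dg\right|\right\}$, as required.

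The only delicate point — and the step I would spend the most care on — is the combinatorial bookkeeping in the iterated gluing: one must check that splicing the stars along the spanning subtree of interior vertices reproduces exactly $\Gamma$ (including the suppression of the degree-two vertices created at glued points), that at each stage Lemma~\ref{lem: tree + star} is applied to boundary vertices not already consumed by an earlier gluing, and that all pieces keep a common eigenvalue throughout. Everything else is elementary: the surplus identity $\omega(f)=\sum_i\omega(f_i)+(k-1)$, the star computation from Proposition~\ref{prop: stower Neumann surplus}, and the Euler-type identity $\sum_i(\deg{v_i}-2)=\left|\dg\right|-2$ combine to give the claimed image, and the transfer back to the rationally independent $\Gamma$ is as explained in the first paragraph.
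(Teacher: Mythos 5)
Your proof is correct, and it rests on exactly the same ingredients as the paper's: the bounds come from Theorem~\ref{thm:Neumann_surplus_main},(\ref{enu:thm-Neumann_surplus_bounds}) with $\beta=0$; the transfer from a single metric realization to every rationally independent one goes through $\bs{\omega}$ on $\mgen$ and Theorem~\ref{thm:Neumann_surplus_main},(\ref{enu:thm-density_of_Neumann_surplus-b}); the building blocks are stars via Proposition~\ref{prop: stower Neumann surplus}; and the splicing tool is Lemma~\ref{lem: tree + star}. The one genuine difference is organizational: the paper runs an induction on the number of interior vertices, cutting the tree at an interior point of a non-tail edge into just \emph{two} subtrees and invoking the inductive hypothesis on each, so the only combinatorics needed is that both pieces are trees with fewer interior vertices. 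You instead unroll this into a one-shot decomposition into the $k$ stars centered at the interior vertices, glued along the spanning subtree they induce, which buys you the clean closed formula $\omega(f)=-1-\sum_i m_i$ with $\sum_i(\deg{v_i}-2)=\left|\dg\right|-2$, at the price of the bookkeeping you correctly flag as the delicate step (that the interior vertices induce a connected subtree, that the gluings can be ordered so each consumes fresh leaves, and that the degree-two points created at glued leaves are suppressed so the result is exactly $\Gamma$). Both routes are valid; the paper's induction simply avoids that bookkeeping, while yours makes the realized range of $\omega$ completely explicit.
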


\begin{proof}
Theorem \ref{thm:Neumann_surplus_main} (\ref{enu:thm-Neumann_surplus_bounds})
ensures that the Neumann surplus of a tree is bounded between $-\left|\partial\Gamma\right|+1$
and $-1$. According to Theorem \ref{thm:Neumann_surplus_main} (\ref{enu:thm-density_of_Neumann_surplus-b}),
Theorem \ref{thm: equidistribution_by_BG_measure} and Lemma \ref{lem: functions_on_secular_manifold_existence_and_symmetry},
given a fixed value $j$, if there exist some $\lv\in\R_{+}^{E}$
and some generic eigenfunction $f$ of $\Gamma_{\lv}$ such that $\omega\left(f\right)=j$,
then for any rationally independent $\lv$ the Neumann surplus distribution
has $P\left(\omega=j\right)>0$.

We will prove that for any tree $\Gamma$ and any $j\in\left\{ -\left|\partial\Gamma\right|+1,...-1\right\} $,
there exist some $\lv\in\R_{+}^{E}$ and some generic eigenfunction
$f$ of $\Gamma_{\lv}$ for which $\omega\left(f\right)=j$. The proof
is done by induction on the number of interior vertices $V_{in}:=\left|\V\setminus\dg\right|$,
where the case $V_{in}=1$ is a star graph, for which the statement
is true by Lemma \ref{prop: stower Neumann surplus}.

Let $m>1$ and assume that the statement holds for any tree with $V_{in}<m$.
Let $\Gamma$ be a tree with $V_{in}=m$, then there is an edge $e_{0}$
which is not a tail. For $x_{0}\in e_{0}$ which is an interior point,
$\Gamma\setminus\left\{ x_{0}\right\} $ has two connected components,
$\Gamma^{\left(1\right)}$ and $\Gamma^{\left(2\right)}$, which are
both trees. Clearly, both $\Gamma^{\left(1\right)}$ and $\Gamma^{\left(2\right)}$
have less then $m$ interior vertices. By the assumption, the statement
holds for both $\Gamma^{\left(1\right)}$ and $\Gamma^{\left(2\right)}$.
That is, for each $i\in\left\{ 1,2\right\} $, for any $j_{i}\in\left\{ -\left|\partial\Gamma_{i}\right|+1,...-1\right\} $,
there exist $\lv^{\left(i\right)}$ and a generic eigenfunction $f^{\left(i\right)}$
of $\Gamma_{\lv^{\left(i\right)}}^{\left(i\right)}$ with Neumann
surplus $\omega\left(f^{\left(i\right)}\right)=j_{i}$. We may scale
$\lv^{\left(i\right)}$ such that $f^{\left(i\right)}$ has eigenvalue
$k=1$. Consider $\Gamma$ with edge lengths according to the gluing
of $\Gamma_{\lv^{\left(1\right)}}^{\left(1\right)}$ and $\Gamma_{\lv^{\left(2\right)}}^{\left(2\right)}$
at $x_{0}$. By Lemma \ref{lem: tree + star}, $\Gamma$ has a generic
eigenfunction $f$ with
\[
\omega\left(f\right)=j_{1}+j_{2}+1.
\]
By construction $\left|\partial\Gamma\right|=\left|\partial\Gamma_{1}\right|+\left|\partial\Gamma_{2}\right|-2$
and so $j:=j_{1}+j_{2}+1$ ranges over all values between
\[
-\left|\partial\Gamma\right|+1=\left(-\left|\partial\Gamma_{1}\right|+1\right)+\left(-\left|\partial\Gamma_{2}\right|+1\right)+1\le j\le-1-1+1=-1.
\]
By induction, the statement is true for any choice of $V_{in}$. This
proves the Lemma.
\end{proof}

\section{Proofs of Lemma \ref{lem: Canonical_eigenfunctions_and_Secular_mfld}
and Lemma \ref{lem: Inversion-properties} \label{sec: appendix=00005C-proof-of-two-lemmas}}
\begin{proof}
[Proof of Lemma \ref{lem: Canonical_eigenfunctions_and_Secular_mfld}
]

We start by providing a basic tool in the spectral analysis of metric
graphs. Let $\Gamma$ be a standard graph with $E:=\text{\ensuremath{\left|\E\right|}}$
edges. For every $\kv\in\torus$ we abuse notation and denote by $\rme^{\rmi\kv}\in U(2E)$
the following unitary diagonal matrix
\[
\rme^{\rmi\kv}:=\mathrm{diag}\left(\rme^{\rmi\kappa_{1}},~\rme^{\rmi\kappa_{1}},~\rme^{\rmi\kappa_{2}},~\rme^{\rmi\kappa_{2}},~\ldots,e^{\rmi\kappa_{E}},~e^{\rmi\kappa_{E}}\right),
\]
such that every exponent $\rme^{\rmi\kappa_{j}}$ appears twice. Given
a choice of edge lengths $\lv$ and an eigenvalue $k$ it is convenient
to denote the matrix $\rme^{\rmi\kv}$ for $\kv=\modp{k\lv}$ by $\rme^{\rmi k\lv}$.
There exists an orthogonal matrix $S\in O(2E)$, uniquely determined
by the connectivity of $\Gamma$, which is independent of the edge
lengths and has the following property. A positive $k>0$ is an eigenvalue
of $\Gamma$ if and only if $\ker(\Id-{\rm e}^{\rmi k\lv}S)$ is nontrivial
\cite[Section 3.2]{AloBanBer_cmp18}. This relation is a consequence
of an isomorphism between $\ker(\Id-{\rm e}^{\rmi k\lv}S)$ and the
$k$-eigenspace of $\Gamma$. Explicitly, this isomorphism sends $\boldsymbol{a}\in\ker(\Id-{\rm e}^{\rmi k\lv}S)$
to an eigenfunction $f$ of the eigenvalue $k$ such that the restriction
to every edge $e\in\E$, is given by
\begin{equation}
\left.f\right|_{e}\left(x\right)=a_{e}\rme^{\rmi kx}e^{-ikl_{e}}+a_{\hat{e}}\rme^{-\rmi kx},\label{eq: parametrization-1}
\end{equation}
where $x\in[0,l_{e}]$ is an arc-length parametrization of the edge
$e\in\E$, and $a_{e},a_{\hat{e}}$ are the two entries of the vector
$\boldsymbol{a}$ which correspond to edge $e$ (according to the
indexing of $\rme^{\rmi\kv}$ which was introduced above). As an implication
we get for each $\lv$ an isomorphism between the $k$-eigenspace
of $\Gamma_{\lv}$ and the $1$-eigenspace of $\Gamma_{\kv}$, where
$\kv=\modp{k\lv}$. This, together with (\ref{eq: def-of-sigma-and-sigma-reg})
proves part (\ref{enu: lem-Canonical_eigenfunctions_and_Secular_mfld-1})
of the Lemma.

The above argument also implies that $\kv\in\mreg$ if and only if
$\dim\ker(1-e^{i\kv}S)=1$. For any such $\kv$ value, the adjugate
matrix $\mathrm{adj}(\Id-{\rm e}^{\rmi\kv}S)$ is a rank one matrix.
Hence, there exists a non-trivial $\boldsymbol{a}\in\ker(\Id-{\rm e}^{\rmi\kv}S)$
such that $\mathrm{adj}(\Id-{\rm e}^{\rmi\kv}S)=\boldsymbol{a}\boldsymbol{a}^{*}$.
We use this $\bs a$ to introduce a function $f_{\kv}$ by (\ref{eq: parametrization-1}).
This provides a family of canonical eigenfunctions $\{f_{\kv}\}_{\kv\in\mreg}$,
as in the statement of the lemma. We should note that these functions
are not uniquely determined. This is since the vectors $\bs a$ above
are determined only up to a multiplication by some unitary factor
$c\in U(1)$; indeed $\mathrm{adj}(\Id-{\rm e}^{\rmi\kv}S)=\boldsymbol{a}\boldsymbol{a}^{*}$
is invariant with respect to such a multiplication of $\bs a$, and
the factor might depend on $\kv$. Next, we show that this family
of canonical eigenfunctions $\{f_{\kv}\}_{\kv\in\mreg}$ indeed satisfy
the requirements in the lemma.

First, note that the arguments above already prove parts (\ref{enu: lem-Canonical_eigenfunctions_and_Secular_mfld-2a})
and (\ref{enu: lem-Canonical_eigenfunctions_and_Secular_mfld-3})
of the lemma (the $c\in\C$ factor appears in part (\ref{enu: lem-Canonical_eigenfunctions_and_Secular_mfld-3})
since the canonical eigenfunction, $f_{\kv}$, is not necessarily
real). We proceed to prove part (\ref{enu: lem-Canonical_eigenfunctions_and_Secular_mfld-2b})
of the lemma.

Let $v\in\V$ and $e\in\E_{v}$ with arc-length parametrization $x\in[0,l_{e}]$
such that $v$ is at $x=0$. By (\ref{eq: parametrization-1}) we
get that for each $\kv\in\mreg$,
\begin{align}
f_{\kv}\left(v\right) & =f_{\kv}|_{e}\left(0\right)=a_{e}\rme^{-\rmi\kappa_{e}}+a_{\hat{e}}\label{eq: f(v)}\\
\partial_{e}f_{\kv}\left(v\right) & =f_{\kv}|_{e}'\left(0\right)=\rmi\left(a_{e}\rme^{-\rmi\kappa_{e}}-a_{\hat{e}}\right).\label{eq: df(v)}
\end{align}
Observe that each term $f_{\kv}(v)$ or $\partial_{e}f_{\kv}(v)$
is linear in the $\boldsymbol{a}$ entries with coefficients which
are taken from $(\rme^{-\rmi\kappa_{1}},...,\rme^{-\rmi\kappa_{E}})$.
Therefore, each product of the form $f_{\kv}\left(u\right)\overline{f_{\kv}\left(v\right)}$
or $f_{\kv}\left(u\right)\overline{\partial_{e}f_{\kv}\left(v\right)}$
is linear in the $\boldsymbol{a}\boldsymbol{a}^{*}$ entries with
coefficients which are trigonometric polynomials (in $(\kappa_{1},...,\kappa_{E})$).
But, the entries of $\boldsymbol{a}\boldsymbol{a}^{*}=\mathrm{adj}(\Id-{\rm e}^{\rmi\kv}S)$
are minors of $\Id-{\rm e}^{\rmi\kv}S$ and so themselves trigonometric
polynomials in $(\kappa_{1},...,\kappa_{E})$. Hence, $f_{\kv}\left(u\right)\overline{f_{\kv}\left(v\right)}$
is a trigonometric polynomial in $(\kappa_{1},...,\kappa_{E})$, which
we denote by $p_{u,v}(\kv)$; and $f_{\kv}\left(u\right)\overline{\partial_{e}f_{\kv}\left(v\right)}$
is a trigonometric polynomial in $(\kappa_{1},...,\kappa_{E})$, which
we denote by $q_{u,v,e}(\kv)$. This proves the relations (\ref{eq: p_u_v_trig_poly})
and (\ref{eq: q_u_v_e_trig_poly}) in the lemma and it is left to
show that $p_{u,v}$ and $q_{u,v,e}$ are actually real trigonometric
polynomials. To do this, we redefine $p_{u,v}$ and $q_{u,v,e}$ to
be equal to their real parts and argue that (\ref{eq: p_u_v_trig_poly})
and (\ref{eq: q_u_v_e_trig_poly}) are still valid after such a modification.
Indeed, all products $f_{\kv}\left(u\right)\overline{f_{\kv}\left(v\right)}$
and $f_{\kv}\left(u\right)\overline{\partial_{e}f_{\kv}\left(v\right)}$
are real since $f_{\kv}$ is an eigenfunction of $\Gamma$ corresponding
a simple eigenvalue and as such it is real up to a global multiplicative
factor.
\end{proof}
\begin{proof}
[Proof of Lemma \ref{lem: Inversion-properties} ]

We begin by referring to the proof of Lemma \ref{lem: Canonical_eigenfunctions_and_Secular_mfld}
above where it was argued that
\[
\kv\in\Sigma\Leftrightarrow\dim\ker\left(\Id-{\rm e}^{\rmi\kv}S\right)>0
\]
 and
\[
\kv\in\mreg\Leftrightarrow\dim\ker\left(\Id-{\rm e}^{\rmi\kv}S\right)=1.
\]
Combining this with the relation
\begin{equation}
\forall\kv\in\torus,\quad\Id-{\rm e}^{\rmi\I\left(\kv\right)}S=\overline{\left(\Id-{\rm e}^{\rmi\kv}S\right)},\label{eq: inversion_symmetry_of_matrices_on_torus}
\end{equation}
gives
\[
\kv\in\Sigma\Leftrightarrow\I(\kv)\in\Sigma\quad\textrm{and }\quad\kv\in\mreg\Leftrightarrow\I(\kv)\in\mreg,
\]
and proves that $\Sigma$ and $\mreg$ are each invariant under $\I$.
We may further deduce that $\mathrm{adj}\left(\Id-{\rm e}^{\rmi\I\left(\kv\right)}S\right)=\overline{\mathrm{adj}\left(\Id-{\rm e}^{\rmi\kv}S\right)}$
and so if $\boldsymbol{a}$ is a vector for which $\boldsymbol{a}\boldsymbol{a}^{*}=\mathrm{adj}\left(\Id-{\rm e}^{\rmi\kv}S\right)$
then $\overline{\boldsymbol{a}}$ is such that $\overline{\boldsymbol{a}}\overline{\boldsymbol{a}}^{*}=\mathrm{adj}\left(\Id-{\rm e}^{\rmi\I\left(\kv\right)}S\right)$.
These vectors define $f_{\kv}$ and $f_{\I\left(\kv\right)}$ by (\ref{eq: f(v)})
and (\ref{eq: df(v)}) and so we get
\begin{align}
\forall v\in\V,\quad\quad f_{\I\left(\kv\right)}\left(v\right) & =\overline{a_{e}}\rme^{-\rmi\left[\I\left(\kv\right)\right]_{e}}+\overline{a_{\hat{e}}}=\overline{a_{e}\rme^{-\rmi\kappa_{e}}+a_{\hat{e}}}\quad\quad\text{and}\label{eq: f(v)-1}\\
\forall v\in\V,~\forall e\in\E_{v}\quad\quad\partial_{e}f_{\I\left(\kv\right)}\left(v\right) & =\rmi\left(\overline{a_{e}}\rme^{-\rmi\left[\I\left(\kv\right)\right]_{e}}-\overline{a_{\hat{e}}}\right)=-\overline{\rmi\left(a_{e}\rme^{-\rmi\kappa_{e}}-a_{\hat{e}}\right)}.\label{eq: df(v)-1}
\end{align}
Comparing the RHS above with (\ref{eq: f(v)}) and (\ref{eq: df(v)})
shows
\begin{equation}
\forall v\in\V,~\forall e\in\E_{v}\quad\quad f_{\I\left(\kv\right)}\left(v\right)=c\overline{f_{\kv}\left(v\right)}\quad\text{and}\quad\partial_{e}f_{\I\left(\kv\right)}\left(v\right)=-\overline{c}\overline{\partial_{e}f_{\kv}\left(v\right)},\label{eq: canonical-func-after-inversion}
\end{equation}
where $c\in U(1)$ is a multiplicative factor which expresses a degree
of freedom in determining $\bs a$ (alternatively $f_{\kv}$) and
$c$ might depend on $\kv$; see also in proof of Lemma \ref{lem: Canonical_eigenfunctions_and_Secular_mfld}.
From (\ref{eq: canonical-func-after-inversion}) together with the
definition of $\mgen$, (\ref{eq: definition generic}) we get that
$\mgen$is invariant under $\I$, which finishes the proof of part
(\ref{enu: lem-Inversion-properties-1}) of the Lemma.

To continue proving the second part of the lemma, we note that (\ref{eq: canonical-func-after-inversion})
implies in particular that all products $f_{\kv}\left(u\right)\overline{f_{\kv}\left(v\right)}$
are $\I$ invariant and all products $f_{\kv}\left(u\right)\overline{\partial_{e}f_{\kv}\left(v\right)}$
are $\I$ anti-symmetric. As a conclusion we get that the real trigonometric
polynomials $p_{u,v}$ and $q_{u,v,e}$ that are defined in the proof
of Lemma \ref{lem: Canonical_eigenfunctions_and_Secular_mfld} satisfy
\begin{align}
\forall\kv\in\mreg\,\,\,\,p_{u,v}\left(\I\left(\kv\right)\right) & =p_{u,v}\left(\kv\right),\,\text{ and}\nonumber \\
\forall\kv\in\mreg\,\,\,\,q_{u,v,e}\left(\I\left(\kv\right)\right) & =-q_{u,v,e}\left(\kv\right).\label{eq: inversion_of_trig_polys}
\end{align}
Note that these are almost the required relations (\ref{enu: lem-Inversion-properties-1}),(\ref{enu: lem-Inversion-properties-2}).
But (\ref{eq: inversion_of_trig_polys}) holds for $\kv\in\mreg$,
whereas (\ref{enu: lem-Inversion-properties-1}),(\ref{enu: lem-Inversion-properties-2})
are stated for all $\kv\in\torus$. Let us abuse notation once again
and redefine $p_{u,v}$ as $\frac{p_{u,v}+p_{u,v}\circ\I}{2}$ and
$q_{u,v,e}$ as $\frac{q_{u,v,e}-q_{u,v,e}\circ\I}{2}$. This ensures
that $p_{u,v}$ and $q_{u,v,e}$ are $\I$ symmetric and anti-symmetric
respectively, while being real trigonometric polynomials whose restrictions
are equal to the products $f_{\kv}\left(u\right)\overline{f_{\kv}\left(v\right)}$
and $f_{\kv}\left(u\right)\overline{\partial_{e}f_{\kv}\left(u\right)}$
for any $\kv\in\Sigma^{reg}$. To show that $\mgen$ is $\I$ invariant,
recall that $\mgen$ is the set of $\kv\in\mreg$ for which $f_{\kv}$
is generic. Definition \ref{def: generic_eigenfunction} for genericity
has three conditions, where condition (\ref{enu: def-generic-simple-evalue})
holds if and only if $\kv\in\mreg$. The other two conditions, (\ref{enu: def-generic-non-zero-value-at-vertices})
and (\ref{enu:None-of-theenu: def-generic-non-zero-derivative-at-vertices}),
hold if and only if $p_{u,u}\left(\kv\right)\ne0$ and $q_{v,v,e}\left(\kv\right)\ne0$
for all $u\in\V,\,v\in\V\setminus\partial\Gamma$ and $e\in\E_{v}$.
As $\mreg$ is $\I$ invariant and $p_{u,u}$ and $q_{v,v,e}$ are
$\I$ symmetric\textbackslash anti-symmetric, then $f_{\I\left(\kv\right)}$
is generic if and only if $f_{\kv}$ is and therefore $\mgen$ is
$\I$ invariant.

The third part of our Lemma is stated and proved in \cite[lem. 3.13]{AloBanBer_cmp18}.
\end{proof}

\section{\label{sec:Appendix nodal domains} An analogue of Proposition \ref{prop:local_observables_bounds}
for nodal domains}

This appendix provides bounds on the wavelength capacity and the Neumann
count of nodal domains.
\begin{prop}
\label{prop: bounds stars Dirichlet} Let $\Gamma$ be a standard
graph with minimal edge length $L_{\mathrm{min}}$. Let $k^{2}$ be
an eigenvalue of $\Gamma$ which satisfies $k>\frac{\pi}{L_{\mathrm{min}}}$.
Let $\Omega_{D}$ be a nodal domain of a generic eigenfunction $f$
which corresponds to $k$. Denote the total length of $\Omega_{D}$
by $\left|\Omega_{D}\right|$, its wavelength capacity by $\rho(\Omega_{D}):=\frac{\left|\Omega_{D}\right|k}{\pi}$,
and its Neumann count by $\xi(\Omega_{D}):=\xi(f|_{\Omega_{D}})$.
The following bounds hold
\begin{align}
1\le & \xi(\Omega_{D})\le\left|\partial\Omega_{D}\right|-1\label{eq: bounds_Neumann_count}\\
1\le\frac{1}{2}\left(\xi(\Omega_{D})+1\right)\leq & \rho(\Omega_{D})\le\frac{1}{2}\left(\xi(\Omega_{D})+\left|\d\Omega_{D}\right|-1\right)\leq\left|\partial\Omega_{D}\right|-1\label{eq: bounds_rho}
\end{align}
\end{prop}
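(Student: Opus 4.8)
The plan is to mirror the structure of the proof of Proposition \ref{prop:local_observables_bounds}, but now applied to a nodal domain $\Omega_D$ rather than a Neumann domain. The key structural observation is that, just as for Neumann domains, the assumption $k>\frac{\pi}{L_{\mathrm{min}}}$ forces each edge of $\Gamma$ to carry at least one nodal point of $f$, so no nodal domain can contain a whole edge; hence $\Omega_D$ is either an interval or a star graph, and $\left.f\right|_{\Omega_D}$ is an eigenfunction of $\Omega_D$ (as a standard graph) satisfying \emph{Dirichlet} conditions at the boundary vertices (the nodal points) and \emph{Neumann} conditions at the interior vertex of $\Gamma$ lying inside $\Omega_D$. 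The classical fact that the restriction of an eigenfunction to a nodal domain is its ground state means $\left.f\right|_{\Omega_D}$ is the first Dirichlet eigenfunction, so $k$ is the lowest Dirichlet eigenvalue of $\Omega_D$. This is the analogue of the identity $N(\Omega)=\phi(\left.f\right|_\Omega)$ from Lemma \ref{lem:spectral-position-equals-nodal-count}; here the analogous quantity is that the ``Dirichlet spectral position'' of $\Omega_D$ equals $1$.

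For the bounds on $\xi(\Omega_D)$ in \eqref{eq: bounds_Neumann_count}: first I would run the edge-by-edge interlacing argument exactly as in the proof of Theorem \ref{thm:Neumann_surplus_main}, (\ref{enu:thm-Neumann_surplus_bounds}), now on the graph $\Omega_D$ with eigenfunction $g:=\left.f\right|_{\Omega_D}$. Since every boundary vertex of $\Omega_D$ is a nodal point of $g$ and $g$ has no \emph{interior} nodal points, one gets $\phi(g)=0$, and equation \eqref{eq:Diff-nodal-Neumann_by_vertices} applied to $\Omega_D$ gives $\phi(g)-\xi(g) = \tfrac{|\partial\Omega_D|}{2} - \tfrac12\sum_{v\in\V\setminus\partial\Omega_D}\sum_{e\in\E_v}\sgn(g(v)\partial_e g(v))$. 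When $\Omega_D$ is a star with central vertex $v$, this reads $-\xi(\Omega_D) = \tfrac{\deg v}{2} - \tfrac12\sum_{e\in\E_v}\sgn(g(v)\partial_e g(v))$, and the Neumann condition $\sum_e g(v)\partial_e g(v)=0$ together with genericity forces at least one positive and one negative summand, giving $1\le\xi(\Omega_D)\le\deg v-1 = |\partial\Omega_D|-1$ (when $\Omega_D$ is an interval, $\xi(\Omega_D)$ is simply the number of interior extrema of a ground state on an interval, which is exactly $1$, consistent with $|\partial\Omega_D|-1=1$). The lower bound $\xi(\Omega_D)\ge 1$ can also be seen directly: a Dirichlet ground state on a star must have an interior maximum somewhere.

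For \eqref{eq: bounds_rho}: the outer inequalities follow from \eqref{eq: bounds_Neumann_count} as in the Neumann case, so only the inner two are substantive. The lower bound $\rho(\Omega_D)\ge\tfrac12(\xi(\Omega_D)+1)$ should again follow from Friedlander's bound \cite[Theorem 1]{Fri_aif05} — but note we now need the \emph{Dirichlet}-to-Neumann or the appropriate inequality for the \emph{first} eigenvalue; the cleanest route is to construct a dual star graph as in Lemma \ref{lem:dual-star}, replacing each edge length $l_j$ of $\Omega_D$ by $\tilde l_j:=\frac{\pi}{k}-l_j$, which swaps the roles of Dirichlet and Neumann conditions and sends the ground state of $\Omega_D$ to a Neumann eigenfunction $\tilde f$ of $\widetilde{\Omega}_D$ with no interior Neumann points; then $\rho(\Omega_D)+\rho(\widetilde{\Omega}_D)=|\partial\Omega_D|$ and $\xi(\Omega_D)+N(\widetilde{\Omega}_D)=|\partial\Omega_D|$ (or the analogous count identity), and the two inner bounds in \eqref{eq: bounds_rho} translate into each other under this duality combined with Proposition \ref{prop:local_observables_bounds}. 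The main obstacle I anticipate is getting the Dirichlet/Neumann bookkeeping in the dual-star construction exactly right: on each edge $e_j$ of $\Omega_D$ one has $f|_{e_j}(x)=C_j\sin(k(l_j-x))$ (with $x=0$ at the central vertex, since $f$ vanishes at the nodal endpoint), and one must check that $\tilde f|_{\tilde e_j}(x):=C_j\cos(k(\tilde l_j-x))$ (or a sign variant) satisfies Neumann conditions at the new boundary and matches derivatives appropriately at the center — the trigonometric identities $\cos(k\tilde l_j)=\sin(kl_j)$, $\sin(k\tilde l_j)=\cos(kl_j)$ are what make it work, and counting the interior nodal points of $\tilde f$ on each edge gives $N(\widetilde{\Omega}_D)=|\{j: kl_j>\tfrac{\pi}{2}\}|$, which is exactly what is needed to close the loop. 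Once this duality is in place, all four inequalities in \eqref{eq: bounds_rho} follow by the same two-line computation as at the end of the proof of Proposition \ref{prop:local_observables_bounds}.
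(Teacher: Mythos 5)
Your high-level strategy --- reduce to the star case and transfer the bounds of Proposition \ref{prop:local_observables_bounds} via a duality between nodal-domain stars and Neumann-domain stars --- is exactly the paper's, but both of your key computations contain concrete errors. First, you cannot apply \eqref{eq:Diff-nodal-Neumann_by_vertices} verbatim to $\Omega_{D}$ with $g=f|_{\Omega_{D}}$: that identity was derived for a \emph{standard} graph, i.e.\ with Neumann conditions at the boundary vertices, where the point nearest to a boundary vertex is a nodal point and each boundary term contributes $\sgn(0)=-1$. At a Dirichlet boundary vertex of $\Omega_{D}$ the nearest interior critical point is a Neumann point, so each boundary contribution flips sign and the constant $+\tfrac{\left|\partial\Omega_{D}\right|}{2}$ becomes $-\tfrac{\left|\partial\Omega_{D}\right|}{2}$. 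Your identity $-\xi(\Omega_{D})=\tfrac{\deg v}{2}-\tfrac12\sum_{e}\sgn\left(g(v)\partial_{e}g(v)\right)$ therefore forces $\xi(\Omega_{D})\in[1-\deg v,-1]$, which is absurd; the corrected version reads $\xi(\Omega_{D})=\tfrac{\deg v}{2}+\tfrac12\sum_{e}\sgn\left(g(v)\partial_{e}g(v)\right)=\left|\{j:\,kl_{j}>\tfrac{\pi}{2}\}\right|$, and only then does the mixed-sign argument (which is correct: $\sum_{j}\cot(kl_{j})=0$ forces both signs to occur) give $1\le\xi(\Omega_{D})\le\deg v-1$. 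This is fixable, but the fix is precisely the Dirichlet/Neumann bookkeeping you postponed.

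Second, and more seriously, the duality you propose for \eqref{eq: bounds_rho} does not work. With $\tilde{l}_{j}=\tfrac{\pi}{k}-l_{j}$ one has $k\tilde{l}_{j}=\pi-kl_{j}$, hence $\cos(k\tilde{l}_{j})=-\cos(kl_{j})$ and $\sin(k\tilde{l}_{j})=\sin(kl_{j})$ --- not the identities you wrote (those would need $\tilde{l}_{j}=\tfrac{\pi}{2k}-l_{j}$, which is negative on any edge with $kl_{j}>\tfrac{\pi}{2}$, and at least one such edge must exist). Evaluating your $\tilde{f}|_{\tilde{e}_{j}}$ and its derivative at $x=0$ shows that this map \emph{swaps} the value and the outgoing derivative of $f$ at the central vertex (up to factors of $k$); since the derivatives $\partial_{e_{j}}f(v)$ are generally not all equal, $\tilde{f}$ fails continuity at the center and is not an eigenfunction of $\widetilde{\Omega}_{D}$. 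The half-wavelength complement of Lemma \ref{lem:dual-star} is a Neumann-to-Neumann duality; what is needed here is a quarter-wavelength \emph{shift}. The paper keeps the formula $f|_{e_{j}}(x)=A_{j}\sin\left(k(x-l_{j})\right)$ intact and only changes the edge length, to $\tilde{l}_{j}=l_{j}-\tfrac{\pi}{2k}$ if $kl_{j}>\tfrac{\pi}{2}$ and $\tilde{l}_{j}=l_{j}+\tfrac{\pi}{2k}$ if $kl_{j}<\tfrac{\pi}{2}$; this leaves the data at the central vertex untouched (so the Neumann condition there is automatic) and makes the derivative vanish at the new endpoints, turning $\Omega_{D}$ into a genuine Neumann domain $\Omega_{N}$ of a generic eigenfunction. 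Proposition \ref{prop:local_observables_bounds} then applies to $\Omega_{N}$, and the counting identities $\phi(\tilde{f}|_{\Omega_{N}})+\xi(\Omega_{D})=\left|\partial\Omega_{D}\right|$ and $\rho(\Omega_{N})=\rho(\Omega_{D})+\tfrac{\left|\partial\Omega_{D}\right|}{2}-\xi(\Omega_{D})$ translate its bounds into \eqref{eq: bounds_Neumann_count} and \eqref{eq: bounds_rho}; note that this single construction also yields the $\xi$ bounds, making the separate interlacing argument unnecessary.
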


\begin{rem}
The spectral position of a Neumann domain is equal to its nodal count.
Hence, the bounds (\ref{eq: Spectral_Position_bounds}) in Proposition
\ref{prop:local_observables_bounds} may be perceived as the bounds
on the nodal count of a Neumann domain. Therefore, we consider the
bounds in (\ref{eq: bounds_Neumann_count}) of the proposition above
as the analogous bounds. Furthermore, there is no interest in bounds
for the spectral position of a nodal domain, as it trivially equals
to $1$ (see Section \ref{subsec:Spectral-position-and-wavelength-capacity}).
\end{rem}

\begin{proof}
We start by noting that the condition $k>\frac{\pi}{L_{\mathrm{min}}}$
guarantees that $\Omega_{D}$ is either a star graph or an interval
(see also Lemma \ref{lem:spectral-position-equals-nodal-count}).
If $\Omega_{D}$ is an interval the bounds follow trivially as $\xi(\Omega_{D})=1$
and $\rho(\Omega_{D})=1$. We proceed by assuming that $\Omega_{D}$
is a star graph. From here, the main argument in the proof is a map
from star graphs which are nodal domains (such as $\Omega_{D}$) to
star graphs which are Neumann domains, as described next. Denote the
edge lengths of $\Omega_{D}$ by $\{l_{j}\}_{j=1}^{\left|\partial\Omega_{D}\right|}$.
We may write $f|_{\Omega_{D}}$ on every edge $e_{j}$ of $\Omega_{D}$
as
\begin{equation}
f|_{e_{j}}\left(x\right)=A_{j}\sin\left(k(x-l_{j})\right)\,\,\,x\in\left[0,l_{j}\right],\label{eq: function on nodal domain}
\end{equation}
where $x=0$ at the central vertex and $x=l_{j}$ at the boundary
vertex. The absence of nodal points in the interior of $\Omega_{D}$
together with the genericity of $f$ imply that $kl_{j}\in\left(0,\frac{\pi}{2}\right)\cup\left(\frac{\pi}{2},\pi\right)$
for each $j$. We use this to construct an auxiliary star graph, $\Omega_{N}$,
which has the same number of edges, $\left|\partial\Omega_{N}\right|=\left|\partial\Omega_{D}\right|$
and its edge lengths are given by
\[
\forall1\leq j\leq\left|\partial\Omega_{N}\right|,\quad\tilde{l}_{j}=\begin{cases}
l_{j}-\frac{\pi}{2k} & kl_{j}\in\left(\frac{\pi}{2},\pi\right)\\
l_{j}+\frac{\pi}{2k} & kl_{j}\in\left(0,\frac{\pi}{2}\right).
\end{cases}
\]
Define a function $\tilde{f}$ on $\Omega_{N}$ by
\begin{equation}
\tilde{f}|_{\tilde{e}_{j}}\left(x\right)=A_{j}\sin\left(k(x-l_{j})\right)\,\,\,x\in\left[0,\tilde{l}_{j}\right],\label{eq: function on Neumann in nodal}
\end{equation}
for any edge $\tilde{e}_{j}$ of $\Omega_{N}$, where $A_{j}$ and
$l_{j}$ are the same as in (\ref{eq: function on nodal domain}).
This construction guarantees $\tilde{f}'|_{e_{j}}\left(\tilde{l}_{j}\right)=A_{j}\sin\left(\pm\frac{\pi}{2}\right)=0$
and so $\tilde{f}$ satisfies Neumann conditions at the boundary vertices
of $\Omega_{N}$. In addition, $\tilde{f}$ and $f$ share the same
value and derivatives at the corresponding central vertex (of $\Omega_{N}$
and of $\Omega_{D}$). Hence, $\tilde{f}$ satisfies Neumann vertex
conditions at the central vertex of $\Omega_{N}$. We conclude that
$\tilde{f}$ is an eigenfunction of $\Omega_{N}$ with eigenvalue
$k$.

Since $f$ and $\tilde{f}$ share the same value and derivatives at
the central vertex, and $f$ is generic, then $\tilde{f}$ satisfies
both conditions (\ref{enu: def-generic-non-zero-value-at-vertices})
and (\ref{enu:None-of-theenu: def-generic-non-zero-derivative-at-vertices})
of the genericity in Definition \ref{def: generic_eigenfunction}.
In addition, $k$ is a simple eigenvalue of $\Omega_{N}$ by an argument
similar to the one which was given in the proof of Lemma \ref{lem:spectral-position-equals-nodal-count}
(this argument is based on \cite[Corollary 3.1.9]{BerKuc_graphs}
and on $\Omega_{N}$ being a tree). Therefore, $\tilde{f}$ is generic.
By the construction, $\tilde{f}$ has no interior Neumann points,
so that $\Omega_{N}$ is a single Neumann domain of the generic eigenfunction
$\tilde{f}$. In particular, the bounds on $N(\Omega_{N})$ and $\rho(\Omega_{N})$
in Proposition \ref{prop:local_observables_bounds} apply to it. To
finish the proof we just need to relate $N(\Omega_{N})$ and $\rho(\Omega_{N})$
to $\xi(\Omega_{D})$ and $\rho(\Omega_{D})$ and apply those bounds.
The needed relations follow from a simple calculations based on (\ref{eq: function on nodal domain})
and (\ref{eq: function on Neumann in nodal}):
\begin{align*}
\phi\left(\left.\tilde{f}\right|_{\Omega_{N}}\right)+\xi(\Omega_{D}) & =\left|\d\Omega_{D}\right|,
\end{align*}
and
\begin{align*}
\rho\left(\Omega_{N}\right) & =\frac{1}{\pi}\sum_{j=1}^{\left|\partial\Omega_{N}\right|}\tilde{l}_{j}=\frac{1}{\pi}\left(\sum_{j=1}^{\left|\partial\Omega_{D}\right|}\left(l_{j}+\frac{\pi}{2}\right)-\pi\xi\left(f|_{\Omega_{D}}\right)\right)=\rho\left(\Omega_{D}\right)+\frac{\left|\partial\Omega_{D}\right|}{2}-\xi(\Omega_{D}).
\end{align*}

Using the relations above with $\phi(\tilde{f}|_{\Omega_{N}})=N(\Omega_{N})$
(see (\ref{eq:spectral-position-equals-nodal-count})) and applying
the bounds on $N(\Omega_{N})$ and $\rho(\Omega_{N})$ in Proposition
\ref{prop:local_observables_bounds} yields all the desired bounds
in (\ref{eq: bounds_Neumann_count}) and (\ref{eq: bounds_rho}).
\end{proof}
\bibliographystyle{siam}
\bibliography{GlobalBib}

\end{document}